\documentclass{theoretics}
\allowdisplaybreaks

\usepackage{ellipsis}

\usepackage{calc}
\usepackage{relsize}

\usepackage{csquotes}

\usepackage[shortcuts]{extdash}
\usepackage{bbm}
\usepackage{booktabs}

\usepackage{pgfplots}
\pgfplotsset{compat=1.13}
\usepackage{tikz}
\usetikzlibrary{shapes,positioning,arrows,arrows.meta,calc,automata,matrix,fit,backgrounds}
\tikzset{->,>=stealth'}

\colorlet{disabled}{lightgray}
\tikzstyle{state}=[draw,rectangle,inner sep=2pt,rounded corners=2pt,minimum size = 8mm, minimum height=6mm,outer sep=0pt]
\tikzstyle{action}=[font=\small,inner sep=0pt,outer sep=3pt]
\tikzstyle{actionnode}=[circle,draw=black,fill=black,minimum size=1mm,inner sep=0,outer sep=0]
\tikzstyle{actionedge}=[draw,-]
\tikzstyle{prob}=[font=\scriptsize,inner sep=0pt,outer sep=1pt]
\tikzstyle{probedge}=[draw,->]
\tikzstyle{directedge}=[draw,->]
\tikzset{chainarrow/.tip={Stealth[length=3pt]}}
\tikzset{>=chainarrow}

\usepackage{xparse}
\usepackage{mathtools}
\usepackage{environ}
\usepackage{etoolbox}
\usepackage{setspace}
\usepackage{float}

\usepackage[capitalize,noabbrev,nameinlink]{cleveref} 


 \DeclarePairedDelimiter{\delimabs}{\lvert}{\rvert}
\DeclarePairedDelimiter{\delimcardinality}{\lvert}{\rvert}
\DeclarePairedDelimiter{\delimnorm}{\lVert}{\rVert}

\NewDocumentCommand{\abs}{sm}{\IfBooleanTF{#1}{\delimabs*{#2}}{\delimabs{#2}}}
\NewDocumentCommand{\cardinality}{sm}{\IfBooleanTF{#1}{\delimcardinality*{#2}}{\delimcardinality{#2}}}
\NewDocumentCommand{\norm}{sm}{\IfBooleanTF{#1}{\delimnorm*{#2}}{\delimnorm{#2}}}
\NewDocumentCommand{\powerset}{r()}{\mathcal{P}(#1)}
\newcommand{\setcomplement}[1]{\overline{#1}}

\newcommand{\unionSym}{\cup}
\newcommand{\unionBin}{\mathbin{\unionSym}}

\newcommand{\intersectionSym}{\cap}
\newcommand{\intersectionBin}{\mathbin{\intersectionSym}}
\newcommand{\UnionSym}{\bigcup}

\newcommand{\IntersectionSym}{\bigcap}

\newcommand{\union}{\unionBin}

\newcommand{\intersection}{\intersectionBin}
\newcommand{\Union}{\UnionSym}

\newcommand{\Intersection}{\IntersectionSym}

\newcommand{\Naturals}{\mathbb{N}}
\newcommand{\Reals}{\mathbb{R}}

\DeclareMathOperator*{\argmax}{arg\,max}

\newcommand{\distribution}{d}
\NewDocumentCommand{\Distributions}{d()}{\IfNoValueTF{#1}{\mathcal{D}}{\mathcal{D}(#1)}}

\NewDocumentCommand{\Expectation}{s d[]}{\IfNoValueTF{#2}{\mathbb{E}}{\mathbb{E}\IfBooleanTF{#1}{\left[#2\right]}{[#2]}}}
\NewDocumentCommand{\Probability}{s d[]}{\IfNoValueTF{#2}{\mathbb{P}}{\mathbb{P}\IfBooleanTF{#1}{\left[#2\right]}{[#2]}}}

\newcommand{\lfalse}{\mathsf{false}}

\newcommand{\MC}{\mathsf{M}}
\newcommand{\MDP}{\mathcal{M}}

\newcommand{\States}{S}
\newcommand{\initialstate}{\hat{s}}
\newcommand{\Actions}{Act}
\NewDocumentCommand{\stateactions}{d()}{\IfNoValueTF{#1}{{Av}}{{Av}(#1)}}
\NewDocumentCommand{\mctransitions}{d()}{\IfNoValueTF{#1}{\delta}{\delta(#1)}}
\NewDocumentCommand{\mdptransitions}{d()}{\IfNoValueTF{#1}{\Delta}{\Delta(#1)}}
\NewDocumentCommand{\actionstate}{r<> r()}{\mathop{\mathsf{state}}(#2, #1)}

\newcommand{\infinitepath}{\rho}
\newcommand{\finitepath}{\varrho}
\NewDocumentCommand{\Infinitepaths}{d<>}{\IfNoValueTF{#1}{\mathsf{Paths}}{\mathsf{Paths}_{#1}}}
\NewDocumentCommand{\Finitepaths}{d<>}{\IfNoValueTF{#1}{\mathsf{FPaths}}{\mathsf{FPaths}_{#1}}}
\newcommand{\strategy}{\pi}
\NewDocumentCommand{\Strategies}{r<>}{\Pi_{#1}}
\NewDocumentCommand{\StrategiesM}{r<>}{\Pi_{#1}^{\mathsf{M}}}
\NewDocumentCommand{\StrategiesMD}{r<>}{\Pi_{#1}^{\mathsf{MD}}}
\newcommand{\last}[1]{last(#1)}

\DeclareMathOperator{\SccsOp}{SCC}
\DeclareMathOperator{\BsccsOp}{BSCC}
\DeclareMathOperator{\EcsOp}{EC}
\DeclareMathOperator{\MecsOp}{MEC}

\NewDocumentCommand{\Sccs}{r()}{\SccsOp(#1)}
\NewDocumentCommand{\Bsccs}{r()}{\BsccsOp(#1)}
\NewDocumentCommand{\Ecs}{d()}{\IfNoValueTF{#1}{\EcsOp}{\EcsOp(#1)}}
\NewDocumentCommand{\Mecs}{d()}{\IfNoValueTF{#1}{\MecsOp}{\MecsOp(#1)}}

\DeclareMathOperator{\support}{supp}

\NewDocumentCommand{\ProbabilityMC}{s r<> d[]}{\mathsf{Pr}_{#2}\IfNoValueF{#3}{\IfBooleanTF{#1}{\!\left[#3\right]\!}{[#3]}}}
\NewDocumentCommand{\ProbabilityMDP}{s r<> r<> d[]}{\mathsf{Pr}_{#2}^{#3}\IfNoValueF{#4}{\IfBooleanTF{#1}{\!\left[#4\right]\!}{[#4]}}}
\NewDocumentCommand{\ProbabilityMDPmax}{s r<> d[]}{\mathsf{Pr}_{#2}^{\max}\IfNoValueF{#3}{\IfBooleanTF{#1}{\!\left[#3\right]\!}{[#3]}}}
\NewDocumentCommand{\ProbabilityMDPsup}{s r<> d[]}{\mathsf{Pr}_{#2}^{\sup}\IfNoValueF{#3}{\IfBooleanTF{#1}{\!\left[#3\right]\!}{[#3]}}}

\NewDocumentCommand{\ExpectedSum}{m m}{#1\langle#2\rangle}
\NewDocumentCommand{\ExpectedSumMC}{m m m}{\ExpectedSum{#1(#2)}{#3}}
\NewDocumentCommand{\ExpectedSumMDP}{m m m m}{\ExpectedSum{#1(#2,#3)}{#4}}

\NewDocumentCommand{\ExpectedSumStrat}{m m}{#1[#2]}

\newcommand{\reach}{\Diamond}
\NewDocumentCommand{\boundedreach}{r<>}{\Diamond^{\leq #1}}

\DeclareMathOperator{\val}{\mathcal{V}}
\newcommand{\learningalgo}{\mathsf{A}}
\newcommand{\upperbound}{\mathsf{Up}}
\newcommand{\lowerbound}{\mathsf{Lo}}
\newcommand{\bounddifference}{\mathrm{Diff}}

\newcommand{\targetstate}{s_+}
\newcommand{\sinkstate}{s_-}
\newcommand{\targetset}{T}

\newcommand{\samplepath}{\textsc{SamplePairs}}
\DeclareMathOperator{\collapse}{collapse}
\newcommand{\updateecs}{\textsc{UpdateECs}}

\newcommand{\ecremain}{\mathrm{rem}}
\newcommand{\ecequivalent}{\mathsf{equiv}}
\newcommand{\eccollapsed}{\mathsf{collapsed}}
\newcommand{\ecstates}{\mathsf{states}}

\newcommand{\successor}{\mathsf{succ}}

\newcommand{\updatestep}{{\overline{\varepsilon}}}
\newcommand{\delay}{{\overline{m}}}
\newcommand{\updatecount}{{\overline{\xi}}}

\newcommand{\acc}{\texttt{acc}}
\newcommand{\visitcount}{\texttt{count}}
\newcommand{\learn}{\texttt{learn}}
\newcommand{\learnyes}{\texttt{yes}}
\newcommand{\learnonce}{\texttt{once}}
\newcommand{\learnno}{\texttt{no}}
\newcommand{\learndecrease}{\textsc{Decrease}}
\newcommand{\algostep}{\mathsf{t}}
\newcommand{\algoepisode}{\mathsf{e}}

\newcommand{\algoecs}{\mathsf{EC}}
\newcommand{\umaxactions}{\mathsf{MaxA}}

\newcommand{\algorithmspace}{\mathfrak{A}}
\newcommand{\algorithmsigma}{\mathcal{A}}
\NewDocumentCommand{\ProbabilityAlgo}{s d[]}{\IfNoValueTF{#2}{\mathbb{P}_{\learningalgo}}{\mathbb{P}_{\learningalgo}\IfBooleanTF{#1}{\left[#2\right]}{[#2]}}}

\newcommand{\algorithmexecution}{\mathfrak{a}}

\NewDocumentCommand{\ConvergedUpperBounds}{r<>}{\mathcal{K}_{#1}^{\upperbound}}
\NewDocumentCommand{\ConvergedLowerBounds}{r<>}{\mathcal{K}_{#1}^{\lowerbound}}

\makeatletter
\def\alabel#1#2{\begingroup
	\textbf{(#2)}\def\@currentlabel{\textbf{[#2]}}\phantomsection\label{#1}\endgroup
}
\makeatother

\newcounter{proofref}

\newcommand{\plabel}[1]{\refstepcounter{proofref}\label{#1}\textbf{[Fact \Roman{proofref}]}}
\AtBeginEnvironment{proof}{\setcounter{proofref}{0}}

\newcounter{proofcount}
\AtBeginEnvironment{proof}{\stepcounter{proofcount}}
\makeatletter
\@addtoreset{proofref}{proofcount}
\makeatother

\newtheorem{assumption}{Assumption}
\crefname{assumption}{Assumption}{Assumptions}
\crefname{casedistinction}{Case}{Cases}

\newcommand{\oldthankssym}{\textasteriskcentered}
\newcommand{\oldthanks}{\textsuperscript{\oldthankssym}}
	\title{Learning Algorithms for Verification of Markov Decision Processes}
\ThCSauthor[Brno]{Tom{\'a}{\v{s}} Br{\'a}zdil}{}[0000-0002-4547-3261]
\ThCSauthor[Klosterneuburg]{Krishnendu Chatterjee}{}[0000-0002-4561-241X]
\ThCSauthor[Google]{Martin Chmelik}{}[]
\ThCSauthor[Oxford]{Vojt{\v{e}}ch Forejt}{}[]
\ThCSauthor[Brno]{Jan K{\v{r}}et{\'\i}nsk{\'y}}{}[0000-0002-8122-2881]
\ThCSauthor[Oxford]{Marta Kwiatkowska}{}[0000-0001-9022-7599]
\ThCSauthor[Leipzig]{Tobias Meggendorfer}{tobias@meggendorfer.de}[0000-0002-1712-2165]
\ThCSauthor[Oxford]{David Parker}{}[0000-0003-4137-8862]
\ThCSauthor[Toronto]{Mateusz Ujma}{}[]

\ThCSaffil[Brno]{Masaryk University, Brno, Czech Republic}
\ThCSaffil[Klosterneuburg]{IST Austria, Klosterneuburg, Austria}
\ThCSaffil[Google]{Google LLC, Zurich, Switzerland}
\ThCSaffil[Oxford]{University of Oxford, Oxford, UK}
\ThCSaffil[Leipzig]{Lancaster University Leipzig, Leipzig, Germany}
\ThCSaffil[Toronto]{Rogers Communications, Toronto, Canada}

\ThCSthanks{This research was funded in part by the European Research Council (ERC) under grant agreement AdG-267989 (QUAREM)\oldthanks{}, AdG-246967 (VERIWARE)\oldthanks{}, StG-279307 (Graph Games)\oldthanks{}, CoG-863818 (ForM-SMArt), and AdG-834115 (FUN2MODEL), by the EU FP7 project HIERATIC\oldthanks{}, by the German Research Foundation (DFG) project 427755713 (GOPro), by the Austrian Science Fund (FWF) projects S11402-N23 (RiSE)\oldthanks{}, S11407-N23 (RiSE)\oldthanks{}, and P23499-N23\oldthanks{}, by the Czech Science Foundation grant No P202/12/P612\oldthanks{} and GA23-06963S, by the MUNI Award in Science and Humanities (MUNI/I/1757/2021) of the Grant Agency of Masaryk University, by EPSRC project EP/K038575/1\oldthanks{}, and by the Microsoft faculty fellows award\oldthanks{}. A preliminary version of this article appeared at ATVA 2014 \cite{DBLP:conf/atva/BrazdilCCFKKPU14}.
		The \oldthankssym{} indicates funding that supported that version.}
\ThCSshortnames{T.\ Br{\'a}zdil et.\ al}  
\ThCSshorttitle{Learning Algorithms for Verification of Markov Decision Processes}
\ThCSyear{2025}
\ThCSarticlenum{10}
\ThCSreceived{Mar 22, 2024}
\ThCSrevised{Nov 6, 2024}
\ThCSaccepted{Jan 12, 2025}
\ThCSpublished{April 1, 2025}
\ThCSdoicreatedtrue
\ThCSkeywords{Markov decision processes and Learning}

	\addbibresource{main.bib}

        \begin{document}
	\maketitle
	
	\begin{abstract}
		We present a general framework for applying learning algorithms and heuristical guidance to the verification of Markov decision processes (MDPs).
The primary goal of our techniques is to improve performance by avoiding an exhaustive exploration of the state space, instead focussing on particularly relevant areas of the system, guided by heuristics.
Our work builds on the previous results of Br{\'{a}}zdil~et~al., significantly extending it as well as refining several details and fixing errors.

The presented framework focuses on probabilistic reachability, which is a core problem in verification, and is instantiated in two distinct scenarios.
The first assumes that full knowledge of the MDP is available, in particular precise transition probabilities.
It performs a heuristic-driven partial exploration of the model, yielding precise lower and upper bounds on the required probability.
The second tackles the case where we may only sample the MDP without knowing the exact transition dynamics.
Here, we obtain probabilistic guarantees, again in terms of both the lower and upper bounds, which provides efficient stopping criteria for the approximation.
In particular, the latter is an extension of statistical model checking (SMC) for unbounded properties in MDPs.
In contrast to other related approaches, we do not restrict our attention to time-bounded (finite-horizon) or discounted properties, nor assume any particular structural properties of the MDP.
	\end{abstract}

	\section{Introduction} \label{sec:introduction}

Markov decision processes (MDP) \cite{howard1960dynamic,filar1996competitive,DBLP:books/wi/Puterman94} are a well established formalism for modelling, analysis and optimization of probabilistic systems with non-determinism, with a large range of application domains \cite{DBLP:books/daglib/0020348,DBLP:conf/cav/KwiatkowskaNP11}.
For example, MDPs are used as models for concurrent probabilistic systems~\cite{DBLP:journals/jacm/CourcoubetisY95} or probabilistic systems operating in open environments~\cite{segala1996modeling}.
See \cite{white1985real,white1988further,white1993survey} for further applications.

In essence, MDP comprise three major parts, namely states, actions, and probabilities.
Intuitively, the system evolves as follows:
In any state, there is a set of actions to choose from.
This corresponds to the \emph{non-determinism} of the system.
After choosing an action, the system then transitions into the next state according to the probability distribution associated with that action.
For example, we may use MDP to represent a robot moving around in a 2D world (sometimes called \enquote{gridworld}).
The states then are (bounded, integer) coordinates, representing the current position of the robot.
In each state the robot can choose to move in one of the four cardinal directions or carry out some task depending on the current location.
To illustrate the randomness, consider a \enquote{move east} action.
Choosing this action may move the robot to the next position east of the current one, but it might also be the case that, with some probability, a navigation component of the robot fails and we instead end up in a state north of our current position.
Given such a system, the general goal is to optimize a given \emph{objective} by choosing optimal actions.
For example, we may want to control the robot such that it reaches an interesting research site with maximal probability.
We additionally may be interested in minimizing time or power consumption and avoiding dangerous terrain on our way to the site.

This example hints at one of the simplest, yet important objectives, namely \emph{reachability}.
A reachability problem is specified by an MDP together with a set of designated target states.
The task is to compute the maximal probability with which the system can reach this set of states.
Reachability is of particular interest since in the infinite horizon setting many other objectives, e.g., LTL or long-run average reward, can be reduced to variants of reachability.
A variety of approaches has been established to solve this problem.
In theory, linear programming \cite{10.1007/BFb0032043,DBLP:conf/sfm/ForejtKNP11} is the most suitable approach, as it provides exact answers (rational numbers with no representation imprecision) in polynomial time.
See \cite{DBLP:conf/qest/BarnatBCCT08} for an application.
Unfortunately, LP turns out to be quite inefficient in practice for classical reachability.
For systems with more than a few thousand states, linear programming often falls behind other approaches, see, e.g., \cite{DBLP:conf/sfm/ForejtKNP11,DBLP:conf/cav/AshokCDKM17,DBLP:conf/tacas/HartmannsJQW23}.
As an alternative, one can apply iterative methods.
Here, value iteration (VI) \cite{howard1960dynamic} is the most prominent variant.
See \cite{DBLP:conf/spin/ChatterjeeH08} for a detailed survey of VI.
Notably, variations of VI are the default method in the state-of-the-art probabilistic model checkers PRISM \cite{DBLP:conf/cav/KwiatkowskaNP11} and Storm \cite{DBLP:conf/cav/DehnertJK017}, even though it only provides an approximate solution, converging in the limit.
In contrast, strategy iteration (SI) (also known as policy iteration, PI) \cite{howard1960dynamic,DBLP:books/wi/Puterman94,DBLP:conf/atva/KretinskyM17} yields precise answers, but is also used to a lesser extent due to scalability issues.
See for example \cite{DBLP:journals/tnn/Bertsekas17} for an overview of both methods, \cite{DBLP:conf/tacas/HartmannsJQW23,practitionersJournalPreprint} for recent comparisons of practical implementations of LP, VI, and SI for MDP, \cite{DBLP:journals/iandc/KretinskyRSW22} for a similar comparison on \emph{stochastic games} (MDP with two antagonistic players), and \cite{qcomp2023} for in-depth practical comparison of modern probabilistic model checkers.

\paragraph{Interval Iteration}
Surprisingly, until about a decade ago, standard value iteration as applied in popular model checkers only yielded \emph{lower bounds} on the true value, without any \emph{sound stopping criterion}.
Concretely, this meant that the model checker might conclude that the computation is finished and stop it, despite still being far off from the true result.
We note that there exists a tight, exponential a-priori bound on the number of steps VI requires until convergence, see e.g.\ \cite{DBLP:conf/spin/ChatterjeeH08}.
This could be used as \enquote{stopping criterion}, by simply iterating for this number of steps.
However, this is far too pessimistic on most models.

In \cite{DBLP:conf/rp/HaddadM14,DBLP:conf/atva/BrazdilCCFKKPU14}, a correct and \emph{adaptive} stopping criterion was discovered independently.
This bound follows from under- and (newly obtained) over-approximations converging to the true value, yielding a straightforward stopping criterion: iterate until upper and lower bound are close enough.
This criterion is adaptive in the sense that if the iteration should converge faster than the naive a-priori bound, we can detect this case and stop early.
Subsequent works included this stopping criterion in model checkers \cite{DBLP:conf/cav/Baier0L0W17} and developed further sound approaches \cite{DBLP:conf/cav/QuatmannK18,DBLP:conf/cav/HartmannsK20}.
(Some more developments are discussed in the related work.)

However, despite value iteration scaling much better than linear programming, systems with more than a few million states remain out of reach, not only because of time-outs, but also memory-outs.
Several approaches have been devised to deal with such large state spaces, which we extensively survey in the related work section.
Now, we outline a variant of VI, called \emph{asynchronous VI}.
The central idea is to perform the iterative computations in an asynchronous manner, i.e.\ apply the iteration operation to some states more often than to others, or even not at all to some states.
This allows to obtain speed-ups of several orders of magnitude.
However, since states are evaluated at different paces and, potentially, a set of states is omitted completely, convergence is unclear and even its rate is unknown and hard to analyse.
Yet, by exploiting the discussed lower and upper bounds, we obtain a correct and efficient algorithm, inspired by \emph{bounded real-time dynamic programming} (BRTDP) \cite{DBLP:conf/icml/McMahanLG05}.
This algorithm interleaves construction of the model, analysis, and bound approximation.
For example, we can sample a path through the system (constructing states that we have not seen so far on the fly) and apply the bound update mechanism only on these paths.
For some models, this allows to obtain tight bounds on the true value while only constructing a small fraction of the complete state space.

\paragraph{Limited Information}
The methods discussed above (and most which are introduced in the related work) rely on an exact formalization of the system being available.
In particular they require that the transition probabilities are known precisely.
We call this situation the \emph{white box} or \emph{complete information} setting.
This is a common, valid assumption when verifying, e.g., formally defined protocols, but not so much when working with real-world systems comprising difficult dynamics, where the effects of an action can be approximated at most.
As such, these systems can be treated as a \emph{black box}, which accept a next action to take as input and output the subsequent state, sampled from the associated underlying, unknown distribution.

Here \emph{statistical model checking} (SMC) \cite{DBLP:conf/cav/YounesS02,DBLP:conf/vmcai/HeraultLMP04} is applicable.
The general idea of SMC is to repeatedly sample the system in order to obtain strong statistical guarantees.
Thus, SMC approaches can (at most) be \emph{probably approximately correct} (PAC), i.e.\ yield an answer close to the true value with high probability, but there always is a small chance for a significant error.
By itself, SMC algorithms are restricted to systems without non-determinism, e.g., Markov chains \cite{DBLP:conf/cav/Younes05a,DBLP:conf/qest/SenVA05}.
A number of approaches tackling the issue of non-determinism have been presented (see related work).
However, these methods deal with non-determinism by either resolving it uniformly at random or sample several schedulers, both of which can lead to surprising results in certain scenarios \cite{DBLP:conf/isola/BohlenderBJKNN14}.
Additionally, note that both approaches can only give a statistical estimate of a lower bound of the true achievable maximal reachability.
In particular, they do not give any guarantees on the \emph{maximal} achievable performance (i.e.\ an upper bound). Based on the ideas of \emph{delayed Q-learning (DQL)} \cite{DBLP:conf/icml/StrehlLWLL06} (which also only yields lower bounds) we present a PAC \emph{model-free} algorithm, yielding statistical \emph{upper and lower} bounds on the \emph{maximal} reachability.
(Model-free intuitively means that our algorithm only stores a fixed number of values per state-action pair, independent of how many transitions are associated with that action; further discussion can be found in \cref{rem:model_free}.)
This approach is similar in spirit to the BRTDP approach discussed above, however much more involved due to the underlying statistical arguments.
The main contribution of this algorithm is to prove the possibility of obtaining such a result, exploring the boundaries of what exactly is necessary to obtain guarantees.


\begin{algorithm}[t]
	\caption{High-level overview of the structure of our algorithms.}
	\label{fig:intro:overview}
	
	\DontPrintSemicolon
	\KwIn{MDP $\MDP$, target states $\targetset$, precision $\varepsilon$.}
	\KwOut{Values $(l, u)$ which are $\varepsilon$-optimal.}
	\While{difference between upper and lower bound in initial state is larger than $\varepsilon$}{
		Obtain a set of states to update by, e.g., sampling a path. \label{alg:overview:start}
		\ForEach{state and action in this set}{
			\uIf{this state is a target state}{
				Set its bounds to $1$. \;
			}
			\Else{
				Update action bounds based on the weighted average of its successors. \label{alg:overview:end} \;
			}
		}
		Detect end components in the relevant area of the system. \label{alg:overview:ec} \;
	}
	\Return lower and upper bound of the initial state. \;
\end{algorithm}

\paragraph{Algorithm Outline}
To provide the reader with a preliminary overview of our approach, we present a high-level pseudo-code in \cref{fig:intro:overview}.
As already mentioned, the fundamental idea is to compute lower and upper bounds on the true probability of reaching the target in each state (\cref{alg:overview:start} to \cref{alg:overview:end}).
Essentially, we want to iteratively update these bounds in a converging and correct manner.
In the complete information setting, this can be achieved by directly computing the weighted average of the successor bounds.
For the limited information setting, we instead aggregate many successor samples.
This yields a good approximation of this weighted average with high probability.

The details of how the set of states to be updated is obtained in \cref{alg:overview:start} are abstracted in the complete information setting and we only require some basic properties.
One possibility is a sampling-based approach, which is guided by the currently computed bounds.
We discuss several alternatives later on.
In contrast, the limited information setting requires a particular kind of sampling approach in order to ensure correctness.
We highlight these differences in the respective sections.

Now, while it is rather simple to prove correctness of the computed bounds, the tricky part is to obtain convergence.
In particular, for general MDP, this approach would not converge.
To solve this, in the past many algorithms working with MDP often made assumptions about the structure of the model.
For example, it was sometimes required that the model is \enquote{strongly connected} or free of \emph{end components} \cite{de1997formal} (except trivial ones).
Instead, one of the main contributions of \cite{DBLP:conf/rp/HaddadM14,DBLP:conf/atva/BrazdilCCFKKPU14} is to identify end components as the sole \enquote{culprits} and devising methods to deal with them in a general manner, obtaining convergence.
While \cite{DBLP:conf/rp/HaddadM14} tackles the problem in a \enquote{global} manner (assuming to have access to the complete MDP at once), we present an asynchronous way of treating end components.
This treatment is \enquote{on-the-fly} and can be interleaved with the iterative construction of the system. 

In the white box setting, we solve this problem by adapting exiting graph analysis algorithms and incorporating them with our main procedure.
However, with limited information we again need to employ statistical methods.
In essence, if we remain inside a particular region of the system for a long enough time, there is a high probability that this region is an end component.
This overall process then is repeated until the computed bounds in the initial state are close enough.

\subsection{Related Work}
We present a number of related ideas, all attempting, in one way or another, to make the analysis of (large or black box) probabilistic system tractable.

\emph{Compositional} techniques aim to first analyse parts of the system separately and combine the sub-results to obtain an overall result, e.g.\  \cite{DBLP:conf/qest/CaillaudDLLPW10,DBLP:journals/scp/DengH13,DBLP:conf/concur/HermannsKK13,DBLP:conf/concur/BassetKW14,DBLP:journals/fmsd/ChatterjeeCD15,DBLP:journals/iandc/BassetKW18}.
Then, there are \emph{abstraction} approaches which try to merge states with equivalent or sufficiently similar behaviour w.r.t.\ the objective in question, e.g.\  \cite{DBLP:conf/papm/DArgenioJJL02,DBLP:conf/cav/HermannsWZ08,DBLP:conf/tacas/HahnHWZ10,DBLP:journals/fmsd/KattenbeltKNP10,DBLP:conf/tacas/HahnHWZ10}.
\emph{Reduction} approaches try to eliminate states from the system and restrict computation to a sub-system through structural properties, e.g.\  \cite{DBLP:conf/qest/BaierGC04,DBLP:journals/entcs/BaierDG06,DBLP:conf/qest/CiesinskiBGK08,DBLP:conf/qest/DiazBEF12,DBLP:conf/fm/FanHM18,DBLP:conf/concur/BonnelandJLMS19}.
\emph{Guessing} \cite{DBLP:conf/soda/ChatterjeeMSS23} tries to guess and verify the value of certain states, which can lead to theoretical speed-ups when the guesses decompose the system into independent parts.
Another approach is \emph{symbolic computation}, where the model and value functions are compactly represented using \emph{BDD} \cite{DBLP:journals/tc/Bryant86} and \emph{MTBDD} / \emph{ADD} \cite{DBLP:journals/fmsd/BaharFGHMPS97,DBLP:journals/fmsd/FujitaMY97}.
See \cite{DBLP:conf/concur/BaierKH99,DBLP:journals/sttt/KwiatkowskaNP04,DBLP:conf/aaai/ZamaniSF12,DBLP:conf/qest/WimmerBBHCHDT10,DBLP:journals/corr/BohyBR14a,DBLP:conf/tacas/0001BCDDKM016} for further details and applications of symbolic methods.

In related fields such as planning and artificial intelligence, many learning-based and heuristic-driven approaches for MDP have been proposed.
In the complete information setting, RTDP \cite{DBLP:journals/ai/BartoBS95} and BRTDP \cite{DBLP:conf/icml/McMahanLG05} use very similar approaches, but have no stopping criterion or do not converge in general, respectively.
\cite{DBLP:conf/ijcai/PineauGT03} uses upper and lower bounds in the setting of \emph{partially observable MDP} (POMDP).
Many other algorithms rely on certain assumptions to ensure convergence, for example by including a \emph{discount factor} \cite{DBLP:journals/ml/KearnsMN02} or restricting to the \emph{Stochastic Shortest Path} (SSP) problems, whereas we deal with arbitrary MDP without discounting.
This is addressed by an approach called FRET \cite{DBLP:conf/aips/KolobovMWG11}, but this only yields a lower bound.
Others similarly only provide convergence in the limit \cite{DBLP:journals/corr/abs-1909-07299,DBLP:journals/corr/JonesAKSB15}, which is usually satisfactory for applications to planning or robotics, where systems have intractably large or even uncountable state spaces.
We are not aware of any attempts at generally applicable methods in the context of probabilistic verification prior to \cite{DBLP:conf/atva/BrazdilCCFKKPU14}.
An earlier, related paper is \cite{DBLP:conf/qest/AljazzarL09}, where heuristic methods are applied to MDP, but for generating counterexamples.

As mentioned, \cite{DBLP:conf/rp/HaddadM14} independently discovered a stopping criterion for value iteration on general MDP.
The idea behind this criterion is very similar to \cite{DBLP:conf/atva/BrazdilCCFKKPU14}, but they construct and analyse the whole system at once.
The underlying idea of \enquote{interval iteration}, spawned by these two papers, is further developed in, e.g., \cite{DBLP:conf/cav/Baier0L0W17,DBLP:journals/tcs/HaddadM18,DBLP:conf/lics/AshokCKWW20}.

Additionally, the idea of \emph{optimistic} value iteration (OVI) \cite{DBLP:conf/cav/HartmannsK20,DBLP:conf/atva/AzeemEKSW22} emerged.
Here, instead of \emph{always} updating both lower and upper bound, only the lower bound is iterated (as in classical value iteration).
Then, based on heuristics, the algorithm \emph{optimistically} conjectures that the values actually converged.
To verify that conjecture, a (potential) upper bound is guessed based on the current lower bound (e.g.\ by incrementing all bounds by $\varepsilon$) and then checked for consistency by applying a few steps of VI.
This approach turns out to be quite efficient in practice when dealing with MDP in a \enquote{global} manner, however is incompatible with our guided sampling approach, since we continuously use upper bounds for guidance.
Similarly, \emph{sound} value iteration (SVI) \cite{DBLP:conf/cav/QuatmannK18} also works with lower and upper bounds, however they derive bounds based on $k$-step reachability probabilities.
These fundamentally require a global and synchronous value iteration, which is precisely what we aim to avoid.

\paragraph{Statistical Methods}
There are two primary motivations to use statistical approaches.
Firstly, the model might be large, even too large to fit into memory, and analysing it by standard approaches becomes infeasible, yet generating samples may be quick and easy.
In this case, one can decide to \enquote{only} aim for a statistical guarantee, which often comes with tremendous speed-ups and space savings.
Secondly, as explained above, the model might be an unknown black box -- we do not know how it works internally, only that it is some Markov process.
If we can observe and control the system, we can gather samples and from that derive statistical guarantees for the considered value.

As mentioned, our approach focuses on the latter, however most statistical methods focus on the former.
Indeed, many of the following methods are \emph{only} applicable to the \enquote{full knowledge} setting, i.e.\ knowing the internals of the system.
Here, significant improvements can be observed:
Several SMC algorithms have sub-linear or even constant space requirements (often called \emph{model-free} algorithms).
Appropriately, SMC is an active area of research with extensive tool support \cite{DBLP:conf/tacas/JegourelLS12,DBLP:conf/mmb/BogdollHH12,DBLP:conf/qest/BoyerCLS13,DBLP:journals/corr/abs-1207-1272,DBLP:conf/cav/DavidLLMW11,DBLP:conf/cav/Younes05a,DBLP:conf/qest/SenVA05,wats} but also a lot of subtle pitfalls \cite{DBLP:journals/sigmobile/KurkowskiCC05}.
See also \cite{DBLP:conf/isola/Kretinsky16,DBLP:journals/tomacs/AghaP18,DBLP:series/lncs/LegayLTYSG19} for extensive surveys and \cite{DBLP:conf/hybrid/RoohiW0D017} for an application of SMC to a complex real world problem.
In contrast to our work, most algorithms focus on \emph{time-bounded} or discounted properties, e.g., step-bounded reachability, rather than truly unbounded properties.
Several approaches try to bridge this gap by transforming unbounded properties into testing of bounded properties, for example \cite{DBLP:conf/sbmf/YounesCZ10,DBLP:conf/kbse/HeJBGW10,DBLP:conf/atva/RabihP09,DBLP:conf/cav/SenVA05}.
However, these approaches target models without nondeterminism and as such are not applicable to MDP.
As a slight extension, \cite{DBLP:conf/forte/BogdollFHH11} considers MDP with \emph{spurious nondeterminism}, where the resolution of nondeterminism does not influence the value of interest.

Adapting SMC techniques to models with (true) nondeterminism such as MDP is an important topic, with several recent papers.
See \cite{chang2013simulation,DBLP:journals/jmlr/StrehlLL09} and \cite[Chapter~4.1.5]{DBLP:phd/dnb/Weininger22} for a survey on simulation-based algorithms in this context.
One approach is to give nondeterminism a probabilistic interpretation, e.g., resolving it uniformly, as is done in PRISM for MDP \cite{DBLP:conf/cav/KwiatkowskaNP11} and Uppaal SMC for timed automata \cite{DBLP:conf/cav/DavidLLMW11,DBLP:conf/formats/DavidLLMPVW11,DBLP:conf/ifm/Larsen13}.
A second approach, taken for example by recent versions of the \texttt{modes} tool \cite{DBLP:conf/tacas/HartmannsH14,DBLP:conf/isola/DArgenioHS18,DBLP:conf/tacas/BuddeDHS18}, is to repeatedly sample schedulers, using for example \emph{lightweight scheduler sampling} (LSS) \cite{DBLP:conf/sefm/LegayST14,DBLP:journals/sttt/DArgenioLST15}, and then estimate the performance of these controllers using existing SMC methods.
Uppaal Stratego \cite{DBLP:conf/tacas/DavidJLMT15} synthesizes a \enquote{good} scheduler and uses it for subsequent SMC analysis.
All of the above methods only yield a lower bound on the true reachability and the quality of this bound is highly dependent on the model.
Others aim to indeed quantify over all strategies and approximate the true maximal value, for example \cite{DBLP:journals/sttt/LassaigneP15,DBLP:conf/qest/HenriquesMZPC12}.
The work in those papers deals with the setting of discounted or bounded properties, respectively.
In \cite{DBLP:conf/qest/HenriquesMZPC12}, candidates for optimal schedulers are generated and gradually improved, which does not give upper bounds on the convergence.
The nearly simultaneously published \cite{DBLP:conf/rss/FuT14} essentially tackles the same problem.
In contrast to our work, their approach is model-based, i.e.\ the transition probabilities are learned, and is not guided by a heuristic, requiring to learn the whole transition matrix.

In summary, most approaches are only applicable to the first case, or, if they can work in the \enquote{limited information} setting, they require a purely probabilistic system, finite or discounted properties, or only give lower bounds on the optimal value.
Our focus explicitly lies on the limited information case, and, similar to many approaches from statistical model checking \cite{DBLP:conf/cav/YounesS02,DBLP:conf/vmcai/HeraultLMP04,DBLP:conf/cav/SenVA04,DBLP:conf/rss/FuT14}, we aim to provide PAC guarantees, however on the optimal value of an infinite horizon objective in models with nondeterminism.

Another issue of statistical methods is the analysis of \emph{rare events}.
This is, of course, very relevant for SMC approaches in general.
They can be addressed using for example importance sampling \cite{DBLP:conf/tacas/JegourelLS12,DBLP:conf/kbse/HeJBGW10} or importance splitting \cite{DBLP:conf/cav/JegourelLS13,DBLP:conf/setta/BuddeDH17}.
We take a rather conservative approach towards rare events and delegate more sophisticated handling of this issue to future work.

\subsection{Differences to the Published Article} \label{sec:intro:differences}

This work is a significant extension of \cite{DBLP:conf/atva/BrazdilCCFKKPU14}.
Numerous details are refined and errors discovered and fixed.
We discuss major changes in the following.
Notably, in the process of resolving some of the issues of \cite{DBLP:conf/atva/BrazdilCCFKKPU14}, we also discovered several problems in \cite{DBLP:conf/icml/StrehlLWLL06}, on which the DQL method of \cite{DBLP:conf/atva/BrazdilCCFKKPU14} is based, both conceptually and in terms of proof structure.
\begin{itemize}
	\item
	A complete rewrite, only retaining parts of the proof strategies.

	\item
	The related work is updated with recent advances and work based on \cite{DBLP:conf/atva/BrazdilCCFKKPU14}.

	\item
	The BRTDP approach and related proofs are extended significantly to a generic template, allowing for a variety of implementations of the sampling methods.

	\item
	Both variants of the DQL algorithm have been restructured and simplified.

	\item
	The proofs, especially those related to DQL, are more modular and easier to adapt / re-use for similar endeavours in these directions.

	\item
	Several technical issues of the original paper are fixed.
	Firstly, the proofs in the appendix proved properties of slightly different algorithms, only to conclude with a brief, imprecise argument that the presented algorithms are not too different from the algorithms proven correct.
	Some proofs were only given implicitly or assumed to be common knowledge, in particular treatment of collapsed end components and similar.
	Moreover, several small mistakes have been corrected.

	\item
	Lemma~16 of the original paper both has a flawed proof and an erroneous statement, which is now fixed:
	Firstly, the Algorithm as presented potentially never follows an $\varepsilon$ optimal strategy, as exemplified in \cref{exa:upper_strat_is_bad}.
	Secondly, the proof applies the multiplicative Chernoff bound to variables $X_i$, which indicate whether the algorithm performed a particular action during a time interval.
	To apply this bound, the variables would need to be independent, but the $X_i$ are dependent.
	This is elaborated in detail later on.

	Interestingly, a similar, yet slightly different error already is present in \cite{DBLP:conf/icml/StrehlLWLL06}.
	Firstly, their Theorem~1 claims that the algorithm eventually follows an $\varepsilon$ optimal strategy, which does not hold due to the same reasons.
	Secondly, in the corresponding proof the authors apply the Hoeffding bound to similar dependent variables.
	This happens at same location in the overall proof layout as in \cite{DBLP:conf/atva/BrazdilCCFKKPU14}, however the applied bound is different.
	Our alternative approach to proving the statements is also applicable to the proof of \cite{DBLP:conf/icml/StrehlLWLL06}.
\end{itemize}

\subsection{Impact of the Presented Work} \label{sec:intro:influence}

Since its publications about a decade ago, the two approaches introduced by \cite{DBLP:conf/atva/BrazdilCCFKKPU14}, i.e.\ BRTDP for complete information and DQL for limited information, have directly inspired a number of subsequent works, of which we provide a (non-exhaustive) list.
Firstly, the BRTDP approach has been extended to settings with long-run average reward \cite{DBLP:conf/cav/AshokCDKM17}, continuous time Markov chains \cite{DBLP:conf/atva/AshokBHK18}, continuous space MDP \cite{DBLP:conf/concur/GroverKMW22}, and stochastic games \cite{DBLP:journals/iandc/EisentrautKKW22}.
Notably, taking inspiration from \cite{DBLP:conf/atva/BrazdilCCFKKPU14} and subsequent works, \cite{DBLP:conf/lics/KretinskyMW23} recently provided a unified approach to value iteration for stochastic games.
Concretely, this work extends the central ideas required to obtain convergence guarantees in MDP to stochastic games in a unified way, subsuming and extending, among others, the ideas and algorithms of \cite{DBLP:conf/atva/BrazdilCCFKKPU14,DBLP:conf/rp/HaddadM14,DBLP:conf/cav/AshokCDKM17,DBLP:journals/iandc/EisentrautKKW22}.
In particular, this explains how to extend the BRTDP approach to further objectives, such as safety, expected total reward, or mean payoff.
In an orthogonal direction, \cite{DBLP:journals/lmcs/KretinskyM20} modifies the approach of \cite{DBLP:conf/atva/BrazdilCCFKKPU14} to determine \emph{cores} of probabilistic systems, which intuitively describe \enquote{most} possible behaviours of the given system.
(This can also be viewed as a probabilistic generalization of the set of reachable states.)

Secondly, the DQL approach (and its proof strategy) inspired a \emph{model-based} variant \cite{DBLP:conf/cav/AshokKW19}, which improved scalability.
(Note that, as remarked in \cite[Appendix~D]{DBLP:conf/cav/AshokKW19}, the convergence of their \enquote{fast} variant is not proven.)
Subsequently, this lead to a surge of papers considering model-based SMC, for example adapting to MDP with reachability \cite{DBLP:conf/isola/AshokDKW20} or mean payoff objective \cite{DBLP:conf/cav/AgarwalGKM22}, continuous state-spaces \cite{DBLP:journals/jair/BadingsRAPPSJ23}, dynamic information flow tracking games \cite{DBLP:conf/cdc/WeiningerGMK21}, or changing environments \cite{DBLP:conf/nips/SuilenS0022}.

Thirdly, for practical impact, we highlight the tool PET \cite{DBLP:conf/atva/Meggendorfer22,DBLP:conf/cav/MeggendorferW24}, which directly implements and extends the BRTDP approach in a highly efficient manner.
As seen in several evaluations, the relevance of partial exploration in practice highly depends on the structure of the model (as with many other approaches).
In some cases, effectively the entire model has to be explored and there is no improvement possible.
However, for several families of models orders-of-magnitude or even \emph{arbitrary} speed-ups can be observed.
This tool has also participated in several iterations of the \emph{Comparison of Tools for the Analysis of Quantitative Formal Models} (QComp), a friendly competition of quantitative model checking tools, namely in 2019 \cite{DBLP:conf/tacas/HahnHHKKKPQRS19} (as \texttt{PRISM-TUMheuristic}), 2020 \cite{DBLP:conf/isola/BuddeHKKPQTZ20}, and 2023 \cite{qcomp2023}.

\subsection{Contributions and Structure}

In \cref{sec:preliminaries} we set up notation and introduce some known results.
We then present our contributions as follows.
\begin{itemize}
	\item We introduce an extensible framework for efficient reachability on \enquote{complete information} MDP without end components in \cref{sec:brtdp_no_ec} and extend it to arbitrary MDP in \cref{sec:brtdp}.
	\item We introduce a model-free PAC learning algorithm for reachability on \enquote{limited information} MDP without end components in \cref{sec:dql_no_ec} and extend it to arbitrary MDP in \cref{sec:dql}.
\end{itemize}
We conclude in \cref{sec:conclusion}.
We intentionally omit an experimental evaluation and instead refer to tools based on these ideas, see e.g.\ the works in \cref{sec:intro:influence}.
 	\section{Preliminaries} \label{sec:preliminaries}

\begin{figure}
  \centering
     \includegraphics[scale=1.4]{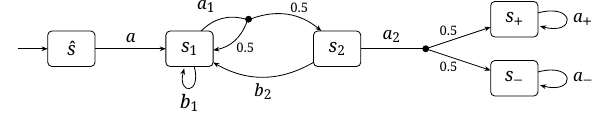}





	\caption{
		An example Markov decision process.
		Boxes represent states, dots represent actions, and arrows correspond to transitions (with the respective probabilities as labels).
		For simplicity, actions with a single successor are depicted as a single, direct arrow and the probability $1$ is omitted.
		We use this notation throughout the paper.
	} \label{fig:example}
\end{figure}

As usual, $\Naturals$ and $\Reals$ refers to the (positive) natural numbers and real numbers, respectively.
Given two real numbers $a, b \in \Reals$ with $a \leq b$, $[a, b] \subseteq \Reals$ denotes the set of all real numbers between $a$ and $b$ inclusively.
For a set $S$, $\setcomplement{S}$ denotes its complement, while $S^\star$ and $S^\omega$ refers to the set of finite and infinite sequences comprising elements of $S$, respectively.
We often explicitly name sub-claims in the form of \plabel{proof:pseudo}, and reference them by \ref{proof:pseudo}. In the digital version, the references are clickable.

We assume familiarity with basic notions of probability theory, e.g., \emph{probability spaces} and \emph{probability measures}.
A \emph{probability distribution} over a countable set $X$ is a mapping $\distribution : X \to [0,1]$, such that $\sum_{x \in X} \distribution(x) = 1$.
Its \emph{support} is denoted by $\support(\distribution) = \{x \in X \mid \distribution(x) > 0\}$.
$\Distributions(X)$ denotes the set of all probability distributions on $X$.
Some event happens \emph{almost surely} (a.s.) if it happens with probability $1$.
For readability, we omit detailed treatment of probability measures on uncountable sets and instead direct the reader to appropriate literature, e.g.\ \cite{billingsley2008probability}.
\subsection{Markov Systems}
Markov decision processes (MDPs) are a widely used formalism to capture both non-determinism (for, e.g., control, concurrency) and probability.
For a \enquote{complete} introduction to Markov systems, we direct the interested reader to \cite{DBLP:books/wi/Puterman94,kallenberg2011markov}.
A lighter, more recent introduction can be found in \cite[Chapter~2]{DBLP:phd/dnb/Meggendorfer21}.

First, we introduce Markov chains (MCs), which are purely stochastic.
\begin{definition}
	A \emph{Markov chain} (MC) is a tuple $\MC = (\States, \mctransitions)$, where
		$\States$ is a (countable) set of \emph{states}, and
		$\mctransitions : \States \to \Distributions(\States)$ is a \emph{transition function} that for each state $s$ yields a probability distribution over successor states.
\end{definition}
Note that we do not require the set of states of a Markov chain to be finite.
This is mainly due to technical reasons, which become apparent later.

Next, we define MDP, which extend Markov chains with non-determinism.
\begin{definition}
	A \emph{Markov decision process} (MDP) is a tuple $\MDP = (\States, \Actions, \stateactions, \mdptransitions)$, where
		$\States$ is a finite set of \emph{states},
		$\Actions$ is a finite set of \emph{actions},
		$\stateactions: \States \to 2^{\Actions} \setminus \{\emptyset\}$ assigns to every state a non-empty set of \emph{available actions}, and
		$\mdptransitions: \States \times \Actions \to \Distributions(\States)$ is a \emph{transition function} that for each state $s$ and (available) action $a \in \stateactions(s)$ yields a probability distribution over successor states.

	A state $s \in \States$ is called \emph{terminal}, if $\mdptransitions(s, a)(s) = 1$ for all enabled actions $a \in \stateactions(s)$.
\end{definition}
\begin{remark}
	We assume w.l.o.g.\ that actions are unique for each state, i.e.\ $\stateactions(s) \intersection \stateactions(s') = \emptyset$ for $s \neq s'$ and denote the unique state associated with action $a$ in $\MDP$ by $\actionstate<\MDP>(a)$.
	This can be achieved in general by replacing $\Actions$ with $\States \times \Actions$ and adapting $\stateactions$ and $\mdptransitions$.
\end{remark}
Note that we assume the set of available actions to be non-empty in all states.
This means that a run can never get \enquote{stuck} in a degenerate state without successors.
See \cref{fig:example} for an example of an MDP.

For ease of notation, we overload functions mapping to distributions $f: Y \to \Distributions(X)$ by $f: Y \times X \to [0, 1]$, where $f(y, x) \coloneqq f(y)(x)$.
For example, instead of $\mctransitions(s)(s')$ and $\mdptransitions(s, a)(s')$ we write $\mctransitions(s, s')$ and $\mdptransitions(s, a, s')$, respectively.
Furthermore, given a distribution $\distribution \in \Distributions(X)$ and a function $f : X \to \Reals$ mapping elements of a set $X$ to real numbers, we write $\ExpectedSum{\distribution}{f} \coloneqq \sum_{x \in X} \distribution(x) f(x)$ to denote the weighted sum of $f$ with respect to $\distribution$.
For example, $\ExpectedSumMC{\mctransitions}{s}{f}$ and $\ExpectedSumMDP{\mdptransitions}{s}{a}{f}$ denote the weighted sum of $f$ over the successors of $s$ in MC and $s$ with action $a$ in MDP, respectively.

\subsubsection*{State-Action Pairs}
Throughout this work, we often speak about \emph{state-action pairs}.
This refers to tuples of the form $(s, a)$ where $s \in \States$ and $a \in \stateactions(s)$ or equivalently $a \in \Actions$ and $s = \actionstate<\MDP>(a)$.
Due to our restriction that each action is associated with exactly one state, denoting both the state and action is superfluous, strictly speaking.
We keep the terminology for consistency with other works.
In \cref{sec:dql} this notation would however introduce significant overhead and we only speak about actions there.

Given a set of states $\States' \subseteq \States$ and an available-action function $\stateactions' : \States' \to \powerset(\Actions) \setminus \emptyset$ we write, slightly abusing notation, $\States' \times \stateactions' = \{(s, a) \mid s \in \States', a \in \stateactions'(s)\}$ to denote the set of state-action pairs obtained in $\States'$ using $\stateactions'$.
In particular, $\States \times \stateactions$ denotes the set of all state-action pairs in an MDP.
Moreover, for a set of state-action pairs $K$ we also write $s \in K$ if there exists an action $a$ such that $(s, a) \in K$.
Dually, we also write $a \in K$ if an appropriate state $s$ exists.

Note that there are two isomorphic representations of sets of state-action pairs, namely as a set of pairs $X \subseteq \States \times \stateactions$ or as a pair of sets $(R, B) \in 2^\States \times 2^{\Actions}$.
We make use of both views and note explicitly when switching from one to another.

\subsubsection*{Paths \& Strategies}
An \emph{infinite path} $\infinitepath$ in a Markov chain is an infinite sequence $\infinitepath = s_1 s_2 \cdots \in \States^\omega$, such that for every $i \in \Naturals$ we have that $\mctransitions(s_i, s_{i+1}) > 0$.
A \emph{finite path} (or \emph{history}) $\finitepath = s_1 s_2 \dots s_n \in \States^\star$ is a non-empty, finite prefix of an infinite path of length $\cardinality{\finitepath} = n$, ending in some state $s_n$, denoted by $\last{\finitepath}$.
For simplicity, we define $\cardinality{\infinitepath} = \infty$ for infinite paths $\infinitepath$.
We use $\infinitepath(i)$ and $\finitepath(i)$ to refer to the $i$-th state $s_i$ in a given (in)finite path.
A state $s$ \emph{occurs} in an (in)finite path $\infinitepath$, denoted by $s \in \infinitepath$, if there exists an $i \leq \cardinality{\infinitepath}$ such that $s = \infinitepath(i)$.
We denote the set of all finite (infinite) paths of a Markov chain $\MC$ by $\Finitepaths<\MC>$ ($\Infinitepaths<\MC>$).
Further, we use $\Finitepaths<\MC, s>$ ($\Infinitepaths<\MC, s>$) to refer to all (in)finite paths starting in state $s \in \States$.
Observe that in general $\Finitepaths<\MC>$ and $\Infinitepaths<\MC>$ are proper subsets of $\States^\star$ and $\States^\omega$, respectively, as we imposed additional constraints.

An \emph{infinite path} in an MDP is an infinite sequence $\infinitepath = (s_1, a_1) (s_2, a_2) \cdots \in (\States \times \stateactions)^\omega$, such that for every $i \in \Naturals$, $a_i \in \stateactions(s_i)$ and $s_{i+1} \in \support(\mdptransitions(s_i, a_i))$, setting the length $\cardinality{\infinitepath} = \infty$.
\emph{Finite path}s $\finitepath$ and $\last{\finitepath}$ are defined analogously as elements of $(\States \times \stateactions)^\star \times \States$ and the respective last state.
Again, $\infinitepath(i)$ and $\finitepath(i)$ refer to the $i$-th state in an (in)finite path with an analogous definition of a state occurring, $\cardinality{\finitepath}$ denotes the length of a finite path, we refer to the set of (in)finite paths of an MDP $\MDP$ by $\Finitepaths<\MDP>$ ($\Infinitepaths<\MDP>$), and write $\Finitepaths<\MDP, s>$ ($\Infinitepaths<\MDP, s>$) for all such paths starting in a state $s \in \States$.
Further, we use $\infinitepath^a(i)$ and $\finitepath^a(i)$ to denote the $i$-th action in the respective path.
We say that a state-action pair $(s,a)$ is in an (in)finite path $\finitepath$ if there exists an $i < \cardinality{\finitepath}$ with $s = \finitepath(i)$ and $a = \finitepath^a(i)$.

A Markov chain together with a state $s \in \States$ naturally induces a unique probability measure $\ProbabilityMC<\MC, s>$ over infinite paths \cite[Chapter~10]{DBLP:books/daglib/0020348}.
For MDP, we first need to eliminate the non-determinism in order to obtain such a probability measure.
This is achieved by \emph{strategies} (also called \emph{policy}, \emph{controller}, or \emph{scheduler}).

\begin{definition}
	A strategy on an MDP $\MDP = (\States, \Actions, \stateactions, \mdptransitions)$ is a function mapping finite paths to distributions over available actions, i.e.\ $\strategy : \Finitepaths<\MDP> \to \Distributions(\Actions)$ where $\support(\strategy(\finitepath)) \subseteq \stateactions(\last{\finitepath})$ for all $\finitepath \in \Finitepaths<\MDP>$.
\end{definition}
Intuitively, a strategy is a \enquote{recipe} describing which step to take in the current state, given the evolution of the system so far.
Note that the strategy may yield a distribution on the actions to be taken next.

A strategy $\strategy$ is called \emph{memoryless} (or \emph{stationary}) if it only depends on $\last{\finitepath}$ for all finite paths $\finitepath$ and we identify it with $\strategy : \States \to \Distributions(\Actions)$.
Similarly, it is called \emph{deterministic}, if it always yields a Dirac distribution, i.e.\ picks a single action to be played next, and we identify it with $\strategy : \Finitepaths<\MDP> \to \Actions$.
Together, \emph{memoryless deterministic} strategies can be treated as functions $\strategy : \States \to \Actions$ mapping each state to an action.
We write $\Strategies<\MDP>$ to denote the set of all strategies of an MDP $\MDP$, $\StrategiesM<\MDP>$ for memoryless strategies, and $\StrategiesMD<\MDP>$ for all memoryless deterministic strategies.

Fixing a strategy $\strategy$ induces a Markov chain $\MDP^\strategy = (\Finitepaths<\MDP>, \mctransitions^\strategy)$, where for a state $\finitepath = s_1 a_1 \dots s_n \in \Finitepaths<\MDP>$, action $a_{n+1} \in \stateactions(s_n)$, and successor state $s_{n+1} \in \support(\mdptransitions(s_n, a_{n+1}))$, the successor distribution is given by $\mctransitions^\strategy(\finitepath, \finitepath a_{n+1} s_{n+1}) = \strategy(\finitepath, a_{n+1}) \cdot \mdptransitions(s, a_{n+1}, s_{n+1})$.
In particular, for any MDP $\MDP$, strategy $\strategy \in \Strategies<\MDP>$, and state $s$, we obtain a measure over paths\footnote{Technically, this measure operates on infinite sequences of finite paths, as each state of $\MDP^\strategy$ is a finite path.
But this measure can easily be projected directly on finite paths.}
$\ProbabilityMC<\MDP^\strategy, s>$, which we refer to as $\ProbabilityMDP<\MDP, s><\strategy>$.
Observe that all these measures operate on the same probability space, namely the set of all infinite paths $\Infinitepaths<\MDP>$.
(See e.g.\ \cite[Section~2.1.6]{DBLP:books/wi/Puterman94} for further details.)
Consequently, given a measurable event $A$, we can define the maximal probability of this event starting from state $\initialstate$ under any strategy by
\begin{equation*}
	\ProbabilityMDPsup<\MDP, \initialstate>[A] \coloneqq {\sup}_{\strategy \in \Strategies<\MDP>} \ProbabilityMDP<\MDP, \initialstate><\strategy>[A].
\end{equation*}
Note that depending on the structure of $A$ it may be the case that no optimal witness exists, thus we have to resort to the supremum instead of the maximum.
We lift this restriction for our particular use case later on.
For a memoryless strategy $\strategy \in \StrategiesM<\MDP>$, we can identify $\MDP^\strategy$ with a Markov chain over the states of $\MDP$.

Given an MDP $\MDP$, memoryless strategy $\strategy \in \StrategiesM<\MDP>$, and a function assigning a value to each state-action pair $f : \States \times \stateactions \to \Reals$, we define $\ExpectedSumStrat{\strategy}{f} : \States \to \Reals$ as the expected value of taking one step in state $s$ following the strategy $\strategy$, i.e.
\begin{equation*}
	\ExpectedSumStrat{\strategy}{f}(s) \coloneqq {\sum}_{a \in \stateactions(s)} \strategy(s, a) \cdot f(s, a).
\end{equation*}

\subsubsection*{Strongly Connected Components and End Components}
A non-empty set of states $C \subseteq \States$ in a Markov chain is \emph{strongly connected} if for every pair $s, s' \in C$ there is a non-trivial path from $s$ to $s'$.
Such a set $C$ is a \emph{strongly connected component} (SCC) if it is inclusion maximal, i.e.\ there exists no strongly connected $C'$ with $C \subsetneq C'$.
Thus, each state belongs to at most one SCC.
An SCC is called \emph{bottom strongly connected component} (BSCC) if additionally no path leads out of it, i.e.\ for all $s \in C, s' \in S \setminus C$ we have $\mctransitions(s, s') = 0$.
The set of SCCs and BSCCs in an MC $\MC$ is denoted by $\Sccs(\MC)$ and $\Bsccs(\MC)$, respectively.

The concept of SCCs is generalized to MDPs by so called \emph{(maximal) end components} \cite{de1997formal}.
Intuitively, an end component describes a set of states in which the system can remain forever.
\begin{definition} \label{def:ec}
	Let $\MDP = (\States, \Actions, \stateactions, \mdptransitions)$ be an MDP.
	A pair $(R, B)$, where $\emptyset \neq R \subseteq \States$ and $\emptyset \neq B \subseteq \Union_{s \in R} \stateactions(s)$, is an \emph{end component} of an MDP $\MDP$ if
	\begin{enumerate}[(i)]
		\item for all $s \in R, a \in B \intersection \stateactions(s)$ we have $\support(\mdptransitions(s, a)) \subseteq R$, and \label{def:ec:support}
		\item for all $s, s' \in R$ there is a finite path $\finitepath = s a_0 \dots a_n s' \in \Finitepaths<\MDP> \intersection (R \times B)^\star \times R$, i.e.\ the path stays inside $R$ and only uses actions in $B$. \label{def:ec:reach}
	\end{enumerate}
	An end component $(R, B)$ is a \emph{maximal end component} (MEC) if there is no other end component $(R', B')$ such that $R \subseteq R'$ and $B \subseteq B'$.
\end{definition}
We identify an end component with the respective set of states, e.g.\ $s \in E = (R, B)$ means $s \in R$.
Observe that given two overlapping ECs $(R_1, B_1)$ and $(R_2, B_2)$ with $R_1 \intersection R_2 \neq \emptyset$, their union $(R_1 \union R_2, B_1 \union B_2)$ also is an EC.
Consequently, each state belongs to at most one MEC.
Again, a MEC is \emph{bottom} if there are no outgoing transitions.
The set of ECs of an MDP $\MDP$ is denoted by $\Ecs(\MDP)$, the set of MECs by $\Mecs(\MDP)$.
For the MDP in \cref{fig:example}, the set of MECs is given by $(\{s_1, s_2\}, \{a_1, b_1, b_2\})$, $(\{\targetstate\}, \{a_+\})$, and $(\{\sinkstate\}, \{a_-\})$.

\begin{remark} \label{rem:scc_and_mec_decomposition}
	For a Markov chain $\MC$, the computation of $\Sccs(\MC)$, $\Bsccs(\MC)$ and a topological ordering of the SCCs can be achieved in linear time w.r.t.\ the number of states and transitions by, e.g., Tarjan's algorithm~\cite{DBLP:journals/siamcomp/Tarjan72}.
	Similarly, the MEC decomposition of an MDP can be computed in polynomial time \cite{DBLP:journals/jacm/CourcoubetisY95}.
	For improved algorithms on general MDP and various special cases see~\cite{DBLP:conf/soda/ChatterjeeH11,DBLP:conf/soda/ChatterjeeH12,DBLP:journals/jacm/ChatterjeeH14}.
\end{remark}
These components fully capture the limit behaviour of any Markov chain and decision process, respectively.
Intuitively, both of the following statements say that a run of such systems eventually remains inside one BSCC or MEC forever, respectively.
The measurability of the sets in the following two lemmas is well known, see, e.g.\ \cite[Chapter~10]{DBLP:books/daglib/0020348}.
\begin{lemma}[MC almost-sure absorption] \label{stm:mc_almost_sure_absorption}
	For any MC $\MC$ and state $s$, we have that $\ProbabilityMC<\MC, s>[\{\infinitepath \mid \exists R_i \in \Bsccs(\MC). \exists n_0 \in \Naturals. \forall n > n_0. \infinitepath(n) \in R_i\}] = 1$.
\end{lemma}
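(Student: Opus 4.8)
The plan is to show that from any state $s$, a run of the Markov chain almost surely settles into one of the finitely many (or countably many, since $\States$ may be countable) bottom strongly connected components. First I would reduce the claim to two separate assertions: (a) the set $\Inf(\infinitepath)$ of states visited infinitely often is almost surely contained in some single BSCC, and (b) if $\Inf(\infinitepath) \subseteq R_i$ for a BSCC $R_i$, then in fact $\infinitepath$ eventually stays inside $R_i$, i.e.\ there is $n_0$ with $\infinitepath(n) \in R_i$ for all $n > n_0$. Assertion (b) is almost immediate: once all but finitely many states on the path lie in $R_i$, pick $n_0$ to be the last index at which the path is outside $R_i$; such an index exists because only finitely many visits to states outside $R_i$ can occur (each state outside $\Inf(\infinitepath)$ is visited finitely often, and — in the countable-state case — one must be slightly careful, which I address below).

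The core of the argument is assertion (a), which itself splits into two parts. First, $\Inf(\infinitepath)$ is almost surely non-empty and, whenever it is non-empty, it is almost surely \emph{strongly connected}: if $s'$ is visited infinitely often and $\mctransitions(s', s'') > 0$, then by a Borel–Cantelli / second-moment type argument the successor $s''$ is also visited infinitely often almost surely — each time we are at $s'$ we have a fixed positive probability of moving to $s''$, and we are at $s'$ infinitely often, so $s''$ is entered infinitely often with probability one. Iterating, $\Inf(\infinitepath)$ is closed under the one-step reachability relation restricted to itself and any two of its states communicate, hence it is contained in a single SCC. Second, $\Inf(\infinitepath)$ has almost surely \emph{no outgoing transitions}: if $s' \in \Inf(\infinitepath)$ and $\mctransitions(s', s'') > 0$ with $s''$ outside the SCC of $s'$, the same argument forces $s''$ to be visited infinitely often, contradicting that $s''$ lies outside. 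Therefore $\Inf(\infinitepath)$ is almost surely a BSCC (and in particular, since the chain has at least the trivial progress, $\Inf(\infinitepath)\neq\emptyset$ a.s.).

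To make the Borel–Cantelli step rigorous I would use the Lévy extension of the second Borel–Cantelli lemma (the conditional / martingale version): if $A_n$ is the event ``at step $n$ the path is at $s'$ and moves to $s''$'', then on the event $\{s' \in \Inf\}$ the sum $\sum_n \Probability[A_n \mid \mathcal{F}_{n-1}]$ diverges (it picks up at least $\mctransitions(s',s'') > 0$ each time the path is at $s'$), so almost surely infinitely many $A_n$ occur on that event, i.e.\ $s'' \in \Inf$. Since $\States$ is countable there are only countably many pairs $(s', s'')$ to consider, so a single application of countable additivity closes the argument uniformly. The main obstacle I anticipate is precisely this countable-state subtlety: in a countably infinite chain it is a priori conceivable that $\Inf(\infinitepath) = \emptyset$ (the path ``escapes to infinity''), in which case the claimed $n_0$ and $R_i$ need not exist. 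The statement as written implicitly assumes this does not happen; in the setting of this paper the relevant chains are always either finite or arise as $\MDP^\strategy$ with suitable structure, so I would either invoke finiteness of the reachable part or note that the lemma is applied only where the chain is finite, and cite \cite[Chapter~10]{DBLP:books/daglib/0020348} for the standard finite-state statement, extending it by the argument above where a countable state space genuinely occurs.
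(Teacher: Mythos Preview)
The paper's own proof is a one-line citation to \cite[Theorem~10.27]{DBLP:books/daglib/0020348}; it does not argue anything. Your proposal, by contrast, sketches the standard direct argument via the conditional (L\'evy) Borel--Cantelli lemma, which is exactly how the cited theorem is proven in the finite-state case. So your route is not so much different as it is an unpacking of what the paper defers to the literature.

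Two remarks. First, your argument for assertion~(b) is more convoluted than necessary: once a path enters a BSCC $R_i$ it can never leave by definition of \enquote{bottom}, so the existence of $n_0$ is immediate from the first visit to $R_i$ (which happens since $R_i \subseteq \Inf(\infinitepath)$). There is no need to count visits to states outside $R_i$. Second, and more importantly, you correctly flag that the lemma as stated is not true for arbitrary countable Markov chains (a chain that drifts to infinity has no BSCCs at all, so the event has probability~$0$). The paper's citation is to a result about \emph{finite} Markov chains, and indeed every application of this lemma in the paper is to a finite chain or to $\MDP^\strategy$ where the relevant structure is finite. Your instinct to restrict to the finite case or to note the implicit assumption is the right call; the paper silently relies on this.
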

\begin{proof}
	Follows from \cite[Theorem~10.27]{DBLP:books/daglib/0020348}. 
\end{proof}
\begin{lemma}[MDP almost-sure absorption] \label{stm:mdp_almost_sure_absorption}
	For any MDP $\MDP$, state $s$, and strategy $\strategy$, we have that 
	\begin{equation*}
		\ProbabilityMDP<\MDP, s><\strategy>[\{\infinitepath \mid \exists (R_i, B_i) \in \Mecs(\MDP). \exists n_0 \in \Naturals. \forall n > n_0. \infinitepath(n) \in R_i\}] = 1.
	\end{equation*}
\end{lemma}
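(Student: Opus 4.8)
The plan is to reduce the MDP statement to the Markov chain statement (\cref{stm:mc_almost_sure_absorption}) by fixing the strategy $\strategy$ and passing to the induced Markov chain $\MDP^\strategy$. Concretely, $\MDP^\strategy = (\Finitepaths<\MDP>, \mctransitions^\strategy)$ is a Markov chain, and $\ProbabilityMDP<\MDP, s><\strategy>$ is (the projection of) $\ProbabilityMC<\MDP^\strategy, s>$. By \cref{stm:mc_almost_sure_absorption} applied to $\MDP^\strategy$, almost every infinite path of $\MDP^\strategy$ eventually stays forever inside some BSCC of $\MDP^\strategy$. Since each state of $\MDP^\strategy$ is a finite path $\finitepath \in \Finitepaths<\MDP>$, ending in some state $\last{\finitepath} \in \States$, there is a natural projection $\finitepath \mapsto \last{\finitepath}$ of $\MDP^\strategy$-paths onto $\MDP$-paths. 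The key step is to show that if a run of $\MDP^\strategy$ remains forever inside a BSCC $Q$ of $\MDP^\strategy$, then the set $R \coloneqq \{\last{\finitepath} \mid \finitepath \in Q\}$ together with the set $B$ of actions used along edges within $Q$ forms an end component of $\MDP$, hence is contained in a MEC; the projected run then remains forever in that MEC.

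The main steps are: (1) argue that the event in the statement is measurable and that $\ProbabilityMDP<\MDP, s><\strategy>$ is the pushforward of $\ProbabilityMC<\MDP^\strategy, s>$ under the projection map; (2) apply \cref{stm:mc_almost_sure_absorption} to $\MDP^\strategy$ to get that a.s.\ the run eventually stabilizes in a BSCC $Q$; (3) verify that the projection $(R, B)$ of a BSCC $Q$ satisfies \cref{def:ec}: property~(i) holds because from a state $\finitepath \in Q$ with $\last{\finitepath} = s$ and action $a$ used in $Q$, every successor $\finitepath a s'$ with $s' \in \support(\mdptransitions(s, a))$ has positive probability in $\mctransitions^\strategy$ and hence lies in $Q$ (as $Q$ is a BSCC, it is closed under successors with positive probability), so $\support(\mdptransitions(s,a)) \subseteq R$; property~(ii) holds because $Q$ is strongly connected, so any two $\finitepath_1, \finitepath_2 \in Q$ are joined by a finite path within $Q$ using only actions from $B$, which projects to a finite path in $R$ using actions in $B$ from $\last{\finitepath_1}$ to $\last{\finitepath_2}$; and for two arbitrary states $s, s' \in R$ we pick witnesses $\finitepath_1, \finitepath_2 \in Q$ with those last states and conclude; (4) since $(R,B)$ is an EC it is contained in some MEC $(R_i, B_i) \in \Mecs(\MDP)$, and if the $\MDP^\strategy$-run stays in $Q$ from some index $n_0$ on, its projection stays in $R \subseteq R_i$ from $n_0$ on, giving the desired event.

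The main obstacle I expect is step~(3), specifically being careful that the projection of a BSCC of $\MDP^\strategy$ is genuinely an end component: one must make sure that the actions collected from intra-$Q$ edges are available at the projected states and that closure under positive-probability successors in $\MDP^\strategy$ (a property of BSCCs) translates precisely into condition~(i) of \cref{def:ec} — this uses the definition $\mctransitions^\strategy(\finitepath, \finitepath a s') = \strategy(\finitepath, a)\cdot\mdptransitions(s,a,s')$, so an edge within $Q$ with action $a$ forces $\strategy(\finitepath, a) > 0$, hence $a \in \stateactions(s)$, and forces every $\mdptransitions(s,a)$-successor to reappear as an edge and thus to lie in $Q$. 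A minor additional point, handled by \cref{stm:mc_almost_sure_absorption} being stated for countable-state Markov chains, is that $\Finitepaths<\MDP>$ is countable (since $\States$ and $\Actions$ are finite), so applying the MC lemma to $\MDP^\strategy$ is legitimate; measurability in step~(1) is standard and can be cited as in the remark preceding the lemmas.
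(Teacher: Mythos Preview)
Your reduction to \cref{stm:mc_almost_sure_absorption} via $\MDP^\strategy$ does not go through for general strategies. In the induced Markov chain $\MDP^\strategy = (\Finitepaths<\MDP>, \mctransitions^\strategy)$, every transition strictly extends the current finite path: from $\finitepath$ one moves to $\finitepath a s'$. Hence no state of $\MDP^\strategy$ is ever revisited, so $\MDP^\strategy$ has \emph{no} BSCCs at all. Your step~(2) would then assert that an empty event has probability~$1$, which is false; and your step~(3) never gets off the ground because there is no BSCC $Q$ to project. The claim that \cref{stm:mc_almost_sure_absorption} covers countable chains is also not safe: the cited source (Baier--Katoen, Theorem~10.27) treats finite Markov chains, and the statement fails for countable ones in general (take the deterministic chain $n \mapsto n+1$ on $\Naturals$, which has no BSCC).

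Your argument \emph{would} work if $\strategy$ were memoryless, since then $\MDP^\strategy$ can be identified with a finite Markov chain over $\States$ and genuine BSCCs exist; but the lemma is stated for arbitrary strategies. The paper's proof simply cites \cite[Theorem~3.2]{de1997formal}, which handles general strategies by a different route: one shows directly that the set of state--action pairs occurring infinitely often along a run is almost surely an end component (closure under successors follows from a Borel--Cantelli-type argument, strong connectivity from recurrence), and any end component is contained in some MEC. If you want a self-contained proof, that is the argument to reconstruct; reducing to the finite-chain BSCC lemma only covers the memoryless case.
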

\begin{proof}
	Follows from \cite[Theorem~3.2]{de1997formal}. 
\end{proof}

\subsection{Reachability}
For an MDP $\MDP = (\States, \Actions, \stateactions, \mdptransitions)$ and a set of \emph{target states} $T \subseteq \States$, \emph{bounded reachability} for step $k$, denoted by $\boundedreach<k> T = \{\infinitepath \in \Infinitepaths<\MDP> \mid \exists i \in \{1, \dots, k + 1\}.~\infinitepath(i) \in T\}$, is the set of all infinite paths that reach a state in $T$ within $k$ steps.
Analogously, \emph{(unbounded) reachability} $\reach T = \{\infinitepath \in \Infinitepaths<\MDP> \mid \exists i \in \Naturals.~\infinitepath(i) \in T\}$ are all paths which eventually reach the target set $T$.
We overload the $\reach$ operator to also accept sets of state-action pairs and sets of actions, with analogous semantics.
The sets of paths produced by $\reach$ are measurable for any MDP, target set, and step bound \cite[Section~10.1.1]{DBLP:books/daglib/0020348}.\footnote{Recall that we defined MDP to have finite state and action sets.}
Note that for a set $\targetset$, both $\reach \setcomplement{\targetset}$ and $\setcomplement{\reach \targetset}$ are well-defined, however they refer to two different concepts.
The former denotes the set of all paths reaching a state not in $\targetset$, whereas the latter is the set of all paths which never reach $\targetset$ (also called \emph{co-reachability} or \emph{safety}).

Now, it is straightforward to define the \emph{maximal reachability problem} of a given set of states.
Given an MDP $\MDP$, target set $\targetset$, and state $s$, we are interested in computing the maximal probability of eventually reaching $\targetset$, starting in state $s$.
Formally, we want to compute the \emph{value} of state $s$, defined as
\begin{align*}
	\val(s) \coloneqq \ProbabilityMDPsup<\MDP, s>[\reach \targetset] = {\sup}_{\strategy \in \Strategies<\MDP>} \ProbabilityMDP<\MDP, s><\strategy>[\reach \targetset].
\end{align*}
For an example, suppose we have $\targetset = \{\targetstate\}$ in \cref{fig:example}.
This can be reached from $\initialstate$ with probability $0.5$ by always choosing action $a_1$ in $s_1$ and $a_2$ in $s_2$, and this value is optimal.
In general, an optimal strategy always exists and memoryless deterministic strategies are sufficient to achieve the optimal value \cite[Theorem~3.10]{de1997formal}, i.e.
\begin{equation*}
	\val(s) = \ProbabilityMDPmax<\MDP, s>[\reach \targetset] = {\max}_{\strategy \in \Strategies<\MDP>} \ProbabilityMDP<\MDP, s><\strategy>[\reach \targetset] = {\max}_{\strategy \in \StrategiesMD<\MDP>} \ProbabilityMDP<\MDP, s><\strategy>[\reach \targetset].
\end{equation*}
This state value function satisfies a straightforward fixed point equation, namely
\begin{equation} \label{eq:value_fixpoint}
	\val(s) = \begin{dcases*}
		1 & if $s \in \targetset$, \\
		{\max}_{a \in \stateactions(s)} \ExpectedSumMDP{\mdptransitions}{s}{a}{\val} & otherwise.
	\end{dcases*}
\end{equation}
Moreover, $\val$ is the \emph{smallest} fixed point of this equation \cite{DBLP:books/wi/Puterman94}.
In our approach, we also deal with values of state-action pairs $(s, a) \in \States \times \stateactions$, where
\begin{equation*}
	\val(s, a) \coloneqq \ExpectedSumMDP{\mdptransitions}{s}{a}{\val} = {\sum}_{s' \in \States} \mdptransitions(s, a, s') \cdot \val(s').
\end{equation*}
Intuitively, $\val(s, a)$ is the value in state $s$ when playing action $a$ and then acting optimally (note that $a$ might be a suboptimal action).
The overall value of $s$, $\val(s)$, is obtained by choosing an optimal action, i.e.\ $\val(s) = \max_{a \in \stateactions(s)} \val(s, a)$.
\begin{remark}
	Our algorithms primarily work by approximating these state-action values and derive state-values by the above equation.
	This may seem counter-intuitive at first, since we could as well directly work with state values and derive state-action values as described above, saving memory.
	However, our approaches are inspired by \emph{reinforcement learning} \cite{DBLP:books/lib/SuttonB98}, explained later, which traditionally assigns values to actions.
	Thus, we stick with this convention in our algorithms as well.
	Finally, in the limited information setting of \cref{sec:dql_no_ec,sec:dql}, the algorithms do not have access to the exact transition probabilities and hence cannot exploit the above equation.
\end{remark}
See \cite[Section~4]{DBLP:conf/sfm/ForejtKNP11} for an in-depth discussion of reachability on finite MDP.

\subsubsection*{Approximate Solutions}

The value of a state $\val(s)$ can, for example, be determined using \emph{linear programming} \cite{10.1007/BFb0032043,DBLP:conf/sfm/ForejtKNP11}\footnote{See \cite{DBLP:books/daglib/0090562} for details on linear programming in general.} in polynomial time \cite{khachiyan1979polynomial,DBLP:journals/combinatorica/Karmarkar84}.
Unfortunately, this approach turns out to be inefficient in practice \cite{DBLP:conf/rp/HaddadM14,DBLP:conf/cav/AshokCDKM17}.
One way to potentially ease the task is by only considering \emph{approximate solutions}.
Concretely, on top of an MDP $\MDP$, starting state $\initialstate$, and target set $\targetset$, we assume that we are given a precision requirement $\varepsilon > 0$.
We say a strategy $\strategy$ is $\varepsilon$-optimal, if $\ProbabilityMDP<\MDP, \initialstate><\strategy>[\reach \targetset] + \varepsilon > \val(\initialstate)$.
Analogously, a tuple of values $(l, u)$ is \emph{$\varepsilon$-optimal} if $0 \leq u - l < \varepsilon$ and $\val(\initialstate) \in [l, u]$, i.e.\ $l$ and $u$ are lower and upper bounds on the value, respectively.
All algorithms in this work are designed to efficiently compute such $\varepsilon$-optimal values.
We omit computation of a witness strategy due to the technical difficulties this would entail in the general cases.
The general idea of obtaining the witness strategies moreover is not specific to our approach, as such the related discussion may in turn distract from the central results.

Note that requiring to find a single value $v$ such that $|v - \val(\initialstate)| < \varepsilon$ is similar, however slightly stricter.
In particular, if we find $(l, u)$ with $0 \leq u - l < 2 \varepsilon$ where $\val(\initialstate) \in [l, u]$, we know that $v = (u + l) / 2$ would satisfy this requirement (i.e.\ be at most $\varepsilon$ away from the true value).

\subsection{Probabilistic Learning Algorithms} \label{sec:preliminaries:learning}
In order to obtain such approximate solutions, we study a class of \emph{learning-based} algorithms that (stochastically) approximate the value function, inspired by approaches from the field of machine learning.
Let us fix an MDP $\MDP = (\States, \Actions, \stateactions, \mdptransitions)$, starting state $\initialstate$, and target set $\targetset \subseteq \States$.
Recall that by approximating the state-action values, we approximate the overall value of a state.
Inspired by \emph{BRTDP} (bounded real-time dynamic programming) \cite{DBLP:conf/icml/McMahanLG05}\footnote{See \cite{DBLP:journals/ai/BartoBS95} for the \enquote{non-bounded} case \emph{RTDP}.}, we consider algorithms which maintain and update $\upperbound$per bounds $\upperbound : \States \times \stateactions \to [0, 1]$ and $\lowerbound$wer bounds $\lowerbound : \States \times \stateactions \to [0, 1]$ of these sate-action values $\val(s, a)$.
The functions $\upperbound$ and $\lowerbound$ are initialised to appropriate values such that $\lowerbound(s, a) \leq \val(s, a) \leq \upperbound(s, a)$ for all $s \in \States$ and $a \in \stateactions(s)$.
This is clearly satisfied by $\lowerbound(\cdot, \cdot) = 0$ and $\upperbound(\cdot, \cdot) = 1$, but non-trivial bounds obtained by previous computations or domain knowledge can be incorporated.
We define the state-bounds by
\begin{equation*}
	\upperbound(s) \coloneqq {\max}_{a \in \stateactions(s)} \upperbound(s, a), \qquad \text{and} \qquad \lowerbound(s) \coloneqq {\max}_{a \in \stateactions(s)} \lowerbound(s, a).
\end{equation*}
It may seem counter-intuitive at first that both sides are maximized.
One can think of $\upperbound(s)$ as \enquote{an upper bound on the best this state can offer} (maximization) and $\lowerbound(s)$ as \enquote{at least this value can be obtained in this state} (also maximization).

Now, we clearly have  $\lowerbound(s) \leq \val(s) \leq \upperbound(s)$, thus we can determine the value of a state $\varepsilon$-precise when these respective bounds are sufficiently close.
In particular, if we have that
\begin{equation*}
	\upperbound(\initialstate) - \lowerbound(\initialstate) = {\max}_{a \in \stateactions(\initialstate)} \upperbound(\initialstate, a) - {\max}_{a \in \stateactions(\initialstate)} \lowerbound(\initialstate, a) < \varepsilon,
\end{equation*}
the values $(\lowerbound(\initialstate), \upperbound(\initialstate))$ are $\varepsilon$-optimal.

Our \emph{learning algorithms} update the upper and lower bounds by repeatedly selecting \enquote{interesting} / promising state-action pairs of the system $\MDP$, usually by sampling the system beginning in the starting state $\initialstate$.
As such, they are similar to \emph{Q-learning} \cite{watkins1992q} approaches, a commonly used reinforcement learning technique.
By following appropriate sampling heuristics the algorithm learns \enquote{important} areas of the system and focuses computation there, potentially omitting irrelevant parts of the state space without sacrificing correctness.
For example, given a state $s$ we propose to select an action $a$ with maximal upper bound $\upperbound(s, a)$, as such an action is the most \enquote{promising} one.
Then, either this action keeps up to its promise, which will eventually be reflected by an increasing lower bound, or the algorithm finds that the upper bound is too high and lowers it.
As such, this idea is very similar to \emph{optimism in the face of uncertainty} \cite[Section~4.2]{DBLP:series/synthesis/2010Szepesvari}, \cite{lai1985asymptotically}:
We only know that the exact value lies between the upper and lower bound, thus we are optimistic and assume the best value (= the upper bound) during sampling.
As it turns out, this will lead us to either (i)~proving that the upper bound is indeed correct (so following it was the \enquote{correct} move all along) or (ii)~proving that the bound is too optimistic, i.e.\ leading us to lower it (so following it was \enquote{required} to realize this fact).

The algorithms repeatedly experience (learning) \emph{episodes}, where each episode consists of several \emph{steps}.
One episode corresponds to sampling a path of some length in the system, while one step corresponds to sampling the successor state, i.e.\ each episode comprises several steps.
Throughout this paper, we use $\algoepisode \in \Naturals$ exclusively to refer to the $\algoepisode$-th episode of some algorithm execution.
Later we also refer to distinct steps within episodes by $\algostep \in \Naturals$.
In particular, $\algostep$ denotes the $\algostep$-th overall step.
Finally, $\algostep_\algoepisode$ denotes the first step of the $\algoepisode$-th episode, i.e.\ its starting step.
These variables also appear in the algorithms.

The considered algorithms make heavy use of randomness during their execution.
Thus, in order to reason about them, we model them as a stochastic process over an appropriate measure space $(\algorithmspace, \algorithmsigma, \ProbabilityAlgo)$.
The entire state of our algorithms at the beginning of episode $\algoepisode$ only depends on the sequences of state-action pairs considered until episode $\algoepisode$.\footnote{Due to their \enquote{template}-structure, \cref{alg:brtdp_no_ec,alg:brtdp} are allowed to introduce some further side effects.
For example, they may keep a round-robin counter on actions or other heuristics that are used to sample paths in the system.
We assume w.l.o.g.\ that these side effects are either deterministic or can be properly incorporated into the above measure space.}
Hence, we use episodes as our primitive objects.
We need to consider both finite and infinite episodes, since (i)~a single episode might in theory comprise infinitely many state-action pairs and (ii)~we could see infinitely many episodes, each of finite length.
(In both cases, the algorithm does not terminate.)
Thus, we set $\algorithmspace = ((\States \times \stateactions \times \States)^\times)^\times$, where $S^\times = S^\star \union S^\omega$.
(Note that this can be encoded into a single sequence space by introducing a fresh symbol to separate the individual episodes.)
The tuples $\States \times \stateactions \times \States$ correspond to the current state, chosen action, and sampled successor state, respectively.
The $\sigma$-field $\algorithmsigma$ is obtained analogously to the $\sigma$-field for Markov chains by considering cylinder sets induced by finite prefixes, see \cite[Section~2.1.6]{DBLP:books/wi/Puterman94}.
For a given prefix, its probability can be obtained by computing the probability of each episode occurring in the MDP given the current state of the algorithm.

Now that we defined the probability space these algorithms operate in, we can define notions like almost sure convergence.
\begin{definition}
	Denote by $\learningalgo(\varepsilon)$ the instance of learning algorithm $\learningalgo$ with precision $\varepsilon$.
	We say that $\learningalgo$ \emph{converges (almost) surely} if, for every MDP $\MDP$, starting state $\initialstate$, target set $T$, and precision $\varepsilon > 0$, the computation of $\learningalgo(\varepsilon)$ terminates (almost) surely (w.r.t.\ $\ProbabilityAlgo$) and yields $\varepsilon$-optimal values $l$ and $u$.
\end{definition}
We consider a symbolic input encoding, where the MDP's properties are specified implicitly.
In particular, we design our algorithms such that they are applicable when the available actions $\stateactions$ and transition function $\mdptransitions$ are given as oracles.
This means that given a state $s$ we can compute $\stateactions(s)$, and given a state-action pair $(s, a)$ we obtain the successor distribution $\mdptransitions(s, a)$.
This allows us to achieve sub-linear runtime for some classes of MDP w.r.t.\ their number of states and transitions.
Note that most practical modelling languages such as the PRISM language \cite{DBLP:conf/cav/KwiatkowskaNP11} or JANI \cite{DBLP:conf/tacas/BuddeDHHJT17} describe models in such a way.

Since our learning algorithms in essence only rely on being able to repeatedly sample the system, we can drastically reduce the knowledge needed about the system.
In particular, we consider the setting of \emph{limited information}, where the algorithm only has very restricted access to the system in question.
There, we are only provided with bounds on some properties of the MDP, e.g., the number of states, together with a minimal interaction mechanism.
Concretely, we only get an oracle revealing the currently available actions and a \enquote{sampling} oracle, which upon choosing one of the available actions moves the system into a successor state, sampled according to the underlying, hidden distributions.
The algorithm thus can only simulate an execution of the MDP starting from the initial state $\initialstate$, repeatedly choosing an action from the set of available actions and querying the sampling oracle for a successor.
This corresponds to a \enquote{black-box} setting, where we can easily interact with a system and observe the current state, but have very limited knowledge about its internal transition structure, as might be the case with complex physical systems.

Here, we cannot directly apply the ideas of Q-learning, since the value of the sampled successor might not correspond to the actual value of the action.
Instead, the algorithm remembers the result of recent visits, \emph{delaying} the learning update.
Intuitively, by seeing many sampling results, we can get a stochastic estimate of the distribution of successor values.
In particular, the average of these observations corresponds to the true value with high confidence.
This idea is exploited by \emph{delayed Q-learning} \cite{DBLP:conf/icml/StrehlLWLL06}.
In this setting, we inherently cannot guarantee almost sure convergence, instead we demand that the algorithm terminates correctly with sufficiently high probability, specified by the \emph{confidence} $\delta > 0$.
\begin{definition}
	Denote by $\learningalgo(\varepsilon, \delta)$ the instance of learning algorithm $\learningalgo$ with precision $\varepsilon$ and confidence $\delta$.
	We say that $\learningalgo$ is \emph{probably approximately correct} (PAC) if for every MDP $\MDP$, starting state $\initialstate$, target set $T$, precision $\varepsilon > 0$, and confidence $\delta > 0$, with probability at least $1 - \delta$ the computation of $\learningalgo(\varepsilon, \delta)$ terminates and yields $\varepsilon$-optimal values $l$ and $u$.
	In other words, we require that the set of correct and terminating executions has a measure of at least $1 - \delta$ under $\ProbabilityAlgo$.
\end{definition}
Note that the \enquote{confidence} parameter $\delta$ sometimes is used to refer to the probability of error and sometimes for the probability of correct results.
We deliberately use $\delta$ for the probability of error to slightly simplify notation.
See \cite{DBLP:journals/cacm/Valiant84,DBLP:conf/colt/Angluin88,DBLP:conf/icml/StrehlLWLL06,DBLP:conf/isaim/Strehl08} for several, slightly different variants of PAC.
Some (but not all) definitions also require that the result is obtained within a particular time-bound (called \emph{efficient PAC-MDP} in \cite{DBLP:conf/isaim/Strehl08}).
We prove appropriate bounds for both variants of our PAC approach.
\begin{remark}
	We assume the system to be \enquote{observable} in both settings, i.e.\ the algorithm can access the \emph{precise} current state of the system and the set of available actions.
	Extending our methods to \emph{partially observable} systems, e.g.\ POMDP, is left for future work.
	Moreover, we also assume that the system can be repeatedly \enquote{reset} into the initial configuration $\initialstate$.
\end{remark}
 	\section{Complete Information -- MDP without End Components} \label{sec:brtdp_no_ec}

In this section, we treat the case of complete information, i.e.\ the algorithm has full access to the system, in particular its transition function $\mdptransitions$.
Additionally, we assume that the system has no MECs except two distinguished terminal states.
This greatly simplifies the reachability problem and allows us to gradually introduce our approach.
In \cref{sec:brtdp}, we explain the issue of MECs (see \cref{example:mec_no_convergence}) and extend our approach to general MDP.

\subsection{The Ideas of Value Iteration}

Our approach is based on ideas related to \emph{value iteration} (VI) \cite{howard1960dynamic}.
Thus, we first explain the basic principles of VI.
Value iteration is a technique to solve, among others, reachability queries on MDP.
It essentially amounts to applying \emph{Bellman iteration} \cite{bellman1966dynamic} corresponding to the fixed point equation in Equation~\eqref{eq:value_fixpoint} \cite[Section~4.2]{DBLP:conf/sfm/ForejtKNP11}.
In particular, starting from an initial value vector $v_0$ with $v_0(s) = 1$ if $s \in \targetset$ and $0$ otherwise, we apply the iteration
\begin{equation*}
	v_{n+1}(s) = \begin{dcases*}
		1 & if $s \in \targetset$, \\
		{\max}_{a \in \stateactions(s)} \ExpectedSumMDP{\mdptransitions}{s}{a}{v_n} & otherwise.
	\end{dcases*}
\end{equation*}
It is known that this iteration converges to the true value $\val$ in the limit from below, i.e.\ for all states $s$ we have (i)~$\lim_{n \to \infty} v_n(s) = \val(s)$ and (ii)~$v_n(s) \leq v_{n+1}(s) \leq \val(s)$ for all iterations $n$ \cite[Theorem~7.2.12]{DBLP:books/wi/Puterman94}\footnote{Note that reachability is a special case of \emph{expected total reward}, obtained by assigning a one-time reward of $1$ to each goal state.}.
It is not difficult to construct a system where convergence up to a given precision takes exponential time \cite{DBLP:conf/rp/HaddadM14}, but in practice VI often is much faster than methods based on \emph{linear programming} (LP)\footnote{See \cite[Theorem~10.105]{DBLP:books/daglib/0020348} for an LP-based solution of reachability.} \cite{DBLP:conf/tacas/HartmannsJQW23}, which in theory has worst-case polynomial runtime and yields precise answers \cite{DBLP:journals/combinatorica/Karmarkar84}.
An important practical issue of VI is the absence of a \emph{stopping criterion}, i.e.\ a straightforward way of determining in general whether the current values $v_n(s)$ are close to the true value function $\val(s)$, as discussed in, e.g., \cite[Section~4.2]{DBLP:conf/sfm/ForejtKNP11}.
As already hinted at, we solve this problem by additionally computing upper bounds, converging to the true value from above.

While the classical value iteration approach updates all states synchronously, the iteration can also be executed \emph{asynchronously}.
This means that we do not have to update the values of all states (or state-action pairs) simultaneously.
Instead, the update order may be chosen by heuristics, as long as fairness constraints are satisfied, i.e.\ eventually all states get updated.
This observation is essential for our approach, since we want to focus our computation on \enquote{important} areas.

\subsection{The No-EC BRTDP Algorithm} \label{sec:brtdp_no_ec:algo}

With these ideas in mind, we are ready to present our first algorithm.
Throughout this section, fix a required precision $\varepsilon > 0$, an MDP $\MDP = (\States, \Actions, \stateactions, \mdptransitions)$ with two distinguished states $\targetstate, \sinkstate \in \States$, target set $\targetset = \{\targetstate\}$, and a starting state $\initialstate$.
We assume that $\MDP$ has no MECs except the two terminal states $\targetstate$ and $\sinkstate$.
\begin{assumption} \label{asm:mec_free}
	MDP $\MDP$ has no MECs, except two trivial ones comprising the target state $\targetstate$ and sink state $\sinkstate$, respectively.
	Formally, we require that $\Mecs(\MDP) = \{(\{\targetstate\}, \stateactions(\targetstate)), (\{\sinkstate\}, \stateactions(\sinkstate))\}$.
\end{assumption}

\begin{algorithm}[t]
	\caption{The BRTDP learning algorithm for MDPs without ECs.}
	\label{alg:brtdp_no_ec}
	\setcounter{AlgoLine}{0}
	\DontPrintSemicolon
	\KwIn{MDP $\MDP$, state $\initialstate$, precision $\varepsilon$, and initial bounds $\upperbound_1$ and $\lowerbound_1$.}
	\KwOut{$\varepsilon$-optimal values $(l, u)$, i.e., $\val(\initialstate) \in [l, u]$ and $0 \leq u - l < \varepsilon$.}
	
	$\algoepisode \gets 1$ \tcp*{Initialize}
	
	\While{$\upperbound_\algoepisode(\initialstate) - \lowerbound_\algoepisode(\initialstate) \geq \varepsilon$}{ \label{alg:brtdp_no_ec:while}
		$\finitepath_\algoepisode \gets \samplepath(\MDP, \initialstate, \upperbound_\algoepisode, \lowerbound_\algoepisode, \varepsilon)$ \tcp*{Sample pairs to update} \label{alg:brtdp_no_ec:sample}
		
		$\upperbound_{\algoepisode + 1} \gets \upperbound_\algoepisode$, $\lowerbound_{\algoepisode + 1} \gets \lowerbound_\algoepisode$ \;
		
	\ForAll(\tcp*[f]{Update the upper and lower bounds}){$(s,a) \in \finitepath_\algoepisode$ }{$\upperbound_{\algoepisode+1}(s, a) \gets \ExpectedSumMDP{\mdptransitions}{s}{a}{\upperbound_\algoepisode}$ \label{alg:brtdp_no_ec:update_u} \;
			$\lowerbound_{\algoepisode+1}(s, a) \gets \ExpectedSumMDP{\mdptransitions}{s}{a}{\lowerbound_\algoepisode}$ \label{alg:brtdp_no_ec:update_l} \;
		}
		
		$\algoepisode \gets \algoepisode + 1$ \;
	}
	
	\Return $(\lowerbound_\algoepisode(\initialstate), \upperbound_\algoepisode(\initialstate))$ \;
\end{algorithm}
Observe that with \cref{asm:mec_free} and $\targetset = \{\targetstate\}$, we have $\val(\targetstate) = 1$ and $\val(\sinkstate) = 0$.

We present our \emph{BRTDP} approach in \cref{alg:brtdp_no_ec}.
As already mentioned in the introduction, the algorithm repeatedly samples sets of state-action pairs from the system.
Based on these experiences, it updates the upper and lower bounds using \emph{Bellman updates} (or \emph{Bellman backups}), corresponding to Equation~\eqref{eq:value_fixpoint}, until convergence.
(Recall that $\upperbound(s) = \max_{a \in \stateactions(s)} \upperbound(s, a)$ and $\lowerbound(s)$ analogously.)

To allow for practical optimization, we leave the sampling method $\samplepath$ undefined and instead only require some generic properties.
A simple implementation is given by sampling a path starting in the initial state and following random actions.
However, $\samplepath$ may use randomization and sophisticated guidance heuristics, as long as it satisfies certain conditions in the limit (formally defined in \cref{asm:brtdp_no_ec:fair}).

\begin{remark} \label{rem:brtdp_no_ec:sample_pair}
	We highlight that $\samplepath$ is not even required to return paths.
	Instead, it can yield any set of state-action pairs.
	However, when dealing with the limited information setting, we require sampling paths.
	Thus, it may be instructive to already think of $\samplepath$ as a procedure returning paths.
\end{remark}

\subsection{Proof of Correctness}

In this section, we prove correctness of the algorithm, i.e.\ that the returned result is correct and that the algorithm terminates.
We now first establish correctness of the result, assuming that the received input is sane.
\begin{assumption} \label{asm:brtdp_no_ec:input_correct}
	We have that (i) the given initial bounds $\upperbound_1$ and $\lowerbound_1$ are correct, i.e.\ $\lowerbound_1(s, a) \leq V(s, a) \leq \upperbound_1(s, a)$ for all $(s, a) \in \States \times \stateactions$, and (ii)~$\lowerbound_1(\targetstate) = 1$ and $\upperbound_1(\sinkstate) = 0$.
\end{assumption}
\begin{lemma} \label{stm:brtdp:no_ec:bounds_correct}
	Assume that \cref{asm:brtdp_no_ec:input_correct} holds.
	Then, during any execution of \cref{alg:brtdp_no_ec} we have for every episode $\algoepisode$ and all state-action pairs $(s, a)$ that
	\begin{equation*}
		\lowerbound_\algoepisode(s, a) \leq \lowerbound_{\algoepisode + 1}(s, a) \leq \val(s, a) \leq \upperbound_{\algoepisode + 1}(s, a) \leq \upperbound_\algoepisode(s, a).
	\end{equation*}
\end{lemma}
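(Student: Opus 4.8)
The statement chains together five inequalities for every episode $\algoepisode$ and every state-action pair $(s,a)$. The natural approach is induction on $\algoepisode$, using \cref{asm:brtdp_no_ec:input_correct} as the base case and the Bellman update lines \ref{alg:brtdp_no_ec:update_u} and \ref{alg:brtdp_no_ec:update_l} for the inductive step. I would organize the proof by separately establishing (a) monotonicity of the bounds across episodes, $\lowerbound_\algoepisode \le \lowerbound_{\algoepisode+1}$ and $\upperbound_{\algoepisode+1} \le \upperbound_\algoepisode$, and (b) correctness, $\lowerbound_{\algoepisode+1}(s,a) \le \val(s,a) \le \upperbound_{\algoepisode+1}(s,a)$. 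Since the two bounds are symmetric, I would prove everything for $\upperbound$ and remark that the argument for $\lowerbound$ is dual.

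First I would set up the induction. For the base case $\algoepisode = 1$, correctness $\lowerbound_1(s,a) \le \val(s,a) \le \upperbound_1(s,a)$ is exactly \cref{asm:brtdp_no_ec:input_correct}(i). For the inductive step, fix an episode $\algoepisode$, assume the chain holds for all earlier episodes (in particular $\lowerbound_\algoepisode(s,a) \le \val(s,a) \le \upperbound_\algoepisode(s,a)$ for all pairs), and analyze how $\upperbound_{\algoepisode+1}$ is produced from $\upperbound_\algoepisode$. There are two cases for a pair $(s,a)$: either $(s,a) \notin \finitepath_\algoepisode$, in which case $\upperbound_{\algoepisode+1}(s,a) = \upperbound_\algoepisode(s,a)$ and both the monotonicity and correctness inequalities are immediate from the inductive hypothesis; or $(s,a) \in \finitepath_\algoepisode$, in which case line \ref{alg:brtdp_no_ec:update_u} sets $\upperbound_{\algoepisode+1}(s,a) = \ExpectedSumMDP{\mdptransitions}{s}{a}{\upperbound_\algoepisode}$.

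The heart of the argument is this second case. For monotonicity $\upperbound_{\algoepisode+1}(s,a) \le \upperbound_\algoepisode(s,a)$: using the inductive hypothesis $\upperbound_\algoepisode(s') \le \upperbound_\algoepisode(s'', a'')$ is not quite what we want — rather, I would use that $\upperbound_\algoepisode(s') = \max_{a' \in \stateactions(s')} \upperbound_\algoepisode(s', a')$ and that the old value $\upperbound_\algoepisode(s,a)$ itself is a valid upper bound that was never below $\ExpectedSumMDP{\mdptransitions}{s}{a}{\upperbound_{\algoepisode-1}} \ge \ExpectedSumMDP{\mdptransitions}{s}{a}{\upperbound_\algoepisode}$; more cleanly, I would prove a small auxiliary invariant first, namely $\upperbound_\algoepisode(s,a) \ge \ExpectedSumMDP{\mdptransitions}{s}{a}{\upperbound_\algoepisode}$ for all pairs and episodes, which gives monotonicity directly since the new value equals the right-hand side. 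This auxiliary invariant itself follows by induction: at $\algoepisode=1$ it needs a brief argument (it holds because $\upperbound_1(s,a) \ge \val(s,a) = \ExpectedSumMDP{\mdptransitions}{s}{a}{\val}$ does not suffice — one actually needs $\upperbound_1(s,a) \ge \ExpectedSumMDP{\mdptransitions}{s}{a}{\upperbound_1}$, so this must be folded into \cref{asm:brtdp_no_ec:input_correct} or argued from the trivial bounds $\upperbound_1 \equiv 1$; I would note this subtlety), and the inductive step uses that $\upperbound_{\algoepisode+1} \le \upperbound_\algoepisode$ pointwise together with monotonicity of $d \mapsto \ExpectedSumMDP{\mdptransitions}{s}{a}{d}$. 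For correctness $\val(s,a) \le \upperbound_{\algoepisode+1}(s,a)$: we have $\val(s,a) = \ExpectedSumMDP{\mdptransitions}{s}{a}{\val} = \sum_{s'} \mdptransitions(s,a,s') \val(s')$, and $\val(s') \le \upperbound_\algoepisode(s') = \max_{a'} \upperbound_\algoepisode(s',a')$ by the inductive hypothesis, so $\val(s,a) \le \sum_{s'} \mdptransitions(s,a,s') \upperbound_\algoepisode(s') = \ExpectedSumMDP{\mdptransitions}{s}{a}{\upperbound_\algoepisode} = \upperbound_{\algoepisode+1}(s,a)$, using monotonicity of the weighted-sum operator and Equation~\eqref{eq:value_fixpoint}.

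The main obstacle is not any single inequality but getting the bookkeeping of the simultaneous induction right: the monotonicity-of-$\upperbound_\algoepisode$ claim and the auxiliary "self-consistency" invariant $\upperbound_\algoepisode \ge \ExpectedSumMDP{\mdptransitions}{s}{a}{\upperbound_\algoepisode}$ are mutually entangled, so I would either prove them together in one induction with a combined hypothesis, or carefully order the lemmas so each only depends on already-established facts. A secondary subtlety worth flagging explicitly is whether \cref{asm:brtdp_no_ec:input_correct} as stated actually implies the self-consistency of $\upperbound_1$ and $\lowerbound_1$; if not, one either strengthens the assumption or observes that in the only case the paper needs (trivial initial bounds, or bounds coming from a previous run of the same algorithm) self-consistency holds for free. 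The symmetric treatment of $\lowerbound$ requires $\lowerbound_1(s,a) \le \ExpectedSumMDP{\mdptransitions}{s}{a}{\lowerbound_1}$ analogously, plus \cref{asm:brtdp_no_ec:input_correct}(ii) to pin down $\lowerbound_1(\targetstate)=1$, though that boundary condition is really only needed later for termination, not for this lemma.
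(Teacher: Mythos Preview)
Your plan is correct and in fact more careful than the paper's own proof, which dispatches the lemma in three sentences: the base case is \cref{asm:brtdp_no_ec:input_correct}, the Bellman updates ``clearly preserve these inequalities'' by Equation~\eqref{eq:value_fixpoint}, and ``a simple inductive argument concludes the proof.'' That handles the correctness part $\lowerbound_\algoepisode(s,a) \le \val(s,a) \le \upperbound_\algoepisode(s,a)$ along the same lines you sketch, but it does not address monotonicity at all.

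The self-consistency subtlety you flag is genuine, not excess caution. Monotonicity $\upperbound_{\algoepisode+1}(s,a) \le \upperbound_\algoepisode(s,a)$ can fail under \cref{asm:brtdp_no_ec:input_correct} as stated: take $\initialstate$ deterministically to $s_1$, then $s_1$ to $\targetstate$ or $\sinkstate$ with probability $\tfrac12$ each, so $\val(\initialstate,a)=\val(s_1,b)=\tfrac12$; setting $\upperbound_1(\initialstate,a)=0.6$ and $\upperbound_1(s_1,b)=0.9$ satisfies the assumption, yet updating $(\initialstate,a)$ gives $\upperbound_2(\initialstate,a)=\upperbound_1(s_1)=0.9>0.6$. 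So the outer two inequalities really do require the invariant $\upperbound_\algoepisode(s,a) \ge \ExpectedSumMDP{\mdptransitions}{s}{a}{\upperbound_\algoepisode}$ (and dually for $\lowerbound$), which \cref{asm:brtdp_no_ec:input_correct} does not imply and which must either be added to the assumption or obtained for free from the trivial initial bounds the paper mentions just before the algorithm. Once you have self-consistency at $\algoepisode=1$, your joint induction (self-consistency $\Rightarrow$ monotonicity $\Rightarrow$ self-consistency at the next episode, via monotonicity of the weighted-sum operator) goes through cleanly. The paper's brief proof silently relies on this but does not state it.
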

\begin{proof}
	Initially, we have that $\lowerbound_1(s, a) \leq \val(s, a) \leq \upperbound_1(s, a)$ by \cref{asm:brtdp_no_ec:input_correct}.
	The updates in \cref{alg:brtdp_no_ec:update_u,alg:brtdp_no_ec:update_l} clearly preserve these inequalities by Equation~\eqref{eq:value_fixpoint}.
	A simple inductive argument concludes the proof. 
\end{proof}

\begin{lemma} \label{stm:brtdp:no_ec:correct}
	Assume that \cref{asm:brtdp_no_ec:input_correct} holds.
	Then, the result $(l, u)$ of \cref{alg:brtdp_no_ec} is correct, i.e.\ (i)~$0 \leq u - l < \varepsilon$, and (ii)~$\val(\initialstate) \in [l, u]$.
\end{lemma}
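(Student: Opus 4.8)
The plan is to derive both parts directly from the monotone sandwich of \cref{stm:brtdp:no_ec:bounds_correct} together with the exit condition of the \textbf{while} loop. Note first that this lemma is purely a statement about the returned tuple \emph{conditioned on} the algorithm returning at all; termination is treated separately (via the fairness assumption and the lemmas that follow). So I would begin by supposing that \cref{alg:brtdp_no_ec} returns $(l, u) = (\lowerbound_\algoepisode(\initialstate), \upperbound_\algoepisode(\initialstate))$, which means the loop guard on \cref{alg:brtdp_no_ec:while} evaluated to false in episode $\algoepisode$, i.e.\ $\upperbound_\algoepisode(\initialstate) - \lowerbound_\algoepisode(\initialstate) < \varepsilon$.

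Next I would record that for every episode $\algoepisode$ and every state-action pair $(s,a)$ we have $\lowerbound_\algoepisode(s, a) \leq \val(s, a) \leq \upperbound_\algoepisode(s, a)$: for $\algoepisode \geq 2$ this is an immediate consequence of \cref{stm:brtdp:no_ec:bounds_correct} (reading off its two inner inequalities), and for $\algoepisode = 1$ it is exactly part~(i) of \cref{asm:brtdp_no_ec:input_correct}. I then lift this to the state bounds: since $\lowerbound(s) = \max_{a \in \stateactions(s)} \lowerbound(s,a)$, $\upperbound(s) = \max_{a \in \stateactions(s)} \upperbound(s,a)$, and $\val(s) = \max_{a \in \stateactions(s)} \val(s, a)$ (the latter from the preliminaries), and since the maximum is monotone, I obtain $\lowerbound_\algoepisode(s) \leq \val(s) \leq \upperbound_\algoepisode(s)$ for every state $s$. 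Specialising to $s = \initialstate$ and the episode $\algoepisode$ at which the algorithm returned yields exactly claim~(ii), namely $\val(\initialstate) \in [\lowerbound_\algoepisode(\initialstate), \upperbound_\algoepisode(\initialstate)] = [l, u]$.

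For claim~(i), the inequality $0 \leq u - l$ is just the special case $\lowerbound_\algoepisode(\initialstate) \leq \upperbound_\algoepisode(\initialstate)$ of the display above (equivalently, $l \leq \val(\initialstate) \leq u$), and $u - l < \varepsilon$ is precisely the negation of the loop guard, which held because the loop terminated and the algorithm reached the \textbf{return} statement. This closes the argument. There is no genuine obstacle here; the only points needing a moment's care are (a) observing that the statement is conditional on the algorithm returning, so no termination reasoning is invoked, and (b) passing correctly from the state-action bounds supplied by \cref{stm:brtdp:no_ec:bounds_correct} to the state bounds through the $\max$ definitions, which relies on the identity $\val(s) = \max_{a \in \stateactions(s)} \val(s,a)$.
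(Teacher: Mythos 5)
Your proposal is correct and follows essentially the same route as the paper's own (much terser) proof: both parts are read off from \cref{stm:brtdp:no_ec:bounds_correct} combined with the negated loop guard in \cref{alg:brtdp_no_ec:while}. The extra care you take in lifting the state-action bounds to state bounds via the $\max$ definitions and the identity $\val(s) = \max_{a \in \stateactions(s)} \val(s,a)$ is a detail the paper leaves implicit, but it does not change the argument.
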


\begin{proof}
	Clearly, (i)~immediately follows from \cref{stm:brtdp:no_ec:bounds_correct} and the main loop condition in \cref{alg:brtdp_no_ec:while}.
	Similarly, (ii)~also follows from \cref{stm:brtdp:no_ec:bounds_correct}. 
\end{proof}
In order to prove (almost sure) convergence of \cref{alg:brtdp_no_ec}, we need some assumptions on $\samplepath$.
Intuitively, $\samplepath$ may not neglect actions which might be the optimal ones.
In order to allow for a wide range of implementations for $\samplepath$, we present the rather liberal but technical condition of \emph{fairness} in \cref{asm:brtdp_no_ec:fair}.
We further explain each part of this assumption in the following proof of convergence.

Before we continue to the assumption, we introduce a concept, namely the set of $\upperbound$-optimal actions, which is also used in the proof.
We define the set of actions optimal w.r.t.\ $\upperbound_\algoepisode$ in state $s$ during episode $\algoepisode$ as $\umaxactions_\algoepisode(s) \coloneqq {\argmax}_{a \in \stateactions(s)} \upperbound_\algoepisode(s, a)$.
If the algorithm does not converge, the set $ \umaxactions_{\algoepisode}(s)$ may change infinitely often.
For example, two equivalent actions may get updated in an alternating fashion.
Thus, for each state $s$, we also define the set of actions that are optimal infinitely often as $\umaxactions_\infty(s) \coloneqq {\Intersection}_{k = 1}^\infty {\Union}_{\algoepisode = k}^\infty \umaxactions_{\algoepisode}(s)$.
This set is non-empty, since there are only finitely many actions and $\umaxactions_\algoepisode(s)$ is non-empty for any episode $\algoepisode$.
\begin{assumption} \label{asm:brtdp_no_ec:fair}
	Let $\{\upperbound_\algoepisode\}_{\algoepisode=1}^\infty$ and $\{\lowerbound_\algoepisode\}_{\algoepisode=1}^\infty$ be consistent sequences of upper and lower bounds, i.e.\ $\lowerbound_1(s, a) \leq \lowerbound_2(s, a) \leq \dots \leq \val(s, a) \leq \dots \leq \upperbound_2(s, a) \leq \upperbound_2(s, a)$ for all state-action pairs $(s, a)$.
	Assume that each call $\samplepath(\MDP, \initialstate, \upperbound_\algoepisode, \lowerbound_\algoepisode, \varepsilon)$ terminates in finite time and let $\finitepath_1, \finitepath_2, \dots \in \powerset(\States \times \Actions) \setminus \emptyset$ the infinite sequence of non-empty state-action sets obtained from it.

	Set $\States_\infty = {\Intersection}_{k = 1}^\infty {\Union}_{\algoepisode = k}^\infty \{s \in \States \mid s \in \finitepath_{\algoepisode}\}$ the set of all states which occur infinitely often, analogous for the set of actions occurring infinitely often, denoted $\Actions_\infty$.
	Then
	\begin{enumerate}
		\item \label{asm:brtdp_no_ec:fair:initial}
		the initial state is sampled infinitely often, i.e.\ $\initialstate \in \States_\infty$,

		\item \label{asm:brtdp_no_ec:fair:actions}
		all actions which are optimal infinitely often are also sampled infinitely often, i.e.\ $\umaxactions_\infty(s) \subseteq \Actions_\infty$ for every $s \in \States_\infty$, and

		\item \label{asm:brtdp_no_ec:fair:successors}
		all successors of optimal actions are sampled infinitely often, i.e.\ for every $s \in \States_\infty$ and $a \in \umaxactions_\infty(s)$ we have that $\support(\mdptransitions(s, a)) \subseteq \States_\infty$.
	\end{enumerate}
\end{assumption}
We say $\samplepath$ almost surely satisfies \cref{asm:brtdp_no_ec:fair}, if all of its conditions hold with probability 1.

In essence, the assumption requires that all states which are reachable by following optimal actions are indeed reached infinitely often in the limit:
Starting from the initial state (\cref{asm:brtdp_no_ec:fair:initial}), we select each optimal action infinitely often (\cref{asm:brtdp_no_ec:fair:actions}) and explore all successors of these actions (\cref{asm:brtdp_no_ec:fair:successors}).
For each of these successors, we again select all optimal actions, etc.
This insight directly yields an implementation for $\samplepath$, namely to repeatedly sample a path, starting in the initial state and in each state selecting any optimal action from $\umaxactions_\algoepisode(s)$ uniformly at random, until $\targetstate$ or $\sinkstate$ are reached.
Variants of this implementation can, for example, select actions in a round-robin fashion or sample from the optimal actions in a weighted manner.
Similarly, naively selecting all state-action pairs in every iteration (effectively classical value iteration) or selecting a single pair at random would also satisfy the assumption.
\begin{lemma} \label{stm:brtdp:no_ec:termination}
	\cref{alg:brtdp_no_ec} terminates under \cref{asm:mec_free,asm:brtdp_no_ec:input_correct,asm:brtdp_no_ec:fair}.
	It terminates almost surely if \cref{asm:brtdp_no_ec:fair} is satisfied almost surely.
\end{lemma}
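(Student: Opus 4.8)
The plan is to argue by contradiction. Suppose an execution of \cref{alg:brtdp_no_ec} stays in the main loop forever, so the infinite sequences $\{\upperbound_\algoepisode\}$, $\{\lowerbound_\algoepisode\}$, $\{\finitepath_\algoepisode\}$ exist and satisfy the conditions of \cref{asm:brtdp_no_ec:fair}, and $\upperbound_\algoepisode(\initialstate) - \lowerbound_\algoepisode(\initialstate) \geq \varepsilon$ for all $\algoepisode$. By \cref{stm:brtdp:no_ec:bounds_correct} each sequence $\upperbound_\algoepisode(s,a)$, $\lowerbound_\algoepisode(s,a)$ is monotone and confined to $[0,1]$, hence converges; write $\upperbound_\infty$, $\lowerbound_\infty$ for the pointwise limits, and note that, being maxima of finitely many convergent monotone sequences, $\upperbound_\algoepisode(s)\to\upperbound_\infty(s)$ and $\lowerbound_\algoepisode(s)\to\lowerbound_\infty(s)$ as well. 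From \cref{asm:brtdp_no_ec:input_correct} (together with $\val(\targetstate)=1$, $\val(\sinkstate)=0$ and \cref{stm:brtdp:no_ec:bounds_correct}) one gets $\upperbound_\infty(\targetstate)=\lowerbound_\infty(\targetstate)=1$, $\upperbound_\infty(\sinkstate)=\lowerbound_\infty(\sinkstate)=0$, and in particular $\upperbound_\infty\geq\lowerbound_\infty$ everywhere. The aim is to derive $\upperbound_\infty(\initialstate)=\lowerbound_\infty(\initialstate)$, which contradicts $\upperbound_\algoepisode(\initialstate)-\lowerbound_\algoepisode(\initialstate)\to\upperbound_\infty(\initialstate)-\lowerbound_\infty(\initialstate)\geq\varepsilon$. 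For the almost-sure statement I would run the same pathwise argument on the probability-$1$ event that \cref{asm:brtdp_no_ec:fair} holds.

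First I would convert fairness into a fixed-point relation on $\States_\infty$. Fix $s\in\States_\infty$ and $a\in\umaxactions_\infty(s)$. By \cref{asm:brtdp_no_ec:fair:actions} the pair $(s,a)$ is updated in infinitely many episodes, and by \cref{asm:brtdp_no_ec:fair:successors} all its successors lie in $\States_\infty$; letting $\algoepisode\to\infty$ along the updating episodes in \cref{alg:brtdp_no_ec:update_u,alg:brtdp_no_ec:update_l} (a finite sum, so limit and sum interchange) gives $\upperbound_\infty(s,a)=\ExpectedSumMDP{\mdptransitions}{s}{a}{\upperbound_\infty}$ and $\lowerbound_\infty(s,a)=\ExpectedSumMDP{\mdptransitions}{s}{a}{\lowerbound_\infty}$, and letting $\algoepisode\to\infty$ along the episodes with $a\in\umaxactions_\algoepisode(s)$ gives $\upperbound_\infty(s)=\upperbound_\infty(s,a)$. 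Now fix a memoryless strategy $\sigma\in\StrategiesM<\MDP>$ with $\sigma(s)\in\umaxactions_\infty(s)$ for each $s\in\States_\infty$ (this set is non-empty) and arbitrary elsewhere. Then for every $s\in\States_\infty$ we have $\upperbound_\infty(s)=\ExpectedSumMDP{\mdptransitions}{s}{\sigma(s)}{\upperbound_\infty}$ and $\lowerbound_\infty(s)\geq\lowerbound_\infty(s,\sigma(s))=\ExpectedSumMDP{\mdptransitions}{s}{\sigma(s)}{\lowerbound_\infty}$ — only an inequality for the lower bound, since $\sigma(s)$ need not be $\lowerbound_\infty$-maximal.

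Second I would run a drift argument on the gap $d\coloneqq\upperbound_\infty-\lowerbound_\infty\geq 0$, regarded as a function $\States\to[0,1]$. Subtracting the two relations above, $d(s)\leq\ExpectedSumMDP{\mdptransitions}{s}{\sigma(s)}{d}$ for all $s\in\States_\infty$, and $d(\targetstate)=d(\sinkstate)=0$. Since by \cref{asm:brtdp_no_ec:fair:successors} the $\sigma$-successors of $\States_\infty$-states stay in $\States_\infty$ and $\initialstate\in\States_\infty$ (\cref{asm:brtdp_no_ec:fair:initial}), the Markov chain $\MDP^\sigma$ issued from $\initialstate$ stays within $\States_\infty$, so $X_n\coloneqq\infinitepath(n)$ makes $d(X_n)$ a bounded submartingale under $\ProbabilityMDP<\MDP,\initialstate><\sigma>$; hence $\Expectation[d(X_n)]\geq d(\initialstate)$ for every $n$. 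On the other hand, by \cref{asm:mec_free} the only MECs of $\MDP$ are $\{\targetstate\}$ and $\{\sinkstate\}$, so \cref{stm:mdp_almost_sure_absorption} applied to $\sigma$ shows that $X_n$ is $\ProbabilityMDP<\MDP,\initialstate><\sigma>$-almost surely eventually equal to $\targetstate$ or $\sinkstate$, whence $d(X_n)\to 0$ a.s.\ and, by boundedness, $\Expectation[d(X_n)]\to 0$. Therefore $d(\initialstate)\leq 0$, i.e.\ $\upperbound_\infty(\initialstate)=\lowerbound_\infty(\initialstate)$ (and both equal $\val(\initialstate)$, being squeezed via \cref{stm:brtdp:no_ec:bounds_correct}), the desired contradiction.

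I expect the main obstacle to be precisely the asymmetry between the two bounds: \cref{asm:brtdp_no_ec:fair} only forces the sampler to keep revisiting the \emph{upper}-optimal actions, so $\lowerbound_\infty$ need not satisfy its own Bellman equation, and one cannot show $\lowerbound_\infty(\initialstate)=\val(\initialstate)$ in isolation. Tracking the gap $d$ along the single fixed strategy $\sigma$, for which only the one-sided relation $d(s)\leq\ExpectedSumMDP{\mdptransitions}{s}{\sigma(s)}{d}$ is needed, and then using EC-freeness (\cref{asm:mec_free}) to force absorption into $\{\targetstate,\sinkstate\}$, where $d$ is pinned to $0$, is what closes the argument; the remaining ingredients (existence of the limits, interchanging limit and finite sum, the submartingale bound) are routine.
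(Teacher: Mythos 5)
Your proof is correct, and while the first half (argue by contradiction, extract monotone limits $\upperbound_\infty, \lowerbound_\infty$, and use fairness to derive the fixed-point relations $\upperbound_\infty(s,a) = \ExpectedSumMDP{\mdptransitions}{s}{a}{\upperbound_\infty}$ and $\lowerbound_\infty(s,a) = \ExpectedSumMDP{\mdptransitions}{s}{a}{\lowerbound_\infty}$ on $\States_\infty \times \umaxactions_\infty$) coincides with the paper's, the closing step is genuinely different. The paper finishes with a discrete maximum principle: it takes the set $\States_\bounddifference$ of states of maximal gap, observes that $\targetstate, \sinkstate \notin \States_\bounddifference$, invokes \cref{asm:mec_free} to conclude $\States_\bounddifference$ contains no EC and hence some state has, under every action, a successor of strictly smaller gap, and derives a strict inequality contradicting maximality. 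You instead observe that the gap $d$ is subharmonic along a fixed memoryless selector $\sigma$ from $\umaxactions_\infty$, so $d(X_n)$ is a bounded submartingale under $\ProbabilityMDP<\MDP,\initialstate><\sigma>$, and combine \cref{asm:mec_free} with \cref{stm:mdp_almost_sure_absorption} to force almost-sure absorption into $\{\targetstate,\sinkstate\}$ where $d$ vanishes; bounded convergence then pins $d(\initialstate)=0$. Your route buys a cleaner treatment of the asymmetry between the two bounds: by working with the state-level gap and only the one-sided inequality $d(s) \leq \ExpectedSumMDP{\mdptransitions}{s}{\sigma(s)}{d}$, you avoid the paper's somewhat delicate bookkeeping with $\limsup$s and the witness action $a_{\bounddifference}(s)$. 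The paper's route is more elementary in that it needs only finiteness of the state space and the definition of an EC, with no appeal to martingale convergence or to the absorption lemma for MDPs. One small point worth making explicit in a final write-up: the submartingale inequality is established only on $\States_\infty$, so you should note (as you implicitly do via \cref{asm:brtdp_no_ec:fair}, item 3) that the chain induced by $\sigma$ from $\initialstate$ never leaves $\States_\infty$, and that once $\targetstate$ or $\sinkstate$ is entered the inequality holds trivially since these states are terminal with $d=0$.
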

\begin{proof}
	We prove the second case, i.e.\ almost sure termination, by contradiction.
	Assume that \cref{asm:mec_free,asm:brtdp_no_ec:input_correct} hold, and that \cref{asm:brtdp_no_ec:fair} holds a.s.
	Further, assume for contradiction that the set of non-terminating executions of \cref{alg:brtdp_no_ec} has non-zero measure.
	Since we assume that each call to \samplepath{} terminates in finite time (\cref{asm:brtdp_no_ec:fair}) a.s., the only way \cref{alg:brtdp_no_ec} does not terminate is when the central while-loop is executed infinitely often, i.e.\ the bounds never converge.

	Given some execution of \cref{alg:brtdp}, define $\bounddifference_\algoepisode(s, a) \coloneqq \upperbound_\algoepisode(s, a) - \lowerbound_\algoepisode(s, a)$.
	Fix an arbitrary action $ a_\algoepisode^{\max}(s) \in \umaxactions_\algoepisode(s)$ for each episode $\algoepisode$.
	Clearly, for any such action $ a_\algoepisode^{\max}(s)$ we have $\bounddifference_\algoepisode(s,  a_\algoepisode^{\max}(s)) = \upperbound_\algoepisode(s) - \lowerbound_\algoepisode(s,  a_\algoepisode^{\max}) \geq \upperbound_\algoepisode(s) - \lowerbound_\algoepisode(s)$.
	By \cref{stm:brtdp:no_ec:bounds_correct}, the limits $\upperbound_\infty(s, a) \coloneqq \lim_{\algoepisode \to \infty} \upperbound_\algoepisode(s, a)$ and $\lowerbound_\infty(s, a) \coloneqq \lim_{\algoepisode \to \infty} \lowerbound_\algoepisode(s, a)$ are well-defined and finite for any state-action pair $(s, a)$.
	Thus, $\bounddifference(s, a) \coloneqq \lim_{\algoepisode \to \infty} \bounddifference_\algoepisode(s, a)$ and $\bounddifference(s) \coloneqq \limsup_{\algoepisode \to \infty} \bounddifference_\algoepisode(s,  a_\algoepisode^{\max}(s))$ is also well-defined and finite.
	We prove that $\bounddifference(\initialstate) = 0$ for almost all executions, contradicting the assumption, as then necessarily $\upperbound_\algoepisode(\initialstate) - \lowerbound_\algoepisode(\initialstate) \leq \bounddifference_\algoepisode(\initialstate) < \varepsilon$ for some $\algoepisode$ a.s.

	Observe that the preconditions of \cref{asm:brtdp_no_ec:fair} are satisfied through \cref{stm:brtdp:no_ec:bounds_correct} and \cref{asm:brtdp_no_ec:input_correct}, hence we have $\initialstate \in \States_\infty$ a.s.\ \plabel{proof:brtdp_no_ec:termination:initial_state_infinitely_visited}.
	Let $\States_\infty$ the set of states seen infinitely often as defined in \cref{asm:brtdp_no_ec:fair}.
	By the assumption, we also have that $\support(\mdptransitions(s, a)) \subseteq \States_\infty$ for all $s \in \States_\infty, a \in \umaxactions_\infty(s)$ a.s.\ \plabel{proof:brtdp_no_ec:termination:infinite_states_actions}.

	Now, we identify a witness action $a_{\bounddifference}(s)$ for the $\limsup$ of $\bounddifference(s)$, i.e.\ an action $a_{\bounddifference}(s)$ such that $ \bounddifference_\infty(s) = \lim_{\algoepisode \to \infty} \bounddifference_\algoepisode(s, a_{\bounddifference}(s))$ and then derive a fixed-point equation.
	We have $\upperbound_\infty(s, a) = \upperbound_\infty(s, a')$ for all $s \in \States_\infty$ and $a, a' \in \umaxactions_\infty(s)$, as otherwise one of the two actions would not be optimal eventually.
	Consequently, $\lim_{\algoepisode \to \infty} \upperbound_\algoepisode(s, a_\algoepisode^{\max})$ is well-defined and equals $\upperbound_\infty(s, a)$ for any $a \in \umaxactions_\infty(s)$.
	Equally, $\limsup_{\algoepisode \to \infty} \lowerbound_\algoepisode(s, a_\algoepisode^{\max})$ also is well-defined, since $\lowerbound_\algoepisode$ is bounded.
	Hence the $\limsup$ of $\bounddifference(s)$ distributes over the minus.
	Recall that for each state-action pair, the limit of $\lowerbound_\infty(s, a)$ is well-defined.
	As there are only finitely many actions, the sequence $\lowerbound_\algoepisode(s, a_\algoepisode^{\max})$ only has finitely many accumulation points and there necessarily exists an action $a_{\bounddifference}(s) \in \umaxactions_\infty(s)$ such that $\limsup_{\algoepisode \to \infty} \lowerbound_\algoepisode(s, a_\algoepisode^{\max})  = \lowerbound_\infty(s, a_{\bounddifference}(s))$.
	Together, we have that $\bounddifference(s) = \upperbound_\infty(s, a_{\bounddifference}(s)) - \lowerbound_\infty(s, a_{\bounddifference}(s))$.
	Since all states $\States_\infty$ and all optimal actions $\umaxactions_\infty$ are visited infinitely often, we have that $\upperbound_\infty(s, a) = \ExpectedSumMDP{\mdptransitions}{s}{a}{\upperbound_\infty}$ and $\lowerbound_\infty(s, a) = \ExpectedSumMDP{\mdptransitions}{s}{a}{\lowerbound_\infty}$ for all $s \in \States_\infty$ and $a \in \umaxactions_\infty(s)$ by the back-propagation in \cref{alg:brtdp_no_ec:update_u,alg:brtdp_no_ec:update_l}---if not, they would get updated.
	Consequently, $\bounddifference(s) = \ExpectedSumMDP{\mdptransitions}{s}{a_{\bounddifference}(s)}{\bounddifference}$ for all $s \in \States_\infty$, since $a_{\bounddifference}(s) \in \umaxactions_\infty(s)$ \plabel{proof:brtdp_no_ec:termination:witness_action}.

	Finally, we use \cref{asm:mec_free} together with the above equation to show that $\bounddifference(\initialstate) = 0$.
	Let the maximal difference $\bounddifference_{\max} = \max_{s \in \States_\infty} \bounddifference(s)$ and define the witness states $\States_\bounddifference = \{s \in \States_\infty \mid \bounddifference(s) = \bounddifference_{\max}\}$.
	Assume for contradiction that $\bounddifference > 0$ (a.s.).
	Then, clearly $\targetstate, \sinkstate \notin \States_\bounddifference$, as $\bounddifference(\targetstate) = \bounddifference(\sinkstate) = 0$ by \cref{stm:brtdp:no_ec:bounds_correct} (the bounds of the special states are both set to $1$ or $0$ initially, respectively) and \cref{asm:brtdp_no_ec:input_correct} (bounds are monotone).
	Consequently, $\States_\bounddifference$ cannot contain any EC by \cref{asm:mec_free} (the MDP is MEC-free).
	Since $\States_\bounddifference$ does not contain an EC, there exists some state $s \in \States_\bounddifference$ such that for all $a \in \stateactions(s)$ we have $\support(\mdptransitions(s, a)) \not\subseteq \States_\bounddifference$.
	In other words, for each action $a \in \stateactions(s)$, there exists a state $s_a$ with both $s_a \notin \States_\bounddifference$ and $\mdptransitions(s, a, s_a) > 0$.
	By definition of $\States_\bounddifference$ (all states with maximal difference), we have that $\bounddifference(s_a) < \bounddifference_{\max}$.
	In particular, $\bounddifference(s, {a_{\bounddifference}(s)}) < \bounddifference(s)$ \plabel{proof:brtdp_no_ec:termination:difference_state}.
	We abbreviate the witness action from \ref{proof:brtdp_no_ec:termination:witness_action} by $\overline{a} \coloneqq a_{\bounddifference}(s)$.
	Then
	\begin{align*}
		\bounddifference(s) & ~\overset{\mathclap{\ref{proof:brtdp_no_ec:termination:witness_action}}}{=}~ \ExpectedSumMDP{\mdptransitions}{s}{\overline{a}}{\bounddifference_{\max}} = {\sum}_{s' \in \States} \mdptransitions(s, \overline{a}, s') \cdot \bounddifference(s') \\
			& ~\overset{\mathclap{\ref{proof:brtdp_no_ec:termination:infinite_states_actions}}}{=}~ {\sum}_{s' \in \States_\infty} \mdptransitions(s, \overline{a}, s') \cdot \bounddifference(s') \\
			& ~=~ {\sum}_{s' \in \States_\infty \setminus \{s_{\overline{a}}\}} \mdptransitions(s, \overline{a}, s') \cdot \bounddifference(s') + \mdptransitions(s, \overline{a}, s_{\overline{a}}) \cdot \bounddifference(s_{\overline{a}}) \\
			& ~\leq~ {\sum}_{s' \in \States_\infty \setminus \{s_{\overline{a}}\}} \mdptransitions(s, \overline{a}, s') \cdot \bounddifference_{\max} + \mdptransitions(s, \overline{a}, s_{\overline{a}}) \cdot \bounddifference(s_{\overline{a}}) \\
			& ~\overset{\mathclap{\ref{proof:brtdp_no_ec:termination:difference_state}}}{<}~ {\sum}_{s' \in \States_\infty \setminus \{s_{\overline{a}}\}} \mdptransitions(s, \overline{a}, s') \cdot \bounddifference_{\max} + \mdptransitions(s, \overline{a}, s_{\overline{a}}) \cdot \bounddifference_{\max} \\
			& ~=~ \bounddifference_{\max},
	\end{align*}
	contradicting $s \in \States_\bounddifference$, i.e.\ $\bounddifference(s) = \bounddifference_{\max}$, and we have that $\bounddifference_{\max} = 0$.
	To conclude the proof, observe that $\States_\bounddifference = \States_\infty$ a.s., as $0 \leq \bounddifference(s) \leq \bounddifference_{\max} = 0$ for all $s \in \States_\infty$, and $\bounddifference(\initialstate) = 0$ a.s., since $\initialstate \in \States_\infty$ a.s.\ by \ref{proof:brtdp_no_ec:termination:initial_state_infinitely_visited}.

	Guaranteed convergence (instead of \enquote{only} almost sure) follows analogously. 
\end{proof}
As an immediate consequence of \cref{stm:brtdp:no_ec:correct} (correctness) and \cref{stm:brtdp:no_ec:termination} (termination), we get the desired result.
\begin{theorem}
	Assume that \cref{asm:mec_free,asm:brtdp_no_ec:input_correct}, and (almost surely) \cref{asm:brtdp_no_ec:fair} hold.
	Then \cref{alg:brtdp_no_ec} is correct and converges (almost surely).
\end{theorem}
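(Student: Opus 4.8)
The plan is to obtain this theorem as an immediate corollary of the two results just established: \cref{stm:brtdp:no_ec:correct} supplies correctness of the returned pair, and \cref{stm:brtdp:no_ec:termination} supplies (almost sure) termination; conjoining them gives both halves of the claim. So the proof proper is a two-line assembly, and the substance lies in recalling why those two lemmas apply under exactly the stated hypotheses — which they do, since \cref{asm:brtdp_no_ec:input_correct} is the hypothesis of \cref{stm:brtdp:no_ec:correct}, and \cref{asm:mec_free,asm:brtdp_no_ec:input_correct} together with (a.s.) \cref{asm:brtdp_no_ec:fair} are exactly the hypotheses of \cref{stm:brtdp:no_ec:termination}.

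For the correctness half I would simply note that \cref{stm:brtdp:no_ec:correct} rests on the invariant of \cref{stm:brtdp:no_ec:bounds_correct}: by induction on $\algoepisode$, the Bellman backups in \cref{alg:brtdp_no_ec:update_u,alg:brtdp_no_ec:update_l} preserve $\lowerbound_\algoepisode(s,a)\le\val(s,a)\le\upperbound_\algoepisode(s,a)$ via Equation~\eqref{eq:value_fixpoint}, so the state-bounds inherit the sandwich $\lowerbound_\algoepisode(\initialstate)\le\val(\initialstate)\le\upperbound_\algoepisode(\initialstate)$. When \cref{alg:brtdp_no_ec} exits the loop of \cref{alg:brtdp_no_ec:while}, the guard forces $\upperbound_\algoepisode(\initialstate)-\lowerbound_\algoepisode(\initialstate)<\varepsilon$, so the output is $\varepsilon$-optimal.

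For the termination half I would invoke \cref{stm:brtdp:no_ec:termination}; the argument behind that lemma is the part I expect to be delicate. Assuming for contradiction that the main loop runs forever on a positive-measure set of executions, the monotone bounded sequences $\upperbound_\algoepisode(s,a),\lowerbound_\algoepisode(s,a)$ converge, so the limiting gap $\bounddifference$ is well defined; \cref{asm:brtdp_no_ec:fair} then forces the limiting bounds to satisfy the Bellman identity along every action that is optimal infinitely often, giving $\bounddifference(s)=\ExpectedSumMDP{\mdptransitions}{s}{a_{\bounddifference}(s)}{\bounddifference}$ for $s\in\States_\infty$ with a suitable witness action $a_{\bounddifference}(s)\in\umaxactions_\infty(s)$; and \cref{asm:mec_free} rules out the set of gap-maximisers from containing an end component, so some maximiser must, under every available action, leak probability to a state with strictly smaller gap, which by the Bellman identity forces the maximal gap — hence $\bounddifference(\initialstate)$, since $\initialstate\in\States_\infty$ — to be $0$, contradicting non-termination. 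The purely deterministic statement follows by dropping "a.s." throughout.

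The main obstacle is thus the bookkeeping inside the termination argument rather than the corollary assembly: one has to handle the possibly oscillating optimal-action sets $\umaxactions_\algoepisode(s)$ by passing to $\umaxactions_\infty$ and choosing $\limsup$-realising witnesses, verify that the back-propagated Bellman identity holds exactly only along infinitely-often-optimal actions, and confirm that MEC-freeness really yields, for the chosen maximising state, an escaping successor under every available action. Given \cref{stm:brtdp:no_ec:bounds_correct,stm:brtdp:no_ec:correct,stm:brtdp:no_ec:termination}, the present theorem needs no further work.
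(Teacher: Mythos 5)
Your proposal matches the paper exactly: the theorem is obtained as an immediate consequence of \cref{stm:brtdp:no_ec:correct} (correctness) and \cref{stm:brtdp:no_ec:termination} (termination), and your recap of the underlying invariant and the contradiction-based termination argument faithfully reflects the paper's own proofs of those lemmas. No gaps.
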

\begin{remark}
	If an implementation of $\samplepath$ satisfies \cref{asm:brtdp_no_ec:fair} only almost surely, we can easily obtain a surely terminating variant by interleaving it with a deterministic sampling procedure, e.g., a round-robin method.
\end{remark}
\begin{figure}
  \centering
     \includegraphics[scale=1.4]{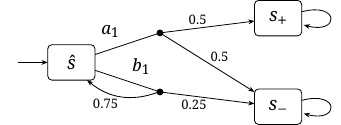}



	\caption{Example MDP where following the upper bounds is wrong.} \label{fig:upper_strat_is_bad}
\end{figure}
\begin{example} \label{exa:upper_strat_is_bad}
	Interestingly, following the optimal upper bound does not necessarily yield an $\varepsilon$-optimal strategy, as shown by the MDP in \cref{fig:upper_strat_is_bad}.
	Assume that initially we take action $a_1$, setting $\upperbound_2(\initialstate, a_1) = \lowerbound_2(\initialstate, a_1) = \frac{1}{2}$.
	Then, $\upperbound_2(\initialstate, b_1) = 1 > \upperbound_2(\initialstate, a_1)$ and we sample $b_1$, updating $\upperbound_3(\initialstate, b_1) = \frac{3}{4}$,  $\upperbound_4(\initialstate, b_1) = \frac{3}{4} \cdot \frac{3}{4}$, etc.
	This continues until the upper bound of $b_1$ is $\varepsilon$-close to $\frac{1}{2}$, when the algorithm terminates.
	Now, suppose that instead of $\mdptransitions(\initialstate, b_1, \sinkstate) = \frac{1}{4}$ exactly, we have $\mdptransitions(\initialstate, b_1, \sinkstate) = p$.
	Then, $\upperbound_i(\initialstate, b_1) = (1 - p)^{i - 1}$.
	For a fixed $\varepsilon$, choose $p$ such that $\frac{1}{2} < (1-p)^k < \frac{1}{2} + \varepsilon$ for some $k$.
	This means that in episode $\algoepisode = k + 1$ (where the algorithm terminates) we have $\upperbound_{\algoepisode}(\initialstate, b_1) > \upperbound_{\algoepisode}(\initialstate, a_1)$.
	Yet, following $b_1$ yields a (highly) suboptimal value, namely $0$ instead of $\frac{1}{2}$.

	It is straightforward to also apply this example to our DQL approach and as a counterexample to \cite[Lemma~16]{DBLP:conf/atva/BrazdilCCFKKPU14}.  
\end{example}
Following the maximal \emph{lower} bound yields a strategy achieving at least this value, using results on asynchronous VI \cite{DBLP:books/wi/Puterman94}.
We omit formal treatment of this claim, since we are not concerned with extracting a witness strategy to avoid distraction from the main result.
(Note that it is in general not correct to choose an arbitrary \emph{value-optimal} action, i.e.\ any action $\argmax_{a \in \stateactions(s)} \val(s, a)$.)
 	\section{Complete Information -- General Case} \label{sec:brtdp}

\begin{figure}[t]
  \centering
     \includegraphics[scale=1.4]{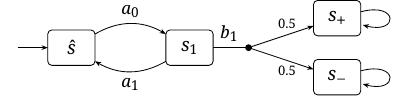}



	\caption{Example MDP with an EC where \cref{alg:brtdp_no_ec} does not converge.} \label{example:mec_no_convergence:figure}
\end{figure}

In this section, we deal with the case of general MDP, in particular, we allow for arbitrary ECs.
We first illustrate with an example the additional difficulties arising when considering general MDPs with non-trivial ECs.
In particular, \cref{alg:brtdp_no_ec} does not converge, even on a small example.

\begin{example} \label{example:mec_no_convergence}
	Consider the MDP depicted in \cref{example:mec_no_convergence:figure}.
	Clearly, we can reach the goal $\targetset = \{\targetstate\}$ with probability $\frac{1}{2}$ by playing $a_0$ in $\initialstate$ and then $b_1$ in $s_1$.
	But the EC $(\{\initialstate, s_1\}, \{a_0, a_1\})$ causes issues for \cref{alg:brtdp_no_ec}.
	When running the algorithm on this example MDP, we eventually have that $\upperbound(s_1, b_1) = \lowerbound(s_1, b_1) = \frac{1}{2}$, but $\upperbound(s_1, a_1) = 1$, since $\upperbound(\initialstate) = 1$.
	Similarly, we keep $\upperbound(\initialstate, a_0) = 1$, as $\upperbound(s_1) = 1$.
	Informally, $\initialstate$ and $s_1$ \enquote{promise} each other that the target state might still be reachable with probability 1, but these promises depend on each other cyclically.
	Removing the internal behaviour of this EC and \enquote{merging} $\initialstate$ and $s_1$ into a single state (with only action $b_1$) solves this issue.
\end{example}
In general, by definition of ECs, every state inside an EC can be reached from any other state with probability 1.
Since we are interested in (unbounded) reachability, this means that for an EC there can only be two cases.
Either, the EC contains a target state.
Then, reaching any state of the EC is (a.s.) equivalent to reaching the target already and we do not need to treat the internal transitions of the EC further.
Otherwise, i.e.\ when the EC does not contain a target state, we can also omit treatment of its internal behaviour and only consider its interaction with outside states.
For the remainder of the section, fix an arbitrary MDP $\MDP = (\States, \Actions, \stateactions, \mdptransitions)$, starting state $\initialstate$, target set $\targetset$, and precision $\varepsilon > 0$.
\begin{lemma} \label{stm:ec_same_value}
	Let $(R, B) \in \Ecs(\MDP)$ be an EC of $\MDP$.
	Then, $\ProbabilityMDPmax<\MDP,s>[\reach \{s'\}] = 1$ for any states $s, s' \in R$ and consequently $\ProbabilityMDPmax<\MDP,s>[\reach \targetset] = \ProbabilityMDPmax<\MDP,s'>[\reach \targetset]$ for any target set $\targetset \subseteq \States$.
\end{lemma}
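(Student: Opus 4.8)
The plan is to first prove the reachability claim --- that, inside an end component, every state reaches every other with probability one --- and then to bootstrap it to the equality of the reachability values.

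\emph{Reachability within the EC.} Fix $s, s' \in R$. By condition~(ii) of \cref{def:ec}, for every $t \in R$ there is a finite path $\finitepath_t$ leading from $t$ to $s'$ that stays inside $R$ and uses only actions of $B$. Since $R \subseteq \States$ and $\MDP$ are finite, $N \coloneqq \max_{t \in R} \cardinality{\finitepath_t}$ is finite, and $p_{\min}$, the least transition probability occurring for a $B$-action inside $R$, is positive. I would then define a history-dependent strategy $\strategy$ that proceeds in \enquote{rounds}: when a round begins in a state $t \in R$, the strategy commits to replaying $\finitepath_t$ action by action; the round ends either when $s'$ is reached, or --- by condition~(i) of \cref{def:ec} --- when the sampled successor deviates from $\finitepath_t$ but still lies in $R$, in which case a fresh round starts from that successor. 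Every round has length at most $N$, and, conditioned on the whole past, a round reaches $s'$ with probability at least $p_{\min}^N$. Therefore the probability that the first $k$ rounds all fail is at most $(1 - p_{\min}^N)^k \to 0$, so $\ProbabilityMDP<\MDP, s><\strategy>[\reach \{s'\}] = 1$; as $1$ is trivially an upper bound, $\ProbabilityMDPmax<\MDP, s>[\reach \{s'\}] = 1$.

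\emph{Equality of values.} Fix a target set $\targetset$ and $s, s' \in R$. It suffices to show $\ProbabilityMDPmax<\MDP, s>[\reach \targetset] \geq \ProbabilityMDPmax<\MDP, s'>[\reach \targetset]$, the other direction being symmetric. Given any strategy $\strategy'$ from $s'$, I would let $\strategy$ be the strategy from $s$ that plays the \enquote{retry} strategy of the previous paragraph until $s'$ is first visited --- which happens after finitely many steps almost surely --- and from then on mimics $\strategy'$ on the suffix of the path. Conditioning on the almost-sure event that $s'$ is eventually hit, the strong Markov property gives that the post-hitting part of a run under $\strategy$ is distributed exactly as a run of $\strategy'$ from $s'$; hence $\ProbabilityMDP<\MDP, s><\strategy>[\reach \targetset] \geq \ProbabilityMDP<\MDP, s'><\strategy'>[\reach \targetset]$. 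Taking the supremum over $\strategy'$ yields the inequality, and since reachability values in a finite MDP are attained by (memoryless deterministic) strategies the two suprema are in fact maxima, giving the stated equality.

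The geometric estimate and the strategy splicing are routine; the part that needs care is making the retry strategy precise as a function of finite histories (keeping track of the current reference path $\finitepath_t$ and the position within it) and arguing cleanly that, on the event \enquote{$s'$ is never reached}, infinitely many complete rounds take place, so that the bound $(1 - p_{\min}^N)^k$ applies in the limit. Combined with a careful application of the strong Markov property to glue $\strategy'$ onto the run after the first visit to $s'$, this is the main obstacle; the remainder is bookkeeping.
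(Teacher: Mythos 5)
Your proof is correct, but it takes a different route from the paper: the paper disposes of this lemma in one line by citing an external result (Lemma~1 of the Ciesinski et al.\ reference), observing only that the first claim is the special case $\targetset = \{s'\}$ of the second. You instead supply the standard self-contained argument that underlies that citation: a history-dependent \enquote{retry} strategy that repeatedly replays a witness path inside the EC, a geometric bound $(1-p_{\min}^N)^k \to 0$ showing almost-sure arrival at $s'$ (using condition~(i) of \cref{def:ec} to guarantee that deviations stay inside $R$ so a fresh round can always begin), and strategy splicing via the strong Markov property to transfer the value of $s'$ back to $s$. Both steps are sound; the one point you flag yourself --- that on the event \enquote{$s'$ is never reached} infinitely many complete rounds occur --- does hold because each round terminates within $N$ steps, so the limit bound applies. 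What your approach buys is a proof readable without chasing the reference, at the cost of the bookkeeping you acknowledge (formalizing the round structure as a function of finite histories and the gluing of $\strategy'$ after the first hitting time); the paper's approach buys brevity by outsourcing exactly this bookkeeping.
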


\begin{proof}
	Follows directly from \cite[Lemma~1]{DBLP:conf/qest/CiesinskiBGK08} (observe that the first claim is a special case of the second claim with $\targetset = \{s'\}$). 
\end{proof}

In other words, states in the same EC are equivalent for reachability and we can apply a quotienting construction w.r.t.\ to ECs.
This idea has been exploited by the \emph{MEC quotient} construction \cite{de1997formal,DBLP:conf/qest/CiesinskiBGK08,DBLP:conf/rp/HaddadM14}, a preprocessing step where first all MECs are identified and then \enquote{collapsed} into a representative state.
However, this approach requires that the whole graph structure of the MDP is known.
Constructing the whole graph of the system may be prohibitively expensive or even impossible, as, e.g., in our limited knowledge setting (see \cref{def:limited_information}).
Hence, we propose a modification to the BRTDP algorithm, which detects and handles ECs \enquote{on-the-fly}.
The algorithm will repeatedly identify ECs and maintain a separate, simplified MDP, which is similar to a MEC quotient.

\subsection{Collapsing End Components} \label{sec:brtdp:collapsing}
As already explained, collapsing an EC can be viewed as replacing it with a single representative state, omitting the internal behaviour of the EC.
In the following definition, we introduce the \emph{collapsed MDP}, where end components are merged into representative states.
Moreover, we again introduce the special states $\targetstate$ and $\sinkstate$, acting as a target and sink respectively, to avoid corner cases.
Many statements in this section are similar to \cite[Section~6.4]{de1997formal} but adapted to our particular use case.
Note that our definition of collapsed MDP in particular depends on the target set $\targetset$.
\begin{definition}
	Let $\algoecs = \{(R_1, B_1), \dots, (R_n, B_n)\} \subseteq \Ecs(\MDP)$ be a (possibly empty) set of ECs in $\MDP$ with $R_i, B_i \neq \emptyset$ and pairwise disjoint.
	Define $R_{\algoecs} = \Union_i R_i$ and $B_{\algoecs} = \Union_i B_i$ the set of all states and actions in $\algoecs$, respectively.
	
	The \emph{collapsed MDP} is defined as $\MDP^c = (\States^c, \Actions^c, \stateactions^c, \mdptransitions^c) = \collapse(\MDP, \algoecs, \initialstate, \targetset)$,
	\begin{itemize}
		\item $\States^c = \States \setminus R_{\algoecs} \union \{s_{(R_i,B_i)}\} \union \{\targetstate, \sinkstate\}$, where $s_{(R_i, B_i)} \notin \States$ are new \emph{representative} states, $\targetstate$ is the new target state, and $\sinkstate$ is a new sink state,
		\item $\Actions^c = \Actions \setminus B_{\algoecs} \union \{\ecremain_i\} \union \{a_+, a_-\}$, where $\ecremain_i \notin \Actions$ are new \emph{remain} actions (one per state, as we assume actions to be uniquely associated with one state),
		\item $\stateactions^c(s)$ is defined by
		\begin{itemize}
			\item $\stateactions^c(s) = \stateactions(s)$ for $s \in \States \setminus R_{\algoecs}$,\footnote{Recall that actions in $B_{\algoecs}$ are only available for states in $R_{\algoecs}$, hence $\stateactions(s) \subseteq \Actions^c$ for other states.}
			\item $\stateactions^c(s_{(R_i,B_i)}) = \Union_{s \in R_i} \stateactions(s) \setminus B_i \union \{\ecremain_i\}$,
			\item $\stateactions^c(\targetstate) = \{a_+\}$, $\stateactions'(\sinkstate) = \{a_-\}$, and
		\end{itemize}
		\item $\mdptransitions^c$ is defined by ($\ecstates$ is an auxiliary function defined below)
		\begin{itemize}
			\item $\mdptransitions^c(s^c, a^c, s'^c) = \sum_{s' \in \ecstates(s'^c)} \mdptransitions(\actionstate<\MDP>(a^c), a^c, s')$ for $s^c, s'^c \in \States^c \setminus \{\targetstate, \sinkstate\}$ and $a^c \in \stateactions^c(s^c) \intersection B$,
			\item $\mdptransitions^c(s_{(R_i,B_i)}, \ecremain_i) = \{\targetstate \mapsto 1\}$ if $\targetset \intersection R_i \neq \emptyset$ and $\{\sinkstate \mapsto 1\}$ otherwise, and
			\item $\mdptransitions^c(\targetstate, a_+, \targetstate) = 1$, $\mdptransitions'(\sinkstate, a_-, \sinkstate) = 1$,
		\end{itemize}
	\end{itemize}
	with the following auxiliary functions
	\begin{itemize}
		\item $\eccollapsed : \States \to \States^c$ maps states of $\MDP$ to their corresponding state in the collapsed MDP, i.e.\ $\eccollapsed(s) = s_{(R_i, B_i)}$ if $s \in R_i$ for some $i$ and $\eccollapsed(s) = s$ otherwise,
		\item $\ecstates : \States^c \setminus \{\targetstate, \sinkstate\} \to 2^\States$ maps states in the collapsed MDP to the set of states they represent, i.e.\ $\ecstates(s^c) = R_i$ if $s^c = s_{(R_i,B_i)}$ for some $i$ and $\ecstates(s^c) = \{s^c\} \subseteq \States$ otherwise,
		\item $\ecequivalent : \States \to 2^\States$ maps states of $\MDP$ to all states in their EC, i.e.\ $\ecequivalent(s) = R_i$ if $s \in R_i$ for some $i$ and $\ecequivalent(s) = \{s\}$ otherwise.
	\end{itemize}
	Note that $\ecequivalent(s) = \ecstates(\eccollapsed(s))$.
	For ease of notation, we extend these auxiliary functions to sets of states in the obvious way, i.e.\ $\eccollapsed(R) = \{\eccollapsed(s) \mid s \in R\}$, $\ecstates(R^c) = \Union_{s^c \in R^c} \ecstates(s^c)$, and $\ecequivalent(R) = \Union_{s \in R} \ecequivalent(s)$.
	Finally, if $\initialstate \in R_i$ for some $i$, we identify $\initialstate$ with $s_{(R_i, B_i)}$ for ease of notation.
	This guarantees that we always have $\initialstate \in \States^c$.
\end{definition}
\begin{figure}[t]
  \centering
     \includegraphics[scale=1.4]{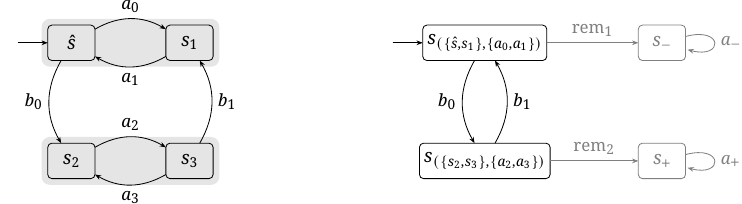}






	\caption[Example of an MDP and its collapsed version.]{Example of an MDP (left) and its collapsed version (right) with $\targetset = \{s_2\}$ and $\algoecs = \{(\{\initialstate, s_1\}, \{a_0, a_1\}), (\{s_2, s_3\}, \{a_2, a_3\})\}$.} \label{example:ec_collapse}
\end{figure}\noindent See \cref{example:ec_collapse} for an example of a collapsed MDP.
Observe that given a set $\algoecs$ explicitly, the collapsed MDP can be computed on-the-fly, i.e.\ without constructing the original MDP completely.
In particular, for a state $s$ in the MDP $\MDP$, we can compute the corresponding state $s^c = \eccollapsed(s)$ as well as $\stateactions^c(s^c)$ and $\mdptransitions^c(s^c, a^c)$ for all actions $a \in \stateactions^c(s^c)$, based on the given set $\algoecs$.

Now, we prove some useful properties about the collapsed MDP.
These properties are rather intuitive, however the corresponding proofs are surprisingly technical without revealing relevant insights.
Thus, the proofs may be skipped.
In essence, we prove that (i)~there is a correspondence of paths between the original and the collapsed MDP, (ii)~there is a correspondence of ECs between the two MDPs, and, most importantly, (iii)~the reachability probability is equal on the two MDPs.

Fix a collapsed MDP of $\MDP$ as $\MDP^c = (\States^c, \Actions^c, \stateactions^c, \mdptransitions^c) = \collapse(\MDP, \algoecs, \initialstate, \targetset)$ for the remainder of this section, where $\algoecs = \{(R_i,B_i)\}_{i=1}^n$ is any appropriate set of end components.

\begin{lemma} \label{stm:collapse:collapse_actionstate}
	We have that $\eccollapsed(\actionstate<\MDP>(a)) = \actionstate<\MDP^c>(a)$ for all $a \in \Actions \intersection \Actions^c$.
\end{lemma}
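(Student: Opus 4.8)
The plan is to unfold the two relevant definitions — that of $\eccollapsed$ and that of the available-action function $\stateactions^c$ — and then do a case split on where the action $a$ lives in the original MDP.

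First I would note that $a \in \Actions \intersection \Actions^c$ actually forces $a \in \Actions \setminus B_{\algoecs}$: the remaining elements of $\Actions^c$, namely the fresh remain actions $\ecremain_i$ and the two special actions $a_+$ and $a_-$, are new and hence not in $\Actions$. Now set $s \coloneqq \actionstate<\MDP>(a)$, the unique state of $\MDP$ with $a \in \stateactions(s)$, and distinguish two cases. If $s \notin R_{\algoecs}$, then $\eccollapsed(s) = s$ by definition, and since $\stateactions^c(s) = \stateactions(s) \ni a$ the action $a$ is still available at $s$ in $\MDP^c$, so $\actionstate<\MDP^c>(a) = s = \eccollapsed(s)$. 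If instead $s \in R_i$ for some $i$, then $\eccollapsed(s) = s_{(R_i,B_i)}$; since $a \in \stateactions(s) \subseteq \Union_{s' \in R_i} \stateactions(s')$ and $a \notin B_{\algoecs} \supseteq B_i$, we get $a \in \Union_{s' \in R_i} \stateactions(s') \setminus B_i \subseteq \stateactions^c(s_{(R_i,B_i)})$, so $\actionstate<\MDP^c>(a) = s_{(R_i,B_i)} = \eccollapsed(s)$.

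To turn "$a$ is available at that state in $\MDP^c$" into "$\actionstate<\MDP^c>(a)$ equals that state" I also need that $a$ is available in exactly one state of $\MDP^c$, i.e.\ that $\MDP^c$ inherits the unique-action-per-state convention. This is the only point requiring a little care, and I would either invoke the well-formedness of the $\collapse$ construction or check it inline: a surviving action $a \in \stateactions(s)$ cannot occur in $\stateactions(s')$ for $s' \neq s$ (uniqueness in $\MDP$), hence it occurs neither in $\stateactions^c(s')$ for $s' \in \States \setminus R_{\algoecs}$, nor — using the pairwise disjointness of the ECs in $\algoecs$ — in $\stateactions^c(s_{(R_j,B_j)})$ for any $R_j$ not containing $s$, and it is distinct from all the fresh actions.

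I do not expect a genuine obstacle here: the statement is essentially a definitional consistency check. The only bookkeeping to keep straight is which of the two clauses defining $\stateactions^c$ applies in each case, and the implicit reliance on $\MDP^c$ being a legitimate MDP with uniquely associated actions.
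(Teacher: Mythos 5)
Your proof is correct and takes essentially the same route as the paper's: both are definitional case analyses, the only (cosmetic) difference being that the paper splits on $s^c = \actionstate<\MDP^c>(a)$ while you split on $s = \actionstate<\MDP>(a)$. Your explicit check that $a$ remains available at exactly one state of $\MDP^c$ is a point the paper leaves implicit, and is a reasonable bit of extra care rather than a deviation.
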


\begin{proof}
	First, observe that $\Actions \intersection \Actions^c = \Actions^c \setminus \{a_+, a_-, \ecremain_i\}$ by definition.
	The claim follows by a case distinction on $s^c = \actionstate<\MDP^c>(a)$.
	If $s^c \in \States$, then $\stateactions(s^c) = \stateactions^c(s^c)$ and $\eccollapsed(s^c) = s^c$.
	If instead $s^c = s_{(R_i, B_i)}$ for some $(R_i, B_i) \in \algoecs$, we have that $a \in \Union_{s \in R_i} \stateactions(s) \setminus B_i$.
	Thus, there exists a state $s \in R_i$ such that $s = \actionstate<\MDP>(a)$.
	But then by definition $\eccollapsed(s) = s^c$. 
\end{proof}
The following two lemmas show how we can relate paths in the two MDPs with each other.
See \cite[Section~6.4.1]{de1997formal} for an alternative view.
Intuitively, the collapsed MDP also gives us a \enquote{quotient} on the set of paths.
Essentially, a continuous sequence of state-action pairs belonging to the same EC from $\algoecs$ is \enquote{collapsed} to the corresponding representative.
Vice-versa, any path in the collapsed MDP corresponds to a set of paths in the original MDP.
\begin{lemma} \label{stm:collapse:path_normal_to_collapsed}
	Let $\finitepath = s_1 a_1 \dots a_{n-1} s_n \in \Finitepaths<\MDP>$ be a finite path in the MDP $\MDP$.
	There exists a number $m \leq n$ and indices $i_1, \dots, i_m$ with $1 \leq i_j < i_{j+1} \leq n$ such that
    \begin{equation*}
        \finitepath^c = \eccollapsed(s_{i_1}) a_{i_1} \dots a_{i_{m-1}} \eccollapsed(s_{i_m}) \in \Finitepaths<\MDP^c>    
    \end{equation*}
    is a finite path in the collapsed MDP $\MDP^c$ with $\eccollapsed(s_1) = \eccollapsed(s_{i_1})$ and $\eccollapsed(s_n) = \eccollapsed(s_{i_m})$.
\end{lemma}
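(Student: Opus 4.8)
The plan is to contract every maximal block of steps of $\finitepath$ whose action is internal to a collapsed EC, keeping exactly the steps whose action survives in $\MDP^c$, i.e.\ lies in $\Actions \intersection \Actions^c = \Actions \setminus B_{\algoecs}$. Concretely, let $J = \{k \in \{1, \dots, n-1\} \mid a_k \notin B_{\algoecs}\}$ and write $J = \{i_1 < i_2 < \dots < i_{m-1}\}$ (possibly empty); put $i_m \coloneqq n$, and, if $J = \emptyset$, additionally $i_1 \coloneqq 1$ so that $m = 1$. Then $m \leq n$ (as $\cardinality{J} = m-1 \leq n-1$) and $1 \leq i_1 < \dots < i_{m-1} < i_m = n$, as required by the statement. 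I claim that $\finitepath^c \coloneqq \eccollapsed(s_{i_1}) \, a_{i_1} \, \eccollapsed(s_{i_2}) \cdots a_{i_{m-1}} \, \eccollapsed(s_{i_m})$ is the desired path (read as the one-state path $\eccollapsed(s_1)$ when $m = 1$).

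The key ingredient is this: whenever $i \leq j$ and $a_i, a_{i+1}, \dots, a_{j-1} \in B_{\algoecs}$, the states $s_i, s_{i+1}, \dots, s_j$ all lie in one and the same $R_\ell$ with $(R_\ell, B_\ell) \in \algoecs$, so that $\eccollapsed(s_i) = \dots = \eccollapsed(s_j) = s_{(R_\ell, B_\ell)}$. This is a short induction along the block: $a_k \in B_{\algoecs}$ means $a_k \in B_\ell$ for some $\ell$, which forces $\actionstate<\MDP>(a_k) = s_k \in R_\ell$ (since $B_\ell \subseteq \Union_{s \in R_\ell} \stateactions(s)$ and actions are uniquely owned by states), and then condition (i) of \cref{def:ec} forces $s_{k+1} \in R_\ell$; pairwise disjointness of the ECs in $\algoecs$ guarantees $\ell$ cannot change from one step of the block to the next. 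Applying this to the block $s_1 \dots s_{i_1}$ (actions $a_1, \dots, a_{i_1-1} \in B_{\algoecs}$ as $i_1 = \min J$), to each block $s_{i_j+1} \dots s_{i_{j+1}}$ for consecutive $i_j < i_{j+1}$ in $J$, and to $s_{i_{m-1}+1} \dots s_n$ (as $i_{m-1} = \max J$), yields at once $\eccollapsed(s_1) = \eccollapsed(s_{i_1})$, $\eccollapsed(s_n) = \eccollapsed(s_{i_m})$, and $\eccollapsed(s_{i_j+1}) = \eccollapsed(s_{i_{j+1}})$ for every $j < m$.

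It then remains to check that $\finitepath^c$ is a legal path of $\MDP^c$. For each action step, $a_{i_j} \in \Actions \intersection \Actions^c$, so \cref{stm:collapse:collapse_actionstate} gives $\actionstate<\MDP^c>(a_{i_j}) = \eccollapsed(\actionstate<\MDP>(a_{i_j})) = \eccollapsed(s_{i_j})$, i.e.\ $a_{i_j} \in \stateactions^c(\eccollapsed(s_{i_j}))$. For each transition step, $\finitepath \in \Finitepaths<\MDP>$ gives $s_{i_j+1} \in \support(\mdptransitions(s_{i_j}, a_{i_j}))$, and with $\eccollapsed(s_{i_j+1}) = \eccollapsed(s_{i_{j+1}})$ from the previous paragraph we get $s_{i_j+1} \in \ecstates(\eccollapsed(s_{i_{j+1}}))$; since $\eccollapsed$ never produces $\targetstate$ or $\sinkstate$, the defining equation $\mdptransitions^c(s^c, a^c, s'^c) = \sum_{s' \in \ecstates(s'^c)} \mdptransitions(\actionstate<\MDP>(a^c), a^c, s')$ for original actions applies and yields $\mdptransitions^c(\eccollapsed(s_{i_j}), a_{i_j}, \eccollapsed(s_{i_{j+1}})) \geq \mdptransitions(s_{i_j}, a_{i_j}, s_{i_j+1}) > 0$. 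The two endpoint identities are exactly $\eccollapsed(s_1) = \eccollapsed(s_{i_1})$ and $\eccollapsed(s_n) = \eccollapsed(s_{i_m})$ already obtained, and the degenerate case $J = \emptyset$ is covered directly by the key ingredient applied to $s_1 \dots s_n$.

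I expect the only real difficulty to be organisational: phrasing the ``a maximal run of $B_{\algoecs}$-actions stays inside a single EC of $\algoecs$'' claim cleanly, handling the boundary blocks before $i_1$ and after $i_{m-1}$ together with the $m = 1$ degeneracy, and keeping the index arithmetic ($i_j+1$ versus $i_{j+1}$) straight. There is no deep step --- everything reduces to the EC support condition, uniqueness of action ownership, and disjointness of the chosen ECs.
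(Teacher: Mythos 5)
Your proposal is correct and follows essentially the same route as the paper's proof: keep exactly the actions outside $B_{\algoecs}$, observe that each intervening block of $B_{\algoecs}$-actions stays inside a single EC of $\algoecs$ (so all its states collapse to the same representative), and read off the collapsed path --- you merely define all indices at once via the set $J$ where the paper peels off the first surviving action and recurses, and you verify the availability/transition conditions in $\MDP^c$ more explicitly than the paper does. The only blemish is the degenerate case $J = \emptyset$, where you set $i_1$ to both $1$ and $n$; either single choice works since all states then share one representative.
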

\begin{proof}
	We construct the path $\finitepath^c$ inductively.
	We start with $i_1 = 1$ and $s_1^c = \eccollapsed(s_1)$.
	Now, either all actions of $\finitepath$ are in $B_\algoecs$, then by definition of ECs all states of $\finitepath$ are within the same EC and we are done.
	Otherwise, let $a$ be the first action along the path $\finitepath$ such that $a \in \Actions^c$ (i.e.\ $a \notin B_\algoecs$) and let its index equal $j$.
	Set $i_2 = j$, $a_1^c = a$ and $s_2^c = \eccollapsed(s_{i+1})$.
	Then $a \in \stateactions(s_1^c)$.
	Repeat the argument with the path $\finitepath'$ equal to the suffix of $\finitepath$ starting at $j + 1$. 
\end{proof}
\begin{lemma} \label{stm:collapse:path_collapsed_to_normal}
	Let $\finitepath^c = s^c_1 a^c_1 \dots a^c_{m-1} s^c_m \in \Finitepaths<\MDP^c>$ be a finite path in the collapsed MDP $\MDP^c$ not containing the special states $\targetstate$, $\sinkstate$.
	There exists a finite path $\finitepath = s_1 a_1 \dots a_{n-1} s_n \in \Finitepaths<\MDP>$ in the MDP $\MDP$ with $n \geq m$ and indices $i_1, \dots, i_m$ with $1 \leq i_j < i_{j+1} \leq n$ and
	\begin{itemize}
		\item $s_k \in \ecstates(s^c_j)$ for all $j$ and $k$ with $i_j \leq k < i_{j+1}$ (defining $i_{m+1} = n + 1$) and
		\item if $s^c_j = s_{(R_i, B_i)}$ then $a_k \in B_i$ for all $j$ and $k$ with $i_j \leq k < i_{j+1} - 1$.
	\end{itemize}
\end{lemma}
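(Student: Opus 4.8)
The plan is to construct $\finitepath$ by induction on the length $m$ of $\finitepath^c$, essentially reversing the recipe from the proof of \cref{stm:collapse:path_normal_to_collapsed}: each representative state $s_{(R_i,B_i)}$ visited by $\finitepath^c$ gets expanded into an actual detour inside the end component $R_i$ using only actions from $B_i$, and each collapsed transition from $s^c_j$ to $s^c_{j+1}$ via $a^c_j$ gets realized by a concrete transition of $\MDP$ together with the preceding detour.

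Before the induction I would record three facts. First, since $\finitepath^c$ avoids $\targetstate$ and $\sinkstate$, none of the $a^c_j$ can be a remain action $\ecremain_i$ or one of $a_+, a_-$ (these lead deterministically into $\{\targetstate, \sinkstate\}$), so every $a^c_j \in \Actions \intersection \Actions^c$ and $\actionstate<\MDP>(a^c_j)$ is well defined. Second, by \cref{stm:collapse:collapse_actionstate} we have $\eccollapsed(\actionstate<\MDP>(a^c_j)) = \actionstate<\MDP^c>(a^c_j) = s^c_j$, i.e.\ $\actionstate<\MDP>(a^c_j) \in \ecstates(s^c_j)$. Third, since $\mdptransitions^c(s^c_j, a^c_j, s^c_{j+1}) > 0$ (it is a step of a valid path), unfolding the definition of $\mdptransitions^c$ gives $\sum_{s' \in \ecstates(s^c_{j+1})} \mdptransitions(\actionstate<\MDP>(a^c_j), a^c_j, s') > 0$, so there is a witness successor $s' \in \ecstates(s^c_{j+1})$ with $\mdptransitions(\actionstate<\MDP>(a^c_j), a^c_j, s') > 0$.

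The induction maintains the invariant that, having realized the prefix up to $s^c_j$, we have a path in $\MDP$ ending at some $s_{i_j}$ with $\eccollapsed(s_{i_j}) = s^c_j$ and both displayed conditions holding for indices up to $i_j$. The base case $j = 1$ picks any $s_1 \in \ecstates(s^c_1)$, which is non-empty as $s^c_1 \notin \{\targetstate, \sinkstate\}$, and sets $i_1 = 1$. For the step $j \to j+1$: let $t \coloneqq \actionstate<\MDP>(a^c_j) \in \ecstates(s^c_j)$. If $s^c_j$ is not a representative, then $\ecstates(s^c_j) = \{s^c_j\}$ forces $s_{i_j} = t$ and nothing is appended; if $s^c_j = s_{(R_i,B_i)}$, then $s_{i_j}, t \in R_i$, and by the reachability requirement in the definition of end components (see \cref{def:ec:reach}) there is a finite path from $s_{i_j}$ to $t$ staying in $R_i$ and using only actions of $B_i$, which we append — these states lie in $\ecstates(s^c_j) = R_i$ and these actions in $B_i$. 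We then append $a^c_j$ followed by the witness successor $s'$ from the third fact and let $i_{j+1}$ be the index of $s'$; since $\eccollapsed(s') = s^c_{j+1}$, the invariant is restored, and the "exit" action at position $i_{j+1} - 1$ is exactly $a^c_j$, which explains why the second displayed condition only needs to hold for $i_j \le k < i_{j+1} - 1$. At $j = m$ we stop with $n \coloneqq i_m$ (so that $i_{m+1} = n + 1$), which makes the final batch of conditions trivial, and $1 = i_1 < i_2 < \dots < i_m = n$ yields $n \ge m$.

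I expect no genuinely hard step: the whole argument is a careful transcription of the collapsing construction. The only thing demanding attention is the index bookkeeping — keeping the cutoffs $i_j$ aligned with the event "just arrived at a state whose collapse is $s^c_j$", getting the off-by-one in the action condition right, and handling the two degenerate situations ($m = 1$, and representative states sitting at the endpoints $s^c_1$ or $s^c_m$, where no exit action is available or needed).
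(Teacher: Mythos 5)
Your proof is correct and follows essentially the same inductive expansion as the paper's: regular states are copied, a representative state $s_{(R_i,B_i)}$ is expanded into an internal detour to $\actionstate<\MDP>(a^c_j)$ using only $B_i$-actions, and the collapsed transition is then realized by a witness successor in $\ecstates(s^c_{j+1})$. You are in fact slightly more explicit than the paper's own argument, which leaves implicit that no $a^c_j$ can be a remain action (since the path avoids $\targetstate$, $\sinkstate$) and that $\actionstate<\MDP>(a^c_j) \in \ecstates(s^c_j)$ via \cref{stm:collapse:collapse_actionstate}.
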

\begin{proof}
	Similar to the above proof, we construct the path $\finitepath$ inductively.
	Distinguish two cases for $s^c_1$.
	If $s^c_1 \in \States$, set $s_1 = s^c_1$ and $a_1 = a^c_1$ and repeat the argument with the next step of $\finitepath^c$.
	Otherwise, we have that $s^c_1 = s_{(R_i, B_i)}$ for some EC $(R_i, B_i) \in \algoecs$.
	Since $(R_i, B_i)$ is an EC in $\MDP$, there exists a finite path in $\Finitepaths<\MDP>$ only using actions of $B_i$ from any state in $R_i$ to $\actionstate<\MDP>(a^c_1)$.
	This path corresponds to the first state-action pair in $\finitepath^c$.
	By definition, there exists a state $s' \in \States$ such that $s' \in \support(\mdptransitions(\actionstate<\MDP>(a^c_1), a^c_1))$ and $\eccollapsed(s') = s^c_2$.
	Thus, we can extend the above path by $a^c_1 s'$ and repeat the argument. 
\end{proof}
Based on the previous lemmas, we can establish a correspondence of end components between the original MDP and its (partly) collapsed version.
In particular, for every EC in the original MDP there either exists a single state representing this EC or a new EC in the collapsed MDP.
\begin{lemma} \label{stm:collapse:ec_correspondence_normal_to_collapse}
	For any EC $(R, B) \in \Ecs(\MDP)$ in the MDP $\MDP$ we either have
	\begin{enumerate}
		\item \label[casedistinction]{item:stm:collapse:ec_correspondence_normal_to_collapse:true_ec}
		an EC $(R^c, B^c)$ in $\MDP^c$, where $R^c = \eccollapsed(R)$ and $B^c = B \intersection \Actions^c$, or
		\item \label[casedistinction]{item:stm:collapse:ec_correspondence_normal_to_collapse:state}
		a state $s_{(R',B')} \in \States^c$ with $R \subseteq R'$ and $B \subseteq B'$.
	\end{enumerate}
\end{lemma}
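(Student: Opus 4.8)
\medskip
\noindent\textbf{Proof plan.}
The plan is to decide between the two alternatives according to whether $B$ retains any action under the collapse, i.e.\ whether $B \intersection \Actions^c = B \setminus B_{\algoecs}$ is empty. If it is empty I will establish \cref{item:stm:collapse:ec_correspondence_normal_to_collapse:state}, and otherwise \cref{item:stm:collapse:ec_correspondence_normal_to_collapse:true_ec}; since $B \neq \emptyset$ these two cases are exhaustive.

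First suppose $B \subseteq B_{\algoecs}$. Then every $a \in B$ belongs to a unique $B_{i(a)}$, and since actions are associated with a unique state we have $\actionstate<\MDP>(a) \in R_{i(a)}$ and, because $(R_{i(a)}, B_{i(a)})$ is an EC, also $\support(\mdptransitions(\actionstate<\MDP>(a), a)) \subseteq R_{i(a)}$. Fix some $a_0 \in B$ and an arbitrary $s \in R$, and take a path $\actionstate<\MDP>(a_0) = s_0\, b_0\, s_1 \cdots b_{k-1}\, s_k = s$ that stays in $R$ and uses only actions of $B$ (condition~(ii) of \cref{def:ec}). I will argue by induction that every $s_j$ lies in $R_{i(a_0)}$: the base case holds because $s_0 \in R_{i(a_0)} \intersection R_{i(b_0)}$ forces $i(b_0) = i(a_0)$ by pairwise disjointness of the $R_i$, and in the step $s_{j+1} \in \support(\mdptransitions(s_j, b_j)) \subseteq R_{i(b_j)} = R_{i(a_0)}$, while $s_{j+1} = \actionstate<\MDP>(b_{j+1})$ again pins down $i(b_{j+1}) = i(a_0)$. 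Hence $R \subseteq R_{i(a_0)}$, and then for any $b \in B$ we get $\actionstate<\MDP>(b) \in R \subseteq R_{i(a_0)}$ and $\actionstate<\MDP>(b) \in R_{i(b)}$, so $i(b) = i(a_0)$ and $b \in B_{i(a_0)}$. Writing $(R', B') = (R_{i(a_0)}, B_{i(a_0)})$, this yields $R \subseteq R'$, $B \subseteq B'$, and the representative state $s_{(R', B')} \in \States^c$ required by \cref{item:stm:collapse:ec_correspondence_normal_to_collapse:state}.

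Now suppose $B^c \coloneqq B \intersection \Actions^c = B \setminus B_{\algoecs} \neq \emptyset$, and set $R^c \coloneqq \eccollapsed(R)$. I claim $(R^c, B^c)$ is an EC of $\MDP^c$, giving \cref{item:stm:collapse:ec_correspondence_normal_to_collapse:true_ec}. Well-formedness is a direct unfolding of $\collapse$: $R^c \subseteq \States^c \setminus \{\targetstate, \sinkstate\}$ and $B^c \subseteq \Actions^c$ are immediate, and for $a^c \in B^c$ we have $\actionstate<\MDP^c>(a^c) = \eccollapsed(\actionstate<\MDP>(a^c)) \in R^c$ by \cref{stm:collapse:collapse_actionstate} (using $\actionstate<\MDP>(a^c) \in R$), with $a^c$ still available there because it lies in no $B_j$ and is therefore retained by $\stateactions^c$. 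For condition~(i) of \cref{def:ec}, fix $a^c \in B^c$ and $s'^c \in \support(\mdptransitions^c(\actionstate<\MDP^c>(a^c), a^c))$; by definition of $\mdptransitions^c$ there is $s' \in \ecstates(s'^c)$ with $s' \in \support(\mdptransitions(\actionstate<\MDP>(a^c), a^c)) \subseteq R$, so $s'^c = \eccollapsed(s') \in R^c$. For condition~(ii), given $s^c, t^c \in R^c$ pick $s, t \in R$ collapsing onto them, take a path from $s$ to $t$ inside $R$ using only actions of $B$, and feed it to \cref{stm:collapse:path_normal_to_collapsed}: the resulting path in $\MDP^c$ runs from $s^c$ to $t^c$, visits only states of the form $\eccollapsed(s_{i_j})$ with $s_{i_j} \in R$ — hence stays in $R^c$ — and uses only those actions of the original path lying in $\Actions^c$, i.e.\ actions of $B \intersection \Actions^c = B^c$.

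The main obstacle is the first case: upgrading $B \subseteq B_{\algoecs}$ to the much stronger conclusion that a single EC of $\algoecs$ contains all of $(R,B)$. This is precisely where the pairwise-disjointness hypothesis on $\algoecs$ is essential, and where one must carefully invoke condition~(ii) of \cref{def:ec} to propagate EC-membership along a connecting path. The second case is a routine verification from the definition of $\collapse$ together with \cref{stm:collapse:collapse_actionstate,stm:collapse:path_normal_to_collapsed}; the only points needing minor care are that representative states are never $\targetstate$ or $\sinkstate$ and that retained actions keep their availability under $\stateactions^c$.
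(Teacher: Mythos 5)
Your proof is correct and takes essentially the same route as the paper's: the same dichotomy on whether $B$ survives the collapse (your direct induction showing that $B \subseteq B_{\algoecs}$ forces all of $(R,B)$ into a single $(R_i, B_i)$ is just the contrapositive of the paper's claim that $B \not\subseteq B_i$ for all $i$ implies $B \not\subseteq B_{\algoecs}$), followed by the same verification of the EC conditions for $(R^c, B^c)$ via \cref{stm:collapse:collapse_actionstate,stm:collapse:path_normal_to_collapsed}. The only step you omit is the paper's auxiliary identity $R^c = \Union_{a \in B^c} \actionstate<\MDP^c>(a)$, but the inclusion you do establish is the only part of it that \cref{def:ec} actually requires, and the reverse inclusion is anyway implied by your path-projection argument for condition~(ii), so nothing is missing.
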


\begin{proof}
	Observe that \cref{item:stm:collapse:ec_correspondence_normal_to_collapse:state} is trivial by definition, in particular this case is equivalent to $B \subseteq B_i$ for some $i$.
	Moreover, \cref{item:stm:collapse:ec_correspondence_normal_to_collapse:true_ec} and \cref{item:stm:collapse:ec_correspondence_normal_to_collapse:state} are mutually exclusive since by construction for any EC $(R_i, B_i)$ the internal actions $B_i$ are removed, thus there is no $B \subseteq B_i$ such that $(\{s_{(R_i, B_i)}\}, B)$ is an EC in $\MDP^c$.

	Let thus $(R, B)$ be an EC in the MDP $\MDP$ with $B \not\subseteq B_i$ for all $i$.
	We show that $(R^c, B^c)$ with $R^c = \eccollapsed(R)$ and $B^c = B \intersection \Actions^c$ is an EC in $\MDP^c$.

	First, we show by contradiction that $B \not\subseteq B_{\algoecs}$ \plabel{proof:collapse:ec_correspondence:B_not_subset}, i.e.\ $B$ cannot comprise only internal actions of the ECs in $\algoecs$.
	Recall that by assumption on $\algoecs$ the EC states $R_i$ are disjoint and $B_i$ are subsets of the actions enabled in the respective states of $R_i$.
	Since we assume not to be in \cref{item:stm:collapse:ec_correspondence_normal_to_collapse:state}, $(R, B)$ is an EC with $B \not\subseteq B_i$ for all $i$.
	Assume for contradiction that $B \subseteq B_{\algoecs} = \Union B_i$.
	Then $(R, B)$ necessarily has to contain states of at least two ECs from $\algoecs$.
	Formally, there exist two states $s, s' \in R$ with $s \in R_i$, $s' \in R_j$, and $i \neq j$.
	Since $(R, B)$ is an EC, there exists a path from $s$ to $s'$ and vice versa, using only actions from $B$.
	As $B \subseteq B_{\algoecs}$, these actions were available in the ECs before.
	Since $s$ and $s'$ are in two ECs with disjoint state sets and a path using only actions from $B$ exists between them, there exists a state $s''$ and action $a \in B \subseteq B_{\algoecs}$ with $\support(\mdptransitions(s'', a)) \not\subseteq R_i$.
	Since the $a \in B_{\algoecs}$, we necessarily have $a \in B_i$, contradicting the assumption that $(R_i, B_i)$ is an EC, proving \ref{proof:collapse:ec_correspondence:B_not_subset}.

	Next, we prove that $R^c = \Union_{a \in B^c} \actionstate<\MDP^c>(a)$ \plabel{proof:collapse:ec_correspondence:R_is_action_union}.
	Observe that by assumption we have $R = \Union_{a \in B} \actionstate<\MDP>(a)$.
	By definition of $B^c = B \intersection \Actions^c$, we thus have that $\Union_{a^c \in B^c} \actionstate<\MDP>(a^c) \subseteq R$.
	Consequently 
	\begin{equation*}
		{\Union}_{a^c \in B^c} \eccollapsed(\actionstate<\MDP>(a^c)) \subseteq \eccollapsed(R) = R^c
	\end{equation*}
	Applying \cref{stm:collapse:collapse_actionstate} yields
    \begin{equation*}
        {\Union}_{a^c \in B^c} \eccollapsed(\actionstate<\MDP>(a^c)) = {\Union}_{a^c \in B^c} \actionstate<\MDP^c>(a^c),    
    \end{equation*}
    thus $\Union_{a^c \in B^c} \actionstate<\MDP^c>(a^c) \subseteq R^c$.

	Now, assume for contradiction that there exists a state $s^c \in R^c$ such that $s^c \neq \actionstate<\MDP^c>(a^c)$ for all $a^c \in B^c$.
	Due to the definition of $\MDP^c$, we either have $s^c \in \States$, $s^c = s_{(R', B')}$ for some EC $(R', B') \in \algoecs$, or $s^c \in \{\targetstate, \sinkstate\}$.
	The third case immediately leads to a contradiction, since $B^c \subseteq \Actions$ and thus $a_+, a_- \notin B^c$.
	In the first case, we have that $s^c \notin R_i$ for any $i$, thus $\stateactions(s^c) = \stateactions^c(s^c) \subseteq \Actions^c$.
	Hence, any action $a$ of this state contained in the EC $(R, B)$ is still available in the collapsed MDP and thus also contained in the EC $(R^c, B^c)$.
	The second case implies, by definition of $R^c = \eccollapsed(R)$, that there exists an EC $(R_i, B_i) \in \algoecs$ such that $R_i \intersection R \neq \emptyset$.
	Recall that $\stateactions^c(s_{(R_i, B_i)}) = \Union_{s \in R_i} \stateactions(s) \setminus B_i$.
	The case assumption is thus equivalent to $B^c \intersection (\Union_{s \in R_i} \stateactions(s) \setminus B_i) = \emptyset$.
	Inserting the definition of $B^c$ and $\Actions^c$ yields
	\begin{multline*}
		B \intersection (\Actions \setminus B_{\algoecs}) \intersection ({\Union}_{s \in R_i} \stateactions(s) \setminus B_i) = B \intersection ({\Union}_{s \in R_i} \stateactions(s) \setminus B_i) = \\
		{\Union}_{s \in R_i \intersection R} \stateactions(s) \intersection B \setminus B_i = \emptyset.
	\end{multline*}
	This implies that $\stateactions(s) \intersection B \subseteq B_i$ for all $s \in R_i \intersection R$, i.e.\ all such states only have \enquote{internal} actions of the EC $(R_i, B_i)$ available in $(R, B)$.
	But this implies $R \subseteq R_i$ and $B \subseteq B_i$, contradicting our assumptions.
	This concludes the proof of \ref{proof:collapse:ec_correspondence:R_is_action_union}.

	Now, we prove that $(R^c, B^c)$ is a proper EC in $\MDP^c$, i.e.\ that (i)~$R^c \neq \emptyset$, $\emptyset \neq B^c \subseteq \Union_{s^c \in R^c} \stateactions(s^c)$, (ii)~for all $s^c \in R^c$, $a \in B^c \intersection \stateactions^c(s^c)$ we have $\support(\mdptransitions^c(s^c, a^c)) \subseteq R^c$, and (iii)~for all states $s^c, s'^c \in R^c$ there exists a path from $s^c$ to $s'^c$ only using actions from $B^c$.

	For (i), we have $B^c \neq \emptyset$, otherwise $B^c = B \intersection \Actions^c = \emptyset$ implies $B \subseteq B_{\algoecs}$, contradicting \ref{proof:collapse:ec_correspondence:B_not_subset}.
	\ref{proof:collapse:ec_correspondence:R_is_action_union} yields the second part of the first condition.

	To prove (ii), assume a contradiction, i.e.\ let $s^c \in R^c$, $a \in B^c \intersection \stateactions^c(s^c)$ such that $s'^c \in \support(\mdptransitions^c(s^c, a^c)) \setminus R^c$.
	Let $s = \actionstate<\MDP>(a^c)$ (implying $s^c = \eccollapsed(s)$).
	Again, we proceed by a case distinction, this time on the successor $s'^c$.
	If $s'^c \in \States$, we have that $s'^c \in \support(\mdptransitions(s, a^c))$, since $s \in R$ and $a^c \in B$ and $(R, B)$ is an EC.
	Further, $\mdptransitions^c(s^c, a^c, s'^c) = \mdptransitions(s^c, a^c, s'^c)$, thus $s'^c \in \support(\mdptransitions^c(s^c, a^c))$, contradicting the assumption.
	If instead $s'^c = s_{(R_i, B_i)}$, then there exists a state $s' \in \support(\mdptransitions(s, a^c)) \intersection R_i$ by definition of $\mdptransitions^c$.
	But then $s_{(R_i, B_i)} \in R^c$ by definition of $R^c$, contradiction.

	Finally, to show (iii), we can directly apply \cref{stm:collapse:path_normal_to_collapsed} to obtain the required path as follows.
	Let $s^c, s'^c \in R^c$ two states and pick two arbitrary $s, s' \in R$ with $\eccollapsed(s) = s^c$ and $\eccollapsed(s') = s'^c$.
	Since $(R, B)$ is an EC, there exists a finite path $\finitepath$ from $s$ to $s'$, using only actions of $B$.
	By \cref{stm:collapse:path_normal_to_collapsed}, we get a path $\finitepath^c$ from $s^c$ to $s'^c$ using only actions from $B \intersection \Actions^c = B^c$, concluding the proof of \cref{item:stm:collapse:ec_correspondence_normal_to_collapse:true_ec}. 
\end{proof}
As expected, the corresponding reverse statement holds true, too, i.e.\ every EC in the collapsed MDP yields a corresponding EC in the original MDP.
\begin{lemma} \label{stm:collapse:ec_correspondence_collapse_to_normal}
	For all ECs $(R^c, B^c)$ in $\MDP^c$ with $\targetstate, \sinkstate \notin R^c$ we have that $(R, B)$ with $R = \ecstates(R^c)$ and $B = B^c \union \Union_{s_{(R_i, B_i)} \in R^c} B_i$ is an EC in $\MDP$.
\end{lemma}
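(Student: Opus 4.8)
The plan is to verify the two defining conditions of \cref{def:ec} for the pair $(R,B)$ with $R = \ecstates(R^c)$ and $B = B^c \union \Union_{s_{(R_i,B_i)} \in R^c} B_i$, leaning on \cref{stm:collapse:collapse_actionstate} and the path-lifting statement \cref{stm:collapse:path_collapsed_to_normal}. Some bookkeeping first. Since $\targetstate,\sinkstate \notin R^c$ and $(R^c,B^c)$ is an EC, none of the actions $\ecremain_i, a_+, a_-$ can lie in $B^c$ (their support leaves $R^c$), so $B^c \subseteq \Actions \intersection \Actions^c$; in particular $B^c$ is disjoint from every $B_i$, which consist only of removed actions. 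Non-emptiness of $R$ and $B$ is immediate from non-emptiness of $R^c$, $B^c$ and each $\ecstates(s^c)$, and $B \subseteq \Union_{s \in R}\stateactions(s)$ follows since $\actionstate<\MDP>(a) \in \ecstates(\actionstate<\MDP^c>(a)) \subseteq \ecstates(R^c) = R$ for $a \in B^c$ (using \cref{stm:collapse:collapse_actionstate}) and $B_i \subseteq \Union_{s \in R_i}\stateactions(s)$ with $R_i = \ecstates(s_{(R_i,B_i)}) \subseteq R$. Finally, a recurring observation: for any $s \in R$ we have $\eccollapsed(s) \in R^c$, because $s \in \ecstates(t^c)$ for some $t^c \in R^c$ forces $\eccollapsed(s) = t^c$ in both the ``normal'' and the ``representative'' case.

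For the closure condition (item~(i) of \cref{def:ec}), fix $s \in R$ and $a \in B \intersection \stateactions(s)$ and split into the two disjoint cases. If $a \in B_i$ for some $s_{(R_i,B_i)} \in R^c$, then $s = \actionstate<\MDP>(a) \in R_i$ and, since $(R_i,B_i)$ is an EC of $\MDP$, $\support(\mdptransitions(s,a)) \subseteq R_i \subseteq R$. If instead $a \in B^c$, put $s^c = \eccollapsed(s)$; by \cref{stm:collapse:collapse_actionstate} $s^c = \actionstate<\MDP^c>(a)$, and $s^c \in R^c$ with $a \in B^c \intersection \stateactions^c(s^c)$, so closure of the EC $(R^c,B^c)$ gives $\support(\mdptransitions^c(s^c,a)) \subseteq R^c$. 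For any $s' \in \support(\mdptransitions(s,a))$, the defining identity $\mdptransitions^c(s^c, a, \eccollapsed(s')) = \sum_{t' \in \ecstates(\eccollapsed(s'))}\mdptransitions(s,a,t') \geq \mdptransitions(s,a,s') > 0$ shows $\eccollapsed(s') \in R^c$, hence $s' \in \ecstates(\eccollapsed(s')) \subseteq R$, as required.

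For the reachability condition (item~(ii) of \cref{def:ec}), take $s,s' \in R$ and set $s^c = \eccollapsed(s)$, $s'^c = \eccollapsed(s') \in R^c$. Since $(R^c,B^c)$ is an EC, there is a finite path $\finitepath^c$ from $s^c$ to $s'^c$ in $\MDP^c$ using only actions of $B^c$; it stays in $R^c$ and hence avoids $\targetstate,\sinkstate$. Applying \cref{stm:collapse:path_collapsed_to_normal} lifts $\finitepath^c$ to a path $\finitepath$ in $\MDP$ whose states all lie in $\ecstates(R^c) = R$, whose block-connecting actions are (by the construction in that proof) exactly the original $a^c_j \in B^c$, and whose intermediate actions taken while inside a representative $s_{(R_i,B_i)}$ lie in $B_i$ --- so $\finitepath$ uses only actions of $B$. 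The lifted path begins at some state of $\ecstates(s^c)$ and ends at some state of $\ecstates(s'^c)$, not necessarily $s$ and $s'$ themselves; to repair the endpoints, whenever $s^c = s_{(R_i,B_i)}$ (resp.\ $s'^c = s_{(R_j,B_j)}$) we prepend (resp.\ append) a finite path inside $R_i$ using only $B_i$-actions (resp.\ inside $R_j$ using only $B_j$-actions) joining $s$ (resp.\ $s'$) to the corresponding endpoint of the lifted path; such paths exist because $(R_i,B_i)$ and $(R_j,B_j)$ are ECs of $\MDP$. The concatenation is a finite path from $s$ to $s'$ staying inside $R$ and using only actions of $B$, which is exactly what item~(ii) demands.

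The only genuinely delicate point is aligning the endpoints of the lifted path with the prescribed $s$ and $s'$, together with confirming against the construction behind \cref{stm:collapse:path_collapsed_to_normal} that the block-connecting actions are precisely the $a^c_j$ (so that they land in $B^c$) while the remaining actions stay in the appropriate $B_i$. Everything else is a routine, if somewhat tedious, case analysis directly against the definition of the collapsed MDP.
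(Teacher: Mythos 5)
Your proof is correct and follows essentially the same route as the paper's: the same two-case analysis (action in some $B_i$ versus action in $B^c$) for the closure condition, and the same application of \cref{stm:collapse:path_collapsed_to_normal} for the reachability condition. Your explicit repair of the endpoints of the lifted path (prepending/appending internal $B_i$-paths so that it actually starts at $s$ and ends at $s'$) is a detail the paper's proof glosses over, so this is if anything slightly more careful than the original.
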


\begin{proof}
	Fix an EC $(R^c, B^c)$ in $\MDP^c$ and set $R = \ecstates(R^c)$ and $B = B^c \union \Union_{s_{(R_i, B_i)} \in R^c} B_i$.
	We need to prove that $(R, B)$ is an EC in $\MDP$.
	Clearly, $R$ and $B$ are non-empty.
	We show that $R = \Union_{a \in B} \actionstate<\MDP>(a)$.
	For any $s \in R$, there exists a $s^c \in R^c$ such that $s \in \ecstates(s^c)$ by definition of $R$.
	If $s = s^c$ we have $s \in R^c$ and there exists an action $a^c \in B^c \subseteq B$ with $\actionstate<\MDP>(a^c) = s$.
	Otherwise, there is an EC $(R_i, B_i) \in \algoecs$ with $s \in R_i$, $s_{(R_i, B_i)} \in R^c$, and, since $(R_i, B_i)$ is in EC in $\MDP$, there is an action $a \in B_i \subseteq B$ with $\actionstate<\MDP>(a) = s$.
	Similarly, for any action $a \in B$ we have that $\actionstate<\MDP>(a) \in R$ by analogous reasoning.

	It remains to show that (i)~for all $s \in R$, $a \in B \intersection \stateactions(s)$ we have $\support(\mdptransitions(s, a)) \subseteq R$, and (ii)~for all $s, s' \in R$ there is a finite path from $s$ to $s'$ only using actions from $B$.
	For (i), we again assume contradiction, i.e.\ there are states $s \in R$, $s' \in \States$ and an action $a \in \stateactions(s) \intersection B$ such that $s' \in \support(\mdptransitions(s, a)) \setminus R$.
	We again proceed by case distinctions, but now first on $a$.
	If $a \in B^c$, then $\support(\mdptransitions^c(\eccollapsed(s), a)) \subseteq R^c$, as $(R^c, B^c)$ is an EC.
	By definition of $\mdptransitions^c$, we have $\eccollapsed(s') \in \support(\mdptransitions^c(\eccollapsed(s), a))$.
	Together, this implies $s' \in R$, yielding a contradiction.
	If instead $a \in B_i$ for some EC $(R_i, B_i) \in \algoecs$, then $s, s' \in R_i \subseteq R$, also leading to a contradiction.
	Finally, to prove (ii), we can directly apply \cref{stm:collapse:path_collapsed_to_normal} to a path from $\eccollapsed(s)$ to $\eccollapsed(s')$ in $(R^c, B^c)$, yielding a path from $s$ to $s'$ in $(R, B)$. 
\end{proof}
The previous statement implies that if we collapse a MEC of the original MDP, then there can be no EC in the collapsed MDP containing the MEC representative state.
\begin{lemma} \label{stm:collapse:no_ecs}
	Let $\{(R'_i, B'_i)\}_{i=1}^m \subseteq \algoecs \intersection \Mecs(\MDP)$ be some MECs of $\MDP$ in $\algoecs$.
	Then, we have that $s_{(R'_i, B'_i)} \notin R^c$ for any EC $(R^c, B^c)$ in $\MDP^c$.
\end{lemma}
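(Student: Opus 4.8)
The plan is to argue by contradiction, combining \cref{stm:collapse:ec_correspondence_collapse_to_normal} (the reverse EC--correspondence) with the maximality of MECs. Suppose some representative $s_{(R'_i, B'_i)}$ of a MEC $(R'_i, B'_i) \in \algoecs$ lies in an EC $(R^c, B^c)$ of $\MDP^c$. First I would rule out that the special states occur in $R^c$: since $\stateactions^c(\targetstate) = \{a_+\}$ with $\mdptransitions^c(\targetstate, a_+, \targetstate) = 1$ (and symmetrically for $\sinkstate$), the only finite path in $\MDP^c$ starting in $\targetstate$ stays in $\targetstate$, so the only EC containing $\targetstate$ is $(\{\targetstate\}, \{a_+\})$, which does not contain $s_{(R'_i, B'_i)}$; the same holds for $\sinkstate$. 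Hence $\targetstate, \sinkstate \notin R^c$, and \cref{stm:collapse:ec_correspondence_collapse_to_normal} applies.

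That lemma gives that $(R, B)$ with $R = \ecstates(R^c)$ and $B = B^c \union \Union_{s_{(R_j, B_j)} \in R^c} B_j$ is an EC in $\MDP$. Since $s_{(R'_i, B'_i)} \in R^c$, we immediately get $R'_i = \ecstates(s_{(R'_i, B'_i)}) \subseteq \ecstates(R^c) = R$ and $B'_i \subseteq B$ (as $B'_i$ is one of the terms in the union defining $B$). To contradict maximality of the MEC $(R'_i, B'_i)$ it then suffices to show that this pair is \emph{properly} contained in $(R, B)$, for which I would show that the inclusion $B'_i \subseteq B$ is strict. By definition of an EC, $B^c \neq \emptyset$ and $B^c \subseteq \Actions^c$; on the other hand $\Actions^c = (\Actions \setminus B_{\algoecs}) \union \{\ecremain_j\} \union \{a_+, a_-\}$ is disjoint from $B_{\algoecs}$ (the new actions are genuinely fresh), while $B'_i \subseteq B_{\algoecs}$ since $(R'_i, B'_i) \in \algoecs$. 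Hence $B^c \intersection B'_i = \emptyset$, so $B \supseteq B^c \union B'_i \supsetneq B'_i$.

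Thus $(R, B)$ is an EC of $\MDP$ with $R'_i \subseteq R$ and $B'_i \subsetneq B$, i.e.\ a strictly larger end component than $(R'_i, B'_i)$, contradicting the assumption that $(R'_i, B'_i)$ is a MEC. Therefore no such $(R^c, B^c)$ exists, which is the claim. The only mildly delicate point is the second step, keeping precise track of which actions survive the collapse (so that $B$ is correctly identified and $B^c$ is seen to be disjoint from $B'_i$); everything else is immediate from the earlier correspondence lemmas and the definitions, so I do not expect a genuine obstacle here.
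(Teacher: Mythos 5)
Your proof is correct and follows essentially the same route as the paper: apply \cref{stm:collapse:ec_correspondence_collapse_to_normal} to the hypothetical EC containing the representative state and contradict maximality of the MEC. The paper's version is a one-liner that leaves implicit both the exclusion of $\targetstate, \sinkstate$ from $R^c$ and the strictness $B'_i \subsetneq B$; your argument that $B^c$ is non-empty and disjoint from $B_{\algoecs}$ correctly fills in exactly those details.
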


\begin{proof}
	Assume there is such an EC $(R^c, B^c)$ with $s_{(R'_i, B'_i)} \in R^c$.
	\cref{stm:collapse:ec_correspondence_collapse_to_normal} yields an EC $(R, B)$ with $R'_i \subseteq R$, $B'_i \subsetneq B$, contradiction to $(R, B)$ being a MEC in $\MDP$. 
\end{proof}
The statement of \cref{stm:collapse:no_ecs} does not hold for any EC $(R'_i, B'_i) \in \algoecs$, since there might be a larger EC containing $s_{(R'_i, B'_i)}$.
For example, in \cref{example:ec_collapse}, the collapsed MDP has an EC containing representative states.
However, if all MECs are collapsed, the resulting collapsed MDP indeed has no ECs except two trivial ones.
\begin{corollary}
	Let $\MDP^c = \collapse(\MDP, \Mecs(\MDP), \initialstate, \targetset)$ be the collapsed MDP of $\MDP$ with $\algoecs = \Mecs(\MDP)$.
	Then, $\MDP^c$ satisfies \cref{asm:mec_free}.
\end{corollary}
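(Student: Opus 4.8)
The plan is to derive the statement from \cref{stm:collapse:no_ecs} together with the reverse correspondence \cref{stm:collapse:ec_correspondence_collapse_to_normal}, plus a direct inspection of the two fresh terminal states. Write $\MDP^c = \collapse(\MDP, \Mecs(\MDP), \initialstate, \targetset)$, so here $\algoecs = \Mecs(\MDP)$ and every representative state is of the form $s_{(R_i,B_i)}$ with $(R_i,B_i)$ a MEC of $\MDP$. I would take an arbitrary end component $(R^c, B^c)$ of $\MDP^c$ and show it must equal $(\{\targetstate\}, \{a_+\})$ or $(\{\sinkstate\}, \{a_-\})$; since these two pairs are obviously end components of $\MDP^c$ (each has a single state with a single self-looping action), and since I will have shown they are the only end components, they are automatically the only MECs, which is exactly \cref{asm:mec_free}.

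First I would handle the case $\targetstate \in R^c$ (symmetrically $\sinkstate \in R^c$): the only action enabled in $\targetstate$ is $a_+$, and $\mdptransitions^c(\targetstate, a_+, \targetstate) = 1$, so no action of $B^c$ available at $\targetstate$ can leave $\{\targetstate\}$; by the reachability condition of \cref{def:ec} there can then be no other state in $R^c$, hence $(R^c, B^c) = (\{\targetstate\}, \{a_+\})$. Second, I would treat the remaining case $\targetstate, \sinkstate \notin R^c$ and derive a contradiction. By \cref{stm:collapse:no_ecs}, applied to the family $\algoecs \intersection \Mecs(\MDP) = \Mecs(\MDP)$, no representative state $s_{(R_i,B_i)}$ lies in $R^c$. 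Consequently $R^c \subseteq \States \setminus R_{\algoecs}$, so $\ecstates(R^c) = R^c$ and the action union $\Union_{s_{(R_i,B_i)} \in R^c} B_i$ in \cref{stm:collapse:ec_correspondence_collapse_to_normal} is empty; that lemma then yields that $(R^c, B^c)$ is itself an end component of $\MDP$. But every end component of $\MDP$ extends to a MEC, i.e.\ is contained in $R_{\algoecs}$, which contradicts $\emptyset \neq R^c \subseteq \States \setminus R_{\algoecs}$. This rules out the third case, finishing the argument.

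I do not expect a genuine obstacle: the proof is essentially bookkeeping on top of the two already-established correspondence lemmas. The only points needing mild care are (i) that the fresh terminal states $\targetstate, \sinkstate$ are explicitly excluded from the hypothesis of \cref{stm:collapse:ec_correspondence_collapse_to_normal} and must be dispatched by the direct self-loop observation, and (ii) checking that when $R^c$ avoids all representative states the maps $\ecstates$ and the $B_i$-union degenerate, so that the end component recovered in $\MDP$ is literally $(R^c, B^c)$ rather than a strictly larger pair — which is what makes the containment-in-a-MEC step produce the contradiction.
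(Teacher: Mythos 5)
Your proof is correct and follows the same route the paper intends: the paper's proof is just the one-line ``follows directly from \cref{stm:collapse:no_ecs}'', and your argument is precisely the elaboration of that -- dispatching $\targetstate$ and $\sinkstate$ by their self-loops, using \cref{stm:collapse:no_ecs} to exclude representative states from any remaining EC, and then invoking \cref{stm:collapse:ec_correspondence_collapse_to_normal} plus maximality of MECs to rule out an EC living entirely in $\States \setminus R_{\algoecs}$. Your point (ii), that the degenerate case of \cref{stm:collapse:ec_correspondence_collapse_to_normal} recovers literally $(R^c, B^c)$, is exactly the step the paper's ``directly'' leaves implicit, so spelling it out is a welcome addition rather than a deviation.
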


\begin{proof}
	Follows directly from the above \cref{stm:collapse:no_ecs}. 
\end{proof}
Finally, we also get that the reachability probabilities are preserved.
\begin{lemma} \label{stm:collapse:reachability_equal}
	Let $\MDP^c = (\States^c, \Actions^c, \stateactions^c, \mdptransitions^c) = \collapse(\MDP, \algoecs, \initialstate, \targetset)$ be the collapsed MDP of $\MDP$, where $\algoecs = \{(R_i,B_i)\}_{i=1}^n$ is any appropriate set of end components.
	Then, for any state $s \in \States$ it holds that
	\begin{equation*}
		\ProbabilityMDPmax<\MDP, s>[\reach \targetset] = \ProbabilityMDPmax<\MDP^c, \eccollapsed(s)>[\reach \eccollapsed(\targetset)] = \ProbabilityMDPmax<\MDP^c, \eccollapsed(s)>[\reach (\{\targetstate\} \union (\targetset \intersection \States^c))].
	\end{equation*}
\end{lemma}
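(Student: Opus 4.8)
The plan is to first dispose of the second equality and then establish the first one by the standard two-inequality strategy-transfer argument, adapting the classical quotient construction of \cite[Section~6.4.1]{de1997formal} to our setting. For the second equality, unfold the definitions: $\eccollapsed(\targetset) = (\targetset \intersection \States^c) \union \{s_{(R_i, B_i)} \mid \targetset \intersection R_i \neq \emptyset\}$, and these two parts are disjoint since representative states are not in $\States$. Every representative $s_{(R_i, B_i)}$ appearing here reaches $\targetstate$ surely via $\ecremain_i$, and conversely the only transitions entering $\targetstate$ in $\MDP^c$ are its self-loop and the transitions $\mdptransitions^c(s_{(R_i, B_i)}, \ecremain_i) = \{\targetstate \mapsto 1\}$ for target-containing $R_i$, so every path reaching $\targetstate$ first visits such an $s_{(R_i, B_i)}$. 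Together with $\targetset \intersection \States^c \subseteq \eccollapsed(\targetset)$ and the fact that any strategy may be modified to redirect such a representative to $\targetstate$, this gives $\ProbabilityMDPmax<\MDP^c, \eccollapsed(s)>[\reach \eccollapsed(\targetset)] = \ProbabilityMDPmax<\MDP^c, \eccollapsed(s)>[\reach (\{\targetstate\} \union (\targetset \intersection \States^c))]$, so it remains to prove the first equality.

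For the inequality $\geq$, fix any strategy $\strategy^c$ on $\MDP^c$ and build a strategy $\strategy$ on $\MDP$ that mimics it: while the run is inside a collapsed EC $(R_i, B_i)$ and $\strategy^c$ prescribes an external action $a^c$, let $\strategy$ first navigate within $R_i$ to $\actionstate<\MDP>(a^c)$ using only actions of $B_i$ (which keeps the run in $R_i$ by \cref{def:ec} and succeeds almost surely by \cref{stm:ec_same_value}) and then play $a^c$; and as soon as the run enters some $R_i$ with $\targetset \intersection R_i \neq \emptyset$, let $\strategy$ instead navigate within $R_i$ to a state of $\targetset \intersection R_i$, again almost surely. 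By the definition of $\mdptransitions^c$, the external transitions of $\MDP^c$ are exactly the $\eccollapsed$-pushforwards of the corresponding transitions of $\MDP$, so the $\eccollapsed$-image of the $\strategy$-run on $\MDP$ is distributed like the $\strategy^c$-run on $\MDP^c$. Whenever the latter reaches $\eccollapsed(\targetset)$, the former has either already reached $\targetset$ (a plain target) or just entered a target-containing $R_i$, hence reaches $\targetset$ almost surely; thus $\ProbabilityMDP<\MDP, s><\strategy>[\reach \targetset] \geq \ProbabilityMDP<\MDP^c, \eccollapsed(s)><\strategy^c>[\reach \eccollapsed(\targetset)]$, and taking the supremum over $\strategy^c$ yields the inequality. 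The only delicate points here are that navigation inside an EC succeeds only almost surely (using \cref{stm:ec_same_value}) and the attendant bookkeeping for the null sets on which it fails.

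For the reverse inequality $\leq$, fix a strategy $\strategy$ on $\MDP$ and take $\strategy^c$ to be its \enquote{projection}: along a finite path $\finitepath^c$ of $\MDP^c$ obtained by the construction in \cref{stm:collapse:path_normal_to_collapsed} (which keeps exactly the actions outside $B_{\algoecs}$, i.e.\ the external ones), let $\strategy^c(\finitepath^c)$ be the conditional distribution, under $\ProbabilityMDP<\MDP, s><\strategy>$, of the next such action given that the $\strategy$-run projects to $\finitepath^c$, assigning any leftover mass---the event that the run is trapped in the current collapsed EC forever---to $\ecremain_i$ with $s_{(R_i, B_i)} = \last{\finitepath^c}$. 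Using again that $\mdptransitions^c$ is the $\eccollapsed$-pushforward of $\mdptransitions$, one verifies that $\ProbabilityMDP<\MDP^c, \eccollapsed(s)><\strategy^c>$ is exactly the $\eccollapsed$-projection of $\ProbabilityMDP<\MDP, s><\strategy>$, with the trapped defect resolved into runs ending in $\sinkstate$ or $\targetstate$; and a $\strategy$-run reaching $\targetset$ projects, by \cref{stm:collapse:path_normal_to_collapsed}, to an $\MDP^c$-run reaching $\eccollapsed(\targetset)$, since the target hit in $\MDP$ maps into $\eccollapsed(\targetset)$ and its entry action lies outside $B_{\algoecs}$. Hence $\ProbabilityMDP<\MDP, s><\strategy>[\reach \targetset] \leq \ProbabilityMDP<\MDP^c, \eccollapsed(s)><\strategy^c>[\reach \eccollapsed(\targetset)] \leq \ProbabilityMDPmax<\MDP^c, \eccollapsed(s)>[\reach \eccollapsed(\targetset)]$. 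I expect this direction to be the main obstacle: one must argue rigorously that the $\eccollapsed$-image of a strategy-controlled run on $\MDP$ is itself a strategy-controlled run on $\MDP^c$---this rests entirely on the transition-matching built into the definition of $\mdptransitions^c$ and on correctly absorbing the \enquote{trapped in an EC} defect into the remain actions---and that this projection does not decrease the reachability probability. The measure-theoretic bookkeeping (uncountable strategy spaces, infinite paths, the stuttering introduced inside collapsed ECs) is routine but tedious, and can follow \cite[Section~6.4.1]{de1997formal}.
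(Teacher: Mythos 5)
Your proposal is correct and follows essentially the same route as the paper: the second equality is disposed of by observing that $\eccollapsed(\targetset)$ consists exactly of plain target states and target-containing representatives (which reach $\targetstate$ surely via $\ecremain_i$), and the first equality is proven by the same two-way strategy transfer — projecting a strategy of $\MDP$ onto $\MDP^c$ by the conditional distribution of the first external action (with the trapped mass sent to $\ecremain_i$), and conversely simulating a strategy of $\MDP^c$ in $\MDP$ by almost-surely navigating inside each collapsed EC to the state enabling the prescribed external action. If anything, your treatment of the projection direction is more explicit about the measure-theoretic bookkeeping than the paper's, which leaves that step at the level of \enquote{it is easy to see}.
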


\begin{proof}
	First, observe that $\ProbabilityMDPmax<\MDP^c, s^c>[\reach \{\targetstate\}] = 1$ for any state $s^c = s_{(R_i, B_i)}$ with $R_i \intersection \targetset \neq \emptyset$ by definition.
	Moreover, $\targetset \intersection \States^c = \targetset \setminus R_\algoecs$, i.e.\ all target states which are not part of an EC in $\algoecs$.
	Every state $s^c \in \eccollapsed(\targetset)$ is of one of these two kinds.
	Hence, $\ProbabilityMDPmax<\MDP^c, \eccollapsed(s)>[\reach (\{\targetstate\} \union (\targetset \intersection \States^c))] = \ProbabilityMDPmax<\MDP^c, \eccollapsed(s)>[\reach \eccollapsed(\targetset)]$, proving the second equality.

	For the first equality, we argue how to transform the witness strategies, achieving the same overall reachability probability.
	Thus, let $\strategy \in \StrategiesMD<\MDP>$ be a (memoryless deterministic) strategy in $\MDP$ maximizing the probability of reaching $\targetset$.
	We define a strategy $\strategy^c$ on $\MDP^c$ simulating $\strategy$ as follows.
	Note that $\strategy^c$ does not have to be memoryless or deterministic.
	For all states $s^c \in \States$, i.e.\ $s^c$ is not a collapsed representative, $\strategy^c$ mimics $\strategy$, i.e.\ $\strategy^c(s) = \strategy(s)$.
	For the other case, namely $s^c = s_{(R_i, B_i)}$ for some EC $(R_i, B_i) \in \algoecs$, recall that $\strategy^c$ is allowed to have memory.
	In particular, it can remember the action $a$ leading to $s^c$.
	Clearly, for any such action $a$ and other action $a' \in \stateactions^c(s^c)$ we can compute the probability of $a'$ action being the first action not in $B_i$ under $\strategy$.
	Then, $\strategy^c$ simply selects $a'$ in $s^c$ after $a$ with this probability.
	Moreover, we also need to compute the probability of remaining inside $R_i$ forever, which corresponds to the probability of $\strategy^c$ choosing $\ecremain_i$.
	It is easy to see that $\strategy^c$ achieves the same reachability as $\strategy$.

	If we instead start with a strategy in the collapsed MDP $\strategy^c \in \StrategiesMD<\MDP^c>$, we construct the respective strategy $\strategy$ on $\MDP$ as follows.
	Again, on states $s \notin R_\algoecs$, we simply replicate the choice of $\strategy^c$.
	On states $s_{(R_i, B_i)}$ the strategy $\strategy^c$ chooses a single action $a^c \in \stateactions^c(s_{(R_i, B_i)})$, since it is deterministic.
	If that action is $\ecremain_i$, $\strategy$ simply picks any internal $a \in B_i$ in each state $R_i$.
	Otherwise, there exists a strategy $\strategy'$ on $R_i$ reaching state $\actionstate<\MDP>(a)$ with probability $1$.
	Thus, $\strategy$ mimics $\strategy'$ until that state is reached and then plays $a^c$, again achieving the same reachability. 
\end{proof}

\subsection{The General BRTDP Algorithm}

Now, we present our modification of \cref{alg:brtdp_no_ec}, using the idea of collapsing, to obtain the general approach as shown in \cref{alg:brtdp}.
On top of the previously presented ideas, the algorithm maintains a growing set of ECs and repeatedly collapses the input MDP.

 \begin{algorithm}[t]
 	\caption{The BRTDP learning algorithm for general MDPs.}
 	\label{alg:brtdp}
 	
 	\DontPrintSemicolon
 	\setcounter{AlgoLine}{0}  
 	
 	\KwIn{MDP $\MDP$, state $\initialstate$, target set $\targetset$, precision $\varepsilon$, initial bounds $\upperbound_1$ and $\lowerbound_1$, and initial set of ECs $\Ecs_1$.}
 	\KwOut{$\varepsilon$-optimal values $(l, u)$, i.e., $\val(\initialstate) \in [l, u]$ and $0 \leq u - l < \varepsilon$.}
 	
 	$\algoepisode \gets 1$, $\MDP^c_1 \gets \collapse(\MDP, \Ecs_1, \initialstate, \targetset)$ \label{alg:brtdp:line:initial_collapse} \;
 	$\upperbound_1(\targetstate, a_+) \gets 1$, $\lowerbound_1(\targetstate, a_+) \gets 1$, $\upperbound_1(\sinkstate, a_-) \gets 0$, $\lowerbound_1(\sinkstate, a_-) = 0$ \;
 	
 	\While{$\upperbound_\algoepisode(\initialstate) - \lowerbound_\algoepisode(\initialstate) \geq \varepsilon$}{
 		\ForAll( \tcp*[f]{Initialize bounds of representative states}){$(R_j, B_j) \in \Ecs_{\algoepisode}$}{ \label{alg:brtdp:line:collapse_bounds_update_start}
 			\ForAll(\tcp*[f]{Copy bounds for existing actions}){$a \in \stateactions(s_{(R_j,B_j)}) \setminus \{\ecremain_j\}$}{ 
 				$\upperbound_{\algoepisode}(s_{(R_j,B_j)}, a) \gets \upperbound_{\algoepisode}(\actionstate<\MDP>(a), a)$ \label{alg:brtdp:line:collapse_bounds_update_upper} \;
 				$\lowerbound_{\algoepisode}(s_{(R_j,B_j)}, a) \gets \lowerbound_{\algoepisode}(\actionstate<\MDP>(a), a)$ \label{alg:brtdp:line:collapse_bounds_update_lower} \;
 			}
 			\uIf(\tcp*[f]{Set bounds for remain action}){$R_j \intersection \targetset = \emptyset$}{ 
 				$\upperbound_{\algoepisode}(s_{(R_j,B_j)}, \ecremain_j) \gets 0$, $\lowerbound_{\algoepisode}(s_{(R_j,B_j)}, \ecremain_j) \gets 0$ \label{alg:brtdp:line:collapse_bounds_update_sink} \;
 			}
 			\Else{
 				$\upperbound_{\algoepisode}(s_{(R_j,B_j)}, \ecremain_j) \gets 1$, $\lowerbound_{\algoepisode}(s_{(R_j,B_j)}, \ecremain_j) \gets 1$ \label{alg:brtdp:line:collapse_bounds_update_target} \;
 			}
 		} \label{alg:brtdp:line:collapse_bounds_update_end}
 		
 		$\upperbound_{\algoepisode + 1} \gets \upperbound_\algoepisode$, $\lowerbound_{\algoepisode + 1} \gets \lowerbound_\algoepisode$ \;
 		$\finitepath \gets \samplepath(\MDP^c_\algoepisode, \initialstate, \upperbound_\algoepisode, \lowerbound_\algoepisode, \varepsilon)$ \tcp*{Sample a path in collapsed MDP}  		
 		\ForAll(\tcp*[f]{Update the upper and lower bounds}){$(s,a) \in \finitepath$}{ 
 			$\upperbound_{\algoepisode + 1}(s, a) \gets \ExpectedSumMDP{\mdptransitions}{s}{a}{\upperbound_\algoepisode}$ \label{alg:brtdp:line:update_u} \;
 			$\lowerbound_{\algoepisode + 1}(s, a) \gets \ExpectedSumMDP{\mdptransitions}{s}{a}{\lowerbound_\algoepisode}$ \label{alg:brtdp:line:update_l} \;
 		}
 		$\Ecs_{\algoepisode + 1} \gets \updateecs(\MDP, \Ecs_\algoepisode)$ \tcp*{Search for new ECs} 
 		$\MDP^c_{\algoepisode + 1} \gets \collapse(\MDP, \Ecs_{\algoepisode + 1}, \initialstate, \targetset)$ \tcp*{Update the collapsed MDP} \label{alg:brtdp:line:collapse} 		
 		$\algoepisode \gets \algoepisode + 1$ \;
 	}
 	
 	\Return $(\lowerbound_\algoepisode(\initialstate), \upperbound_\algoepisode(\initialstate))$ \;
 \end{algorithm}
 
The new auxiliary procedure $\updateecs$ is supposed to identify ECs in $\MDP$.
As with $\samplepath$, we only require some properties instead of giving a concrete implementation.
Essentially, $\updateecs$ should only grow its list of ECs and eventually identify all ECs which are repeatedly visited by $\samplepath$.

\begin{assumption} \label{asm:updateecs_growing}
	Let $\algoecs_1 \subseteq \Ecs(\MDP)$ be an initial set of state-disjoint ECs, $\algoecs_{\algoepisode+1} = \updateecs(\MDP, \algoecs_\algoepisode)$ the identified ECs, and $\MDP^c_\algoepisode = \collapse(\MDP, \algoecs_\algoepisode, \initialstate, \targetset)$ the corresponding collapsed MDPs.
	Then, for any episode $\algoepisode$ and EC $(R, B) \in \algoecs_\algoepisode$, $(R, B)$ is an EC of $\MDP$ and there exists $(R', B') \in \algoecs_{\algoepisode+1}$ with $R \subseteq R'$ and $B \subseteq B'$.
\end{assumption}
This is, for example, easily satisfied by searching for ECs in the set of visited states in every step.
However, an efficient implementation may want to choose the times when it actually searches heuristically.

Since there are only finitely many states, this assumption implies that eventually $\algoecs_\algoepisode$ and thus $\MDP^c_\algoepisode$ stabilizes, i.e.\ there exists some episode $\overline{\algoepisode}$ such that for all $\algoepisode \geq \overline{\algoepisode}$ we have that $\algoecs_\algoepisode = \algoecs_{\algoepisode+1}$ and thus $\MDP^c_\algoepisode = \MDP^c_{\algoepisode+1}$.
We call $\overline{\algoepisode}$ the \emph{EC-stable episode}.

\begin{assumption} \label{asm:updateecs_all_visited}
	Let $\algoecs_\algoepisode$ and $\MDP^c_\algoepisode$ as in \cref{asm:updateecs_growing} and assume that assumption holds.
	Further, let $\finitepath_\algoepisode \in \Finitepaths<\MDP^c_\algoepisode>$ be an infinite series of sets of state-action pairs in $\MDP^c_\algoepisode$ and define $\States^c_\infty = \Intersection_{k=1}^\infty \Union_{\algoepisode=k}^\infty \{s \in \States^c_\algoepisode \mid s \in \finitepath^c_\algoepisode\}$ the set of states occurring infinitely often.\footnote{As mentioned above, due to \cref{asm:updateecs_growing} we get a EC-stable episode $\overline{\algoepisode}$ and thus have $\States^c_\infty \subseteq \States_{\overline{\algoepisode}}^c$, i.e.\ the set of infinitely often seen states are all states of $\MDP_{\overline{\algoepisode}}^c$.}
	Then, there exists no EC $(R^c, B^c)$ in $\MDP^c_{\overline{\algoepisode}}$ with $R^c \subseteq \States_\infty^c$ except $R^c = \{\targetstate\}$ or $R^c = \{\sinkstate\}$.
\end{assumption}

\subsection{Proof of Correctness}

We now continue to prove correctness and termination of \cref{alg:brtdp}.
First, we argue that the algorithm indeed is well-defined, i.e.\ it never accesses undefined values.

\begin{lemma}
	\cref{alg:brtdp} is well-defined.
\end{lemma}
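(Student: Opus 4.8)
The plan is to walk through every line of \cref{alg:brtdp} and check that each value it reads has already been assigned, and each operation is applied to a legal argument. The only non-trivial issue is the interplay between the collapsing operations (lines \ref{alg:brtdp:line:initial_collapse}, \ref{alg:brtdp:line:collapse}) — which change the state and action sets of the working MDP $\MDP^c_\algoepisode$ — and the bound functions $\upperbound_\algoepisode, \lowerbound_\algoepisode$, which must be defined on the current state-action pairs whenever they are used in the Bellman updates (lines \ref{alg:brtdp:line:update_u}, \ref{alg:brtdp:line:update_l}) or read by \samplepath{}. So the heart of the argument is an invariant: \emph{at the start of episode $\algoepisode$, the functions $\upperbound_\algoepisode$ and $\lowerbound_\algoepisode$ are defined on all state-action pairs of $\MDP^c_\algoepisode$}, i.e.\ on $\States^c_\algoepisode \times \stateactions^c_\algoepisode$.

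First I would establish the base case: after line~\ref{alg:brtdp:line:initial_collapse} the collapsed MDP $\MDP^c_1 = \collapse(\MDP, \Ecs_1, \initialstate, \targetset)$ is well-defined because $\Ecs_1$ is, by the input specification and \cref{asm:updateecs_growing}, a set of state-disjoint ECs of $\MDP$, which is exactly the precondition of the $\collapse$ definition. The actions of $\MDP^c_1$ are $\Actions \setminus B_{\Ecs_1} \union \{\ecremain_j\} \union \{a_+, a_-\}$; the input bounds $\upperbound_1, \lowerbound_1$ are given on $\States \times \stateactions$, hence on all surviving actions $\Actions \setminus B_{\Ecs_1}$; line~2 explicitly sets the bounds for $a_+$ and $a_-$; and the loop on lines \ref{alg:brtdp:line:collapse_bounds_update_start}--\ref{alg:brtdp:line:collapse_bounds_update_end} sets, for every representative state $s_{(R_j,B_j)}$, the bounds for its inherited actions (copied from $\actionstate<\MDP>(a)$, which exists because that action was present in $\MDP$) and for its $\ecremain_j$ action (lines \ref{alg:brtdp:line:collapse_bounds_update_sink}/\ref{alg:brtdp:line:collapse_bounds_update_target}). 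Thus after the first pass through lines \ref{alg:brtdp:line:collapse_bounds_update_start}--\ref{alg:brtdp:line:collapse_bounds_update_end} the invariant holds for $\algoepisode = 1$. Note the loop condition on line~3 reads $\upperbound_\algoepisode(\initialstate), \lowerbound_\algoepisode(\initialstate)$, which by the convention "$\initialstate \in \States^c$ always" (identifying $\initialstate$ with its representative if collapsed) and the definition of state-bounds as $\max_a$ over available actions, is well-defined once the invariant holds.

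For the inductive step I would argue: assuming the invariant at episode $\algoepisode$, the copy on the "$\upperbound_{\algoepisode+1} \gets \upperbound_\algoepisode$" line keeps the bounds defined on $\States^c_\algoepisode \times \stateactions^c_\algoepisode$; the call $\samplepath(\MDP^c_\algoepisode, \dots)$ only touches pairs of $\MDP^c_\algoepisode$ and, by the generic precondition of \samplepath{} (it requires consistent bound functions on the given MDP, \cref{asm:brtdp_no_ec:fair}), is legal; the Bellman updates on lines \ref{alg:brtdp:line:update_u}--\ref{alg:brtdp:line:update_l} evaluate $\ExpectedSumMDP{\mdptransitions}{s}{a}{\upperbound_\algoepisode}$ for $(s,a) \in \finitepath \subseteq \States^c_\algoepisode \times \stateactions^c_\algoepisode$ — here I would note that the algorithm writes $\mdptransitions$ for what is really $\mdptransitions^c_\algoepisode$ (the transition function of the collapsed MDP), whose support lies in $\States^c_\algoepisode$, so the weighted sum ranges only over states on which $\upperbound_\algoepisode$ is defined. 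Then $\updateecs(\MDP, \Ecs_\algoepisode)$ returns, by \cref{asm:updateecs_growing}, a set of state-disjoint ECs $\Ecs_{\algoepisode+1} \supseteq \Ecs_\algoepisode$ of $\MDP$, so line~\ref{alg:brtdp:line:collapse} legally forms $\MDP^c_{\algoepisode+1}$; and the next iteration's pass through lines \ref{alg:brtdp:line:collapse_bounds_update_start}--\ref{alg:brtdp:line:collapse_bounds_update_end} re-establishes the invariant for the (possibly enlarged) representative states of $\MDP^c_{\algoepisode+1}$ — the inherited-action bounds are copied from $\actionstate<\MDP>(a)$, which is well-defined for every $a \in \stateactions^c_{\algoepisode+1}(s_{(R_j,B_j)}) \setminus \{\ecremain_j\} \subseteq \Actions$, and here one must observe that such an action $a$, even if it was an internal action of a smaller EC at episode $\algoepisode$, had its bound maintained in $\upperbound_\algoepisode$ on the original state $\actionstate<\MDP>(a)$ (the algorithm never deletes bound entries, only adds/copies them), so line~\ref{alg:brtdp:line:collapse_bounds_update_upper} reads a defined value. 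Finally the return on the last line reads $\upperbound_\algoepisode(\initialstate), \lowerbound_\algoepisode(\initialstate)$, well-defined as above. The main obstacle, and the point deserving the most care, is exactly this bookkeeping of bound entries across successive collapses — showing that whenever a representative state absorbs a previously-internal action, the needed bound value was not lost; I would handle this by making explicit that the $\upperbound$/$\lowerbound$ functions are only ever extended (their domain grows monotonically over episodes) and that the value stored for a pair $(\actionstate<\MDP>(a), a)$ in the original-state coordinates is always available, since collapsing introduces new representative entries without removing the underlying ones.
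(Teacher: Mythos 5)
Your proposal is correct and follows essentially the same route as the paper's own proof: the only genuine obligation is that the states and actions introduced by collapsing (the special states with $a_+,a_-$, and each representative $s_{(R_j,B_j)}$ with its inherited and $\ecremain_j$ actions) receive bound values before they are read, which is guaranteed by the initialization line and the block at \cref{alg:brtdp:line:collapse_bounds_update_start}--\cref{alg:brtdp:line:collapse_bounds_update_end}. Your version is merely more explicit, packaging this as an inductive invariant over episodes and noting that the copy line \ref{alg:brtdp:line:collapse_bounds_update_upper} always reads the never-deleted original-state entry $(\actionstate<\MDP>(a),a)$ — a detail the paper leaves implicit but which does not change the argument.
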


\begin{proof}
	We only need to show that the states introduced by the collapsing in \cref{alg:brtdp:line:initial_collapse,alg:brtdp:line:collapse} are assigned bounds before being accessed.
	By definition of the collapsed MDP, we add a state for each EC together with an additional action, and the special states $\{\targetstate, \sinkstate\}$.
	The initial collapse in \cref{alg:brtdp:line:initial_collapse} adds the special states together with their corresponding actions.
	Their values are initialised in the following line.
	Furthermore, the EC collapsing in \cref{alg:brtdp:line:initial_collapse,alg:brtdp:line:collapse} adds a state $s_{(R,B)}$ for any EC $(R,B) \in \Ecs_\algoepisode$ and a corresponding $\ecremain$ action.
	Their values are initialised in \cref{alg:brtdp:line:collapse_bounds_update_start,alg:brtdp:line:collapse_bounds_update_end} and not accessed prior to that. 
\end{proof}
As in \cref{asm:brtdp_no_ec:input_correct}, we again assume that the initial inputs are correct.
\begin{assumption} \label{asm:brtdp_input_correct}
	The given initial bounds $\upperbound_1$ and $\lowerbound_1$ are correct, i.e.\ $\lowerbound_1(s, a) \leq \val(s, a) \leq \upperbound_1(s, a)$ for all $s \in \States, a \in \stateactions(s)$.
	Furthermore, the given initial set of ECs is correct, i.e.\ $\algoecs_1 \subseteq \Ecs(\MDP)$ and pairwise disjoint.
\end{assumption}
\begin{lemma} \label{stm:brtdp:bounds_correct}
	Assume that \cref{asm:brtdp_input_correct} holds.
	Then, during any execution of \cref{alg:brtdp} we have for every episode $\algoepisode$, all states $s \in \States_\algoepisode$ and action $a \in \stateactions^c_\algoepisode(s)$ that
	\begin{equation*}
		\lowerbound_\algoepisode(s, a) \leq \lowerbound_{\algoepisode + 1}(s, a) \leq \val(s, a) \leq \upperbound_{\algoepisode + 1}(s, a) \leq \upperbound_\algoepisode(s, a).
	\end{equation*}
\end{lemma}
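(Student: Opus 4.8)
The plan is to prove the claim by induction on the episode index $\algoepisode$, following the pattern of the proof of \cref{stm:brtdp:no_ec:bounds_correct} but adding the two ingredients specific to \cref{alg:brtdp}: the representative states introduced by collapsing, and the fact that the underlying collapsed MDP $\MDP^c_\algoepisode$ itself changes from one episode to the next.

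The first step I would take is to make ``$\val$'' a well-defined quantity across the whole run. By \cref{stm:ec_same_value} all states of an EC share the same reachability value, and by \cref{stm:collapse:reachability_equal} this common value is preserved by $\collapse$; hence for every episode $\algoepisode$ and every state $s^c$ of $\MDP^c_\algoepisode$ one has $\val^{\MDP^c_\algoepisode}(s^c) = \val^{\MDP}(s)$ for each $s \in \ecstates(s^c)$, independently of which ECs have been collapsed. Expanding the definition of $\mdptransitions^c_\algoepisode$ and applying this equality state-wise, the same identity holds for state--action pairs: for an action $a$ inherited into a representative $s_{(R_j,B_j)}$ from a constituent state, $\val^{\MDP^c_\algoepisode}(s_{(R_j,B_j)}, a) = \val^{\MDP}(\actionstate<\MDP>(a), a)$, while for the fresh remain action one gets $\val^{\MDP^c_\algoepisode}(s_{(R_j,B_j)}, \ecremain_j) = 1$ if $R_j \intersection \targetset \neq \emptyset$ and $0$ otherwise, since $\ecremain_j$ deterministically leads to $\targetstate$ resp.\ $\sinkstate$. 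This is exactly what the initialisation block \cref{alg:brtdp:line:collapse_bounds_update_start,alg:brtdp:line:collapse_bounds_update_sink,alg:brtdp:line:collapse_bounds_update_target,alg:brtdp:line:collapse_bounds_update_end} assigns, so that block keeps the representative-state bounds sandwiching (and, for $\ecremain_j$, equal to) the true value whenever the constituent pairs already do. From now on I write simply $\val$ for this common value, as the lemma statement does.

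The base case $\algoepisode = 1$ then follows: \cref{asm:brtdp_input_correct} gives soundness on the original pairs, \cref{alg:brtdp} sets $\upperbound_1(\targetstate) = \lowerbound_1(\targetstate) = 1$ and $\upperbound_1(\sinkstate) = \lowerbound_1(\sinkstate) = 0$, and the auxiliary observation above handles the representatives created by the initial collapse in \cref{alg:brtdp:line:initial_collapse}. For the inductive step I would walk through one while-iteration and check each inequality in the chain is preserved: (i)~the assignments $\upperbound_{\algoepisode+1}\gets\upperbound_\algoepisode$, $\lowerbound_{\algoepisode+1}\gets\lowerbound_\algoepisode$ are trivial; (ii)~the Bellman updates in \cref{alg:brtdp:line:update_u,alg:brtdp:line:update_l} are performed in $\MDP^c_\algoepisode$ on pairs of $\MDP^c_\algoepisode$ and, exactly as in \cref{stm:brtdp:no_ec:bounds_correct}, preserve $\lowerbound_\algoepisode(s,a) \le \lowerbound_{\algoepisode+1}(s,a) \le \val(s,a) \le \upperbound_{\algoepisode+1}(s,a) \le \upperbound_\algoepisode(s,a)$, because the fixed-point equation \eqref{eq:value_fixpoint} applied in $\MDP^c_\algoepisode$ (legitimate by the well-definedness of $\val$ just established) shows a Bellman backup moves a pointwise lower bound of $\val$ upward and a pointwise upper bound downward; (iii)~the call to \updateecs\ followed by $\collapse$ in \cref{alg:brtdp:line:collapse} produces $\MDP^c_{\algoepisode+1}$, which by \cref{asm:updateecs_growing} only refines the EC structure; every surviving state--action pair keeps its bounds, and since $\collapse$ does not change the value (again \cref{stm:collapse:reachability_equal}), those bounds still sandwich $\val$, while the genuinely new pairs, attached to freshly created representatives, are assigned their bounds in the next iteration's initialisation block, which is sound by the auxiliary observation applied with the (inductively correct) constituent-pair bounds. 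Combining the per-iteration invariants yields the claim for all $\algoepisode$.

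The main obstacle, and the reason the argument is not a verbatim copy of \cref{stm:brtdp:no_ec:bounds_correct}, is controlling the re-initialisation of representative-state bounds against the \emph{monotonicity} half of the statement: whenever an EC stays (or becomes) collapsed, \cref{alg:brtdp} re-derives the representative's action bounds from the constituent states' bounds, and one must verify this never raises an upper bound or lowers a lower bound that was already present. This becomes clean once one adopts the convention, justified by the remark that actions are uniquely owned by states, that bounds are indexed by action rather than by state--action pair, so that the inherited-action copies in \cref{alg:brtdp:line:collapse_bounds_update_start}--\cref{alg:brtdp:line:collapse_bounds_update_end} are identities and only the $\ecremain_j$ bounds are genuinely assigned; I would state this convention explicitly, and note that for a pair first appearing in $\MDP^c_{\algoepisode+1}$ the inequality chain is read as the one-sided soundness $\lowerbound_{\algoepisode+1}(s,a) \le \val(s,a) \le \upperbound_{\algoepisode+1}(s,a)$, since there is no preceding $\upperbound_\algoepisode(s,a)$ to compare against. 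A secondary point requiring care is that the updates in \cref{alg:brtdp:line:update_u,alg:brtdp:line:update_l} must be understood relative to $\mdptransitions^c_\algoepisode$ rather than $\mdptransitions$ (the algorithm's notation elides the superscript), which is precisely what licenses invoking \eqref{eq:value_fixpoint} for the current collapsed MDP.
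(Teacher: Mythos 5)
Your proposal is correct and follows essentially the same route as the paper's proof: establish that the bounds assigned to the newly introduced states ($\targetstate$, $\sinkstate$, and the representatives together with their $\ecremain$ actions) are sound — which rests on \cref{stm:collapse:reachability_equal} (and \cref{stm:ec_same_value}) for value preservation under collapsing — and then observe that the Bellman updates preserve the inequality chain exactly as in \cref{stm:brtdp:no_ec:bounds_correct}. Your additional care about monotonicity under re-initialisation of representative bounds and the action-indexed reading of the copies is a useful elaboration of details the paper leaves implicit, but it is not a different argument.
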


\begin{proof}
	We prove that the initialization of values for newly added states is correct.
	The remaining proof then is completely analogous to the proof of \cref{stm:brtdp:no_ec:bounds_correct}.

	Since $\targetstate$ is the target in $\MDP^c$, setting $\lowerbound_1(\targetstate, a_+) = 1$ is correct.
	Analogously, we see that $\sinkstate$ has no outgoing action and thus cannot reach $\targetstate$, justifying $\upperbound_1(\sinkstate, a_-) = 0$.

	The correctness of updates for the collapsed states follows from \cref{stm:collapse:reachability_equal}. 
\end{proof}

\begin{lemma} \label{stm:brtdp:correct}
	The result of \cref{alg:brtdp} is correct under \cref{asm:brtdp_input_correct}, i.e.\ (i)~$0 \leq u - l < \varepsilon$, and (ii)~$\val(\initialstate) \in [l, u]$.
\end{lemma}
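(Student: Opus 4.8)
The plan is to mirror the proof of \cref{stm:brtdp:no_ec:correct} almost verbatim, now invoking \cref{stm:brtdp:bounds_correct} (the general-case analogue of \cref{stm:brtdp:no_ec:bounds_correct}) in place of the no-EC bound lemma, and additionally using \cref{stm:collapse:reachability_equal} to transfer the bounds maintained on the collapsed MDP back to the original one. Recall that throughout the run $\initialstate$ is identified with its representative state in the current collapsed MDP whenever it lies inside a collapsed EC, so the returned pair $(l,u) = (\lowerbound_\algoepisode(\initialstate), \upperbound_\algoepisode(\initialstate))$ always refers to a genuine state of $\MDP^c_\algoepisode$.

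For part~(i), note that \cref{alg:brtdp} leaves its while-loop only once the guard $\upperbound_\algoepisode(\initialstate) - \lowerbound_\algoepisode(\initialstate) \geq \varepsilon$ fails, so at termination $u - l = \upperbound_\algoepisode(\initialstate) - \lowerbound_\algoepisode(\initialstate) < \varepsilon$. For the lower inequality $u \geq l$, apply \cref{stm:brtdp:bounds_correct}: for each action $a$ available at $\initialstate$ in $\MDP^c_\algoepisode$ we have $\lowerbound_\algoepisode(\initialstate,a) \leq \upperbound_\algoepisode(\initialstate,a)$, and taking the maximum over $a$ on both sides, together with the definitions $\upperbound(s) = \max_a \upperbound(s,a)$ and $\lowerbound(s) = \max_a \lowerbound(s,a)$, gives $\lowerbound_\algoepisode(\initialstate) \leq \upperbound_\algoepisode(\initialstate)$.

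For part~(ii), \cref{stm:brtdp:bounds_correct} gives $\lowerbound_\algoepisode(s,a) \leq \val(s,a) \leq \upperbound_\algoepisode(s,a)$ for every state-action pair of $\MDP^c_\algoepisode$, where for a representative state $s_{(R_i,B_i)}$ the value is read in $\MDP^c_\algoepisode$, which by \cref{stm:collapse:reachability_equal} coincides with $\val$ of every original state in $R_i$ (in particular with $\val(\initialstate)$ when $\initialstate \in R_i$, by our identification convention). Maximising over the actions available at $\initialstate$ and using $\val(s) = \max_a \val(s,a)$ with the state-bound definitions yields $\lowerbound_\algoepisode(\initialstate) \leq \val(\initialstate) \leq \upperbound_\algoepisode(\initialstate)$, i.e.\ $\val(\initialstate) \in [l,u]$.

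The only point needing marginally more attention than in the no-EC case is that the collapsed MDP is rebuilt between episodes in \cref{alg:brtdp:line:collapse}, so one must check the bounds remain consistent across such a re-collapse; but this is precisely what the initialisation argument inside the proof of \cref{stm:brtdp:bounds_correct} handles (copying bounds for surviving actions in \cref{alg:brtdp:line:collapse_bounds_update_upper,alg:brtdp:line:collapse_bounds_update_lower}, and setting each $\ecremain$-action bound to $0$ or $1$ according to whether the collapsed EC meets $\targetset$). I therefore expect no real obstacle here: the substantive work sits in \cref{stm:collapse:reachability_equal} and \cref{stm:brtdp:bounds_correct}, and the present lemma follows from them exactly as \cref{stm:brtdp:no_ec:correct} followed from its two predecessors. \qed
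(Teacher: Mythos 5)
Your proposal is correct and matches the paper's argument, which likewise derives both claims directly from the loop guard and \cref{stm:brtdp:bounds_correct} by analogy with \cref{stm:brtdp:no_ec:correct}. The extra care you take about representative states and \cref{stm:collapse:reachability_equal} is already absorbed into the paper's proof of \cref{stm:brtdp:bounds_correct}, so your elaboration is sound but not a different route.
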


\begin{proof}
	As in \cref{stm:brtdp:no_ec:correct}, the claims follows from the algorithm and \cref{stm:brtdp:bounds_correct}. 
\end{proof}

Finally, we can prove termination of our presented algorithm.
The proof is very similar to the proof of \cref{stm:brtdp:no_ec:termination} and we only need to incorporate the new assumptions about $\updateecs$.

\begin{lemma} \label{stm:brtdp:terminates}
	\cref{alg:brtdp} terminates under \cref{asm:brtdp_no_ec:fair,asm:updateecs_growing,asm:updateecs_all_visited,asm:brtdp_input_correct}.
	It terminates almost surely if \cref{asm:brtdp_no_ec:fair} is satisfied almost surely.
\end{lemma}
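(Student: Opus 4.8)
The plan is to reduce to the proof of \cref{stm:brtdp:no_ec:termination}, with the role previously played by \cref{asm:mec_free} now taken over by \cref{asm:updateecs_all_visited}. As there, I would prove the almost-sure case by contradiction (the guaranteed case being identical up to dropping the ``a.s.'' qualifiers). So assume \cref{asm:updateecs_growing,asm:brtdp_input_correct} hold and \cref{asm:brtdp_no_ec:fair,asm:updateecs_all_visited} hold almost surely, and suppose for contradiction that the set of non-terminating executions of \cref{alg:brtdp} has non-zero measure. Since every call to \samplepath{} terminates in finite time (\cref{asm:brtdp_no_ec:fair}) and every call to \updateecs{} trivially does, the only way to not terminate is for the main while-loop to run infinitely often, i.e.\ for the bounds never to converge in $\initialstate$.

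First I would use \cref{asm:updateecs_growing} together with the finiteness of $\States$ to conclude that the sequence $\algoecs_\algoepisode$ of end-component sets, and hence the collapsed MDPs $\MDP^c_\algoepisode$, stabilizes after the EC-stable episode $\overline{\algoepisode}$. From episode $\overline{\algoepisode}$ on, the algorithm operates on one fixed MDP $\MDP^c \coloneqq \MDP^c_{\overline{\algoepisode}}$, and on this suffix it behaves exactly like \cref{alg:brtdp_no_ec} run on $\MDP^c$: the loop body reduces to the $\samplepath$/Bellman-update block on \cref{alg:brtdp:line:update_u,alg:brtdp:line:update_l}, while the representative-bound reinitialization lines only re-copy already-correct, monotone values (covered by \cref{stm:brtdp:bounds_correct}) and the $\updateecs$/$\collapse$ calls are now idempotent. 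The single difference from the no-EC setting is that $\MDP^c$ need not be globally MEC-free (cf.\ \cref{example:ec_collapse}); however, \cref{asm:updateecs_all_visited} guarantees that the set $\States^c_\infty$ of states of $\MDP^c$ sampled infinitely often contains no EC of $\MDP^c$ other than $\{\targetstate\}$ and $\{\sinkstate\}$, and this ``MEC-freeness on the visited part'' is all that the termination argument ever uses.

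With this in place I would replay the core of \cref{stm:brtdp:no_ec:termination} verbatim. By \cref{stm:brtdp:bounds_correct} the sequences $\upperbound_\algoepisode(s,a)$ and $\lowerbound_\algoepisode(s,a)$ are monotone and bounded, so $\upperbound_\infty$, $\lowerbound_\infty$, $\bounddifference(s,a)\coloneqq\upperbound_\infty(s,a)-\lowerbound_\infty(s,a)$ and $\bounddifference(s)\coloneqq\limsup_\algoepisode \bounddifference_\algoepisode(s,a_\algoepisode^{\max}(s))$ all exist and are finite. Exactly as before, one extracts a witness action $a_{\bounddifference}(s)\in\umaxactions_\infty(s)$ with $\bounddifference(s)=\upperbound_\infty(s,a_{\bounddifference}(s))-\lowerbound_\infty(s,a_{\bounddifference}(s))$; and since $\initialstate\in\States^c_\infty$ and, for $s\in\States^c_\infty$ and $a\in\umaxactions_\infty(s)$, the pair $(s,a)$ and all of $\support(\mdptransitions^c(s,a))$ are sampled infinitely often (\cref{asm:brtdp_no_ec:fair} applied to the suffix of the run on the fixed $\MDP^c$), the back-propagation in \cref{alg:brtdp:line:update_u,alg:brtdp:line:update_l} forces the fixed point $\bounddifference(s)=\ExpectedSumMDP{\mdptransitions^c}{s}{a_{\bounddifference}(s)}{\bounddifference}$ for every $s\in\States^c_\infty$. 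Now set $\bounddifference_{\max}=\max_{s\in\States^c_\infty}\bounddifference(s)$ and $\States_\bounddifference=\{s\in\States^c_\infty\mid\bounddifference(s)=\bounddifference_{\max}\}$; if $\bounddifference_{\max}>0$ then $\targetstate,\sinkstate\notin\States_\bounddifference$ (both have difference $0$ by \cref{stm:brtdp:bounds_correct} and \cref{asm:brtdp_input_correct}), so $\States_\bounddifference$ is an EC-free subset of $\States^c_\infty$ by \cref{asm:updateecs_all_visited}; hence some $s\in\States_\bounddifference$ has, for every available action, a successor outside $\States_\bounddifference$, and the same strict-inequality chain as in \cref{stm:brtdp:no_ec:termination} (splitting off one such strictly-smaller successor of $a_{\bounddifference}(s)$) yields $\bounddifference(s)<\bounddifference_{\max}$, a contradiction. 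Thus $\bounddifference_{\max}=0$, so $\bounddifference(\initialstate)=0$, whence $\upperbound_\algoepisode(\initialstate)-\lowerbound_\algoepisode(\initialstate)<\varepsilon$ for some $\algoepisode$ a.s., contradicting non-termination.

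The main obstacle I anticipate is not the limit computation itself—which is essentially copied from \cref{stm:brtdp:no_ec:termination}—but the bookkeeping around stabilization that makes the reduction rigorous: one must check that the finitely many pre-$\overline{\algoepisode}$ episodes do not affect the limits; that the per-episode re-copying of representative-state bounds in \cref{alg:brtdp:line:collapse_bounds_update_start,alg:brtdp:line:collapse_bounds_update_end} preserves monotonicity and correctness (which is precisely \cref{stm:brtdp:bounds_correct}); and, crucially, that the suffix of the execution on the now-fixed $\MDP^c$ satisfies the \emph{preconditions} of \cref{asm:brtdp_no_ec:fair}, so that its conclusions translate into ``$\initialstate\in\States^c_\infty$'' and ``$\support(\mdptransitions^c(s,a))\subseteq\States^c_\infty$ for $s\in\States^c_\infty$, $a\in\umaxactions_\infty(s)$'', which are exactly the facts driving the fixed-point derivation. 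Once this interface between the general and the no-EC setting is set up cleanly, the contradiction goes through as above, and the guaranteed (non-almost-sure) case follows by the identical argument without the probabilistic qualifiers.
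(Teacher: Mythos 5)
Your proposal is correct and follows essentially the same route as the paper's proof: both reduce to the argument of \cref{stm:brtdp:no_ec:termination}, use \cref{asm:updateecs_growing} to obtain an EC-stable episode after which the collapsed MDP is fixed, and then substitute \cref{asm:updateecs_all_visited} for \cref{asm:mec_free} at the single point where the latter is used, namely to show that the set $\States_\bounddifference \subseteq \States_\infty^c$ of maximal-difference, infinitely-visited states contains no EC besides $\{\targetstate\}$ and $\{\sinkstate\}$. The paper's version is merely terser, delegating the bookkeeping you spell out (stabilization, the harmlessness of the representative-bound re-copying, and the transfer of the fairness preconditions to the suffix on the fixed $\MDP^c$) to the reader.
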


\begin{proof}
	We apply the same reasoning as in \cref{stm:brtdp:no_ec:termination} until \cref{asm:mec_free} is applied in the final part of the proof.
	Since we do not necessarily explore all of $\MDP$, $\MDP^c_\algoepisode$ may still contain MECs.
	In the proof, \cref{asm:mec_free} is used only to show that $\States_\bounddifference \subseteq \States_\infty$ does not contain MECs.
	Observe that any non-terminating execution eventually reaches an EC-stable episode $\overline{\algoepisode}$, thus the collapsed MDP considered by the algorithm does not change.
	Now, $\States_\infty$ in the previous proof exactly corresponds to $\States_\infty^c$ of \cref{asm:updateecs_all_visited}, which yields that again there is no EC in $\States_\infty^c$.
	Thus, we can continue to apply the previous proof's reasoning. 
\end{proof}
Again, we get the overall soundness as a direct consequence.
\begin{theorem}
	Assume that (almost surely) \cref{asm:brtdp_no_ec:fair}, as well as \cref{asm:updateecs_growing,asm:updateecs_all_visited,asm:brtdp_input_correct} hold.
	Then \cref{alg:brtdp} is correct and converges (almost surely).
\end{theorem}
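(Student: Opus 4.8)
The plan is to derive the theorem as an immediate corollary of the two preceding lemmas, exactly mirroring the way the no-EC theorem followed from \cref{stm:brtdp:no_ec:correct} and \cref{stm:brtdp:no_ec:termination}. The correctness of the returned pair $(l, u)$ --- that is, $0 \le u - l < \varepsilon$ and $\val(\initialstate) \in [l, u]$ --- is supplied directly by \cref{stm:brtdp:correct}, which relies only on \cref{asm:brtdp_input_correct}. Termination, and almost-sure termination when \cref{asm:brtdp_no_ec:fair} holds only almost surely, is supplied by \cref{stm:brtdp:terminates}, which uses all four hypotheses. So at the level of bookkeeping the proof is essentially one line; the substance has been packaged into \cref{stm:brtdp:terminates}, and the remaining conceptual content is to check that that lemma genuinely closes the gap left open by the presence of ECs.

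To keep that reduction transparent I would recall the shape of the termination argument. As in \cref{stm:brtdp:no_ec:termination}, it suffices to show that the set of executions in which the while-loop runs forever is null, since every call to \samplepath{} and \updateecs{} finishes in finite time. On such a non-terminating execution, \cref{asm:updateecs_growing} makes the sequence $\algoecs_\algoepisode$ monotone, and finiteness of $\States$ forces it to stabilize at some EC-stable episode $\overline{\algoepisode}$, after which the collapsed MDP $\MDP^c_{\overline{\algoepisode}}$ is fixed. One then defines the limiting gap $\bounddifference_\infty(s, a) = \lim_{\algoepisode \to \infty}(\upperbound_\algoepisode(s, a) - \lowerbound_\algoepisode(s, a))$ on $\MDP^c_{\overline{\algoepisode}}$ --- well defined by \cref{stm:brtdp:bounds_correct} --- and, using \cref{asm:brtdp_no_ec:fair} applied to the stabilized collapsed MDP, derives the fixed-point identity $\bounddifference_\infty(s) = \ExpectedSumMDP{\mdptransitions}{s}{a_{\bounddifference}(s)}{\bounddifference_\infty}$ on the set $\States^c_\infty$ of infinitely-often-visited states, together with $\bounddifference_\infty(\targetstate) = \bounddifference_\infty(\sinkstate) = 0$.

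The only genuinely new point --- and the step I expect to take the most care to state precisely --- is the replacement of \cref{asm:mec_free} by \cref{asm:updateecs_all_visited}. In the no-EC proof one uses MEC-freeness of $\MDP$ to conclude that the maximal-gap set $\States_\bounddifference$ contains no end component, hence contains a state from which some successor escapes $\States_\bounddifference$, which drives $\bounddifference_{\max}$ to $0$. Here $\MDP^c_{\overline{\algoepisode}}$ may still retain MECs, but \cref{asm:updateecs_all_visited} says exactly that the infinitely-often-visited portion $\States^c_\infty$ contains no EC other than $\{\targetstate\}$ or $\{\sinkstate\}$; since $\States_\bounddifference \subseteq \States^c_\infty$ and, by monotonicity of the bounds together with the initialization of $\targetstate$ and $\sinkstate$, neither special state lies in $\States_\bounddifference$ once $\bounddifference_{\max} > 0$, the escaping-successor argument goes through verbatim and yields $\bounddifference(\initialstate) = 0$, contradicting non-termination. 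Combining this with \cref{stm:brtdp:correct} gives the theorem; the non-almost-sure (``guaranteed'') case is obtained by dropping ``a.s.'' throughout.
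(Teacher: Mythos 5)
Your proposal is correct and matches the paper exactly: the theorem is stated there as an immediate consequence of \cref{stm:brtdp:correct} (correctness) and \cref{stm:brtdp:terminates} (termination), and your recap of how \cref{asm:updateecs_all_visited} substitutes for \cref{asm:mec_free} on the stabilized collapsed MDP $\MDP^c_{\overline{\algoepisode}}$ is precisely the content of the paper's proof of \cref{stm:brtdp:terminates}.
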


\subsection{Relation to Interval Iteration}

We briefly outline how our BRTDP algorithm presented in \cref{alg:brtdp} generalizes both the original BRTDP algorithm of \cite{DBLP:conf/atva/BrazdilCCFKKPU14} and the interval iteration algorithm of \cite{DBLP:conf/rp/HaddadM14}.
To this end, we give a brief overview of interval iteration.
The algorithm first identifies all MECs and constructs a quotient similar to the one we presented in \cref{sec:brtdp:collapsing}.
Then, each state is initialised with straightforward upper and lower bounds.
These bounds then are iterated globally according to the Bellman operator.
We can emulate this behaviour by directly yielding the set of all MECs in $\updateecs$ and returning $\States^c \times \stateactions^c$ on each call to $\samplepath$.
All variants of \cite{DBLP:conf/atva/BrazdilCCFKKPU14} can be obtained by choosing the appropriate path sampling heuristics for $\samplepath$.
 	\section{Limited Information -- MDP without End Components} \label{sec:dql_no_ec}

We adapt our approach to the setting of limited information, where we can access the system only as a \enquote{black box} and we are given some bounds on the shape of the system (see \cref{sec:preliminaries:learning}).
Intuitively, since we are interested in an $\varepsilon$-precise solution, we can repeatedly sample the system to learn the transition probabilities with high confidence.
By adapting our previous ideas, we can enhance this approach to only learn \enquote{interesting} transitions.
Since we can never bound the transition probabilities with absolute certainty, we aim for a \emph{probably approximately correct} algorithm, which gives an $\varepsilon$-optimal solution with probability at least $1 - \delta$.

\paragraph{Relevance and Applicability}
Before we go into the details, we discuss the purpose and motivation for the subsequent algorithms.
As mentioned in the introduction, our primary aim is to provide a \emph{possibility result}, showing that it is possible to obtain PAC bounds on the \emph{maximal} value on \emph{infinite horizon} reachability values in a \emph{black box} setting, only using samples of \emph{finite length} and only starting in the \emph{initial state}, and all this for \emph{general MDP} (with ECs) \emph{even in a model-free setting} (see below for a brief comment on model-free).
Due to this focus, the bounds that the presented approaches obtain are rather impractical and of mostly theoretical value.
This can be alleviated in several ways.
For one, tighter statistical methods could be used, see \cite{DBLP:journals/corr/abs-2404-05424} for a recent discussion (we use the na\"ive Hoeffding's inequality to simplify proofs).
Additionally, our approach is generic in the sense that it assumes the worst of the system.
Specific knowledge about the model, e.g.\ (in-)dependence of states, could be incorporated to significantly improve practical scalability.
Yet, these points are orthogonal to our aim of proving the possibility of (model-free) PAC, for which we provide a complete proof in the following.
Moreover, an additional aim of this work is to provide a re-usable framework for proofs in this direction.
We believe that several statements in the proofs below might be useful for other endeavours of this kind, especially the auxiliary statements in \cref{sec:appendix:auxiliary}.

\begin{remark} \label{rem:model_free}
	Intuitively, the idea of \enquote{model-free} is that such approaches do not try to learn the concrete transition probabilities or the entire graph structure, but more \enquote{compressed} quantities such as state- or action-values.
	Indeed, our algorithm only stores a fixed number of values per state-action pair, not for each transition.
	In most literature, model-free is only loosely defined, as it is difficult to formalize precisely \cite{DBLP:conf/icml/StrehlLWLL06}.
	In \cite[Definition~1]{DBLP:conf/icml/StrehlLWLL06}, the authors try to capture model-free by requiring that the space complexity of an approach should be $o(|\States|^2 |\Actions|)$ (in other words, less than the explicit graph representation of the MDP).
	At the same time, the space complexity naturally also depends on parameters such as $\varepsilon$ and $\delta$ (e.g.\ suppose that $\varepsilon$ were of exponential size w.r.t.\ the entire system).
	As such, we are interested in the above complexity for \emph{fixed} parameters.
	The estimates our algorithm obtains are based on repeated updates to action values.
	Later, in \cref{stm:dql_no_ec:successful_update_count}, we show that (for fixed parameters) the number of executed updates is bounded by $\cardinality{\Actions}$, and thus one can prove that the updates only involves numbers that are of size $\cardinality{\Actions}$.
	In any case, proving that our approach formally satisfies (one of the many) definition of model-free is not our main goal, but rather observing that it captures the \enquote{spirit} of model-free by not learning probabilities but rather values directly.

	For a model-based approach to this problem, we direct the reader to \cite{DBLP:conf/cav/AshokKW19,DBLP:journals/corr/abs-2404-05424}.
	These approaches essentially obtain bounds on every single transition probability in the system and then solve the induced \emph{interval MDP} to obtain bounds on the value.
\end{remark}

\subsection{Definition of Limited Information}
We define the limited information setting.
\begin{definition} \label{def:limited_information}
	Let $\MDP = (\States, \Actions, \stateactions, \mdptransitions)$ be some MDP, $\initialstate \in \States$ a starting state, and $T \subseteq \States$ a target set.
	An algorithm has \emph{limited information} if it can access
	\begin{itemize}
		\item the starting state $\initialstate$,
		\item a target oracle for $\targetset$, i.e.\ given a state $s$ it can query whether $s \in T$,
		\item an upper bound $A$ of the number of actions, $A \geq \cardinality{\Actions}$,
		\item a lower bound $q$ on the transition probabilities under any uniform strategy, $0 < q \leq p_{\min} = \min\{ \cardinality{\stateactions(s)}^{-1} \cdot \mdptransitions(s, a, s') \mid s \in \States, a \in \stateactions(s), s' \in \support(\mdptransitions(s, a)) \}$,
		\item an oracle for the set of available actions $\stateactions$, and
		\item a successor oracle $\successor$, which given a state-action pair yields a successor state, sampled according to the underlying, hidden probability distribution $\mdptransitions$.
	\end{itemize}
\end{definition}

To tackle this problem, we combine the BRTDP approach with \emph{delayed Q-learning} (DQL) \cite{DBLP:conf/icml/StrehlLWLL06}.
In essence, DQL temporarily accumulates sampled values for each state-action pair and only attempts an update after a certain delay, i.e.\ after enough samples have been gathered for a particular pair.
Intuitively, with a large enough delay, the average of the sampled values is close to the true average with high confidence.
Moreover, the attempted update is only successful if the value is changed by at least some margin.
If instead the update fails, another update is only allowed if any other value in the system has changed significantly.
This way, we can bound the total number of attempted updates and thus control the overall probability of any \enquote{wrong} update occurring.
We explain all these ideas in more detail later on.

\subsection{The No-EC DQL Algorithm}

First, we again restrict ourselves to the case of no end components, as these pose an additional difficulty.
Thus, we assume the MDP $\MDP$ satisfies \cref{asm:mec_free} and instead of a target state oracle, the algorithm is explicitly given the special states $\targetstate$ and $\sinkstate$.
We present our DQL-based approach in \cref{alg:dql_no_ec}.
While it is similar in spirit to \cref{alg:brtdp_no_ec}, we give a concrete instantiation of $\samplepath$, since this setting needs a lot of additional guarantees.

\begin{algorithm}[!tp]
  		\caption{The DQL learning algorithm for MDPs without ECs.}\label{alg:dql_no_ec}%
        \setcounter{AlgoLine}{0}%
  		\KwIn{Inputs as given in \cref{def:limited_information} satisfying \cref{asm:mec_free}, special states $\targetstate, \sinkstate$, precision $\varepsilon$, and confidence $\delta$.}%
  		\KwOut{Values $(l, u)$ which are $\varepsilon$-optimal, i.e., $\val(\initialstate) \in [l, u]$ and $0 \leq u - l < \varepsilon$, with probability at least $1 - \delta$.}%
  		$\upperbound_1(\cdot, \cdot) \gets 1$, $\lowerbound_1(\cdot, \cdot) \gets 0$, $\upperbound_1(\sinkstate, \cdot) \gets 0$, $\lowerbound_1(\targetstate, \cdot) \gets 1$\;
  		\For{$\circ \in \{\upperbound, \lowerbound\}$}{
  			$\learn_1^\circ(\cdot, \cdot) \gets \learnyes$, $\acc_1^\circ(\cdot, \cdot) \gets 0$, $\visitcount_1^\circ(\cdot, \cdot) \gets 0$\;
  		}
  		$\algoepisode \gets 1$, $\algostep \gets 1$\;
  		
  		\While{$\upperbound_\algostep(\initialstate) - \lowerbound_\algostep(\initialstate) \geq \varepsilon$}{
  			\lFor{$s \in \States$}{$\umaxactions_\algoepisode(s) \gets \argmax_{a \in \stateactions(s)} \upperbound_\algostep(s, a)$}
  			$s_\algostep \gets \initialstate$
  			
  			\While(\tcp*[f]{Experience the current learning episode}){$s_\algostep \notin \{\targetstate, \sinkstate\}$}{
  				$a_\algostep \gets $ sampled uniformly from $\umaxactions_\algoepisode(s_\algostep)$\tcp*{Pick an action}  
  				$s_\algostep' \gets \successor(s_{\algostep}, a_\algostep)$  \tcp*{Query successor oracle}
  			
  				\tcc{Update bound estimates}
  				\For{$\circ \in \{\upperbound, \lowerbound\}$}{
  					\If{$\learn_\algostep^\circ(s_\algostep, a_\algostep) \neq \learnno$}{
  						$\visitcount_{\algostep + 1}^\circ(s_\algostep, a_\algostep) \gets \visitcount_\algostep^\circ(s_\algostep, a_\algostep) + 1$\;
  						$\acc_{\algostep + 1}^\circ(s_\algostep, a_\algostep) \gets  \acc_\algostep^\circ(s_\algostep, a_\algostep) + \bigcirc_\algostep(s_\algostep')$\;
  					}
  				}
  				
  				\tcc{Learn upper bounds}
  				\If(\tcp*[f]{Attempt update of $\upperbound$}){$\visitcount_{\algostep + 1}^\upperbound(s_\algostep, a_\algostep) = \delay$}{
  					\If{$\acc_{\algostep + 1}^\upperbound(s_\algostep, a_\algostep) / \delay < \upperbound_\algostep(s_\algostep, a_\algostep) - 2 \updatestep$}{
  						$\upperbound_{\algostep + 1}(s_\algostep, a_\algostep) \gets  \acc_{\algostep + 1}^\upperbound(s_\algostep, a_\algostep) / \delay + \updatestep$\label{alg:dql_no_ec:line:update_upperbound_value} \tcp*{Successful update}
  						$\learn_{\algostep + 1}^\upperbound(\cdot, \cdot) \gets \learnyes$\label{alg:dql_no_ec:line:reset_learn_upper}  \tcp*{Re-enable learning for all actions}
  					}
  					\Else{
  						$\learn_{\algostep + 1}^\upperbound(s_\algostep, a_\algostep) \gets \learndecrease(\learn_{\algostep}^\upperbound(s_\algostep, a_\algostep))$\tcp*{Failed update}
  					}
  					$\visitcount_{\algostep + 1}^\upperbound(s_\algostep, a_\algostep) \gets 0$, $\acc_{\algostep + 1}^\upperbound(s_\algostep, a_\algostep) \gets 0$\;
  				}
  				
  				\tcc{Learn lower bounds}
  				\If(\tcp*[f]{Attempt update of $\lowerbound$}){$\visitcount_{\algostep + 1}^\lowerbound(s_\algostep, a_\algostep) = \delay$}{
  					\If{$\acc^\lowerbound_{\algostep + 1}(s_\algostep, a_\algostep)  / \delay > \lowerbound_\algostep(s_\algostep, a_\algostep) + 2 \updatestep$}{
  						$\lowerbound_{\algostep + 1}(s_\algostep, a_\algostep) \gets \acc^\lowerbound_{\algostep + 1}(s_\algostep, a_\algostep) / \delay - \updatestep$ \tcp*{Successful update}
  						$\learn_{\algostep + 1}^\lowerbound(\cdot, \cdot) \gets \learnyes$ \label{alg:dql_no_ec:line:reset_learn_lower} \tcp*{Re-enable learning for all actions}
  					}
  					\Else{
  						$\learn_{\algostep + 1}^\lowerbound(s_\algostep, a_\algostep) \gets \learndecrease(\learn_{\algostep}^\lowerbound(s_\algostep, a_\algostep))$\tcp*{Failed update}
  					}
  					$\visitcount_{\algostep + 1}^\lowerbound(s_\algostep, a_\algostep) \gets 0$, $\acc_{\algostep + 1}^\lowerbound(s_\algostep, a_\algostep) \gets 0$\;
  				}
  				
  				$s_{\algostep + 1} \gets s_\algostep'$, $\algostep \gets \algostep + 1$\tcp*{Increase step counter}
  			}
  			$\algoepisode \gets \algoepisode + 1$\tcp*{Increase episode counter}
  		}
  		\Return $(\lowerbound_\algostep(\initialstate), \upperbound_\algostep(\initialstate))$\;
  	\end{algorithm}

The algorithm contains several auxiliary variables.
Most are values kept for each state-action pair, and separate for both the upper and lower bound.
We give a brief intuition for each variable, where $\circ \in \{\upperbound, \lowerbound\}$ and $(s, a)$ is a state-action pair in $\MDP$:
\begin{itemize}
	\item
	$\algostep$: The number of steps the algorithm took so far, increased by 1 after each iteration of the main loop, as already mentioned in the preliminaries.

	\item
	$s_\algostep, a_\algostep, s'_\algostep$:
	The state, action, and the sampled successor state in step $\algostep$, respectively.
	\item
	$\upperbound_\algostep(s, a)$ and $\lowerbound_\algostep(s, a)$:
	The (estimated) upper and lower bounds for the state-action pair $(s, a)$ at step $\algostep$.
	Note that in contrast to the previous algorithm, the upper and lower bounds are updated at each step instead of each episode.
	\item
	$\learn^\circ_\algostep(s, a)$:
	A three-valued flag ($\learnyes$, $\learnonce$, or $\learnno$) indicating whether the algorithm currently tries to learn and update the $\circ$-bounds for $(s, a)$.
	The meaning of $\learnonce$ is explained in the following.
	We additionally use the $\learndecrease$ function for convenience, which is defined by $\learnyes \mapsto \learnonce$, $\learnonce \mapsto \learnno$, and $\learnno \mapsto \learnno$.
	\item
	$\visitcount^\circ_\algostep(s, a)$:
	The number of times a value for $(s, a)$ was experienced.
	When $\visitcount^\circ_\algostep(s, a)$ is large enough, we can attempt an update with sufficient confidence.
	\item
	$\acc^\circ_\algostep(s, a)$:
	The accumulated sampled values of the last $\visitcount_\algostep^\circ(s, a)$ visits to $(s, a)$.
	We want $\acc^\circ_\algostep(s, a) / \visitcount_\algostep^\circ(s, a)$ to approximate the true $\circ$-bound.
\end{itemize}
Moreover, the algorithm contains the two constants $\updatestep$ and $\delay$.
We define their value (and the value of another constant, used for readability) as follows.
\begin{figure}[t]
  \centering
     \includegraphics[scale=1.4]{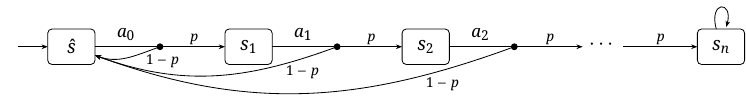}


	\caption{Example MDP to explain the choices and interpretations of some constants.} \label{fig:dql_constants_example}
\end{figure}
\begin{equation*}
	\updatestep = \frac{\varepsilon}{2} \cdot \frac{p_{\min}^{\cardinality{\States}}}{3 \cardinality{\States}} \qquad
	\updatecount = 2 \cardinality{\Actions} \left(1 + \frac{\cardinality{\Actions}}{\updatestep} \right) \qquad
	\delay = \left\lceil \frac{1}{2\updatestep^2} \ln \left(\frac{8}{\delta} \updatecount \right) \right\rceil
\end{equation*}
We call $\updatestep$ the \emph{update step} (the smallest update increment considered significant by the algorithm), $\updatecount$ the \emph{update count} (the maximal possible number of update attempts, mainly introduced for readability), and $\delay$ the \emph{update delay} (the number of samples we want to obtain for a state-action pair before we attempt an update).
These three constants are used throughout this and the following section.
Note that bounds on these constants can be obtained from \cref{def:limited_information} (recalling that $\cardinality{\Actions}$ is an upper bound on $\cardinality{\States}$).
Within the proofs, an even smaller value for $\updatestep$ or an even larger value for $\delay$ are also sufficient.
We define the constants with \enquote{tight} values to aid readability and intuition.

These constants are closely related to the worst-case \emph{mixing rate} (see e.g.\ \cite[Chapter~5]{levin2017markov} for a detailed discussion) of the MDP, which intuitively indicates how fast information \enquote{propagates} through the system.
For Markov chains, this is given by the difference between first and second eigenvalue of the transition matrix, which is also called \emph{spectral gap}.
This gap can be (quite conservatively) bounded by $p_{\min}^{\cardinality{\States}}$.
This also gives a bound on the convergence rate of the \emph{power iteration}, which in the context of Markov chains and MDP is closely related to value iteration.
(See, for example, \cite[Theorem~8.5.2]{DBLP:books/wi/Puterman94}, noting that $p_{\min}^{\cardinality{\States}}$ is a lower bound for $\eta$ with $J = \cardinality{\States}$.)

The concept of information propagation (and the tightness of the $p_{\min}^{\cardinality{\States}}$ bound) is illustrated in \cref{fig:dql_constants_example}.
In order to propagate any information about state $s_n$ to the initial state $\initialstate$, we need $\cardinality{\States}$ steps.
Moreover, after this many steps only a fraction $p_{\min}^{\cardinality{\States}}$ of the information is propagated, so, intuitively, to \enquote{observe} a difference of $\varepsilon$, we need to perform $\approx \cardinality{\States}p_{\min}^{-\cardinality{\States}} / \varepsilon$ steps.
Thus, we need to visit a state-action pair often enough, i.e.\ $\delay$ times, before an update to ensure that relevant information has propagated already with high confidence.
Dually, if a state-action pair was visited often enough and new information does not differ from the previous information by more than $\updatestep$, there likely is no new information to be propagated and we may assume that the values of this state-action pair have converged.

Inside the main loop, the algorithm repeats two steps to obtain a path.
First, an action maximizing the upper bounds (at the beginning of the episode) is randomly picked.
More precisely, we again consider the set $\umaxactions_\algoepisode(s) \coloneqq \argmax_{a \in \stateactions(s)} \upperbound_{\algostep_\algoepisode}(s, a)$ and uniformly select an action thereof.
To obtain the successor, we query the successor oracle with the given action to obtain the successor $s'$.
In other words, in episode $\algoepisode$ the algorithm samples a path in the MDP using a memoryless strategy randomizing uniformly over $\umaxactions_\algoepisode(s)$ in each state.
We call this strategy the \emph{sampling strategy} $\strategy_\algoepisode(s, a) \coloneqq \cardinality{\umaxactions_\algoepisode(s)}^{-1}$ if $a \in \umaxactions_\algoepisode(s)$ and $0$ otherwise.
We will later on introduce the upper bound maximizing strategy $\strategy_\algostep$, which selects among $\upperbound$-optimal actions at the current step $\algostep$.
Note that if the algorithm follows this strategy $\strategy_\algostep$ while sampling, the samples would not be obtained from a memoryless strategy in general, since an update might happen while sampling and thus change the strategy.
One might be tempted to solve this issue by first sampling a path until $\targetstate$ or $\sinkstate$ is reached and then propagating the values.
However this path might be of exponential size w.r.t.\ the number of states; this already occurs for the structurally simple example in \cref{fig:dql_constants_example}.

After sampling a tuple $(s, a, s')$, the algorithm learns from this \enquote{experience}.
It does so by learning upper and lower bounds separately, depending on the respective $\learn$ flags, which are explained later.
In case one of the bounds should be learned ($\learn^\circ_\algostep(s, a) \neq \learnno$), the accumulator is updated with the newly observed values, i.e.\ the respective bound of the successor $s'$.
Furthermore, if the algorithm has gathered enough information, i.e.\ this pair has been experienced $\delay$ times, an update of $(s, a)$'s estimate is attempted (if the respective $\learn$ is $\learnyes$ or $\learnonce$).
By choosing $\delay$ large enough, the information we gathered about the bounds of $(s, a)$ very likely is a faithful approximation of the true expected value over its successors.
If the newly learned estimate, i.e.\ the average over the last $\delay$ experiences of $(s, a)$, significantly differs from the current estimate stored in $\upperbound$ or $\lowerbound$, the current estimates are updated conservatively.
If instead this new estimate is close to the current estimate, the algorithm marks this state-action pair as (potentially) converged by \enquote{decreasing} its $\learn$ flag, as specified by $\learndecrease$.

The learned bounds of a pair depend on the bounds of other state-action pairs.
In particular, whenever any bound is changed, we may need to re-learn the values for all other state-action pairs.
This is taken care of by globally resetting the learn flags to $\learnyes$ in \cref{alg:dql_no_ec:line:reset_learn_upper,alg:dql_no_ec:line:reset_learn_lower}.
We highlight that this is one of the main differences to \cite{DBLP:conf/cav/AshokKW19}, where samples are instead used to learn bounds on the transition probabilities while the actual values are propagated according to these estimates, trading memory for speed of convergence.

The need for the intermediate value $\learnonce$ of $\learn$ arises from the asynchronicity of the updates.
Suppose an update of some pair $(s, a)$ succeeds and we reset all learn values to $\learnyes$.
However, for some other state-action pair $(s', a')$ we are very close to an update, too.
Then, the values which will be used for an attempted update of $(s', a')$ were mostly learned before the update of $(s, a)$.
Now, if for example $s$ is a successor of $(s', a')$, the values of $(s', a')$ may be influenced significantly by the update of $(s, a)$.
Hence, we need to learn the value of $(s', a')$ $\learnonce$ more in order to be on the safe side.
A different solution approach would be to simply reset all $\visitcount$ and $\acc$ values after every successful update, however this would be much less efficient:
If we again consider the above example, it might be the case that the values we gathered for $(s', a')$ before the update of $(s, a)$ already are sufficient for a successful update, discarding them would slow down convergence drastically.

In the algorithms of \cite{DBLP:conf/icml/StrehlLWLL06,DBLP:conf/atva/BrazdilCCFKKPU14}, this problem instead is taken care of by remembering the last globally successful update.
There, $\learn(s, a)$ is only set to $\learnno$ if the previous attempted update of $(s, a)$ happened after the last successful update.
This similarly implies that all values which are considered in the current update attempt are \enquote{up to date}.
We decided for this alternative approach since we have to track less variables.

\subsection{Proof of Correctness}

We now prove that \cref{alg:dql_no_ec} is probably approximately correct.
We first prove correctness of the result by showing that the computed bounds are faithful upper and lower bounds in \cref{stm:dql_no_ec:bounds_ordered}.
However, we cannot guarantee that this is always the case due to statistical outliers.
Thus we first obtain bounds on the probability of these outliers.
Then, in order to prove termination with high probability, we argue that by our choice of constants the propagation of values is probably correct.
This means that whenever we update the bounds of a state-action pair $(s, a)$, the updated value is close to the true average under $\mdptransitions(s, a)$.
Finally, we show that with high probability an update will occur as long as the bounds are not $\varepsilon$-close.

\begin{lemma} \label{stm:dql_no_ec:successful_update_count}
	The number of successful updates of $\upperbound$ and $\lowerbound$ is bounded by $\frac{\cardinality{\Actions}}{\updatestep}$ each.
\end{lemma}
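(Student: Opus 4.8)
The plan is to show that each successful update of $\upperbound$ strictly lowers a fixed per-pair quantity by at least $\updatestep$, while that quantity is pinned inside $[0,1]$, so that the total number of such updates cannot exceed $\cardinality{\Actions}/\updatestep$; the argument for $\lowerbound$ will be symmetric. The whole argument is deterministic: it holds on every execution, regardless of whether the sampled averages are accurate.

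First I would record three elementary facts about the evolution of $\upperbound_\algostep(s,a)$ for a fixed state-action pair $(s,a)$. (i) The value $\upperbound_\algostep(s,a)$ is modified only in \cref{alg:dql_no_ec:line:update_upperbound_value}, i.e.\ by a \emph{successful} $\upperbound$-update at exactly $(s,a)$, and each such modification strictly decreases it: the new value is $\acc^\upperbound_{\algostep+1}(s,a)/\delay + \updatestep$, while the guard enabling the update forces $\acc^\upperbound_{\algostep+1}(s,a)/\delay < \upperbound_\algostep(s,a) - 2\updatestep$, hence $\upperbound_{\algostep+1}(s,a) < \upperbound_\algostep(s,a) - \updatestep$. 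Consequently $\algostep \mapsto \upperbound_\algostep(s,a)$ is non-increasing and each successful update at $(s,a)$ costs it at least $\updatestep$. (ii) $\upperbound_\algostep(s,a) \in [0,1]$ for every step $\algostep$. For (ii) I would run an induction on $\algostep$ over the invariant that all values $\upperbound_\algostep(\cdot,\cdot)$ and $\lowerbound_\algostep(\cdot,\cdot)$ lie in $[0,1]$: the base case holds since the initialisation only sets these to $0$ or $1$; in the inductive step, $\acc^\upperbound_{\algostep+1}(s_\algostep,a_\algostep)$ is a sum of $\delay$ recorded quantities of the form $\upperbound_{\algostep'}(s'_{\algostep'}) = \max_{a}\upperbound_{\algostep'}(\cdot,a)$ at earlier steps $\algostep'$, each in $[0,1]$ by the invariant, so its average lies in $[0,1]$; a successful update then yields $\upperbound_{\algostep+1}(s_\algostep,a_\algostep) = \acc^\upperbound_{\algostep+1}/\delay + \updatestep \ge \updatestep > 0$, and the guard gives $\acc^\upperbound_{\algostep+1}/\delay < 1 - 2\updatestep$, so the new value is also $< 1$; the lower-bound case is symmetric, using \cref{alg:dql_no_ec:line:update_lowerbound_value}.

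Then I would conclude by a counting argument. Fix an execution and, for each state-action pair $(s,a)$, let $N_{s,a}$ be the total number of successful $\upperbound$-updates performed at $(s,a)$. By fact (i), after all of them the value at $(s,a)$ has dropped by at least $N_{s,a}\,\updatestep$ below its initial value $\upperbound_1(s,a) \le 1$, and by fact (ii) it is still $\ge 0$; hence $N_{s,a}\,\updatestep \le \upperbound_1(s,a) \le 1$. Summing over all state-action pairs -- of which there are exactly $\cardinality{\Actions}$, since by our convention each action is associated with a unique state -- gives $\updatestep \sum_{(s,a)} N_{s,a} \le \sum_{(s,a)} \upperbound_1(s,a) \le \cardinality{\Actions}$, so the total number of successful $\upperbound$-updates is at most $\cardinality{\Actions}/\updatestep$. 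The bound for $\lowerbound$ follows identically: a successful $\lowerbound$-update at $(s,a)$ raises $\lowerbound_\algostep(s,a)$ by at least $\updatestep$ (the new value $\acc^\lowerbound_{\algostep+1}/\delay - \updatestep$ exceeds $\lowerbound_\algostep(s,a) + \updatestep$ by the guard), $\lowerbound_\algostep(s,a)$ is non-decreasing and stays $\le 1$ by the same invariant, so again at most $\cardinality{\Actions}/\updatestep$ updates overall.

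I do not expect a genuine obstacle here; the only points that need care are the invariant $\upperbound_\algostep,\lowerbound_\algostep \in [0,1]$ -- which is exactly what guarantees the averages stored in $\acc$ never push a bound outside $[0,1]$, making the telescoping bound valid -- and the bookkeeping observation that the number of state-action pairs equals $\cardinality{\Actions}$. Implicitly one also uses $\updatestep > 0$, which is immediate from its definition.
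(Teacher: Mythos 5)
Your proposal is correct and follows essentially the same route as the paper's proof: each successful update moves the respective bound by at least $\updatestep$ in one direction while the bound stays within $[0,1]$, giving at most $1/\updatestep$ updates per state-action pair and hence $\cardinality{\Actions}/\updatestep$ in total for each of $\upperbound$ and $\lowerbound$ separately. Your explicit induction establishing the $[0,1]$ invariant is slightly more detailed than the paper's terse remark that $\acc^\upperbound \geq 0$ and $\acc^\lowerbound \leq \delay$, but the substance is identical.
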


\begin{proof}
	Let $a \in \Actions$ be some action and $s = \actionstate<\MDP>(a)$ the associated state.
	The upper bound of $(s, a)$ is initialised to $1$ or $0$, similar for the lower bound.
	Whenever $\upperbound_\algostep(s, a)$ is updated in \cref{alg:dql_no_ec:line:update_upperbound_value}, its value is decreased by at least $\updatestep$:
	We have that $\acc_\algostep^\upperbound(s, a) / m < \upperbound_\algostep(s, a) - 2 \updatestep$, hence $\acc_\algostep^\upperbound(s, a) / m + \updatestep < \upperbound_\algostep(s, a) - \updatestep$.
	Thus, $\upperbound_{\algostep+1}(s, a) < \upperbound_\algostep(s, a) - \updatestep$.
	Analogously, $\lowerbound_\algostep(s, a)$ is always increased by at least $\updatestep$ whenever updated.

	Moreover, $\acc_\algostep^\upperbound(s, a) \geq 0$ and $\acc_\algostep^\lowerbound(s, a) \leq \delay$ by initialization and update of these values, hence we never set $\upperbound_\algostep(s, a)$ to a negative value and $\lowerbound_\algostep(s, a)$ is always smaller or equal to $1$.
	Consequently, we change the value of $\upperbound_\algostep(s, a)$ and $\lowerbound_\algostep(s, a)$ at most $\frac{1}{\updatestep}$ times and there are at most $\frac{\cardinality{\Actions}}{\updatestep}$ successful updates to the upper and lower bounds, respectively.
	Note that we do not necessarily have $\upperbound_\algostep(s, a) \leq \lowerbound_\algostep(s, a)$ for all executions of the algorithm, hence there are at most $\frac{\cardinality{\Actions}}{\updatestep}$ updates for each of the bounds individually. 
\end{proof}
Observe that this implies that for every execution, eventually there will be no more successful updates of $\upperbound$ and the sampling strategy $\strategy_\algoepisode$ does not change.
This fact will be used in some of the subsequent proofs.
Moreover, we can use the above result to show that similarly, the number of \emph{attempted} updates is bounded.
\begin{lemma} \label{stm:dql_no_ec:attempted_update_count}
	The number of attempted updates of the upper bounds $\upperbound$ and lower bounds $\lowerbound$ is bounded by $\updatecount = 2 \cardinality{\Actions} (1 + \frac{\cardinality{\Actions}}{\updatestep})$, respectively.
\end{lemma}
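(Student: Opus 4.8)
The plan is to bound the number of attempted updates via a potential-function argument over the \texttt{learn} flags, feeding on the already-established bound on the number of \emph{successful} updates from \cref{stm:dql_no_ec:successful_update_count}. I would treat the upper bound $\upperbound$ in detail; the argument for $\lowerbound$ is then symmetric, since the two families of \texttt{learn} flags evolve independently (a successful $\upperbound$-update resets only the $\learn^\upperbound$ flags, and a successful $\lowerbound$-update only the $\learn^\lowerbound$ flags).

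First I would assign to each flag value a weight $\phi(\learnyes) = 2$, $\phi(\learnonce) = 1$, $\phi(\learnno) = 0$, and define the potential $\Phi_\algostep \coloneqq \sum_{a \in \Actions} \phi\bigl(\learn^\upperbound_\algostep(\actionstate<\MDP>(a), a)\bigr)$, so that $\Phi_1 = 2\cardinality{\Actions}$ and $\Phi_\algostep \geq 0$ throughout. The key observations to record are: (i) the $\learn^\upperbound$ flags — hence $\Phi$ — change only during an attempted $\upperbound$-update; (ii) an attempted $\upperbound$-update of $(s,a)$ can occur only when $\learn^\upperbound_\algostep(s,a) \neq \learnno$, because $\visitcount^\upperbound$ is reset to $0$ in the very step it reaches $\delay$ and is incremented towards that threshold (with $\delay \geq 1$) only when $\learn^\upperbound \neq \learnno$; (iii) a \emph{failed} attempt applies $\learndecrease$ to the single flag of $(s,a)$, which by (ii) decreases $\Phi$ by exactly $1$; (iv) a \emph{successful} attempt resets all $\learn^\upperbound$ flags to $\learnyes$, so $\Phi$ jumps to $2\cardinality{\Actions}$, and since $\Phi_\algostep \geq \phi(\learn^\upperbound_\algostep(s,a)) \geq 1$ right before the reset, this is a net increase of at most $2\cardinality{\Actions} - 1$.

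Letting $F$ and $C$ denote the total number of failed and successful attempted $\upperbound$-updates over the whole run, telescoping the per-step changes of $\Phi$ gives $0 \leq \Phi_{\mathrm{final}} \leq 2\cardinality{\Actions} - F + C\,(2\cardinality{\Actions} - 1)$, i.e.\ $F + C \leq 2\cardinality{\Actions}(1 + C)$. By \cref{stm:dql_no_ec:successful_update_count} we have $C \leq \cardinality{\Actions}/\updatestep$, whence the number of attempted $\upperbound$-updates is $F + C \leq 2\cardinality{\Actions}\bigl(1 + \cardinality{\Actions}/\updatestep\bigr) = \updatecount$; the identical bound for $\lowerbound$ follows by symmetry.

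I expect the main obstacle to be the careful bookkeeping behind observation (ii): one must argue from the algorithm's control flow that $\visitcount^\upperbound$ is always strictly below $\delay$ at the start of a step (it is zeroed in the same step it hits $\delay$), so reaching the threshold necessarily passes through an increment, which is gated by $\learn^\upperbound \neq \learnno$; and dually one must check that the $\learn^\upperbound$ and $\learn^\lowerbound$ updates do not interfere, so the two counts can legitimately be bounded separately with the same potential. Everything else is a routine telescoping argument.
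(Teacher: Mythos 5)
Your proof is correct and yields exactly the claimed bound, but it packages the counting differently from the paper. The paper argues \emph{per state-action pair}: since a failed attempt applies $\learndecrease$ and the flag is only reset to $\learnyes$ by a successful update of \emph{some} pair, each pair can suffer at most two attempted $\upperbound$-updates between consecutive successful $\upperbound$-updates (plus two more after the last one), giving $2 + 2\cardinality{\Actions}/\updatestep$ attempts per pair and then multiplying by $\cardinality{\Actions}$. You instead run a global amortized argument with the potential $\Phi_\algostep = \sum_a \phi(\learn^\upperbound_\algostep(\cdot, a))$, charging each failed attempt one unit and each successful attempt at most $2\cardinality{\Actions}-1$ units, and telescoping against $\Phi \geq 0$. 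The two arguments exploit the same mechanism (the $\learnyes \to \learnonce \to \learnno$ descent, the global reset on success, and \cref{stm:dql_no_ec:successful_update_count}), and your observation (ii) — that an attempt can only fire when the flag is not $\learnno$, because the $\delay$-th increment of $\visitcount^\upperbound$ is gated by that flag in the same step — is exactly the precondition the paper states (it writes the attempt condition as $\learn^\upperbound_\algostep(s,a) \neq \learnno$). What your version buys is a tighter bookkeeping: the inequality $F + C \leq 2\cardinality{\Actions}(1+C)$ falls out of a single telescoping identity rather than the paper's slightly informal ``in the worst case two attempts after every successful update'' worst-case enumeration, and it makes explicit why the additive $2\cardinality{\Actions}$ term (the initial potential) and the multiplicative $2\cardinality{\Actions}$ per success appear. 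The separation of the $\upperbound$- and $\lowerbound$-flag families that you flag as a proof obligation is indeed satisfied by the algorithm (\cref{alg:dql_no_ec:line:reset_learn_upper,alg:dql_no_ec:line:reset_learn_lower} each reset only their own family), which is also what the paper implicitly uses when it bounds the two counts separately.
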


\begin{proof}
	Let $(s, a) \in \States \times \stateactions$ be a state-action pair.
	Suppose an update of $\upperbound_\algostep(s, a)$ is attempted at step $\algostep$, i.e.\ $a_\algostep = a$, $\visitcount_\algostep(s, a) = \delay - 1$, and $\learn_\algostep^\upperbound(s, a) \neq \learnno$.
	Then, either the update is successful or $\learn^\upperbound_{\algostep + 1}(s, a)$ is updated with $\learndecrease$.
	The learn flag is only set to $\learnyes$ again if some other upper bound is successfully updated.
	Analogous reasoning applies to updates of the lower bounds.

	By \cref{stm:dql_no_ec:successful_update_count}, there are at most $\frac{\cardinality{\Actions}}{\updatestep}$ successful updates to either bounds in total.
	If an update of a particular state-action pair is attempted, it either succeeds or fails.
	In the latter case, at most one more update of this state-action pair will be attempted until an other update succeeds.
	Hence, for a particular state-action pair $(s, a)$ we have in the worst case two attempted $\upperbound$-updates after every successful $\upperbound$-update (of \emph{any} pair).
	Together, there are at most $2 + 2 \frac{\cardinality{\Actions}}{\updatestep}$ (two more attempts can occur after the last successful update).
	Since there are $\cardinality{\Actions}$ state-action pairs in total, the statement follows. 
\end{proof}
\begin{assumption} \label{asm:dql_no_ec:sampled_value_close_to_real_value}
	Suppose an $\upperbound$-update of the state-action pair $(s, a)$ is attempted at step $\algostep$.
	Let $k_1 < k_2 < \ldots < k_\delay = \algostep$ be the steps of the $\delay$ most recent visits to $(s, a)$.
	Then $\frac{1}{\delay} \sum_{i=1}^\delay \val(s_{k_i}') \geq \val(s, a) - \updatestep$.
	Analogously, for an attempted $\lowerbound$-update, we have $\frac{1}{\delay} \sum_{i=1}^\delay \val(s_{k_i}') \leq \val(s, a) + \updatestep$. 
\end{assumption}

\begin{lemma} \label{stm:dql_no_ec:sampled_value_close_to_real_value_probability}
	The probability that \cref{asm:dql_no_ec:sampled_value_close_to_real_value} is violated during the execution of \cref{alg:dql_no_ec} is bounded by $\frac{\delta}{4}$. 
\end{lemma}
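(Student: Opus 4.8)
The plan is to reduce the lemma to one Hoeffding-type tail bound per attempted update, combined with a union bound over all attempted updates. First I would use the fixed-point characterisation of the value to observe that $\val(s,a) = \ExpectedSumMDP{\mdptransitions}{s}{a}{\val} = \sum_{s' \in \States} \mdptransitions(s,a,s') \cdot \val(s')$, so that $\val(s,a)$ is exactly the mean of $\val(S')$ for $S' \sim \mdptransitions(s,a)$. The decisive structural fact is that each call to the successor oracle $\successor(s,a)$ returns a fresh sample from $\mdptransitions(s,a)$, independent of the history; hence the $\delay$ successor values $\val(s'_{k_1}), \dots, \val(s'_{k_\delay})$ accumulated into $\acc^\circ(s,a)$ during one batch (these are precisely the values in \cref{asm:dql_no_ec:sampled_value_close_to_real_value}) are, conditionally on the state of the execution at the moment the batch was opened, i.i.d.\ $[0,1]$-valued random variables with mean $\val(s,a)$. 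Here one has to check that the batch is not "interrupted": between two consecutive resets of $\visitcount^\circ(s,a)$ the flag $\learn^\circ(s,a)$ is never lowered to $\learnno$ (since $\learndecrease$ can only be applied when $\visitcount^\circ(s,a) = \delay$), so every visit to $(s,a)$ within the batch accumulates, and the accumulated samples are simply the successor draws of the first $\delay$ such visits.

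Given this, I would apply the one-sided Hoeffding inequality (a member of the auxiliary concentration statements in \cref{sec:appendix:auxiliary}) conditionally on the batch-open state: the probability that $\frac{1}{\delay} \sum_{i=1}^\delay \val(s'_{k_i}) < \val(s,a) - \updatestep$ in an $\upperbound$-batch, and symmetrically that it exceeds $\val(s,a) + \updatestep$ in a $\lowerbound$-batch, is at most $\exp(-2 \delay \updatestep^2)$; marginalising gives the same unconditional bound for the event that a given attempted update violates \cref{asm:dql_no_ec:sampled_value_close_to_real_value}. It is worth stressing that this is exactly the step where the proofs of \cite{DBLP:conf/icml/StrehlLWLL06,DBLP:conf/atva/BrazdilCCFKKPU14} are flawed: they apply a concentration bound to variables indicating whether a certain action was taken, which are dependent; here the relevant variables are the successor samples themselves, which are genuinely (conditionally) independent, so the estimate is sound.

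By the choice $\delay = \lceil \frac{1}{2\updatestep^2} \ln(\frac{8}{\delta}\updatecount) \rceil$ we have $2\delay\updatestep^2 \geq \ln(\frac{8}{\delta}\updatecount)$, hence $\exp(-2\delay\updatestep^2) \leq \frac{\delta}{8\updatecount}$ for a single attempted update. It remains to union-bound over all attempted updates. By \cref{stm:dql_no_ec:attempted_update_count} the whole execution performs at most $\updatecount$ attempted $\upperbound$-updates and at most $\updatecount$ attempted $\lowerbound$-updates, i.e.\ at most $2\updatecount$ in total; to keep the union bound measure-theoretically clean one indexes the "$j$-th attempted $\circ$-update of $(s,a)$" over the finitely many admissible tuples $(s,a,\circ,j)$ — the per-pair bound $2 + 2\cardinality{\Actions}/\updatestep$ extracted in the proof of \cref{stm:dql_no_ec:attempted_update_count} caps the range of $j$ — and bounds each such event by $\frac{\delta}{8\updatecount}$ exactly as above. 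Summing over the at most $2\updatecount$ slots yields $2\updatecount \cdot \frac{\delta}{8\updatecount} = \frac{\delta}{4}$, which is the claimed bound.

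The main obstacle I expect is the first paragraph: pinning down the conditional-independence structure precisely enough that a concentration bound legitimately applies to a batch whose membership is adaptively determined, and making sure the constant in the exponent is $2$ (coming from the samples lying in an interval of length $1$, not from a symmetric $\pm 1$ bound), since that factor is exactly what makes $\exp(-2\delay\updatestep^2)$ cancel against the logarithm in the definition of $\delay$. Everything after that is routine arithmetic with $\updatestep$, $\delay$, and $\updatecount$.
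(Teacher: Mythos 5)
Your proposal is correct and follows essentially the same route as the paper's proof: exploit the Markov property to get that the $\delay$ successor values in a batch are i.i.d.\ with mean $\val(s,a)$ by the fixed-point equation, apply the one-sided Hoeffding bound to get a per-attempt failure probability of $\frac{\delta}{8}\updatecount^{-1}$, and union-bound over the at most $\updatecount$ attempted updates for each of $\upperbound$ and $\lowerbound$ (via \cref{stm:dql_no_ec:attempted_update_count}) to obtain $\frac{\delta}{4}$. Your additional care about the adaptively determined batch boundaries and the correct identification of which variables are independent (the successor draws, not the update indicators) matches the paper's intent, which it states somewhat more tersely.
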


\begin{proof}
	We show that the claim for the upper bound is violated with probability at most $\frac{\delta}{8}$.
	The lower bound part follows analogously and the overall claim via union bound.

	Let $(s, a)$ and $k_i$ as in \cref{asm:dql_no_ec:sampled_value_close_to_real_value}, i.e.\ an $\upperbound$-update of $(s, a)$ is attempted at step $k_m = \algostep$.
	First, observe that due to the Markov property, the successor state under $(s, a)$ does not depend on the algorithm's execution.
	Hence, the states $s_{k_i}'$, i.e.\ the successor states after each visit of $(s, a)$, are distributed i.i.d.\ according to the underlying probability distribution $\mdptransitions(s, a)$.
	Define $Y_i = \val(s_{k_i}')$.
	Clearly, $Y_i$ are i.i.d., since the actual value of a state $\val(s)$ is independent of the algorithm's execution.
	Moreover, $\Expectation[Y_i] = \val(s, a)$, since $\val$ satisfies the fixed point conditions $\val(s, a) = \ExpectedSumMDP{\mdptransitions}{s}{a}{\val}$.
	Define the empirical average $\underline{Y} = \frac{1}{\delay} \sum_{i = 1}^\delay Y_i$.
	Observe that $\Expectation[\underline{Y}] = \frac{1}{\delay} \sum_{i = 1}^\delay \Expectation[Y_i] = \val(s, a)$.
	By the Hoeffding bound \cite{hoeffding1994probability} we have that
	\begin{equation*}
		\ProbabilityAlgo*[\Expectation*[\underline{Y}] - \underline{Y} > \updatestep] \leq e^{-2 \delay \updatestep^2} = \frac{\delta}{8} \cdot \updatecount^{-1}
	\end{equation*}
	By reordering, we obtain that $\ProbabilityAlgo[\val(s, a) - \updatestep > \frac{1}{m} \sum_{i=1}^m \val(s_i)] \leq \frac{\delta}{8} \cdot \updatecount^{-1}$  \plabel{proof:dql_no_ec:sampled_value_close_to_real_value_probability:bounded_probability}.
	To conclude the proof, we extend the above argument to all steps $k_1$ satisfying the preconditions of the assumption.
	By \cref{stm:dql_no_ec:attempted_update_count}, the number of attempted updates to $\upperbound$ and $\lowerbound$ is bounded by $\updatecount$, respectively \plabel{proof:dql_no_ec:sampled_value_close_to_real_value_probability:bounded_number}.
	Consequently, by employing the union bound, we see that
	\begin{align*}
		& \ProbabilityAlgo*[\text{\enquote{$\frac{1}{\delay} {\sum}_{i = 1}^\delay \val(s_{k_i}) < \val(s, a) - \updatestep$ for some $k_1$}}] \\
		& \qquad \leq \ProbabilityAlgo*[{\Union}_{k_1} \text{\enquote{$\frac{1}{\delay} {\sum}_{i = 1}^\delay \val(s_{k_i}) < \val(s, a) - \updatestep$ for $k_1$}}] \\
		& \qquad \overset{\mathclap{\ref{proof:dql_no_ec:sampled_value_close_to_real_value_probability:bounded_probability}}}{\leq}~ {\sum}_{k_1} \frac{\delta}{8} \cdot \updatecount^{-1} ~\overset{\mathclap{\ref{proof:dql_no_ec:sampled_value_close_to_real_value_probability:bounded_number}}}{\leq}~ \frac{\delta}{8}.\qedhere
	\end{align*}
\end{proof}
\begin{lemma} \label{stm:dql_no_ec:bounds_ordered}
	Assume that \cref{asm:dql_no_ec:sampled_value_close_to_real_value} holds.
	Then, during any execution of \cref{alg:dql_no_ec} we have for every step $\algostep$, all states $s \in \States_\algoepisode$ and action $a \in \stateactions_\algoepisode(s)$ that
	\begin{equation*}
		\lowerbound_\algostep(s, a) \leq \lowerbound_{\algostep + 1}(s, a) \leq \val(s, a) \leq \upperbound_{\algostep + 1}(s, a) \leq \upperbound_\algostep(s, a).
	\end{equation*}
\end{lemma}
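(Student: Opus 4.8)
The plan is a straightforward induction on the step counter $\algostep$, establishing the sandwich property $\lowerbound_\algostep(s,a) \le \val(s,a) \le \upperbound_\algostep(s,a)$ for all state-action pairs at every step and reading off the two monotonicity inequalities from the update guards. For the base case $\algostep = 1$, the initialization lines of \cref{alg:dql_no_ec} give $\upperbound_1(s,a) = 1 \ge \val(s,a)$ and $\lowerbound_1(s,a) = 0 \le \val(s,a)$ for every ordinary pair, while $\val(\targetstate, a_+) = 1 = \lowerbound_1(\targetstate, \cdot) = \upperbound_1(\targetstate, \cdot)$ and $\val(\sinkstate, a_-) = 0 = \lowerbound_1(\sinkstate, \cdot) = \upperbound_1(\sinkstate, \cdot)$, so the sandwich holds everywhere; monotonicity at step $1$ is vacuous unless $\delay = 1$, in which case it is handled by the generic argument below.

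For the inductive step I would first observe that the bounds are altered only by the successful-update lines \cref{alg:dql_no_ec:line:update_upperbound_value} and \cref{alg:dql_no_ec:line:update_lowerbound_value}, and only for the single pair $(s_\algostep, a_\algostep)$; for all other pairs, and whenever an update fails, the bounds are unchanged and both claims follow from the induction hypothesis. Consider a successful $\upperbound$-update at step $\algostep$. Monotonicity is immediate from the guard $\acc^\upperbound_{\algostep+1}(s_\algostep, a_\algostep)/\delay < \upperbound_\algostep(s_\algostep, a_\algostep) - 2\updatestep$, which yields $\upperbound_{\algostep+1}(s_\algostep, a_\algostep) = \acc^\upperbound_{\algostep+1}(s_\algostep, a_\algostep)/\delay + \updatestep < \upperbound_\algostep(s_\algostep, a_\algostep) - \updatestep < \upperbound_\algostep(s_\algostep, a_\algostep)$, and symmetrically a successful $\lowerbound$-update strictly increases $\lowerbound$. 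For the sandwich I would unfold the accumulator: if $k_1 < \dots < k_\delay = \algostep$ are the $\delay$ most recent learning visits of $(s_\algostep, a_\algostep)$, then the accumulator update line gives $\acc^\upperbound_{\algostep+1}(s_\algostep, a_\algostep) = \sum_{i=1}^{\delay} \upperbound_{k_i}(s'_{k_i})$, a sum of successor upper bounds taken at the earlier steps $k_i \le \algostep$. Invoking the induction hypothesis at each such step (hence the induction must be strong, over all steps $\le \algostep$), $\upperbound_{k_i}(s'_{k_i}) = \max_a \upperbound_{k_i}(s'_{k_i}, a) \ge \max_a \val(s'_{k_i}, a) = \val(s'_{k_i})$, which also covers $s'_{k_i} \in \{\targetstate, \sinkstate\}$. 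Therefore $\acc^\upperbound_{\algostep+1}(s_\algostep, a_\algostep)/\delay \ge \frac{1}{\delay}\sum_{i=1}^\delay \val(s'_{k_i}) \ge \val(s_\algostep, a_\algostep) - \updatestep$, where the last step is exactly the hypothesis of \cref{asm:dql_no_ec:sampled_value_close_to_real_value}; adding $\updatestep$ gives $\upperbound_{\algostep+1}(s_\algostep, a_\algostep) \ge \val(s_\algostep, a_\algostep)$. The lower-bound case is entirely symmetric, using $\lowerbound_{k_i}(s'_{k_i}) \le \val(s'_{k_i})$ and the second inequality of \cref{asm:dql_no_ec:sampled_value_close_to_real_value}. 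Chaining the per-step monotonicity with the sandwich at steps $\algostep$ and $\algostep+1$ produces the stated inequality.

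There is no deep obstacle here; the point to be careful about is the bookkeeping that identifies $\acc^\upperbound_{\algostep+1}(s_\algostep, a_\algostep)$ with the sum of the successor bounds $\upperbound_{k_i}(s'_{k_i})$ evaluated at the distinct, strictly earlier visit steps $k_i$ — which is why plain induction does not suffice and one needs the statement for all steps $\le \algostep$ — together with the routine passage between the state-action bounds $\upperbound(s,a)$ and the state bounds $\upperbound(s) = \max_a \upperbound(s,a)$ appearing inside the accumulator. It is worth remarking that the $\learn$-flag machinery and the resets of $\acc$ and $\visitcount$ are irrelevant to this particular lemma: they only influence which visits contribute to the accumulator, not the validity of $\upperbound_{k_i}(s'_{k_i}) \ge \val(s'_{k_i})$ at any step $k_i$; the accuracy of an update (its closeness to the true expected successor value) is a separate concern, captured entirely by \cref{asm:dql_no_ec:sampled_value_close_to_real_value} itself.
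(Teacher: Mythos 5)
Your proposal is correct and follows essentially the same route as the paper's proof: strong induction on the step, with the base case from initialization, monotonicity read off from the update guards, and the sandwich property for a successful update obtained by unfolding the accumulator, applying the induction hypothesis at the earlier visit steps $k_i$, and then invoking \cref{asm:dql_no_ec:sampled_value_close_to_real_value}. Your version is if anything slightly more explicit about the bookkeeping (the identification of $\acc^\upperbound_{\algostep+1}(s_\algostep,a_\algostep)$ with $\sum_i \upperbound_{k_i}(s'_{k_i})$ and the need for the hypothesis at all steps $\le \algostep$), which matches the paper's use of its hypothesis \ref{proof:dql_no_ec:bounds_ordered:ih}.
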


\begin{proof}
	First, by definition of the algorithm we clearly have that $\upperbound$ can only decrease and $\lowerbound$ can only increase.
	It remains to show that $\lowerbound_\algostep(s, a) \leq \val(s, a) \leq \upperbound_\algostep(s, a)$.
	We proceed by induction on the step $\algostep$.
	For $\algostep = 0$, the statement clearly holds, since $\upperbound_1(s, a) = 1$ for all states except the special state $\sinkstate$, which by assumption cannot reach the target $\targetstate$.
	Analogously, the statement holds for $\lowerbound_1(s, a)$.
	Now, fix an arbitrary step $\algostep$.
	We have that $\upperbound_{\algostep'}(s, a) \geq \val(s, a)$ for all steps $\algostep' \leq \algostep$ \alabel{proof:dql_no_ec:bounds_ordered:ih}{IH}.
	Assume that $(s, a)$ is the state-action pair sampled at step $\algostep$.
	If no successful update takes place there is nothing to prove, since the values of $\upperbound$ and $\lowerbound$ do not change.
	Otherwise, \cref{asm:dql_no_ec:sampled_value_close_to_real_value} is applicable and we get
	\begin{equation*}
		\upperbound_{\algostep + 1}(s, a) = \frac{1}{\delay} {\sum}_{i=1}^\delay \upperbound_{k_i}(s_{k_i}) + \updatestep \overset{\ref{proof:dql_no_ec:bounds_ordered:ih}}{\geq} \frac{1}{\delay} {\sum}_{i=1}^\delay \val(s_{k_i}) + \updatestep \geq \val(s, a).
	\end{equation*}
	Analogously, we have $\lowerbound_{\algostep + 1}(s, a) \leq \val(s, a)$. 
\end{proof}
This gives us correctness of the returned result with high confidence upon termination.
It remains to show that the algorithm also terminates with high probability.

To this end, we introduce the upper bound maximizing strategy $\strategy_\algostep$ which selects in each state $s$ uniformly among all actions maximal with respect to the \emph{current} upper bounds, i.e.\ $\upperbound_\algostep(s, \cdot)$.
This allows us to reason about the current value at step $\algostep$.
Note that this strategy differs from the sampling strategy $\strategy_\algoepisode$, since $\strategy_\algostep$ might change during an episode.
However, once there are no updates to upper bounds, we have that $\strategy_\algoepisode = \strategy_\algostep$.
We use this fact in the final convergence proof.
Once the two strategies align, we can transfer properties proven with respect to $\strategy_\algostep$ to the actual sampling behaviour of the algorithm.

Using this strategy, we define the set of converged state-action pairs.
\begin{definition} \label{def:dql_no_ec:converged_bounds}
	For every step $\algostep$, define $\ConvergedUpperBounds<\algostep>, \ConvergedLowerBounds<\algostep> \subseteq \States \times \stateactions$ by
	\begin{align*}
		\ConvergedUpperBounds<\algostep> & \coloneqq \{(s, a) \mid \upperbound_\algostep(s, a) - \ExpectedSumMDP{\mdptransitions}{s}{a}{\ExpectedSumStrat{\strategy_\algostep}{\upperbound_\algostep}} \leq 3 \updatestep\} \text{ and}\\
		\ConvergedLowerBounds<\algostep> & \coloneqq \{(s, a) \mid \ExpectedSumMDP{\mdptransitions}{s}{a}{\ExpectedSumStrat{\strategy_\algostep}{\lowerbound_\algostep}} - \lowerbound_\algostep(s, a) \leq 3 \updatestep\},
	\end{align*}
	i.e.\ all state-action pairs whose $\upperbound$- or $\lowerbound$-value is close to the respective value of its successors under $\strategy_\algostep$.
	If $(s, a) \in \ConvergedUpperBounds<\algostep>$, we say that $(s, a)$ is \emph{$\upperbound$-converged at step $\algostep$}, analogously $(s, a) \in \ConvergedLowerBounds<\algostep>$ is called \emph{$\lowerbound$-converged at step $\algostep$}.
\end{definition}
The approach for the convergence proof is to show that (with high probability) (i)~if an update of some bound fails, the current bound is consistent with its successors, i.e.\ the respective pair is converged, and (ii)~we visit non-converged pairs only finitely often.
Finally, we combine these two facts non-trivially to prove convergence.
\begin{lemma} \label{stm:dql_no_ec:expected_sum_strategy} \label{stm:dql_no_ec:nonconverged_bounds_monotonic}
	We have for every step $\algostep$ and state $s$ that
	\begin{equation*}
		\ExpectedSumStrat{\strategy_\algostep}{\upperbound_\algostep}(s) = \upperbound_\algostep(s) \text{\quad and \quad} \ExpectedSumStrat{\strategy_\algostep}{\lowerbound_\algostep}(s) \leq \lowerbound_\algostep(s).
	\end{equation*}
	Moreover, if $(s, a) \notin \ConvergedUpperBounds<\algostep>$, then $(s, a) \notin \ConvergedUpperBounds<\algostep'>$ for all $\algostep' > \algostep$ until an $\upperbound$-update of $(s, a)$ succeeds.
	If no more updates of upper bounds take place, the analogous statement holds for lower bounds, too.
\end{lemma}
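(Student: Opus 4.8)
The plan is to dispatch the two displayed statements by directly unwinding the definition of $\strategy_\algostep$, and then to reduce the ``monotonicity of non-convergence'' claim to the (unconditional) fact, already noted in the proof of \cref{stm:dql_no_ec:bounds_ordered}, that $\upperbound$ is non-increasing and $\lowerbound$ non-decreasing along any execution. For the first equality: $\strategy_\algostep$ is by definition uniform over $\argmax_{a \in \stateactions(s)} \upperbound_\algostep(s, a)$, so every action it plays with positive probability has $\upperbound_\algostep$-value exactly $\upperbound_\algostep(s) = \max_a \upperbound_\algostep(s, a)$, whence $\ExpectedSumStrat{\strategy_\algostep}{\upperbound_\algostep}(s) = \sum_a \strategy_\algostep(s, a)\upperbound_\algostep(s, a) = \upperbound_\algostep(s)$. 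For the inequality, $\ExpectedSumStrat{\strategy_\algostep}{\lowerbound_\algostep}(s)$ is a convex combination of values $\lowerbound_\algostep(s, a)$, each at most $\max_a \lowerbound_\algostep(s, a) = \lowerbound_\algostep(s)$, giving $\ExpectedSumStrat{\strategy_\algostep}{\lowerbound_\algostep}(s) \leq \lowerbound_\algostep(s)$.

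For the third claim I would first use the identity just proven to rewrite $\ExpectedSumMDP{\mdptransitions}{s}{a}{\ExpectedSumStrat{\strategy_\algostep}{\upperbound_\algostep}} = \sum_{s'} \mdptransitions(s, a, s')\,\upperbound_\algostep(s')$, so that $(s,a) \notin \ConvergedUpperBounds<\algostep>$ reads $\upperbound_\algostep(s, a) - \sum_{s'} \mdptransitions(s, a, s')\,\upperbound_\algostep(s') > 3\updatestep$. Fix $\algostep' > \algostep$ such that no successful $\upperbound$-update of $(s,a)$ occurred in $\{\algostep, \dots, \algostep'\}$. The only algorithm line that changes $\upperbound(s,a)$ is the successful-update line~\cref{alg:dql_no_ec:line:update_upperbound_value}, so $\upperbound_{\algostep'}(s, a) = \upperbound_\algostep(s, a)$; and since every $\upperbound(s', b)$ is non-increasing in the step index, so is each $\upperbound(s') = \max_b \upperbound(s', b)$, i.e.\ $\upperbound_{\algostep'}(s') \leq \upperbound_\algostep(s')$. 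Using non-negativity of $\mdptransitions(s, a, \cdot)$ and re-applying the first identity at step $\algostep'$, I obtain $\upperbound_{\algostep'}(s, a) - \ExpectedSumMDP{\mdptransitions}{s}{a}{\ExpectedSumStrat{\strategy_{\algostep'}}{\upperbound_{\algostep'}}} \geq \upperbound_\algostep(s, a) - \sum_{s'} \mdptransitions(s, a, s')\,\upperbound_\algostep(s') > 3\updatestep$, i.e.\ $(s,a) \notin \ConvergedUpperBounds<\algostep'>$.

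Finally, if no more $\upperbound$-updates take place from step $\algostep$ on, the set of $\upperbound$-optimal actions in every state is frozen, so $\strategy_{\algostep'} = \strategy_\algostep =: \strategy$ for all $\algostep' \geq \algostep$; I would then replay the argument symmetrically, using that $\lowerbound(s,a)$ stays constant until a successful $\lowerbound$-update and that each $\lowerbound(s', b)$ is non-decreasing, so with the \emph{fixed} weights of $\strategy$ one has $\ExpectedSumStrat{\strategy}{\lowerbound_{\algostep'}}(s'') \geq \ExpectedSumStrat{\strategy}{\lowerbound_\algostep}(s'')$ for every $s''$, and hence $\ExpectedSumMDP{\mdptransitions}{s}{a}{\ExpectedSumStrat{\strategy}{\lowerbound_{\algostep'}}} - \lowerbound_{\algostep'}(s, a)$ remains above $3\updatestep$. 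The only delicate point — the closest thing to an obstacle here — is the bookkeeping distinction between the quantity that is \emph{exactly} constant (the value $\upperbound_\algostep(s,a)$, and only because no \emph{successful} update of that particular pair happened) and the quantities that merely move monotonically (the successor state-bounds), together with the observation that the lower-bound half genuinely needs the sampling strategy to have stabilised: otherwise a reallocation of probability mass toward an action with smaller $\lowerbound$-value could decrease $\ExpectedSumStrat{\strategy_{\algostep'}}{\lowerbound_{\algostep'}}$ and break the chain of inequalities.
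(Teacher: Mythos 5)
Your proposal is correct and follows essentially the same route as the paper: the two displayed facts come from unwinding the definition of $\strategy_\algostep$ as uniform over $\upperbound$-maximizing actions, and the non-convergence persistence follows from the pair's bound being frozen until a successful update while the successor state-bounds move monotonically in the favourable direction (with the lower-bound half requiring the upper bounds, and hence the sampling strategy, to have stabilised). Your closing remark about why the strategy must be frozen for the $\lowerbound$-case matches the paper's own caveat exactly.
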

\begin{proof}
	Since the strategy $\strategy_\algostep$ maximizes the upper bound we have
	\begin{equation*}
		\ExpectedSumStrat{\strategy_\algostep}{\upperbound_\algostep}(s) = {\sum}_{a \in \stateactions(s)} \strategy_\algostep(s, a) \cdot \upperbound_\algostep(s, a) = {\max}_{a \in \stateactions(s)} \upperbound_\algostep(s, a) = \upperbound_\algostep(s).
	\end{equation*}
	We also trivially have that $\ExpectedSumStrat{\strategy_\algostep}{\lowerbound_\algostep}(s) \leq \lowerbound_\algostep(s)$, as $\lowerbound_\algostep(s)$ is the maximum over all actions.

	For the second claim, recall that $\upperbound$-values can only decrease.
	If $(s, a) \notin \ConvergedUpperBounds<\algostep>$, we have $\upperbound_\algostep(s, a) > 3 \updatestep + \ExpectedSumMDP{\mdptransitions}{s}{a}{\ExpectedSumStrat{\strategy_\algostep}{\upperbound_\algostep}} = 3 \updatestep + \ExpectedSumMDP{\mdptransitions}{s}{a}{\upperbound_\algostep}$.
	Since (i)~$\upperbound_\algostep(s, a) = \upperbound_{\algostep + 1}(s, a)$ unless a successful $\upperbound$-update of $(s, a)$ occurs and (ii)~$\upperbound_\algostep(s) \geq \upperbound_{\algostep + 1}(s)$ for all states $s$, we obtain the claim.
	The lower bound statement is proven analogously, noting that once upper bounds remain fixed the only way to change $\ConvergedLowerBounds<\algostep>$ is a successful update of some lower bound. 
\end{proof}
\begin{assumption} \label{asm:dql_no_ec:converged_successful_update}
	Suppose an update of the upper bound (lower bound) of the state-action pair $(s, a)$ is attempted at step $\algostep$.
	Let $k_1 < k_2 < \ldots < k_\delay = \algostep$ be the steps of the $\delay$ most recent visits to $(s, a)$.
	If $(s, a)$ is not $\upperbound$-converged ($\lowerbound$-converged) at step $k_1$, the update at step $\algostep$ is successful.
\end{assumption}
Intuitively, this assumption says that whenever the bound for a state-action pair is significantly different from its successors and we visit that pair often enough, we obtain a significantly better estimate.
We cannot guarantee this surely due to outliers, but we bound the probability of this assumption being violated using our choice of the delay $\delay$.
\begin{lemma} \label{stm:dql_no_ec:converged_successful_update_probability}
	The probability that \cref{asm:dql_no_ec:converged_successful_update} is violated during the execution of \cref{alg:dql_no_ec} is bounded by $\frac{\delta}{4}$.
\end{lemma}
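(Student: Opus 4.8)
The plan is to mirror the proof of \cref{stm:dql_no_ec:sampled_value_close_to_real_value_probability}: first reduce \enquote{a failed update of a non-converged pair} to the statement that an empirical average of i.i.d.\ successor samples deviates too far from its mean, and then bound the latter by Hoeffding's inequality and sum over all attempted updates. I describe the argument for the upper bound; the lower bound is entirely symmetric, and the two cases combine via a union bound to the claimed $\frac{\delta}{4}$.

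First I would carry out the reduction. Suppose an $\upperbound$-update of $(s,a)$ is attempted at step $\algostep$, let $k_1 < \dots < k_\delay = \algostep$ be the contributing visits, and assume $(s,a) \notin \ConvergedUpperBounds<k_1>$. Since $\visitcount^\upperbound(s,a)$ is reset only when an $\upperbound$-update of $(s,a)$ is attempted, no such update occurs strictly between $k_1$ and $\algostep$, hence $\upperbound_\algostep(s,a) = \upperbound_{k_1}(s,a)$, and $\upperbound(s,a)$ changes only through updates of $(s,a)$. Applying \cref{stm:dql_no_ec:expected_sum_strategy} at step $k_1$ to rewrite $\ExpectedSumStrat{\strategy_{k_1}}{\upperbound_{k_1}}$ pointwise as $\upperbound_{k_1}$, the hypothesis $(s,a)\notin\ConvergedUpperBounds<k_1>$ reads $\upperbound_{k_1}(s,a) > \ExpectedSumMDP{\mdptransitions}{s}{a}{\upperbound_{k_1}} + 3\updatestep$. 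Because $\upperbound$ is non-increasing by construction (first paragraph of the proof of \cref{stm:dql_no_ec:bounds_ordered}), $\acc^\upperbound_{\algostep+1}(s,a)/\delay = \frac{1}{\delay}\sum_{i=1}^\delay \upperbound_{k_i}(s'_{k_i}) \le \frac{1}{\delay}\sum_{i=1}^\delay \upperbound_{k_1}(s'_{k_i})$. Hence, if the update \emph{fails}, i.e.\ $\acc^\upperbound_{\algostep+1}(s,a)/\delay \ge \upperbound_\algostep(s,a) - 2\updatestep$, then $\frac{1}{\delay}\sum_{i=1}^\delay \upperbound_{k_1}(s'_{k_i}) \ge \upperbound_{k_1}(s,a) - 2\updatestep > \ExpectedSumMDP{\mdptransitions}{s}{a}{\upperbound_{k_1}} + \updatestep$. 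The analogous computation for the lower bound, using $\ExpectedSumStrat{\strategy_{k_1}}{\lowerbound_{k_1}} \le \lowerbound_{k_1}$ and monotonicity of $\lowerbound$, turns \enquote{fail while not $\lowerbound$-converged at $k_1$} into $\frac{1}{\delay}\sum_{i=1}^\delay \lowerbound_{k_1}(s'_{k_i}) \le \ExpectedSumMDP{\mdptransitions}{s}{a}{\lowerbound_{k_1}} - \updatestep$.

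Next I would bound the probability of these deviation events. I would index candidate windows by a pair $(s,a)$ and a visit count $m$, the window being visits $m,\dots,m{+}\delay{-}1$ to $(s,a)$; every attempted update is the end of exactly one completed window, so by \cref{stm:dql_no_ec:attempted_update_count} at most $\updatecount$ windows complete. Fix a candidate window and let $\mathcal{F}$ collect the history strictly before the $m$-th visit to $(s,a)$. Then $\upperbound_{k_1}$, $\strategy_{k_1}$, and the event $(s,a)\notin\ConvergedUpperBounds<k_1>$ are $\mathcal{F}$-measurable, while by the Markov property each of the $\delay$ successors of $(s,a)$ obtained on visits $m,\dots,m{+}\delay{-}1$ is an independent draw from $\mdptransitions(s,a)$ that has not been revealed by time $k_1$; hence, conditionally on $\mathcal{F}$, the numbers $\upperbound_{k_1}(s'_{k_i}) \in [0,1]$ are i.i.d.\ with mean $\ExpectedSumMDP{\mdptransitions}{s}{a}{\upperbound_{k_1}}$. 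Hoeffding's inequality gives $\ProbabilityAlgo[\frac{1}{\delay}\sum_{i} \upperbound_{k_1}(s'_{k_i}) > \ExpectedSumMDP{\mdptransitions}{s}{a}{\upperbound_{k_1}} + \updatestep \mid \mathcal{F}] \le e^{-2\delay\updatestep^2}$, and the definition of $\delay$ makes this at most $\frac{\delta}{8}\updatecount^{-1}$. Summing over the $\le \updatecount$ completed windows (exactly as in \cref{stm:dql_no_ec:sampled_value_close_to_real_value_probability}) bounds the probability that \cref{asm:dql_no_ec:converged_successful_update} fails for some $\upperbound$-update by $\frac{\delta}{8}$; the same estimate with the lower tail of Hoeffding handles $\lowerbound$, and the two contributions add to $\frac{\delta}{4}$.

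The step I expect to be most delicate is the conditioning argument above, because the window start $k_1$ is a random (stopping) time rather than a fixed index. The safe way to handle this is precisely to run the union bound over all index pairs $(s,a,m)$ instead of over the randomly realized updates, and to condition on the $\sigma$-algebra of the strict past of the $m$-th visit to $(s,a)$, checking carefully that the $\delay$ successor draws associated with that window are genuinely fresh and independent of that past (this is the kind of place where the arguments of \cite{DBLP:conf/icml/StrehlLWLL06,DBLP:conf/atva/BrazdilCCFKKPU14} are fragile). A minor but necessary bookkeeping point is that every realized attempted update is charged to a distinct completed window, so that the number of summands in the union bound really stays $\le \updatecount$. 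Everything else — the algebraic reduction, monotonicity of the bounds, and the Hoeffding arithmetic through the definition of $\delay$ — is routine and parallels \cref{stm:dql_no_ec:sampled_value_close_to_real_value_probability}.
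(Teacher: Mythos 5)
Your proposal is correct and follows essentially the same route as the paper: the same contrapositive reduction (a failed update of a non-converged pair forces the empirical average of $\upperbound_{k_1}(s'_{k_i})$, equivalently $\ExpectedSumStrat{\strategy_{k_1}}{\upperbound_{k_1}}(s'_{k_i})$, to exceed its mean by $\updatestep$), the same Hoeffding estimate with the same constants, and the same union bound over the at most $\updatecount$ attempted updates. Your explicit conditioning on the strict past of the $m$-th visit and indexing of windows by $(s,a,m)$ is a more careful treatment of the random window start than the paper gives, but it formalizes rather than changes the argument.
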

\begin{proof}
	As in \cref{stm:dql_no_ec:sampled_value_close_to_real_value_probability}, we prove that an attempted update of the upper bounds fails with probability at most $\frac{\delta}{8}$.
	The same bound then can be obtained for the lower bound variant with a mostly analogous proof.
	The overall result again follows using the union bound.

	Let $(s, a)$ and $k_i$ as in \cref{asm:dql_no_ec:converged_successful_update}, i.e.\ $(s, a) \notin \ConvergedUpperBounds<k_1>$ and an update of the upper bound is attempted at step $\algostep$ \plabel{proof:dql_no_ec:converged_successful_update_probability:attempt}.
	Define $X_i = \ExpectedSumStrat{\strategy_{k_1}}{\upperbound_{k_1}}(s_{k_i}')$.
	Note that all $X_i$ are defined using $\upperbound_{k_1}$ and $\strategy_{k_1}$ (instead of $\upperbound_{k_i}$ and $\strategy_{k_i}$).
	Consequently, the $X_i$ are i.i.d.\ and we can apply the Hoeffding bound to the empirical average $\underline{X} = \frac{1}{\delay} \sum_{i = 1}^\delay X_i$.
	This yields that
	\begin{equation*}
		\ProbabilityAlgo[\underline{X} - \Expectation[\underline{X}] \geq \updatestep] \leq e^{-2 \delay \updatestep^2} = \frac{\delta}{8} \cdot \updatecount^{-1}.
	\end{equation*}
	Since the $X_i$ are i.i.d., we have that $\Expectation[\underline{X}] = \Expectation[X_i]$ for all $1 \leq i \leq \delay$, in particular $\Expectation[\underline{X}] = \Expectation[X_1]$.
	Thus, the probability that $\underline{X} - \Expectation[X_1] \geq \updatestep$ is at most $\frac{\delta}{8} \cdot \updatecount^{-1}$ \plabel{proof:dql_no_ec:converged_successful_update_probability:difference_bound}.
	For the lower bound proof, we analogously define $X_i = \ExpectedSumStrat{\strategy_{k_1}}{\lowerbound_{k_1}}(s_{k_i}')$ and prove that $\Expectation[X_1] - \underline{X} \geq \updatestep$ with the same probability.

	Now, we show that if $\underline{X} - \Expectation[X_1] < \updatestep$ the update at step $\algostep$ will be successful \plabel{proof:dql_no_ec:converged_successful_update_probability:update_success}.
	Recall that an update is successful when the $\delay$ most recent samples significantly differ from the currently stored value, i.e.\ when the currently stored value $\upperbound_\algostep(s, a)$ is significantly larger than the newly learned value.
	We have that (reasoning below)
	\begin{align}
		& \upperbound_\algostep(s, a) - \frac{1}{m} {\sum}_{i=1}^\delay \upperbound_{k_i}(s_{k_i}') \geq \upperbound_\algostep(s, a) - \frac{1}{m} {\sum}_{i=1}^\delay \upperbound_{k_1}(s_{k_i}') \label{eq:stm:dql_no_ec:converged_successful_update_probability:first} \\
			& \qquad = \upperbound_\algostep(s, a) - \frac{1}{m} {\sum}_{i=1}^\delay \ExpectedSumStrat{\strategy_{k_1}}{\upperbound_{k_1}}(s_{k_i}') \label{eq:stm:dql_no_ec:converged_successful_update_probability:second} \\
			& \qquad > \upperbound_\algostep(s, a) - \Expectation[X_1] - \updatestep \label{eq:stm:dql_no_ec:converged_successful_update_probability:third} \\
			& \qquad = \upperbound_{k_1}(s, a) - \Expectation[X_1] - \updatestep \label{eq:stm:dql_no_ec:converged_successful_update_probability:fourth} \\
			& \qquad = \upperbound_{k_1}(s, a) - \ExpectedSumMDP{\mdptransitions}{s}{a}{\ExpectedSumStrat{\strategy_{k_1}}{\upperbound_{k_1}}} - \updatestep \label{eq:stm:dql_no_ec:converged_successful_update_probability:fifth} \\
			& \qquad > 2 \updatestep. \label{eq:stm:dql_no_ec:converged_successful_update_probability:sixth}
	\end{align}
	Inequality~\eqref{eq:stm:dql_no_ec:converged_successful_update_probability:first} follows from the fact that $\upperbound$-values can only decrease over time by definition of the algorithm.
	Equality~\eqref{eq:stm:dql_no_ec:converged_successful_update_probability:second} follows directly from \cref{stm:dql_no_ec:expected_sum_strategy}.
	Inequality~\eqref{eq:stm:dql_no_ec:converged_successful_update_probability:third} follows from the above derivation.
	Equality~\eqref{eq:stm:dql_no_ec:converged_successful_update_probability:fourth} follows from the fact that $\upperbound_{k_i}(s, a) = \upperbound_{k_1}(s, a)$ for all $1 \leq i \leq \delay$:
	Since an update is attempted at step $k_\delay = t$, there can be no update attempts in the previous $\delay - 1$ visits, consequently the value of $\upperbound_{k_i}(s, a)$ does not change between $k_1$ and $k_\delay$.
	Equality~\eqref{eq:stm:dql_no_ec:converged_successful_update_probability:fifth} follows directly from the definition of $X_1$.
	Finally, Inequality~\eqref{eq:stm:dql_no_ec:converged_successful_update_probability:sixth} follows from \ref{proof:dql_no_ec:converged_successful_update_probability:attempt}, i.e.\ that $(s, a)$ is not $\upperbound$-converged at step $k_1$, formally $\upperbound_{k_1}(s, a) - \ExpectedSumMDP{\mdptransitions}{s}{a}{\ExpectedSumStrat{\strategy_{k_1}}{\upperbound_{k_1}}} > 3 \updatestep$.

	For the lower bound, we prove a similar result:
	\begin{align*}
		& \frac{1}{m} {\sum}_{i=1}^\delay \lowerbound_{k_i}(s_{k_i}') - \lowerbound_\algostep(s, a) \geq \frac{1}{m} {\sum}_{i=1}^\delay \lowerbound_{k_1}(s_{k_i}') - \lowerbound_\algostep(s, a) \\
			& \qquad \geq \frac{1}{m} {\sum}_{i=1}^\delay \ExpectedSumStrat{\strategy_{k_1}}{\lowerbound_{k_1}}(s_{k_i}') - \lowerbound_\algostep(s, a) \\
			& \qquad > \Expectation[X_1] - \updatestep - \lowerbound_\algostep(s, a) \\
			& \qquad = \Expectation[X_1] - \updatestep - \lowerbound_{k_1}(s, a) \\
			& \qquad = \ExpectedSumMDP{\mdptransitions}{s}{a}{\ExpectedSumStrat{\strategy_{k_1}}{\lowerbound_{k_1}}} - \updatestep - \lowerbound_{k_1}(s, a) \\
			& \qquad > 2 \updatestep.
	\end{align*}
	The only major difference lies in the second inequality (corresponding to Equality~\eqref{eq:stm:dql_no_ec:converged_successful_update_probability:second}), where we instead use the fact that $\ExpectedSumStrat{\strategy_\algostep}{\lowerbound_\algostep}(s) \leq \lowerbound_\algostep(s)$.

	Finally, we again extend the argument to all steps $k_1$ similar to \cref{stm:dql_no_ec:sampled_value_close_to_real_value_probability}, i.e.\ that by \cref{stm:dql_no_ec:attempted_update_count} the number of attempted updates is bounded by $\updatecount$ \plabel{proof:dql_no_ec:converged_successful_update_probability:bounded_number}.
	Together with the union bound, we thus obtain
	\begin{align*}
		& \ProbabilityAlgo*[\text{\enquote{\cref{asm:dql_no_ec:converged_successful_update} is violated for $\upperbound$}}] \\
			& \qquad =~ \ProbabilityAlgo*[{\Union}_{k_1} \text{\enquote{$k_1$ satisfies condition \ref{proof:dql_no_ec:converged_successful_update_probability:attempt}, but the $\upperbound$-update fails}}] \\
			& \qquad \leq~ {\sum}_{k_1} \ProbabilityAlgo*[\text{\enquote{$k_1$ satisfies condition \ref{proof:dql_no_ec:converged_successful_update_probability:attempt}, but the $\upperbound$-update fails}}] \\
			& \qquad \overset{\mathclap{\ref{proof:dql_no_ec:converged_successful_update_probability:update_success}}}{\leq}~ {\sum}_{k_1} \ProbabilityAlgo*[\text{\enquote{$\underline{X} - \Expectation[X_1] \geq \updatestep$ for $k_1$}}] \\
			& \qquad \overset{\mathclap{\ref{proof:dql_no_ec:converged_successful_update_probability:difference_bound}}}{\leq}~ {\sum}_{k_1} \frac{\delta}{8} \cdot \updatecount^{-1} ~\overset{\mathclap{\ref{proof:dql_no_ec:converged_successful_update_probability:bounded_number}}}{\leq}~ \frac{\delta}{8}. \qedhere
	\end{align*}
\end{proof}

\begin{lemma} \label{stm:dql_no_ec:unsuccessful_update_means_converged}
	Assume that \cref{asm:dql_no_ec:converged_successful_update} holds.
	If an attempted $\upperbound$-update of $(s, a)$ at step $\algostep$ fails and $\learn^\upperbound_{\algostep+1}(s, a) = \learnno$, then $(s, a) \in \ConvergedUpperBounds<\algostep+1>$.
	If no more updates of upper bounds take place, the analogous statement holds for the lower bounds, too.
\end{lemma}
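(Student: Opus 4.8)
The plan is to leverage the $\learn$-flag bookkeeping: I will show that if an attempted $\upperbound$-update of $(s,a)$ at step $\algostep$ fails and leaves $\learn^\upperbound_{\algostep+1}(s,a) = \learnno$, then no successful $\upperbound$-update (of any pair) can have occurred during the $\delay$ most recent visits to $(s,a)$, so that $\upperbound$ is \enquote{frozen} over that window. Since membership in $\ConvergedUpperBounds<\cdot>$ depends only on $\upperbound$ (by \cref{stm:dql_no_ec:expected_sum_strategy}), convergence at the first such visit — which \cref{asm:dql_no_ec:converged_successful_update} forces precisely because the update failed — then transfers forward to step $\algostep+1$.

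In detail, I would let $k_1 < k_2 < \dots < k_\delay = \algostep$ be the visits to $(s,a)$ contributing to $\acc^\upperbound_{\algostep+1}(s,a)$, i.e.\ the ones since $\visitcount^\upperbound(s,a)$ was last reset to $0$ (these are the steps referred to in \cref{asm:dql_no_ec:converged_successful_update}). Since $\learndecrease$ maps only $\learnonce$ and $\learnno$ to $\learnno$, and $\visitcount^\upperbound(s,a)$ reaching $\delay$ requires $\learn^\upperbound_\algostep(s,a) \neq \learnno$, the failed update forces $\learn^\upperbound_\algostep(s,a) = \learnonce$. Now I argue by contradiction that no successful $\upperbound$-update occurs at any step in $\{k_1,\dots,\algostep\}$: at $\algostep$ itself the update fails; and a successful $\upperbound$-update at some $\algostep' \in [k_1,\algostep)$ would reset all $\upperbound$-flags to $\learnyes$ via \cref{alg:dql_no_ec:line:reset_learn_upper}, and since $\visitcount^\upperbound(s,a) < \delay$ throughout $[k_1,\algostep)$ no update of $(s,a)$ is attempted in between, so $\learn^\upperbound(s,a)$ would still be $\learnyes$ at $\algostep$ and the failed attempt there would set it to $\learndecrease(\learnyes) = \learnonce \neq \learnno$. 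Hence $\upperbound_{k_1} = \upperbound_{k_1+1} = \dots = \upperbound_{\algostep+1}$ on all state--action pairs.

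It then suffices to combine the pieces. The contrapositive of \cref{asm:dql_no_ec:converged_successful_update} applied to the failed attempt at $\algostep$ gives $(s,a) \in \ConvergedUpperBounds<k_1>$; and by \cref{stm:dql_no_ec:expected_sum_strategy} we have $\ExpectedSumStrat{\strategy_\algostep}{\upperbound_\algostep}(s') = \upperbound_\algostep(s')$ for every $\algostep$, so $\ConvergedUpperBounds<\algostep>$ is a function of $\upperbound_\algostep$ alone, whence $\ConvergedUpperBounds<\algostep+1> = \ConvergedUpperBounds<k_1> \ni (s,a)$. For the lower-bound statement I would additionally use the hypothesis that no more successful $\upperbound$-updates take place (which by \cref{stm:dql_no_ec:successful_update_count} eventually holds): then $\strategy_\algostep$ is constant from that point on, so $\ConvergedLowerBounds<\algostep>$ is determined by $\lowerbound_\algostep$ together with this fixed strategy, and the identical argument with $\lowerbound$, $\learn^\lowerbound$ and \cref{alg:dql_no_ec:line:reset_learn_lower} yields $\lowerbound_{k_1} = \lowerbound_{\algostep+1}$ and hence $(s,a) \in \ConvergedLowerBounds<\algostep+1>$.

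The main obstacle will be the flag bookkeeping in the second step: one has to rule out every mechanism that could refresh $\learn^\upperbound(s,a)$ back to $\learnyes$ (or reset the accumulator) inside the window $[k_1,\algostep]$, observing in particular that the global resets triggered by successful $\upperbound$-updates of \emph{other} pairs are exactly the events excluded by the hypothesis $\learn^\upperbound_{\algostep+1}(s,a) = \learnno$, while $\lowerbound$-updates occurring in the window are harmless since they leave $\learn^\upperbound$ untouched. Once this freezing of $\upperbound$ (resp.\ $\lowerbound$) is in hand, the remainder is immediate.
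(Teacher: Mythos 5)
Your proposal is correct and is essentially the paper's proof in contrapositive form: the paper splits into three cases (not converged at $k_1$; converged at $k_1$ but not at some later $k_i$; converged throughout), and your flag-bookkeeping argument is exactly what rules out the second case while the contrapositive of \cref{asm:dql_no_ec:converged_successful_update} rules out the first, leaving the paper's third case, where the window is frozen and the converged set carries over to step $\algostep+1$. The treatment of the lower bounds via the fixed strategy $\strategy_\algostep$ also matches the paper's closing remark.
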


\begin{proof}
	We prove the statement for the upper bound, with the corresponding lower bound statement following analogously.
	Assume an unsuccessful $\upperbound$-update of $(s, a)$ occurs at step $\algostep$ and let $k_1 < k_2 < \ldots < k_\delay = \algostep$ be the $\delay$ most recent visits to $(s, a)$.
	We consider three cases:
	\begin{enumerate}
		\item
		If $(s, a) \notin \ConvergedUpperBounds<k_1>$, then by \cref{asm:dql_no_ec:converged_successful_update} the $\upperbound$-update of $(s, a)$ at step $\algostep$ will be successful and there is nothing to prove.

		\item
		We have $(s, a) \in \ConvergedUpperBounds<k_1>$ and there exists $i \in \{2, \ldots, \delay\}$ such that $(s, a)$ is not $\upperbound$-converged at step $k_i$.
		It follows that there must have been a successful update of some $\upperbound$-value between steps $k_1$ and $k_\delay$, say step $\algostep'$.
		By \cref{alg:dql_no_ec:line:reset_learn_upper}, $\learn_{\algostep' + 1}^\upperbound(s, a)$ is set to $\learnyes$ and there is nothing to prove.

		\item
		For the last case, we have that for all $i \in \{1, \ldots, \delay\}$ that $(s, a)$ is $\upperbound$-converged at step $k_i$, particularly $(s, a) \in \ConvergedUpperBounds<k_{\delay}> = \ConvergedUpperBounds<\algostep>$.
		As the attempt to update the $\upperbound$-value of $(s, a)$ at step $\algostep$ was unsuccessful, we have that $\ConvergedUpperBounds<\algostep> = \ConvergedUpperBounds<\algostep+1>$.
	\end{enumerate}
	For the lower bound statement, observe that $\ConvergedLowerBounds<\algostep>$ may be changed by a successful update of $\upperbound_\algostep$.
	Hence, the above reasoning can only be followed once upper bounds do not change. 
\end{proof}

\begin{lemma} \label{stm:dql_no_ec:non_converged_visit_count}
	Assume that \cref{asm:dql_no_ec:converged_successful_update} holds.
	Then, there are at most $2 \delay \cdot \frac{\cardinality{\Actions}}{\updatestep}$ visits to state-action pairs which are not $\upperbound$-converged.
	Moreover, once the upper bounds are not updated any more, there are at most $2 \delay \cdot \frac{\cardinality{\Actions}}{\updatestep}$ visits to state-action pairs which are not $\lowerbound$-converged.
\end{lemma}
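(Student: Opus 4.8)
The plan is to charge each visit to a non-$\upperbound$-converged state-action pair to one of the successful $\upperbound$-updates, of which there are at most $\frac{\cardinality{\Actions}}{\updatestep}$ by \cref{stm:dql_no_ec:successful_update_count}, losing only a factor of $2\delay$. The first observation is that a visit to a non-$\upperbound$-converged pair $(s,a)$ always has $\learn^\upperbound(s,a) \neq \learnno$: the flag is set to $\learnno$ only at a failed update, where \cref{stm:dql_no_ec:unsuccessful_update_means_converged} guarantees $(s,a)$ is $\upperbound$-converged, and as long as no successful $\upperbound$-update occurs the quantities $\upperbound_\algostep$, the maximising strategy $\strategy_\algostep$, and hence the sets $\ConvergedUpperBounds<\cdot>$ do not change, while the flag stays $\learnno$ until the next successful $\upperbound$-update resets it (line of \cref{alg:dql_no_ec:line:reset_learn_upper}). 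Hence every such visit increments $\visitcount^\upperbound$, so the visits to non-$\upperbound$-converged pairs decompose into \emph{blocks} of $\delay$ consecutive visits to one pair, each block culminating in an update attempt (apart from at most one unfinished trailing block per pair, contributing negligibly). Call a block \emph{winning} if its update attempt succeeds; there are at most $\frac{\cardinality{\Actions}}{\updatestep}$ winning blocks.

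Next I would establish the core claim: every visit to a non-$\upperbound$-converged pair lies in a winning block or in the block of the same pair immediately preceding a winning block. Fix such a visit, inside block $B$ of pair $(s,a)$. Since no $(s,a)$-update happens strictly inside $B$, \cref{stm:dql_no_ec:nonconverged_bounds_monotonic} shows $(s,a)$ is non-$\upperbound$-converged from this visit through the end of $B$. If $(s,a)$ is already non-converged at the first visit of $B$, then \cref{asm:dql_no_ec:converged_successful_update} makes the update at the end of $B$ succeed, so $B$ is winning. Otherwise $(s,a)$ is converged at the first visit of $B$ but not at the counted visit, so—$\ConvergedUpperBounds<\cdot>$ changing only at successful $\upperbound$-updates—a successful $\upperbound$-update of some other pair occurs inside $B$, after which $(s,a)$ stays non-converged until its own next successful $\upperbound$-update. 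Consequently $B$'s own update at the end either already succeeds ($B$ winning) or fails without altering $(s,a)$'s convergence status, so the next block $B'$ of $(s,a)$ begins with $(s,a)$ non-converged, and \cref{asm:dql_no_ec:converged_successful_update} makes $B'$ winning with $B$ its predecessor.

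It then remains to count: each winning block and each predecessor block contributes at most $\delay$ visits, and each winning block has at most one predecessor of its own pair, so the number of visits to non-$\upperbound$-converged pairs is at most $2\delay$ times the number of successful $\upperbound$-updates, hence at most $2\delay\cdot\frac{\cardinality{\Actions}}{\updatestep}$. For the lower-bound statement I would restrict attention to the suffix of the execution after the last $\upperbound$-update: there $\strategy_\algostep$ is frozen, $\ConvergedLowerBounds<\cdot>$ changes only through successful $\lowerbound$-updates, and \cref{stm:dql_no_ec:nonconverged_bounds_monotonic,stm:dql_no_ec:unsuccessful_update_means_converged,asm:dql_no_ec:converged_successful_update} all apply to the lower bounds, so the same argument goes through, using that there are at most $\frac{\cardinality{\Actions}}{\updatestep}$ successful $\lowerbound$-updates, again by \cref{stm:dql_no_ec:successful_update_count}.

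The hard part will be the bookkeeping behind the core claim. Because a successful update resets all $\learn$ flags but leaves the $\visitcount$ counters untouched, a block of one pair can straddle a successful update of another pair, which is exactly what creates the "predecessor block" case; handling it requires pinning down the convergence status of $(s,a)$ precisely at the first and at the last visit of each block, and in particular verifying that a \emph{failed} $(s,a)$-update occurring while $(s,a)$ is non-converged cannot itself turn $(s,a)$ converged. All the probabilistic content has already been isolated into \cref{asm:dql_no_ec:converged_successful_update} and \cref{stm:dql_no_ec:unsuccessful_update_means_converged}, so this remaining part is purely combinatorial.
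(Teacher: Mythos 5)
Your proof is correct and takes essentially the same route as the paper's: show that any visit to a non-$\upperbound$-converged pair forces a successful $\upperbound$-update of that pair within at most $2\delay$ further visits to it (via the same two-case split on whether the pair is already non-converged at the start of the current batch of $\delay$ visits, or became non-converged because another pair's successful update intervened and reset the $\learn$ flag), and then charge visits to the at most $\frac{\cardinality{\Actions}}{\updatestep}$ successful updates, handling $\lowerbound$ on the suffix where $\upperbound$ is frozen. Your block bookkeeping and the up-front observation that a non-converged visit cannot have $\learn^\upperbound = \learnno$ (which makes the paper's first case vacuous rather than a separate $\delay$-visit case) are presentational refinements, and the unfinished-trailing-block caveat you flag is equally present, and equally harmless, in the paper's own counting.
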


\begin{proof}
	We show that whenever a state-action pair $(s, a)$ is not $\upperbound$-converged at step $\algostep$, then in at most $2 \delay$ more visits to $(s, a)$ a successful $\upperbound$-update will occur.
	Assume that $(s, a)$ is visited at step $\algostep$ and it is not $\upperbound$-converged, i.e.\ $(s, a) \notin \ConvergedUpperBounds<\algostep>$.
	We distinguish two cases.
	\begin{enumerate}
		\item $\learn^\upperbound_\algostep(s, a) = \learnno$:
		This implies that the last attempted $\upperbound$-update of $(s, a)$ was not successful.
		Let $\algostep'$ be the step of this attempt, $\algostep' < \algostep$.
		We have $\learn^\upperbound_{\algostep' + 1}(s, a) = \learnno$.
		By \cref{stm:dql_no_ec:unsuccessful_update_means_converged}, we have that $(s, a) \in \ConvergedUpperBounds<\algostep'+1>$.
		Since we assumed $(s, a) \notin \ConvergedUpperBounds<\algostep>$, there was a successful update of some $\upperbound$-value between $\algostep'$ and $\algostep$, otherwise we would have $\ConvergedUpperBounds<\algostep'+1> = \ConvergedUpperBounds<\algostep>$.
		Consequently, we have $\learn^\upperbound_{\algostep+1}(s, a) = \learnyes$.
		By \cref{asm:dql_no_ec:converged_successful_update} the next attempted $\upperbound$-update of $(s, a)$ will be successful.
		This attempt will occur after $\delay$ more visits to $(s, a)$.

		\item $\learn^\upperbound_\algostep(s, a) \neq \learnno$:
		By construction of the algorithm, we have that in at most $\delay - 1$ more visits to $(s, a)$, an $\upperbound$-update of $(s, a)$ will be attempted.
		Suppose this attempt takes place at step $\algostep'$, $\algostep' \geq \algostep$ and the most $\delay$ recent visits to $(s, a)$ prior to $\algostep'$ happened at steps $k_1 < k_2 < \ldots < k_\delay = \algostep'$.
		Note that we do not necessarily have that $\algostep = k_1$ or $\algostep = k_\delay$, but surely $\algostep \in \{k_1, \dots, k_\delay\}$.
		If the $\upperbound$-update at step $\algostep'$ succeeds, there is nothing to prove, hence assume that this update fails.
		There are two possibilities:
		\begin{enumerate}
			\item
			If $(s, a)$ is not $\upperbound$-converged at step $k_1$, then by \cref{asm:dql_no_ec:converged_successful_update} the $\upperbound$-update at step $\algostep'$ will be successful, contradicting the assumption.

			\item
			If instead $(s, a)$ is $\upperbound$-converged at step $k_1$, we have that $\ConvergedUpperBounds<k_1> \neq \ConvergedUpperBounds<\algostep>$, since we assumed that $(s, a) \notin \ConvergedUpperBounds<\algostep>$.
			Consequently, there was a successful $\upperbound$-update of some other state-action pair at some step $\algostep''$ with $k_1 < \algostep'' \leq \algostep$ and thus $\learn^\upperbound_{\algostep'' + 1}(s, a) = \learnyes$.
			Moreover, we necessarily have that no $\upperbound$-update of $(s, a)$ is attempted after $\algostep''$.
			Together, we have that $\learn^\upperbound_{\algostep' + 1}(s, a) = \learnonce$ even though the attempted $\upperbound$-update at step $\algostep'$ fails.
			By \cref{stm:dql_no_ec:nonconverged_bounds_monotonic}, we have that $(s, a) \notin \ConvergedUpperBounds<\algostep'+1>$, as $(s, a) \notin \ConvergedUpperBounds<\algostep>$ and no successful $\upperbound$-update of $(s, a)$ occurred between $\algostep$ and $\algostep'$.
			By \cref{asm:dql_no_ec:converged_successful_update} the next attempt to update $\upperbound$-value of $(s, a)$ will succeed.
		\end{enumerate}
	\end{enumerate}
	By \cref{stm:dql_no_ec:successful_update_count}, the number of successful $\upperbound$-updates is bounded by $\frac{\cardinality{\Actions}}{\updatestep}$, and by the previous arguments we have that if for some $\algostep$ the pair $(s, a)$ is not $\upperbound$-converged then in at most $2 \delay$ more visits to $(s, a)$, there will be a successful update to $\upperbound(s, a)$.
	Hence, there can be at most $2 \delay \cdot \frac{\cardinality{\Actions}}{\updatestep}$ steps $\algostep$ such that the current state-action pair is not $\upperbound$-converged.
	Once no more $\upperbound$-updates take place, $\strategy_\algostep$ remains fixed and $\ConvergedLowerBounds<\algostep>$ only changes due to successful updates of the lower bounds, yielding an analogous proof for $\lowerbound$. 
\end{proof}
As a last auxiliary lemma, we show that whenever the probability of reaching a non-converged pair is low, we necessarily are close to the optimal value.
\begin{lemma} \label{stm:dql_no_ec:bounds_close}
	Assume that \cref{asm:dql_no_ec:sampled_value_close_to_real_value} holds and fix a step $\algostep$.
	Then, for every state $s \in \States$ we have that
	\begin{gather*}
		\upperbound_\algostep(s) - 3 \updatestep \cdot \cardinality{\States} p_{\min}^{-\cardinality{\States}} - \ProbabilityMDP<\MDP, s><\strategy_\algostep>[\reach \overline{\ConvergedUpperBounds<\algostep>}] \leq \\
		\ProbabilityMDP<\MDP, s><\strategy_\algostep>[\reach \{\targetstate\}] \leq \\
		\lowerbound_\algostep(s) + 3 \updatestep \cdot \cardinality{\States} p_{\min}^{-\cardinality{\States}} + \ProbabilityMDP<\MDP, s><\strategy_\algostep>[\reach \overline{\ConvergedLowerBounds<\algostep>}].
	\end{gather*}
\end{lemma}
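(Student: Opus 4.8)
The plan is to fix an execution of \cref{alg:dql_no_ec} up to step $\algostep$, so that $\upperbound_\algostep$, $\lowerbound_\algostep$, the (memoryless) strategy $\strategy_\algostep$, and the sets $\ConvergedUpperBounds<\algostep>,\ConvergedLowerBounds<\algostep>$ are all determined; the only remaining randomness is that of the MDP run under the fixed strategy $\strategy_\algostep$, and we reason inside the probability space of $\ProbabilityMDP<\MDP,\cdot><\strategy_\algostep>$. We may assume $s\notin\{\targetstate,\sinkstate\}$ since otherwise both inequalities are immediate. Two facts are used throughout. First, by \cref{stm:dql_no_ec:bounds_ordered} and the initialisation, all stored values lie in $[0,1]$ and $\upperbound_\algostep(\targetstate)=\lowerbound_\algostep(\targetstate)=1$, $\upperbound_\algostep(\sinkstate)=\lowerbound_\algostep(\sinkstate)=0$. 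Second, since by \cref{asm:mec_free} the only MECs of $\MDP$ are $\{\targetstate\}$ and $\{\sinkstate\}$, the Markov chain $\MDP^{\strategy_\algostep}$ has only those two sets as BSCCs; hence (\cref{stm:mc_almost_sure_absorption}) a run under $\strategy_\algostep$ reaches $\{\targetstate,\sinkstate\}$ almost surely, and from every state the expected number of steps to hit $\{\targetstate,\sinkstate\}$ is at most $\cardinality{\States}p_{\min}^{-\cardinality{\States}}$: a shortest such path uses at most $\cardinality{\States}$ edges each of probability at least $p_{\min}$, so in each block of $\cardinality{\States}$ steps $\{\targetstate,\sinkstate\}$ is hit with probability at least $p_{\min}^{\cardinality{\States}}$, and a geometric tail estimate gives the claim.

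For the upper inequality, let $(s_1,a_1)(s_2,a_2)\cdots$ be the random path generated by $\strategy_\algostep$ from $s_1=s$, write $\mathcal{F}_i$ for the $\sigma$-algebra generated by its first $i$ pairs, and let $\tau$ be the first index $i$ with $s_i\in\{\targetstate,\sinkstate\}$ or $(s_i,a_i)\notin\ConvergedUpperBounds<\algostep>$. Then $\tau$ is at most the absorption time into $\{\targetstate,\sinkstate\}$, so $\tau<\infty$ a.s.\ and $\Expectation[\tau]\leq\cardinality{\States}p_{\min}^{-\cardinality{\States}}$. Because $\strategy_\algostep$ plays only $\upperbound_\algostep$-maximal actions, $\upperbound_\algostep(s_i,a_i)=\upperbound_\algostep(s_i)$, and by \cref{stm:dql_no_ec:expected_sum_strategy} the function $\ExpectedSumStrat{\strategy_\algostep}{\upperbound_\algostep}$ equals $\upperbound_\algostep$ on states; so $\Expectation*[\upperbound_\algostep(s_{i+1},a_{i+1})\mid\mathcal{F}_i]=\ExpectedSumMDP{\mdptransitions}{s_i}{a_i}{\ExpectedSumStrat{\strategy_\algostep}{\upperbound_\algostep}}$. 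For $i<\tau$ we have $(s_i,a_i)\in\ConvergedUpperBounds<\algostep>$, so the definition of $\ConvergedUpperBounds<\algostep>$ gives $\upperbound_\algostep(s_i,a_i)\leq\Expectation*[\upperbound_\algostep(s_{i+1},a_{i+1})\mid\mathcal{F}_i]+3\updatestep$, i.e.\ $Z_i\coloneqq\upperbound_\algostep(s_i,a_i)+3\updatestep(i-1)$ is a submartingale on $\{i\leq\tau\}$. The stopped process is dominated by the integrable variable $1+3\updatestep\tau$, so optional stopping gives $\upperbound_\algostep(s)=\Expectation[Z_1]\leq\Expectation[Z_\tau]=\Expectation[\upperbound_\algostep(s_\tau,a_\tau)]+3\updatestep(\Expectation[\tau]-1)$, where $\Expectation[Z_1]=\upperbound_\algostep(s)$ is again \cref{stm:dql_no_ec:expected_sum_strategy}. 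Finally $\upperbound_\algostep(s_\tau,a_\tau)$ is $1$ if $s_\tau=\targetstate$, is $0$ if $s_\tau=\sinkstate$, and is at most $1$ otherwise, in which case $(s_\tau,a_\tau)\notin\ConvergedUpperBounds<\algostep>$ so the path meets $\overline{\ConvergedUpperBounds<\algostep>}$; together with $\{s_\tau=\targetstate\}\subseteq\reach\{\targetstate\}$ this yields $\upperbound_\algostep(s)\leq\ProbabilityMDP<\MDP,s><\strategy_\algostep>[\reach\{\targetstate\}]+\ProbabilityMDP<\MDP,s><\strategy_\algostep>[\reach\overline{\ConvergedUpperBounds<\algostep>}]+3\updatestep\cardinality{\States}p_{\min}^{-\cardinality{\States}}$, which rearranges to the first inequality.

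For the lower inequality the argument is dual, with one twist: $\strategy_\algostep$ maximises the \emph{upper} bounds, not the lower ones, so one cannot simply Bellman-iterate $\lowerbound_\algostep$ over states. Instead I would work with $g\coloneqq\ExpectedSumStrat{\strategy_\algostep}{\lowerbound_\algostep}$, which by \cref{stm:dql_no_ec:expected_sum_strategy} satisfies $g\leq\lowerbound_\algostep$ on states and $g(\targetstate)=1$, $g(\sinkstate)=0$, and let $\sigma$ be the first index with $s_i\in\{\targetstate,\sinkstate\}$ or $(s_i,a_i)\notin\ConvergedLowerBounds<\algostep>$. Since $\Expectation*[\lowerbound_\algostep(s_{i+1},a_{i+1})\mid\mathcal{F}_i]=\ExpectedSumMDP{\mdptransitions}{s_i}{a_i}{g}$ and for $i<\sigma$ the definition of $\ConvergedLowerBounds<\algostep>$ gives $\lowerbound_\algostep(s_i,a_i)\geq\ExpectedSumMDP{\mdptransitions}{s_i}{a_i}{g}-3\updatestep$, the process $Y_i\coloneqq\lowerbound_\algostep(s_i,a_i)-3\updatestep(i-1)$ is a supermartingale on $\{i\leq\sigma\}$. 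Optional stopping gives $g(s)=\Expectation[Y_1]\geq\Expectation[Y_\sigma]\geq\Expectation[\lowerbound_\algostep(s_\sigma,a_\sigma)]-3\updatestep\cardinality{\States}p_{\min}^{-\cardinality{\States}}$, and $\Expectation[\lowerbound_\algostep(s_\sigma,a_\sigma)]\geq\ProbabilityMDP<\MDP,s><\strategy_\algostep>[s_\sigma=\targetstate]$ (the value is $1$, $0$, or $\geq0$ in the three cases). A path reaching $\targetstate$ either has $s_\sigma=\targetstate$ or, as $\sinkstate$ is absorbing, must have stopped $\sigma$ because $(s_\sigma,a_\sigma)\notin\ConvergedLowerBounds<\algostep>$, so $\ProbabilityMDP<\MDP,s><\strategy_\algostep>[s_\sigma=\targetstate]\geq\ProbabilityMDP<\MDP,s><\strategy_\algostep>[\reach\{\targetstate\}]-\ProbabilityMDP<\MDP,s><\strategy_\algostep>[\reach\overline{\ConvergedLowerBounds<\algostep>}]$; combining with $g(s)\leq\lowerbound_\algostep(s)$ gives the second inequality.

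I expect the main obstacle to be precisely this lower-bound case: because $\strategy_\algostep$ is not $\lowerbound$-optimal, one must carry out the whole estimate at the level of state-action pairs, with the stopping times defined \enquote{after the action is drawn}, and interpose $g=\ExpectedSumStrat{\strategy_\algostep}{\lowerbound_\algostep}$, so that the quantity appearing in the bound is the probability of traversing a genuinely non-$\lowerbound$-converged \emph{pair} and not merely of entering a state with some non-converged action (the state-level version would be strictly weaker and would not give the asserted bound). The only other slightly delicate point, justifying optional stopping despite the linearly growing drift $3\updatestep\,i$, is handled by the finite expected absorption time $\Expectation[\tau]\leq\cardinality{\States}p_{\min}^{-\cardinality{\States}}$ together with dominated convergence; the mixing estimate and the boundary conditions are routine.
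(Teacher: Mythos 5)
Your proof is correct, but it takes a genuinely different route from the paper. The paper derives \cref{stm:dql_no_ec:bounds_close} by two applications of the auxiliary \cref{stm:assignment_mdp_value}: it builds a perturbed MDP $\MDP'$ in which every non-converged pair is redirected to $\targetstate$/$\sinkstate$ with probability $X(s,a)$ resp.\ $1-X(s,a)$, shows via uniqueness of the associated Bellman fixed point (an $\cardinality{\States}$-step contraction argument, \cref{stm:bellman_unique_solution}) that $\ExpectedSumStrat{\strategy_\algostep}{X}$ and the reachability value of $\MDP'$ differ by at most $3\updatestep\cdot\cardinality{\States}p_{\min}^{-\cardinality{\States}}$, and then transfers back to $\MDP$ paying exactly $\ProbabilityMDP<\MDP, s><\strategy_\algostep>[\reach \overline{\ConvergedUpperBounds<\algostep>}]$ (resp.\ the $\lowerbound$ analogue) for the pairs where the two MDPs disagree. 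You instead run a direct optional-stopping argument on the sampled path: $\upperbound_\algostep(s_i,a_i)+3\updatestep(i-1)$ is a submartingale up to the first hit of $\{\targetstate,\sinkstate\}\cup\overline{\ConvergedUpperBounds<\algostep>}$, the drift term is controlled by the expected absorption time $\Expectation[\tau]\leq\cardinality{\States}p_{\min}^{-\cardinality{\States}}$, and the boundary term splits into the target-hitting probability plus the probability of meeting a non-converged pair; the dual supermartingale argument, correctly routed through $g=\ExpectedSumStrat{\strategy_\algostep}{\lowerbound_\algostep}\leq\lowerbound_\algostep$ because $\strategy_\algostep$ is not $\lowerbound$-optimal, handles the lower inequality. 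Both routes produce the identical constant. What the paper's detour buys is reusability: \cref{stm:assignment_mdp_value} is invoked again, essentially verbatim, in the general-EC proofs of \cref{stm:dql:upper_bound_reachability,stm:dql:lower_bound_reachability}, where one first has to surgically remove end components before any such estimate applies. Your argument is more self-contained and probabilistically transparent (no auxiliary MDP, no fixed-point uniqueness machinery), at the cost of the integrability check for optional stopping, which you correctly discharge via $\Expectation[\tau]<\infty$ and domination by $1+3\updatestep\tau$. Your closing observation—that the estimate must be carried out at the level of state-action pairs so that the error term is the probability of traversing a non-converged \emph{pair}—matches the role of $\mathcal{K}\subseteq\States\times\stateactions$ in the paper's lemma.
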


\begin{proof}
	The central idea of this proof is to apply \cref{stm:assignment_mdp_value} twice, with $X(s, a) = \upperbound_\algostep(s, a)$ and $X(s, a) = \lowerbound_\algostep(s, a)$, respectively.

	For the first application, set $\kappa_l = -1$, $\kappa_u = 3 \updatestep$, and $\strategy = \strategy_\algostep$.
	Then, $\mathcal{K} = \ConvergedUpperBounds<\algostep>$ and
	\begin{equation}
		\ProbabilityMDP<\MDP', s><\strategy_\algostep>[\reach \{\targetstate\}] - \ProbabilityMDP<\MDP, s><\strategy_\algostep>[\reach \overline{\ConvergedUpperBounds<\algostep>}] \leq \ProbabilityMDP<\MDP, s><\strategy_\algostep>[\reach \{\targetstate\}] \label{eq:stm:dql_no_ec:bounds_close:up_probs}
	\end{equation}
	since $\MDP'$ and $\MDP$ are equivalent on $\ConvergedUpperBounds<\algostep>$.
	The lemma then yields that
	\begin{equation}
		\ExpectedSumStrat{\strategy_\algostep}{\upperbound_\algostep}(s) - \ProbabilityMDP<\MDP'_t, s><\strategy_\algostep>[\reach \{\targetstate\}] \leq 3 \updatestep \cdot \cardinality{\States} p_{\min}^{-\cardinality{\States}}. \label{eq:stm:dql_no_ec:bounds_close:up_strat}
	\end{equation}
	Recall that $\strategy_t$ is a strategy randomizing uniformly over some of the available actions in each state, hence $\delta_{\min}(\strategy)$ is at least $p_{\min}$.
	For the second application, we dually set $\kappa_l = - 3 \updatestep$, $\kappa_u = 1$, and $\strategy = \strategy_\algostep$.
	Again, we have $\mathcal{K} = \ConvergedLowerBounds<\algostep>$ and
	\begin{equation}
		\ProbabilityMDP<\MDP, s><\strategy_\algostep>[\reach \{\targetstate\}] \leq \ProbabilityMDP<\MDP', s><\strategy_\algostep>[\reach \{\targetstate\}] + \ProbabilityMDP<\MDP, s><\strategy_\algostep>[\reach \overline{\ConvergedLowerBounds<\algostep>}]. \label{eq:stm:dql_no_ec:bounds_close:low_probs}
	\end{equation}
	The lemma gives us
	\begin{equation}
		\ProbabilityMDP<\MDP'_t, s><\strategy_\algostep>[\reach \{\targetstate\}] - \ExpectedSumStrat{\strategy_\algostep}{\lowerbound_\algostep}(s) \leq 3 \updatestep \cdot \cardinality{\States} p_{\min}^{-\cardinality{\States}}. \label{eq:stm:dql_no_ec:bounds_close:low_strat}
	\end{equation}

	Now, recall that $\ExpectedSumStrat{\strategy_\algostep}{\upperbound_\algostep}(s) = \upperbound_\algostep(s)$ and $\ExpectedSumStrat{\strategy_\algostep}{\lowerbound_\algostep}(s) \leq \lowerbound_\algostep(s)$ \plabel{proof:dql_no_ec:bounds_close:expected_strat} due to \cref{stm:dql_no_ec:expected_sum_strategy}.
	Together, we have
	\begin{gather*}
		\upperbound_\algostep(s) - 3 \updatestep \cdot \cardinality{\States} p_{\min}^{-\cardinality{\States}} \overset{\ref{proof:dql_no_ec:bounds_close:expected_strat}}{=} \ExpectedSumStrat{\strategy_\algostep}{\upperbound_\algostep}(s) - 3 \updatestep \cdot \cardinality{\States} p_{\min}^{-\cardinality{\States}} \overset{\eqref{eq:stm:dql_no_ec:bounds_close:up_strat}}{\leq} \ProbabilityMDP<\MDP'_t, s><\strategy_\algostep>[\reach \{\targetstate\}], \\
		\ProbabilityMDP<\MDP', s><\strategy_\algostep>[\reach \{\targetstate\}] - \ProbabilityMDP<\MDP, s><\strategy_\algostep>[\reach \overline{\ConvergedUpperBounds<\algostep>}] \overset{\eqref{eq:stm:dql_no_ec:bounds_close:up_probs}}{\leq} \ProbabilityMDP<\MDP, s><\strategy_\algostep>[\reach \{\targetstate\}] \overset{\eqref{eq:stm:dql_no_ec:bounds_close:low_probs}}{\leq} \ProbabilityMDP<\MDP', s><\strategy_\algostep>[\reach \{\targetstate\}] + \ProbabilityMDP<\MDP, s><\strategy_\algostep>[\reach \overline{\ConvergedLowerBounds<\algostep>}], \text{ and} \\
		\ProbabilityMDP<\MDP'_t, s><\strategy_\algostep>[\reach \{\targetstate\}] \overset{\eqref{eq:stm:dql_no_ec:bounds_close:low_strat}}{\leq} 3 \updatestep \cdot \cardinality{\States} p_{\min}^{-\cardinality{\States}} + \ExpectedSumStrat{\strategy_\algostep}{\lowerbound_\algostep}(s) \overset{\ref{proof:dql_no_ec:bounds_close:expected_strat}}{\leq} 3 \updatestep \cdot \cardinality{\States} p_{\min}^{-\cardinality{\States}} + \lowerbound_\algostep(s). \qedhere
	\end{gather*}
\end{proof}
Combining all the above statements now yields the overall result.
\begin{theorem} \label{stm:dql_no_ec_correct}
	\cref{alg:dql_no_ec} terminates and yields a correct result with probability at least $1 - \delta$ after at most $\mathcal{O}(\mathrm{POLY}(\cardinality{\Actions}, p_{\min}^{-\cardinality{\States}}, \varepsilon^{-1}, \ln \delta))$ steps.
\end{theorem}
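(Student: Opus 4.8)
I would condition the entire argument on the intersection of the two \enquote{good} events of \cref{asm:dql_no_ec:sampled_value_close_to_real_value} and \cref{asm:dql_no_ec:converged_successful_update}. By \cref{stm:dql_no_ec:sampled_value_close_to_real_value_probability}, \cref{stm:dql_no_ec:converged_successful_update_probability} and a union bound this intersection has probability at least $1-\tfrac{\delta}{2}$, and on it \cref{stm:dql_no_ec:bounds_ordered} gives $\lowerbound_\algostep(\initialstate)\le\val(\initialstate)\le\upperbound_\algostep(\initialstate)$ at every step, so any run that exits the main loop is automatically $\varepsilon$-correct (the guard enforces $\upperbound_\algostep(\initialstate)-\lowerbound_\algostep(\initialstate)<\varepsilon$). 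The remaining task is therefore to show that, still on this event, the loop is left within the claimed polynomial number of steps, except on a further bad event of probability at most $\tfrac{\delta}{2}$.

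For termination I would introduce a single \emph{progress} counter on episodes: an episode is \emph{progressive} if during it either a successful $\upperbound$- or $\lowerbound$-update occurs, or some state-action pair is visited that is not $\upperbound$-converged (resp.\ not $\lowerbound$-converged, once no $\upperbound$-update can occur any more) at the instant of the visit. The total number of progressive episodes is polynomially bounded: \cref{stm:dql_no_ec:successful_update_count} bounds successful updates by $\tfrac{\cardinality{\Actions}}{\updatestep}$ each, \cref{stm:dql_no_ec:attempted_update_count} bounds attempted ones by $\updatecount$, and \cref{stm:dql_no_ec:non_converged_visit_count} bounds non-converged visits by $2\delay\cdot\tfrac{\cardinality{\Actions}}{\updatestep}$ of each kind, yielding an overall bound $M=\mathcal{O}(\delay\cardinality{\Actions}/\updatestep)$. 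The crucial coupling step is that every \emph{non-terminating} episode is progressive with probability at least $p_{\min}^{\cardinality{\States}}$: if at the start of episode $\algoepisode$ the guard still holds, then \cref{stm:dql_no_ec:bounds_close} applied at $s=\initialstate$ and step $\algostep_\algoepisode$, together with $6\updatestep\cdot\cardinality{\States}p_{\min}^{-\cardinality{\States}}<\varepsilon$ (which holds after shrinking $\updatestep$ marginally below its displayed tight value, as the paper explicitly permits), forces $\ProbabilityMDP<\MDP,\initialstate><\strategy_{\algostep_\algoepisode}>[\reach\overline{\ConvergedUpperBounds<\algostep_\algoepisode>}]+\ProbabilityMDP<\MDP,\initialstate><\strategy_{\algostep_\algoepisode}>[\reach\overline{\ConvergedLowerBounds<\algostep_\algoepisode>}]>0$, so some currently non-converged pair is reachable from $\initialstate$ under the episode's sampling strategy $\strategy_\algoepisode=\strategy_{\algostep_\algoepisode}$. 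Since $\MDP$ is MEC-free and $\strategy_\algoepisode$ randomises uniformly over available actions, any $\strategy_\algoepisode$-reachable state is reached within $\cardinality{\States}$ steps with probability at least $p_{\min}^{\cardinality{\States}}$; hence with at least this probability the episode either triggers an intervening successful update or actually visits that pair while it is still non-converged — a progressive episode in either case.

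Combining these, the number of episodes before termination is stochastically dominated by the number of independent $\mathrm{Bernoulli}(p_{\min}^{\cardinality{\States}})$ trials needed to collect $M$ successes, so a standard Chernoff/negative-binomial tail bound (one of the auxiliary estimates) shows that with probability at least $1-\tfrac{\delta}{4}$ the run stops within $N_{\max}=\mathcal{O}\big((M+\ln\tfrac{1}{\delta})\,p_{\min}^{-\cardinality{\States}}\big)$ episodes. Separately, MEC-freeness gives that from any state a terminal state is hit within $\cardinality{\States}$ steps with probability at least $p_{\min}^{\cardinality{\States}}$, so episode length has a geometric tail and, with probability at least $1-\tfrac{\delta}{4}$, none of the first $N_{\max}$ episodes is longer than $L_{\max}=\mathcal{O}\big(\cardinality{\States}\,p_{\min}^{-\cardinality{\States}}\ln\tfrac{N_{\max}}{\delta}\big)$ steps. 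The total number of steps is then at most $N_{\max}\cdot L_{\max}$; substituting the definitions of $\updatestep$, $\updatecount$ and $\delay$, each polynomial in $\cardinality{\Actions}$ and $p_{\min}^{-\cardinality{\States}}$, polynomial in $\varepsilon^{-1}$ and poly-logarithmic in $\delta^{-1}$, collapses the product into $\mathcal{O}(\mathrm{POLY}(\cardinality{\Actions},p_{\min}^{-\cardinality{\States}},\varepsilon^{-1},\ln\delta))$. A final union bound over the four failure events ($\tfrac{\delta}{4}+\tfrac{\delta}{4}$ from the two assumptions, $\tfrac{\delta}{4}$ for the episode count, $\tfrac{\delta}{4}$ for the episode lengths) keeps the total error below $\delta$.

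The step I expect to be the real obstacle is the coupling in the second paragraph. Lemmas \cref{stm:dql_no_ec:non_converged_visit_count} and \cref{stm:dql_no_ec:unsuccessful_update_means_converged} only control the $\lowerbound$-side once upper bounds have stabilised, so fusing upper- and lower-side progress into one monotone counter requires care about the order in which things settle. Moreover, justifying \enquote{a non-terminating episode is progressive w.p.\ $\ge p_{\min}^{\cardinality{\States}}$} needs careful tracking of how $\ConvergedUpperBounds<\cdot>$ and $\ConvergedLowerBounds<\cdot>$ can change \emph{within} an episode, so that a pair reachable under $\strategy_\algoepisode$ at the episode's start is guaranteed to be either still non-converged when first visited or to have been overtaken by an update, and of the borderline $6\updatestep\cdot\cardinality{\States}p_{\min}^{-\cardinality{\States}}=\varepsilon$ equality. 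Everything else — the two Chernoff-type bounds, the geometric episode-length tail, and the final substitution of constants — is routine bookkeeping.
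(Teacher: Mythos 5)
Your proposal is correct in its overall architecture and reaches the same conclusion, but it takes a genuinely different route to \emph{termination}. The paper splits the argument in two: it first proves that non-terminating executions have measure zero by contradiction, relying on the measure-theoretic \cref{stm:markov_process_repeating} to convert \enquote{$\ProbabilityMDP<\MDP,\initialstate><\strategy_\algostep>[\reach\overline{\ConvergedUpperBounds<\algostep>}]\geq\rho$ infinitely often} into \enquote{$\overline{\ConvergedUpperBounds<\algostep>}$ is visited infinitely often}, contradicting \cref{stm:dql_no_ec:non_converged_visit_count}; only then does it separately derive the step bound via Bernoulli trials and the de~Moivre--Laplace approximation. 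You instead make the argument quantitative from the start: your \enquote{progressive episode} counter, capped at $M=\mathcal{O}(\delay\cardinality{\Actions}/\updatestep)$ by \cref{stm:dql_no_ec:successful_update_count,stm:dql_no_ec:non_converged_visit_count}, combined with the per-episode progress probability $p_{\min}^{\cardinality{\States}}$ extracted from \cref{stm:dql_no_ec:bounds_close}, yields the step bound directly and subsumes termination with high probability, so \cref{stm:markov_process_repeating} is never needed. This buys a more elementary and self-contained proof; what it costs is that the almost-sure flavour of the paper's termination statement (on the good event) is replaced by a purely high-probability one, which is all the theorem actually claims. Your coupling step is sound for the upper bounds because \cref{stm:dql_no_ec:nonconverged_bounds_monotonic} makes non-$\upperbound$-convergence persistent until the pair's own successful update, and within a single episode with no successful update both converged sets are frozen, so the pair found reachable at the episode's start is still non-converged when hit.

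The one place your single fused counter does not quite close is the one you flagged yourself: a visit to a non-$\lowerbound$-converged pair \emph{before} the upper bounds have stabilised is not bounded by \cref{stm:dql_no_ec:non_converged_visit_count}, so an episode whose only progress is of that kind is not charged against $M$. The repair is the two-phase decomposition the paper uses implicitly (and states only for the upper bounds, deferring the lower bounds to \enquote{an analogous argument}): first run your counter on successful updates and non-$\upperbound$-converged visits alone until $\ProbabilityMDP<\MDP,\initialstate><\strategy_\algoepisode>[\reach\overline{\ConvergedUpperBounds<\cdot>}]=0$, then restart it for the lower-bound side, at the price of a factor $2$ in $N_{\max}$. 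With that refinement, and with $\updatestep$ taken strictly below its displayed tight value so that $6\updatestep\cardinality{\States}p_{\min}^{-\cardinality{\States}}<\varepsilon$ (a slack the paper explicitly grants and its own proof also needs), your argument goes through.
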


\begin{proof}
	We only consider executions where \cref{asm:dql_no_ec:sampled_value_close_to_real_value,asm:dql_no_ec:converged_successful_update} hold.
	By \cref{stm:dql_no_ec:sampled_value_close_to_real_value_probability,stm:dql_no_ec:converged_successful_update_probability} together with the union bound, this happens with probability at least $1 - \frac{\delta}{2}$.

	Now, observe that if the algorithm terminates at some step $\algostep$, we have that $\upperbound_\algostep(\initialstate) - \lowerbound_\algostep(\initialstate) < \varepsilon$ by definition.
	With \cref{stm:dql_no_ec:bounds_ordered}, we get $\lowerbound_\algostep(\initialstate) \leq \val(\initialstate) \leq \upperbound_\algostep(\initialstate)$.
    Reordering yields the result.

	We show by contradiction that the algorithm terminates for almost all considered executions.
	Thus, assume that the execution does not halt with non-zero probability.
	Since the MDP $\MDP$ satisfies \cref{asm:mec_free}, almost all episodes eventually visit either $\targetstate$ or $\sinkstate$ due to \cref{stm:mdp_almost_sure_absorption} and thus are of finite length.
	Thus, almost all executions for which the algorithm does not terminate comprise infinitely many episodes.
	We restrict our attention to only those executions.

	Recall that due to \cref{stm:dql_no_ec:attempted_update_count}, there are only finitely many attempted updates on almost all considered executions.
	Consequently, on these executions the algorithm eventually does not change $\upperbound$, since no successful updates can occur from some step $\algostep$ onwards.
	This means that all following samples are obtained by sampling according to the strategy $\strategy_\algostep$.
	Note that both the time of convergence and the actual strategy $\strategy_\algostep$ depends on the execution $\algorithmexecution$.
	Thus, we need to employ \cref{stm:markov_process_repeating}---the algorithm clearly qualifies as Markov process, since its evolution only depends on its current valuations.
	More precisely, it is not difficult to see that the whole execution of the algorithm (with fixed inputs) can be modelled as a (very unwieldy) countable Markov chain, showing that the considered properties are measurable.
	In particular, they are reachability objectives on this induced Markov chain.

	Let us now consider the set of executions for which the upper bounds eventually converge and moreover $\ProbabilityMDP<\MDP, \initialstate><\strategy_\algostep>[\reach \setcomplement{\ConvergedUpperBounds<\algostep>}] \geq \rho > 0$ infinitely often.
	Assume that this set of executions has a non-zero measure.
	By \cref{stm:markov_process_repeating}, on almost all of these executions $\setcomplement{\ConvergedUpperBounds<\algostep>}$ is also reached infinitely often, contradicting \cref{stm:dql_no_ec:non_converged_visit_count}.
	For the lower bounds, we can prove a completely analogous statement.
	Consequently, $\ProbabilityMDP<\MDP, \initialstate><\strategy_{\algostep}>[\reach \setcomplement{\ConvergedUpperBounds<\algostep>}] \to 0$ and $\ProbabilityMDP<\MDP, \initialstate><\strategy_{\algostep}>[\reach \setcomplement{\ConvergedLowerBounds<\algostep>}] \to 0$ on almost all considered executions for $\algostep \to \infty$.

	Inserting the definition of $\updatestep$, we have for a sufficiently large step $\algostep$ that
	\begin{equation*}
		\upperbound_\algostep(\initialstate) - \frac{\varepsilon}{2} < \upperbound_\algostep(\initialstate) - 3 \updatestep \cdot \cardinality{\States} p_{\min}^{-\cardinality{\States}} - \ProbabilityMDP<\MDP, \initialstate><\strategy_\algostep>[\reach \overline{\ConvergedUpperBounds<\algostep>}]
	\end{equation*}
	and dually
	\begin{equation*}
		\lowerbound_\algostep(\initialstate) + 3 \updatestep \cdot \cardinality{\States} p_{\min}^{-\cardinality{\States}} + \ProbabilityMDP<\MDP, \initialstate><\strategy_\algostep>[\reach \overline{\ConvergedLowerBounds<\algostep>}] < \lowerbound_\algostep(\initialstate) + \frac{\varepsilon}{2}
	\end{equation*}
	for all considered executions.
	Thus, by \cref{stm:dql_no_ec:bounds_close}, we have
	\begin{equation*}
		\upperbound_\algostep(\initialstate) - \frac{\varepsilon}{2} < \ProbabilityMDP<\MDP, \initialstate><\strategy_\algostep>[\reach \{\targetstate\}] < \lowerbound_\algostep(\initialstate) + \frac{\varepsilon}{2},
	\end{equation*}
	i.e.\ $\upperbound_\algostep(\initialstate) - \lowerbound_\algostep(\initialstate) < \varepsilon$, contradicting the assumption.

	We have proven that the result is approximately correct with probability $1 - \frac{\delta}{2}$.
	Now, we additionally need to prove the step bound.
	To this end, we first bound the number of sampled paths and then bound the length of each path.
	Central to the following proof is \cref{stm:dql_no_ec:non_converged_visit_count}, bounding the number of visits to non-converged state-action pairs.
	First, we treat the upper bounds.
	Observe that the probability of visiting a non-$\upperbound$-converged state-action pair either is $0$ or at least $p_{\min}^{\cardinality{\States}}$ (due to \cref{stm:markov_chain_minimum_reachability}).
	Moreover, while this probability may fluctuate, once it reaches $0$ it remains at $0$, since then the sampling strategy does not change and all pairs reachable under this strategy are $\upperbound$-converged.
	So, in the worst case, the probability of reaching such a pair is exactly $p_{\min}^{\cardinality{\States}}$ until they are visited often enough.
	We model this process as a series of Bernoulli trials $X_i$, equalling $1$ if at least one $\upperbound$-update happens while sampling the $i$-th path.\footnote{We deliberately use $i$ instead of $\algoepisode$ to emphasize that $X_i$ does not operate on the probability space of the algorithm $(\algorithmspace, \algorithmsigma, \ProbabilityAlgo)$.
	Instead, they represent a crude under-approximation to allow for a feasible analysis.}
	While the exact probabilities are not independent, they are always at least as large as the success probability $p \coloneqq p_{\min}^{\cardinality{\States}}$ of these trials (or $0$ if all reachable pairs are $\upperbound$-converged).
	Hence, we approximate the number of trials we need to perform until we observe at least $c \coloneqq 2 \delay \cdot \frac{\cardinality{\Actions}}{\updatestep}$ successes with high probability---then, all upper bounds necessarily are converged by \cref{stm:dql_no_ec:non_converged_visit_count}.
	Now, we are essentially dealing with a binomially distributed variable $X_n = \sum_{i=1}^n X_i$ and want to find an $n$ such that $\Probability[X_n \geq c] \geq 1 - \frac{\delta}{4}$.
	Since we are interested in the limit behaviour, we can apply the de~Moivre--Laplace theorem, allowing us to replace this binomial distribution with an appropriate normal distribution.
	Thus, we obtain
	\begin{equation*}
		\Probability[X_n \geq c] \approx 1 - \Phi \left( \frac{c - np}{\sqrt{np(1-p)}} \right),
	\end{equation*}
	and rearranging yields
	\begin{equation*}
		n^{-\frac{1}{2}} (c - np) \approx \Phi^{-1}\left( \frac{\delta}{4} \right) \cdot \sqrt{p (1 - p)}.
	\end{equation*}
	For readability, we set $a \coloneqq \Phi^{-1}\left( \frac{\delta}{4} \right)$.
	Solving for $n$ gives us
	\begin{equation*}
		n \approx \frac{c}{p} - \frac{a}{2p} \sqrt{(1-p)^2 a^2 + 4 c (1 - p)} + \frac{(1 - p) r^2}{2p}.
	\end{equation*}
	Inserting the definitions yields that $n \in \mathcal{O}(\mathrm{POLY}(\cardinality{\Actions}, p_{\min}^{-\cardinality{\States}}, \varepsilon^{-1}, \ln \delta))$.
	This bounds the number of paths sampled by the algorithm.
	We furthermore prove that the length of all those paths is polynomial with high probability.
	To this end, we employ \cref{stm:markov_chain_finite_reach_bound}.
	Recall that sampling of a path stops once we reach one of the two special states $\targetstate$ and $\sinkstate$.
	Due to \cref{asm:mec_free}, the probability of eventually reaching them is $1$.
	Hence, $\ProbabilityMDP<\MDP, \initialstate><\strategy_\algoepisode>[\boundedreach<N> \{\targetstate, \sinkstate\}] \geq 1 - \tau$, where $N \geq \ln(\frac{2}{\tau}) \cdot \cardinality{\States} p_{\min}^{-\cardinality{\States}}$ for any sampling strategy $\strategy_\algoepisode$.
	In other words, the probability of a sampled path being longer than $N$ is at most $\tau$.
	Then, by the union bound, the probability of any of the $n$ paths being longer than $N$ is at most $n \cdot \tau$.
	By choosing $\tau = \frac{\delta}{4 n}$, this happens with probability at most $\frac{\delta}{4}$.
	Then, $\ln(\frac{2}{\tau}) = \ln(8 n) - \ln(\delta)$, i.e.\ the length of each path again is bounded by a polynomial in the input values.
	Together, we obtain the results, since polynomials are closed under multiplication. 
\end{proof}

\begin{remark}[Relation to \cite{DBLP:conf/icml/StrehlLWLL06}]
	Before we proceed to the general case, we briefly discuss how our proof structure relates to the one of \cite{DBLP:conf/icml/StrehlLWLL06} and how it can be used to derive a variant of their Theorem~1.
	Most of our proofs are quite analogous.
	For example, \cref{asm:dql_no_ec:converged_successful_update} is practically equivalent to their Assumption~A1, similarly \cref{stm:dql_no_ec:converged_successful_update_probability,stm:dql_no_ec:bounds_ordered,stm:dql_no_ec:unsuccessful_update_means_converged,stm:dql_no_ec:non_converged_visit_count} and the respective proofs correspond to their Lemma~1 to 4 (however, note the different bounds).
	Since we are dealing with unbounded reachability (assuming almost sure absorption by \cref{asm:mec_free}), the purpose of Lemma~5 corresponds to our \cref{stm:assignment_mdp_value}.

	Major differences arise in the actual proof of \cite[Theorem~1]{DBLP:conf/icml/StrehlLWLL06}.
	As we already pointed out, the Hoeffding bound is not applicable to variables indicating whether an update has occurred in a particular step due to the clear dependency.
	The related proof step aims to show that with high probability after a certain number of steps, the number of possible updates is exhausted (by virtue of \cref{stm:dql_no_ec:non_converged_visit_count}) and then bounds the deviation from the true value based on this.
	We prove a similar statement via \cref{stm:dql_no_ec:bounds_close}, connecting the probability of visiting a non-converged state-action pair to the convergence of the bounds.
	Note that the proof of \cref{stm:dql_no_ec:bounds_close} employs the auxiliary \cref{stm:assignment_mdp_value}.
\end{remark}
 	\section{Limited Information -- General Case} \label{sec:dql}

As before, MECs pose an additional challenge, since they introduce superfluous upper fixed points.
The key difference to the full information setting is that MECs cannot be directly identified.
Instead, we identify a set of state-action pairs as an end component if it occurs sufficiently often.
By bounding the probability of falsely identifying such a set as an end component, we can replicate the previous proof structure.
\subsection{Collapsing End Components with Limited Information}
Before we present the complete algorithm, we first show how we identify end components in this section.

\newcommand{\appear}{Appear}

\begin{definition}
	Let $\MDP = (\States, \Actions, \stateactions, \mdptransitions)$ be an MDP, $\infinitepath \in \Infinitepaths<\MDP>$ and $i, j \geq 0$.
	We define the state-action pairs which appear at least $i$ times on the path $\infinitepath$ during the first $j$ steps as
	\begin{equation*}
		\appear(\infinitepath, i, j) = \{(s, a) \in \States \times \stateactions \mid \cardinality{\{k \mid k \leq j \land \infinitepath^a(k) = a\}} \geq i\}.
	\end{equation*}
	We overload the definition of $\appear$ to also accept finite paths of sufficient length.
	Moreover, we also define $\appear$ for paths of Markov chains, which yields the states occurring more than $i$ times.
\end{definition}
For notational convenience, we identify the result of $\appear$ with the corresponding state-action tuple $(R, B)$ since we will use these results as candidates for end components.
With appropriate $i$ and $j$, $\appear$ is an EC with high probability.
\begin{lemma} \label{stm:dql:appear_is_ec_probability}
	Let $\MDP = (\States, \Actions, \stateactions, \mdptransitions)$ be an MDP, $\initialstate \in \States$ an initial state, $\targetset \subseteq \States$ a set of target states, and $\strategy \in \StrategiesMD<\MDP>$ a memoryless strategy on $\MDP$ such that $\ProbabilityMDP<\MDP, s><\strategy>[\reach \overline{\targetset}] = 0$ for all $s \in \targetset$, i.e.\ $\targetset$ is absorbing under $\strategy$.
	Set $\States_\strategy = \Union_{s \in \States} \support(\strategy(s))$, $\kappa = \cardinality{\States_\strategy} + 1$, and pick $i \geq \kappa$.
	Then either $\ProbabilityMDP<\MDP, \initialstate><\strategy>[\boundedreach<2i^3> T] = 1$ or
	\begin{equation*}
		\ProbabilityMDP*<\MDP, \initialstate><\strategy>[{App}_i \mid \overline{\boundedreach<2i^3> T}] \geq 1 - 2 (1 + i^2) \cdot e^{- (i - 1) \frac{\delta_{\min}(\strategy)^\kappa}{\kappa}} \cdot \delta_{\min}(\strategy)^{-\kappa},
	\end{equation*}
	where ${App}_i = \{\infinitepath \in \Infinitepaths<\MDP> \mid \appear(\infinitepath, i, 2i^3) \in \Ecs(\MDP)\}$.
\end{lemma}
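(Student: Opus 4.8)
The statement asserts that if we run a fixed memoryless strategy $\strategy$ long enough (namely $2i^3$ steps, with $i \geq \kappa = \cardinality{\States_\strategy}+1$), then conditioned on not having reached the target $T$, the set of state-action pairs visited at least $i$ times is an end component with high probability. The plan is to reason about the Markov chain $\MDP^\strategy$ induced by $\strategy$. In this chain the target $T$ is absorbing by assumption, so any run either eventually reaches $T$ or stays forever in $\States \setminus T$. By \cref{stm:mc_almost_sure_absorption} (MC almost-sure absorption), almost every run eventually settles in a single BSCC of $\MDP^\strategy$. If the unique BSCC reachable along a $T$-avoiding run is precisely the set $\appear(\infinitepath, i, 2i^3)$ (lifted back to $\MDP$ via the support of $\strategy$), then this set is an EC of $\MDP$: condition (i) of \cref{def:ec} holds because a BSCC has no outgoing transitions, and condition (ii) holds because a BSCC is strongly connected and all its actions are used by $\strategy$ (this is where we use that we only counted state-action pairs actually taken). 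So the core task is to show that, within $2i^3$ steps and conditioned on $\overline{\boundedreach<2i^3>T}$, with high probability (a) the run has entered a BSCC of $\MDP^\strategy$, and (b) it has visited every state of that BSCC at least $i$ times, and no state outside it $i$ or more times.

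**Key steps.** First I would set up the chain $\MDP^\strategy$ restricted to $\States \setminus T$ (or note that $T$ acts as a sink), and recall that every BSCC not containing $T$ has at most $\kappa - 1 = \cardinality{\States_\strategy}$ states. The minimum positive transition probability in $\MDP^\strategy$ is at least $\delta_{\min}(\strategy)$ — here $\delta_{\min}$ should be read as in \cref{stm:dql_no_ec:bounds_close}, the least probability of a step under the randomized strategy; every edge of a BSCC $C$ therefore has probability $\geq \delta_{\min}(\strategy)$ and any state of $C$ is reachable from any other within $\leq \kappa$ steps with probability $\geq \delta_{\min}(\strategy)^\kappa$. Second, I would bound the probability of the "bad" event that after the run has reached its BSCC it fails to cycle through all of $C$'s states $i$ times within the remaining budget: partition the $2i^3$ steps into consecutive blocks of length $\kappa$; in each block, independently of the past (by the Markov property), the chain visits a prescribed target state of $C$ with probability $\geq \delta_{\min}(\strategy)^\kappa / \kappa$ (a crude bound distributing the $\delta_{\min}(\strategy)^\kappa$ reachability mass over the $\kappa$ positions). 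The number of successes over the $\approx 2i^2$ available blocks stochastically dominates a binomial, and a Chernoff/union bound gives that all $\leq \kappa - 1 \leq i$ states of $C$ are each hit $\geq i$ times except with probability $\leq 2(1+i^2) e^{-(i-1)\delta_{\min}(\strategy)^\kappa/\kappa} \delta_{\min}(\strategy)^{-\kappa}$ — matching the claimed bound. Third, I would argue the run actually enters some BSCC within the budget with overwhelming probability (or, more carefully, that conditioned on $\overline{\boundedreach<2i^3>T}$ it must already be inside the unique non-$T$ BSCC reachable from its trajectory, since otherwise it would have had ample opportunity to escape toward $T$); the dichotomy "$\ProbabilityMDP<\MDP,\initialstate><\strategy>[\boundedreach<2i^3>T]=1$ or the displayed inequality" is exactly the escape hatch for the case where $T$ is reached surely. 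Fourth, I would verify that $\appear(\infinitepath, i, 2i^3)$, when it equals the BSCC (paired with the actions $\strategy$ prescribes), really is in $\Ecs(\MDP)$, using the BSCC's structure as above, and that no pair outside it is counted $i$ times — this follows because once in $C$ the run never leaves, so a state outside $C$ can be visited at most $\kappa - 1 < i$ times before absorption into $C$.

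**Main obstacle.** The delicate point is handling the conditioning on $\overline{\boundedreach<2i^3>T}$ correctly. Conditioning on not reaching the target can distort the transition probabilities of the chain (it effectively reweights toward $T$-avoiding behaviour), so one cannot naively apply the Markov property to the conditioned process. The clean way around this is to observe that conditioned on the run's BSCC being $C \subseteq \States\setminus T$, the process inside $C$ is exactly the original (unconditioned) chain restricted to $C$ — because $C$ is absorbing, $\overline{\boundedreach<2i^3>T}$ is implied by "the run is in $C$ by step $t_0$" for small $t_0$, and within $C$ there is nothing more to condition on. So I would split the event according to which BSCC the run enters and at what time, handle the (short) prefix before absorption separately (bounding its length, hence the "$+i^2$" slack and the factor $2$ in the exponent's coefficient absorb the cost of waiting to be absorbed), and then run the binomial argument on the genuinely unconditioned in-$C$ dynamics. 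Getting the block/counting bookkeeping to land exactly on the constants $2(1+i^2)$, $(i-1)$, and $\delta_{\min}(\strategy)^{-\kappa}$ will be the fiddly part, but it is routine once the conditioning is dispatched; the choice $2i^3 = 2i^2 \cdot i$ is precisely "$2i^2$ blocks, each giving one chance, need $i$ successes per state of a $\leq i$-element BSCC", which makes the Chernoff exponent come out as a deviation of $\approx i$ successes below a mean of $\approx 2i^2 \cdot \delta_{\min}(\strategy)^\kappa/\kappa \gg i$.
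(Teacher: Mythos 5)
Your overall architecture --- pass to the Markov chain induced by $\strategy$, use almost-sure absorption into BSCCs, show the run is absorbed quickly and then completes $i$ round trips through its BSCC within the $2i^3$ budget --- is the same as the paper's, but two of your steps have genuine problems. First, the conditioning. You correctly flag it as the main obstacle, but your proposed resolution (decompose by which BSCC the run enters and when, and argue that the in-BSCC dynamics are unconditioned) never explains where the factor $\delta_{\min}(\strategy)^{-\kappa}$ in the stated bound comes from; your text attributes the remaining slack to \enquote{the cost of waiting to be absorbed}, which is not it. The paper dispatches the conditioning with a one-line computation: it proves the \emph{unconditional} bound $\Probability[{App}_i] \geq 1 - 2(1+i^2)\,c^{i-1}$ with $c = e^{-\delta_{\min}(\strategy)^\kappa/\kappa}$ (obtained from \cref{stm:markov_chain_finite_reach_bound} with $N = i-1$ and $\tau = 2c^{i-1}$, not from a Chernoff bound over blocks), and then uses $\Probability[A \mid B] \geq 1 - \Probability[\overline{A}]/\Probability[B]$ together with a separate lower bound $\Probability[\overline{\boundedreach<2i^3> T}] > \delta_{\min}(\strategy)^\kappa$; establishing that lower bound requires showing that some BSCC disjoint from the target exists and is reachable within $\kappa$ steps, which is precisely where the hypothesis that $T$ is not reached surely is used. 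Without this step your plan does not produce the claimed constant, and the distortion of the pre-absorption phase under conditioning, which you would otherwise have to control, is avoided entirely.

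Second, you work in the state chain $\MDP^\strategy$, but $\appear$ counts \emph{actions} ($\infinitepath^a(k) = a$), and $\strategy$ may randomize. Visiting every state of the BSCC at least $i$ times does not imply that every supported state-action pair there is taken $i$ times, and a set of pairs from which some supported actions are missing need not be an EC (condition (i) of \cref{def:ec} can fail for successor states that retain no frequent action). The paper avoids this by building the chain $\MC_\strategy$ whose \emph{states are the actions} in $\Union_{s \in \States}\support(\strategy(s))$, so that BSCCs of $\MC_\strategy$ are literally sets of actions forming ECs of $\MDP$; this is also why $\kappa = \cardinality{\States_\strategy}+1$ counts actions rather than states. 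Relatedly, your justification that states outside the BSCC appear fewer than $i$ times (\enquote{a state outside $C$ can be visited at most $\kappa-1$ times before absorption}) is false as stated --- a transient state can be revisited arbitrarily often before absorption; the correct argument is that absorption happens within $i-1$ steps with high probability, and on that event no transient state can be seen $i$ times.
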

Informally, this lemma shows that, when sampling according to a memoryless strategy, paths of sufficient length either end up in an already known set of ECs or frequently reappearing state-action pairs also form an EC with high probability.
\begin{proof}
	If $\ProbabilityMDP<\MDP, \initialstate><\strategy>[\boundedreach<2i^3> T] = 1$, there is nothing to prove, hence we assume the opposite, i.e.\ that $\ProbabilityMDP<\MDP, \initialstate><\strategy>[\overline{\boundedreach<2i^3> T}] > 0$ \plabel{proof:dql:appear_is_ec_probability:assumption}.

	Given an MDP, a designated initial state $\initialstate$, and a memoryless strategy, we can construct a finite state Markov chain which exactly captures the behaviour of the MDP under the given strategy.
	We define the Markov chain $\MC_\strategy = (\{\initialstate\} \union \States_\strategy, \mctransitions_\strategy)$, where $\mctransitions_\strategy$ is defined as
	\begin{align*}
		\mctransitions(\initialstate, a) & = \strategy(\initialstate, a) \text{ for $a \in \support(\strategy(\initialstate))$} \\
		\mctransitions(a, a') & = \mdptransitions(\actionstate<\MDP>(a), a, \actionstate<\MDP>(a')) \cdot \strategy(\actionstate<\MDP>(a'), a').
	\end{align*}
	In other words, $\mctransitions(a, a')$ equals the probability of reaching some state $s'$ after playing action $a$ and then continuing with action $a'$.
	Recall that each action is tied to a unique state.
	As such, the paths in $\MC_\strategy$ exactly correspond to the paths in $\MDP$ following $\strategy$.
	Furthermore, it is easy to see that each BSCC of $\MC_\strategy$ corresponds to an end component in $\MDP$.
	Observe that, by definition, $\kappa$ equals the number of states in $\MC_\strategy$ \plabel{proof:dql:appear_is_ec_probability:kappa_is_states} and $\delta_{\min}(\strategy)$ equals the smallest positive transition probability in $\MC_\strategy$ \plabel{proof:dql:appear_is_ec_probability:delta_min}.
	For readability, we define $c = \exp(- \delta_{\min}(\strategy)^\kappa / \kappa)$.

	Let ${App}_{i, \strategy} \subseteq \Infinitepaths<\MC_\strategy>$ be the event corresponding to ${App}_i$ in the Markov chain $\MC_\strategy$.
	Informally, ${App}_{i, \strategy}$ denotes the set of all (infinite) paths $\infinitepath$ which within $2 i^3$ steps (i)~visit all states of some BSCC at least $i$ times, and (ii)~all other states at most $i - 1$ times, i.e.\ all paths such that $\appear(\infinitepath, i, 2i^3)$ is a BSCC of $\MC_\strategy$.
	We now show that
	\begin{equation*}
		\ProbabilityMC<\MC_\strategy, \initialstate>[{App}_{i, \strategy} \mid \overline{\boundedreach<2 i^3> \targetset}] \geq 1 - 2 c^i i^3 \cdot \delta_{\min}(\strategy)^{-\kappa},
	\end{equation*}
	i.e.\ the probability of ${App}_{i, \strategy}$ given that $\targetset$ is not reached within $2 i^3$ steps is at least $1 - 2 c^i i^3 \cdot \delta_{\min}(\strategy)^{-\kappa}$.
	Since the paths of $\MC_\strategy$ exactly correspond to paths obtained in $\MDP$ by following the strategy $\strategy$, this proves the claim.

	First, we show that \plabel{proof:dql:appear_is_ec_probability:absolute_bound}
	\begin{equation*}
		\ProbabilityMC*<\MC_\strategy, \initialstate>[{App}_{i, \strategy}] \geq 1 - 2 (1 + i^2) \cdot c^{i - 1}.
	\end{equation*}
	Let $B = \Union_{R \in \Bsccs(\MC_\strategy)} R$ be the set of all states in BSCCs of $\MC_\strategy$.
	We have that $\ProbabilityMC<\MC_\strategy, \initialstate>[\reach B] = 1$ by \cref{stm:mc_almost_sure_absorption}.
	We apply \cref{stm:markov_chain_finite_reach_bound} with $N = i - 1$ and $\tau = 2 c^{i - 1}$.
	By \ref{proof:dql:appear_is_ec_probability:kappa_is_states} and \ref{proof:dql:appear_is_ec_probability:delta_min}
	\begin{equation*}
		\cardinality{\States_\strategy} \cdot \ln\left(\frac{2}{\tau} \right) \cdot \delta_{\min}(\strategy)^{-\cardinality{\States_\strategy}} = \kappa \cdot \ln\left(\exp\left( (i-1) \cdot \frac{\delta_{\min}(\strategy)^\kappa}{\kappa}\right)\right) \cdot \delta_{\min}(\strategy)^{-\kappa} = i - 1.
	\end{equation*}
	Thus $\ProbabilityMC<\MC_\strategy, \initialstate>[\boundedreach<i - 1> B] \geq 1 - 2 c^{i - 1}$.
	In other words, an infinite path of $\MC_\strategy$ starting in $\initialstate$ does not visit a BSCC of $\MC_\strategy$ within $i - 1$ steps with probability at most $2 c^{i - 1}$.

	Now, let $R = \{s_1, \dots, s_n\} \subseteq B$ be some BSCC of $\MC_\strategy$ and fix two states $s_i, s_j \in R$.
	Since $R$ is an BSCC, we have $\ProbabilityMC<\MC_\strategy, s_i>[\reach \{s_j\}] = 1$, and we can apply \cref{stm:markov_chain_finite_reach_bound} again to obtain that $\ProbabilityMC<\MC_\strategy, s_i>[\boundedreach<i> \{s_j\}] \geq 1 - 2 c^{i - 1}$.
	Consequently, the probability of visiting all states of $R$, one after another, with at most $i - 1$ steps between visiting the respective next state, is at least $1 - n \cdot 2 c^{i - 1}$.
	Repeating this argument, with probability at least $1 - i \cdot n \cdot 2 c^{i - 1} \geq 1 - i \cdot \kappa \cdot 2 c^{i - 1}$, this round trip is successful $i$ times in a row and has a length of at most $i \cdot n \cdot (i - 1) \leq i^2 \kappa \leq i^3$.
	Using the union bound again, we get that with probability at least $1 - 2 c^{i - 1} - i \kappa \cdot 2 c^{i - 1} = 1 - 2 c^{i - 1} (1 + i \kappa) \geq 1 - 2 (1 + i^2) \cdot c^{i - 1}$ a path of length $i^3$ ends up in a BSCC within $i - 1$ steps and then visits all states of the BSCC at least $i$ times, proving \ref{proof:dql:appear_is_ec_probability:absolute_bound}.

	Let $\targetset_\strategy = \{a \in \States_\strategy \mid \actionstate<\MDP>(a) \in \targetset \}$ the states of $\MC_\strategy$ corresponding to the given state set $\targetset$.
	Recall that we assumed that $\ProbabilityMDP<\MDP, s><\strategy>[\reach \overline{\targetset}] = 0$ for $s \in \targetset$, i.e.\ $\ProbabilityMC<\MC, a>[\reach \overline{\targetset_\strategy}] = 0$ for all $a \in \targetset_\strategy$ (recall that the states of $\MC$ are actions $a$ of $\MDP$).
	Consequently, each BSCC of $\MC_\strategy$ either is contained in $\targetset_\strategy$ or disjoint from it:
	Assume that there exists a BSCC $R$ with states $a, a' \in R$ where $a \in \targetset_\strategy$ and $a' \notin \targetset_\strategy$.
	Since $R$ is a BSCC, we have $\ProbabilityMC<\MC_\strategy, a>[\reach \{a'\}] = 1$, contradicting $\ProbabilityMC<\MC_\strategy, a>[\reach \overline{\targetset_\strategy}] = 0$.

	Due to \ref{proof:dql:appear_is_ec_probability:assumption}, there exists at least one BSCC which is disjoint from $\targetset_\strategy$---otherwise any run would eventually end up in $\targetset_\strategy$.
	Let $s$ be some state in this BSCC.
	By construction, there exists a path of length at most $\kappa$ from $\initialstate$ to $s$ \ref{proof:dql:appear_is_ec_probability:kappa_is_states}, and thus the probability of reaching such a BSCC is bounded from below by $\delta_{\min}(\strategy)^\kappa$, using \ref{proof:dql:appear_is_ec_probability:delta_min}.
	Formally, we have \plabel{proof:dql:appear_is_ec_probability:not_reach_probability}
	\begin{equation*}
		\ProbabilityMC*<\MC_\strategy, \initialstate>[\overline{\boundedreach<2 i^3> \targetset_\strategy}] > \delta_{\min}(\strategy)^\kappa.
	\end{equation*}
	Finally, we obtain
	\begin{align*}
		& \ProbabilityMC*<\MC_\strategy, \initialstate>[{App}_{i, \strategy} \mid \overline{\boundedreach<2 i^3> \targetset}] \overset{\ref{proof:dql:appear_is_ec_probability:assumption}}{~=~} \ProbabilityMC*<\MC_\strategy, \initialstate>[{App}_{i, \strategy} \intersection \overline{\boundedreach<2 i^3> \targetset}] / \ProbabilityMC*<\MC_\strategy, \initialstate>[\overline{\boundedreach<2 i^3> \targetset}] \\
		& \qquad ~=~ \ProbabilityMC*<\MC_\strategy, \initialstate>[{App}_{i, \strategy} \setminus \boundedreach<2 i^3> \targetset] / \ProbabilityMC*<\MC_\strategy, \initialstate>[\overline{\boundedreach<2 i^3> \targetset}] \\
		& \qquad ~=~ (\ProbabilityMC*<\MC_\strategy, \initialstate>[{App}_{i, \strategy}] - \ProbabilityMC*<\MC_\strategy, \initialstate>[{App}_{i, \strategy} \intersection \boundedreach<2 i^3> \targetset]) / \ProbabilityMC*<\MC_\strategy, \initialstate>[\overline{\boundedreach<2 i^3> \targetset}] \\
		& \qquad ~\geq~ (\ProbabilityMC*<\MC_\strategy, \initialstate>[{App}_{i, \strategy}] - \ProbabilityMC*<\MC_\strategy, \initialstate>[\boundedreach<2 i^3> \targetset]) / \ProbabilityMC*<\MC_\strategy, \initialstate>[\overline{\boundedreach<2 i^3> \targetset}] \\
		& \qquad \overset{\mathclap{\ref{proof:dql:appear_is_ec_probability:absolute_bound}}}{~\geq~} (1 - 2 c^{i - 1} (1 + i^2) - (1 - \ProbabilityMC*<\MC_\strategy, \initialstate>[\overline{\boundedreach<2 i^3> \targetset}])) / \ProbabilityMC*<\MC_\strategy, \initialstate>[\overline{\boundedreach<2 i^3> \targetset}] \\
		& \qquad ~=~ (\ProbabilityMC*<\MC_\strategy, \initialstate>[\overline{\boundedreach<2 i^3> \targetset}] - 2 c^{i - 1} (1 + i^2)) / \ProbabilityMC*<\MC_\strategy, \initialstate>[\overline{\boundedreach<2 i^3> \targetset}] \\
		& \qquad ~=~ 1 - (2 c^{i - 1} (1 + i^2)) / \ProbabilityMC*<\MC_\strategy, \initialstate>[\overline{\boundedreach<2 i^3> \targetset}] \\
		& \qquad \overset{\mathclap{\ref{proof:dql:appear_is_ec_probability:not_reach_probability}}}{~\geq~} 1 - 2 (1 + i^2) \cdot c^{i - 1} \cdot \delta_{\min}(\strategy)^{-\kappa}. \qedhere 
	\end{align*}
\end{proof}
\subsection{The General DQL Algorithm}
\newcommand{\algolimit}{\mathsf{i}}
\newcommand{\algocollapsedstates}{\mathsf{collapsed}}
\newcommand{\algozerostates}{\mathsf{Z}}
\newcommand{\algorepresentative}{\mathsf{rep}}

 	\begin{algorithm}[!tp]
     \setcounter{AlgoLine}{0} 
  		\caption{The DQL learning algorithm for general MDPs.} \label{alg:dql}
  		\KwIn{Inputs as given in \cref{def:limited_information}, precision $\varepsilon$, and confidence $\delta$.}
  		\KwOut{Values $(l, u)$ which are $\varepsilon$-optimal, i.e., $\val(\initialstate) \in [l, u]$ and $0 \leq u - l < \varepsilon$, with probability at least $1 - \delta$.}
  		
  		Initialize all variables as in \cref{alg:dql_no_ec}\;
  		$\algoepisode \gets 1$, $\algostep \gets 1$\;
  		\lFor{$s \in \States$}{
  			$\algocollapsedstates_\algoepisode(s) \gets s$
  		}
  		$\States_1 \gets \States$, $\stateactions_1 \gets \stateactions$, $\targetset_1 \gets \targetset$, $\algozerostates_1 \gets \emptyset$\;
  		
  		\While{$\upperbound_\algostep(\initialstate) - \lowerbound_\algostep(\initialstate) \geq \varepsilon$}{
  			\lFor{$s \in \States_\algoepisode$}{
  				$\umaxactions_\algoepisode(s) \gets \argmax_{a \in \stateactions_\algoepisode(s)} \upperbound_\algostep(a)$
  			}
  			
  			$s_\algostep \gets \initialstate$, $\algostep_\algoepisode \gets \algostep$\;
  			
  			\While{$s_\algostep \notin \targetset_\algoepisode \cup \algozerostates_\algoepisode$ \textbf{and} $\algostep - \algostep_\algoepisode < 2 \algolimit^3$\label{alg:dql:line:sample_loop}}{
  				$a_\algostep \gets$ sampled uniformly from $\umaxactions_\algoepisode(s_\algostep)$\tcp*{Pick an action}
  				$s_\algostep'' \gets \successor(a_\algostep)$\label{alg:dql:line:successor_oracle}\tcp*{Query successor oracle}
  				$s_\algostep' \gets \algorepresentative_\algoepisode(s_\algostep'')$\label{alg:dql:line:resolve_rep}\;
  				Perform updates as in \cref{alg:dql_no_ec} \label{alg:dql:line:update_bounds}\tcp*{Update Bounds}
  				$s_{\algostep + 1} \gets s_\algostep'$, $\algostep \gets \algostep + 1$\;
  			}
  			
  			\If(\tcp*[f]{Update ECs}){$\algostep - \algostep_\algoepisode \geq 2 \algolimit^3$}{
  				$(\mathsf{R}, \mathsf{B}) \gets \appear(s_{\algostep_\algoepisode} a_{\algostep_\algoepisode} s_{\algostep_\algoepisode + 1} \dots a_{\algostep - 1} s_\algostep, \algolimit, 2\algolimit^3)$\label{alg:dql:line:appear}\;
  				$C \gets \bigcup_{s \in \mathsf{R}} \stateactions_{\algoepisode}(s) \setminus B$\;
  				
  				\If{$\mathsf{B} \neq \emptyset$}{
  					\If{$\targetset_\algoepisode \cap \mathsf{R} \neq \emptyset$}{
  						$\targetset_{\algoepisode + 1} \gets \targetset_\algoepisode \cup \mathsf{R}$\;
  						\lFor{$a \in \mathsf{B}$}{
  							$\lowerbound_\algostep(a) \gets 1$\label{alg:dql:line:set_lower_one}
  						}
  					}
  					\ElseIf{$C = \emptyset$}{
  						$\algozerostates_{\algoepisode + 1} \gets \algozerostates_\algoepisode \cup \mathsf{R}$\;
  						\lFor{$a \in \mathsf{B}$}{
  							$\upperbound_\algostep(a) \gets 0$ \label{alg:dql:line:set_upper_zero}
  						}
  					}
  					\Else{
  						$\States_{\algoepisode + 1} \gets (\States_\algoepisode \setminus R) \cup \{s_{(R, B)}\}$\label{alg:dql:line:add_rep_state}\;
  						$\stateactions_{\algoepisode + 1}(s_{(R, B)}) \gets C$\label{alg:dql:line:set_representative_actions}\;
  						\lFor{$s \in \mathsf{R} \cup \{ s_{(\mathsf{R}, \mathsf{B})} \}$}{
  							$\algocollapsedstates_{\algoepisode + 1}(s) \gets s_{(\mathsf{R}, \mathsf{B})}$
  						}
  						\lIf{$\initialstate \in \mathsf{R}$}{
  							$\initialstate \gets s_{(\mathsf{R}, \mathsf{B})}$\label{alg:dql:line:update_initial}
  						}
  					}
  				}
  			}
  			
  			$\algoepisode \gets \algoepisode + 1$\;
  		}  		
  		\Return $(\lowerbound_\algostep(\initialstate), \upperbound_\algostep(\initialstate))$\;
  	\end{algorithm}

We define the general DQL algorithm in \cref{alg:dql}.
Essentially, the algorithm works similar to the previous \cref{alg:dql_no_ec}.
The main difference is that it further employs \cref{stm:dql:appear_is_ec_probability} to detect whether the current sample is stuck in a yet to be discovered EC.
To this end, the algorithm introduces a small set of additional auxiliary variables, necessary to track representative states similar to the collapsed MDP of \cref{sec:brtdp}.
In particular, $\algocollapsedstates_\algoepisode$ stores the representatives of each state.
Since we might discover growing ECs, this representative might be part of another already discovered EC.
Thus, we use $\algorepresentative_\algoepisode$ to resolve the current representative of a given state $s$ by repeatedly applying $\algocollapsedstates_\algoepisode$ until a fixed point is reached.
Practically, we would store $\algorepresentative_\algoepisode$ as a map, pointing each \enquote{original} state of the MDP to its current representative.
We introduce the \enquote{layered} representation through $\algocollapsedstates_\algoepisode$ only as notational convenience.
Additionally, $\algozerostates_\algoepisode$ contains all states which are part of a bottom EC without a target state.
We choose the parameter $\algolimit$, controlling the length of each sample and when to check for an EC, such that
\begin{equation}
	\cardinality{\Actions} \cdot 2 (1 + \algolimit^2) \cdot e^{- (\algolimit - 1) \frac{p_{\min}(\strategy)^{\cardinality{\States} + 1}}{\cardinality{\States} + 1}} \cdot p_{\min}(\strategy)^{-(\cardinality{\States}+1)} \leq \frac{\delta}{4} \qquad \text{and} \qquad \algolimit \geq \cardinality{\Actions}. \label{eq:dql:choice_of_limit}
\end{equation}
This technical choice becomes more apparent in the proof of \cref{stm:dql:always_collapse_ec_probability}.
Note that such an $\algolimit$ always exists since the left side of the first inequality converges to $0$ for $\algolimit \to \infty$.
Moreover, we can find such an $\algolimit$ using the values provided by the limited information setting as defined in \cref{def:limited_information}.

\begin{remark}
	In contrast to the previous sections, the domain of the bounds $\upperbound$ and $\lowerbound$ are actions instead of state-action pairs.
	This simplifies notation, since the algorithm frequently changes the state associated with an action.
\end{remark}

\begin{remark}
	We implicitly assume that we can continue sampling with an action of our choice:
	When we collapse, for example, an EC $(\mathsf{R}, \mathsf{B})$ with states $s, s' \in \mathsf{R}$ into a single representative state, we might enter the EC in state $s$ but then continue sampling with an action $a \in \stateactions(s')$.
	This is not an essential restriction:
	Upon entering an already detected EC, we can simply pause the algorithm and randomly pick actions in $\mathsf{B}$ until we reach the state enabling the next action mandated by the algorithm.
\end{remark}

\subsection{Proof of Correctness}
Now, to prove correctness of the algorithm, we again can reuse a lot of the previous reasoning.
However, we need to invest significant effort in the treatment of end components.
First of all, we again prove that the algorithm is well-defined.
\begin{lemma} \label{stm:dql:disjoint_actions}
	During all episodes, we have that $\stateactions_{\algoepisode}(s) \intersection \stateactions_{\algoepisode}(s') = \emptyset$ for all states $s, s' \in \States_\algoepisode$ with $s \neq s'$.
\end{lemma}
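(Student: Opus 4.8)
The plan is a routine induction on the episode index $\algoepisode$, whose only real content is tracking which lines of \cref{alg:dql} can alter $\States_\algoepisode$ or $\stateactions_\algoepisode$. For the base case, $\stateactions_1 = \stateactions$, and disjointness of the action sets across distinct states is exactly the global well-definedness convention on $\MDP$ from the preliminaries (each action is associated with a unique state). For the inductive step, I would first observe that the only place in the main loop that modifies the state set or available-action function is the EC-update block, and within it only the innermost \textsf{else}-branch (lines \ref{alg:dql:line:add_rep_state}--\ref{alg:dql:line:set_representative_actions}). In every other situation — the episode did not reach the step cap $2\algolimit^3$, or $\mathsf{B} = \emptyset$, or the detected EC meets $\targetset_\algoepisode$ (line \ref{alg:dql:line:set_lower_one}), or $C = \emptyset$ (line \ref{alg:dql:line:set_upper_zero}) — only bounds are touched, so $\States_{\algoepisode+1} = \States_\algoepisode$ and $\stateactions_{\algoepisode+1} = \stateactions_\algoepisode$ and the claim follows immediately from the induction hypothesis.

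So the interesting case is when the collapse branch fires with $(\mathsf{R},\mathsf{B}) = \appear(\dots)$ and $C = \Union_{s \in \mathsf{R}} \stateactions_\algoepisode(s) \setminus \mathsf{B} \neq \emptyset$. Here I would note that $\mathsf{R} \subseteq \States_\algoepisode$: every state occurring in the path passed to $\appear$ is either $\initialstate$ or obtained via $\algorepresentative_\algoepisode$ in line \ref{alg:dql:line:resolve_rep}, hence lies in $\States_\algoepisode$. Then $\States_{\algoepisode+1} = (\States_\algoepisode \setminus \mathsf{R}) \union \{s_{(\mathsf{R},\mathsf{B})}\}$, with $\stateactions_{\algoepisode+1}(s_{(\mathsf{R},\mathsf{B})}) = C$ and $\stateactions_{\algoepisode+1}(s) = \stateactions_\algoepisode(s)$ for all $s \in \States_\algoepisode \setminus \mathsf{R}$. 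Disjointness is then checked on the (finitely many) kinds of pairs: for two distinct $s, s' \in \States_\algoepisode \setminus \mathsf{R}$ it is the induction hypothesis verbatim; and for the representative against some $s' \in \States_\algoepisode \setminus \mathsf{R}$, since $C \subseteq \Union_{s \in \mathsf{R}} \stateactions_\algoepisode(s)$ and each $\stateactions_\algoepisode(s)$ with $s \in \mathsf{R}$ is disjoint from $\stateactions_\algoepisode(s') = \stateactions_{\algoepisode+1}(s')$ by the induction hypothesis (as $s \neq s'$), we get $\stateactions_{\algoepisode+1}(s_{(\mathsf{R},\mathsf{B})}) \intersection \stateactions_{\algoepisode+1}(s') = C \intersection \stateactions_\algoepisode(s') = \emptyset$. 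Since $s_{(\mathsf{R},\mathsf{B})}$ is the only new state, this covers all pairs of distinct states in $\States_{\algoepisode+1}$.

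The one point I would flag explicitly — essentially the only thing that is not pure bookkeeping — is that the representative state $s_{(\mathsf{R},\mathsf{B})}$ must be understood as a genuinely fresh symbol, not already present in $\States_\algoepisode$ (analogous to the convention in the $\collapse$ construction of \cref{sec:brtdp}); without this, the representative could in principle collide with an existing state and the set equation for $\States_{\algoepisode+1}$ would be ill-formed. Once that convention is fixed, the induction goes through with no further obstacles, and the same argument simultaneously shows that $\stateactions_\algoepisode(s)$ is non-empty for every $s \in \States_\algoepisode$ (the collapse branch only executes when $C \neq \emptyset$), which is the companion fact needed for full well-definedness.
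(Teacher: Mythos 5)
Your proof is correct and takes essentially the same route as the paper's, which likewise observes that the only modification to $\stateactions_\algoepisode$ happens when a representative state is added, with $C \subseteq \Union_{s \in R} \stateactions_\algoepisode(s)$ and all states of $R$ removed; you merely spell out the induction and the pair-by-pair case check more explicitly. Your added remark about the freshness of $s_{(\mathsf{R},\mathsf{B})}$ is a reasonable clarification of a convention the paper leaves implicit, but it does not change the substance of the argument.
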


\begin{proof}
	The algorithm only modifies the set of available actions $\stateactions_{\algoepisode}$ whenever a new representative state $s_{(R, B)}$ is added.
	In this case, we have $\stateactions_{\algoepisode + 1}(s_{(R, B)}) \gets C \subseteq \Union_{s \in R} \stateactions_{\algoepisode}(s)$ and all states of $R$ are removed. 
\end{proof}

\begin{lemma}
	\cref{alg:dql} is well-defined.
\end{lemma}

\begin{proof}
	To prove this statement, we have to show that (i)~no undefined values are accessed, (ii)~all assignments are free of contradictions, and (iii)~we require no more information than given by \cref{def:limited_information}.

	For (i) and (ii), observe that when assigning the next episode's variables, we only use the variables of the current episode.
	Since we copy all unchanged variables, we only need to take care of the newly introduced arguments, i.e.\ the representative states $s_{(R, B)}$.
	Such a state is only added in \cref{alg:dql:line:add_rep_state}.
	In the following lines, we define the state's actions $\stateactions$, which is non-empty and disjoint from other states by \cref{stm:dql:disjoint_actions}.
	As no new actions are added, the action values in $s_{(R,B)}$ still are defined.
	Observe that in \cref{alg:dql:line:successor_oracle} the successor oracle is only given states of the original MDP.
	Claim (iii) follows immediately. 
\end{proof}
Now, we show several statements related to the newly added handling of end components.
Our goal is to show that the algorithm essentially samples from a collapsed MDP where the ECs identified by the algorithm are collapsed.
Then, we replicate the proof ideas of the EC-free DQL algorithm on this collapsed MDP in order to again obtain the correctness.
\begin{lemma} \label{stm:dql:finite_representative_states}
	\cref{alg:dql} enters \cref{alg:dql:line:appear} at most $\cardinality{\Actions}$ times.
\end{lemma}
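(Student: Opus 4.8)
The plan is to exhibit a nonnegative integer potential that is nonincreasing along the run of \cref{alg:dql} and strictly decreases every time \cref{alg:dql:line:appear} is executed. First I would record a structural invariant: \cref{alg:dql} never creates a new action, since the only place the action sets are altered is the collapsing step (\cref{alg:dql:line:add_rep_state}--\cref{alg:dql:line:set_representative_actions}), where the fresh representative state $s_{(\mathsf{R},\mathsf{B})}$ is given the action set $C \subseteq \bigcup_{s\in\mathsf{R}}\stateactions_\algoepisode(s)$ and the old states of $\mathsf{R}$ are discarded. Hence $\Actions_\algoepisode \coloneqq \bigcup_{s\in\States_\algoepisode}\stateactions_\algoepisode(s) \subseteq \Actions$ for every episode $\algoepisode$, and by \cref{stm:dql:disjoint_actions} each $a \in \Actions_\algoepisode$ has a unique associated state $\sigma_\algoepisode(a) \in \States_\algoepisode$. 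Define the set of \emph{live} actions $A_\algoepisode \coloneqq \{a \in \Actions_\algoepisode \mid \sigma_\algoepisode(a) \notin \targetset_\algoepisode \union \algozerostates_\algoepisode\}$, so that $\cardinality{A_1} \leq \cardinality{\Actions}$.

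Second, I would note that $\States_\algoepisode$, $\stateactions_\algoepisode$, $\targetset_\algoepisode$, and $\algozerostates_\algoepisode$ are modified only inside the block guarded by $\algostep - \algostep_\algoepisode \geq 2\algolimit^3$ (the block beginning with \cref{alg:dql:line:appear}); therefore $A_{\algoepisode+1} = A_\algoepisode$ in every episode where that block is not entered. It remains to show $\cardinality{A_{\algoepisode+1}} < \cardinality{A_\algoepisode}$ whenever the block is entered. When it is entered, the sampling loop on \cref{alg:dql:line:sample_loop} terminated because $\algostep - \algostep_\algoepisode$ reached exactly $2\algolimit^3$, so the path passed to $\appear$ on \cref{alg:dql:line:appear} consists of exactly $2\algolimit^3$ action occurrences drawn from at most $\cardinality{\Actions_\algoepisode} \leq \cardinality{\Actions} \leq \algolimit$ distinct actions (using $\algolimit \geq \cardinality{\Actions}$ from \eqref{eq:dql:choice_of_limit}); by the pigeonhole principle some action occurs at least $2\algolimit^3 / \algolimit = 2\algolimit^2 \geq \algolimit$ times, hence $\mathsf{B} \neq \emptyset$ and exactly one of the three sub-branches under $\mathsf{B} \neq \emptyset$ executes. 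Furthermore, because the loop ran to its full length, no state visited on the path lies in $\targetset_\algoepisode \union \algozerostates_\algoepisode$, and since every state in $\mathsf{R}$ is the associated state of some action on the path, $\mathsf{R} \intersection (\targetset_\algoepisode \union \algozerostates_\algoepisode) = \emptyset$; in particular every action with associated state in $\mathsf{R}$ is currently live.

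Third comes the short case analysis on the sub-branch. If $\targetset$ is extended by $\mathsf{R}$ (\cref{alg:dql:line:set_lower_one}) or $\algozerostates$ is extended by $\mathsf{R}$ (\cref{alg:dql:line:set_upper_zero}), then every action of the nonempty set $\mathsf{B}$ leaves $A_\algoepisode$ and nothing enters, so $\cardinality{A}$ drops by at least $\cardinality{\mathsf{B}} \geq 1$. If instead a representative state is created (\cref{alg:dql:line:add_rep_state}--\cref{alg:dql:line:set_rep}), the $\cardinality{\bigcup_{s\in\mathsf{R}}\stateactions_\algoepisode(s)}$ live actions associated with states of $\mathsf{R}$ are replaced by the actions $C = \bigcup_{s\in\mathsf{R}}\stateactions_\algoepisode(s) \setminus \mathsf{B}$ attached to $s_{(\mathsf{R},\mathsf{B})}$, and since $s_{(\mathsf{R},\mathsf{B})}$ is fresh and neither $\targetset$ nor $\algozerostates$ changes in this branch, these are exactly the new live actions; the net change is $-\cardinality{\mathsf{B}} \leq -1$. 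Thus $\cardinality{A_\algoepisode}$ is a nonnegative integer, nonincreasing across episodes, and strictly decreasing on every execution of \cref{alg:dql:line:appear}, so that line is executed at most $\cardinality{A_1} \leq \cardinality{\Actions}$ times.

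The main obstacle is the bookkeeping in the collapsing branch, where the subtlety is that new actions $C$ are introduced: one must argue precisely that $C$ is the set of former actions of $\mathsf{R}$ minus $\mathsf{B}$, that all of those former actions were already counted as live (which is exactly where disjointness of $\mathsf{R}$ from $\targetset_\algoepisode \union \algozerostates_\algoepisode$ — a consequence of the sampling loop running to full length — is used), and that $s_{(\mathsf{R},\mathsf{B})}$ does not land in $\targetset_{\algoepisode+1} \union \algozerostates_{\algoepisode+1}$. The pigeonhole step guaranteeing $\mathsf{B} \neq \emptyset$ and the well-definedness of $\sigma_\algoepisode$ via \cref{stm:dql:disjoint_actions} are routine once the setup is in place.
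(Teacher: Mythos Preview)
Your proposal is correct and follows essentially the same route as the paper's own proof: both use the pigeonhole principle (via $\algolimit \geq \cardinality{\Actions}$) to ensure $\mathsf{B} \neq \emptyset$, and then argue that each execution of the block removes at least one action from future consideration. Your formulation is slightly more uniform—you track a single \enquote{live actions} potential across all three branches, whereas the paper argues the first two branches by noting that actions of states pushed into $\targetset_\algoepisode$ or $\algozerostates_\algoepisode$ can never re-enter $\mathsf{B}$, and handles the third branch separately via the raw count $\sum_{s\in\States_\algoepisode}\cardinality{\stateactions_\algoepisode(s)}$—but the underlying idea is identical.
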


\begin{proof}
	First, observe that due to the pigeon-hole principle, $B$ never is empty:
	By \eqref{eq:dql:choice_of_limit}, our choice of $\algolimit$ is larger than $\cardinality{\Actions}$, thus a path of length at least $\algolimit^2$ contains at least one action $\algolimit$ times.
	Consequently, whenever the algorithm enters \cref{alg:dql:line:appear}, $B$ is non-empty.
	Initially, the size of $B$ is bounded by $\sum_{s \in \States_1} \cardinality{\stateactions_1(s)} = \cardinality{\Actions}$.
	We show that in any of the three cases, we remove at least one action which can never occur again as part of $B$.
	Consequently, after at most $\cardinality{\Actions}$ visits to \cref{alg:dql:line:appear}, $B$ would necessarily be empty, contradicting the above.

	Whenever a state is added to either $\targetset_\algoepisode$ or $\algozerostates_\algoepisode$, this state and its actions will not be considered again---in particular, it will not occur as part of $B$.
	For the third case, we show that the number of available actions $\sum_{s \in \States_\algoepisode} \cardinality{\stateactions_{\algoepisode}(s)}$ is reduced whenever a new representative state is added.
	In that case, we have $C \gets \Union_{s \in R} \stateactions_{\algoepisode}(s) \setminus B$, $\States_{\algoepisode + 1} \gets (\States_\algoepisode  \setminus R) \union \{s_{(R, B)}\}$, and $\stateactions_{\algoepisode + 1}(s_{(R, B)}) \gets C$.
	By construction of the algorithm and definition of $\appear$, we have $\emptyset \neq B \subseteq \Union_{s \in R} \stateactions_\algoepisode(s)$.
	Using \cref{stm:dql:disjoint_actions} we thus have $\cardinality{C} < \cardinality{\Union_{s \in R} \stateactions_\algoepisode(s)}$.
	Consequently, $\sum_{s \in \States_{\algoepisode + 1}} \cardinality{\stateactions_{\algoepisode + 1}(s)} < \sum_{s \in \States_\algoepisode} \cardinality{\stateactions_{\algoepisode}(s)}$.\end{proof}

\begin{lemma} \label{stm:dql:terminate_or_infinite_episodes}
	\cref{alg:dql} terminates or experiences an infinite number of episodes.
\end{lemma}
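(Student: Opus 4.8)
The plan is to show that the algorithm can never remain inside a single episode forever, so a non-terminating run is forced to start infinitely many episodes. First I would dispose of the trivial case: if the main-loop guard $\upperbound_\algostep(\initialstate) - \lowerbound_\algostep(\initialstate) \geq \varepsilon$ already fails at initialization, the algorithm terminates and there is nothing to prove. Otherwise episode $1$ is entered, and it suffices to prove that the body of an arbitrary episode $\algoepisode$ always finishes; a straightforward induction on the episode counter then gives that the algorithm either halts or runs through episodes $1, 2, 3, \dots$, since after each episode body the outer \texttt{while} either exits or begins the next episode.

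So fix an episode $\algoepisode$. Its body consists of the inner sampling loop of \cref{alg:dql:line:sample_loop} followed by the end-component update block. For the inner loop, the key observation is that its last statement is $\algostep \gets \algostep + 1$, while its guard requires $\algostep - \algostep_\algoepisode < 2\algolimit^3$; hence after at most $2\algolimit^3$ iterations the guard is violated and the loop exits. A single iteration picks an action, queries the successor oracle in \cref{alg:dql:line:successor_oracle}, resolves the current representative in \cref{alg:dql:line:resolve_rep} (which follows the collapse pointers $\algocollapsedstates_\algoepisode$ to a fixed point — finitely many, since by \cref{stm:dql:finite_representative_states} at most $\cardinality{\Actions}$ collapses ever happen), performs the bounded bound-updates as in \cref{alg:dql_no_ec}, and increments; so each iteration terminates. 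The update block after the loop is a fixed, finite sequence of assignments, the only non-atomic piece being the call to $\appear$ in \cref{alg:dql:line:appear}, which merely scans the already-produced finite path of length at most $2\algolimit^3$. Hence the whole episode body terminates.

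Combining this with the induction above yields the claim. The one subtlety — and the place I expect the real work to be — is the granularity of a single inner-loop step when the current state is a collapsed representative $s_{(\mathsf{R},\mathsf{B})}$: as made explicit in the remark preceding this subsection, executing the chosen action $a_\algostep \in C$ then requires first navigating inside $\mathsf{R}$, using internal actions of $\mathsf{B}$, to the state $\actionstate<\MDP>(a_\algostep)$ that actually enables $a_\algostep$. This navigation terminates provided $(\mathsf{R},\mathsf{B})$ is a genuine end component of $\MDP$, so that $\actionstate<\MDP>(a_\algostep)$ is reachable within $\mathsf{R}$ under the uniform strategy over $\mathsf{B}$ — which holds by design of the collapsing and, for the sets produced by $\appear$, on the high-probability event guaranteed by \cref{stm:dql:appear_is_ec_probability} (this is where the choice of $\algolimit$ in \eqref{eq:dql:choice_of_limit} is used); under the convention that the successor oracle returns atomically this navigation is not even visible and each iteration is a single step, making the episode length the deterministic bound $2\algolimit^3 + \mathcal{O}(1)$. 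Either way, each episode terminates, and so a run that does not terminate must begin a new episode each time, i.e.\ experiences infinitely many episodes.
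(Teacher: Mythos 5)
Your proof is correct and follows essentially the same route as the paper's: the inner sampling loop is cut off after $2\algolimit^3$ iterations, all remaining loops range over finite sets, and the representative-resolution fixpoint in \cref{alg:dql:line:resolve_rep} terminates because \cref{stm:dql:finite_representative_states} bounds the number of collapses. Your additional remark on navigating inside a collapsed EC corresponds to what the paper dispatches via the convention that the successor oracle acts atomically, so it does not change the argument.
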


\begin{proof}
	Since the length of each episode is limited, i.e.\ the loop of \cref{alg:dql:line:sample_loop} always terminates after a bounded number of steps, we only need to show that all other loops terminate.
	All for-loops iterate over (sub-)sets of states or actions, which are finite by assumption.
	The only remaining loop is the computation of $\algorepresentative_\algoepisode$ in \cref{alg:dql:line:resolve_rep}, where the representative state is resolved.
	Observe that by construction of the algorithm, we either have that $\algocollapsedstates_\algoepisode(s) = s$ or $\algocollapsedstates_\algoepisode(s) = s_{(R, B)}$ with $s \in R$.
	Since we only modify $\algocollapsedstates$ when a new representative state is added, this happens only finitely often, due to \cref{stm:dql:finite_representative_states}. 
\end{proof}

\begin{lemma} \label{stm:dql:collapsed_states_unchanged}
	If we add a representative state $s_{(R, B)}$ in \cref{alg:dql:line:add_rep_state} after an episode $\algoepisode$ the bounds of any action $a \in B$ are not changed after episode $\algoepisode$.
\end{lemma}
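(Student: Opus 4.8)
The plan is to pin down exactly the places where the bounds $\upperbound_\algostep(a)$ and $\lowerbound_\algostep(a)$ of an action can be altered, and to show that none of them can fire for an action $a \in B$ in any step belonging to an episode strictly after $\algoepisode$. Reading the pseudocode of \cref{alg:dql}, the only lines that assign to $\upperbound$ or $\lowerbound$ are: (i)~the call \enquote{Perform updates as in \cref{alg:dql_no_ec}} in \cref{alg:dql:line:update_bounds}, which for each sampled step touches only the counters $\visitcount^\circ,\acc^\circ$ and the bounds of the \emph{currently sampled} action $a_\algostep$; and (ii)~the explicit assignments $\lowerbound_\algostep(a) \gets 1$ in \cref{alg:dql:line:set_lower_one} and $\upperbound_\algostep(a) \gets 0$ in \cref{alg:dql:line:set_upper_zero}, each executed only for actions $a$ lying in the set $\mathsf{B}$ returned by $\appear$ in \cref{alg:dql:line:appear} when the freshly discovered EC is absorbed into $\targetset_\algoepisode$ or $\algozerostates_\algoepisode$. (The global learn-flag resets inside \cref{alg:dql_no_ec} rewrite only the $\learn$ flags, not the bounds, so they are irrelevant here.)

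The key structural fact to establish is that the actions of $B$ never reappear in the available-action sets after the collapse, i.e.\ for every episode $\algoepisode' > \algoepisode$ and every $s \in \States_{\algoepisode'}$ one has $\stateactions_{\algoepisode'}(s) \intersection B = \emptyset$. This follows by induction on $\algoepisode'$. For $\algoepisode' = \algoepisode + 1$, the collapse in \cref{alg:dql:line:add_rep_state,alg:dql:line:set_representative_actions} removes all states of $R$ and introduces the single new state $s_{(R,B)}$ with action set $C = \Union_{s \in R}\stateactions_\algoepisode(s) \setminus B$; together with \cref{stm:dql:disjoint_actions} (actions of distinct states are disjoint) this leaves no action of $B$ anywhere in $\stateactions_{\algoepisode + 1}$. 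For the inductive step, $\stateactions_{\algoepisode' + 1}$ differs from $\stateactions_{\algoepisode'}$ only by possibly adding a new representative $s_{(R',B')}$ with action set $C' = \Union_{s \in R'}\stateactions_{\algoepisode'}(s) \setminus B' \subseteq \Union_{s \in \States_{\algoepisode'}}\stateactions_{\algoepisode'}(s)$, which by the induction hypothesis is disjoint from $B$.

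With this fact, both mechanisms are excluded for steps after episode $\algoepisode$. For (i): the sampled action satisfies $a_\algostep \in \umaxactions_\algoepisode(s_\algostep) \subseteq \stateactions_\algoepisode(s_\algostep)$, so by the structural fact $a_\algostep \notin B$ in every step of an episode $\algoepisode' > \algoepisode$, and hence the update block of \cref{alg:dql_no_ec} never touches $\upperbound_\algostep(a)$ or $\lowerbound_\algostep(a)$ for $a \in B$. For (ii): the pair set $\mathsf{R}\times\mathsf{B}$ computed by $\appear$ in \cref{alg:dql:line:appear} is by definition a subset of $\States_{\algoepisode'}\times\stateactions_{\algoepisode'}$, so $\mathsf{B} \intersection B = \emptyset$ whenever $\algoepisode' > \algoepisode$, and the assignments in \cref{alg:dql:line:set_lower_one,alg:dql:line:set_upper_zero} never target an action of $B$ after episode $\algoepisode$. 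Consequently the bounds of every $a \in B$ remain frozen from episode $\algoepisode + 1$ onward, which is the claim. The only mildly delicate point is to be sure the enumeration of bound-modifying lines is exhaustive and that the $\appear$ output genuinely stays inside the current (already partly collapsed) action set — both are immediate from the pseudocode together with \cref{stm:dql:disjoint_actions}, so I expect no serious obstacle, only careful bookkeeping.
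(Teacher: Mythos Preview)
Your proposal is correct and takes essentially the same approach as the paper: both argue that once the actions in $B$ are removed from the available-action sets they never reappear in any $\stateactions_{\algoepisode'}$, hence no bound-modifying line can touch them. The paper compresses this into two sentences (invoking \cref{stm:dql:representative_in_states} to ensure sampled states lie in $\States_{\algoepisode'}$), whereas you spell out the induction and the case split over lines explicitly; you may want to cite \cref{stm:dql:representative_in_states} to justify $s_\algostep \in \States_{\algoepisode'}$ rather than leaving it implicit.
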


\begin{proof}
	During each episode $\algoepisode$, we only consider states in $\States_\algoepisode$ and actions which are available in such states, as the call to $\algorepresentative_\algoepisode$ in \cref{alg:dql:line:resolve_rep} always yields an element of the current state set $\States_\algoepisode$ due to \cref{stm:dql:representative_in_states}.
	Since all states corresponding to actions in $B$ are removed when adding a representative state $s_{(R, B)}$ and these actions are not enabled in the newly added state, they do not appear again. 
\end{proof}

\begin{lemma} \label{stm:dql:representative_in_states}
	For any execution of the algorithm, we always have that $\algorepresentative_\algoepisode(s) \in \States_\algoepisode$ for any state $s \in \States$.
\end{lemma}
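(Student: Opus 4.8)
The plan is to prove the statement by induction on the episode $\algoepisode$, simultaneously establishing the slightly stronger invariant that, on the domain of $\algocollapsedstates_\algoepisode$, a state $t$ is a fixed point of $\algocollapsedstates_\algoepisode$ (i.e.\ $\algocollapsedstates_\algoepisode(t) = t$) exactly when $t \in \States_\algoepisode$, and that moreover $\initialstate \in \States_\algoepisode$. Since $\algorepresentative_\algoepisode(s)$ is by construction obtained by iterating $\algocollapsedstates_\algoepisode$ until a fixed point is reached — and this iteration terminates, by the same reasoning used in the proof of \cref{stm:dql:terminate_or_infinite_episodes} (each application either fixes the argument or passes to a later-created representative, of which there are only finitely many by \cref{stm:dql:finite_representative_states}) — the lemma follows immediately by applying the invariant to $t = \algorepresentative_\algoepisode(s)$.

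For the base case $\algoepisode = 1$ we have $\States_1 = \States$ and $\algocollapsedstates_1(s) = s$ for every $s \in \States$, so every state of the domain is a fixed point and lies in $\States_1$, and $\initialstate \in \States = \States_1$; the invariant holds trivially.

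For the inductive step I would split according to how episode $\algoepisode$ modifies the data. If no representative state is added — either the EC-update block is not entered, or it only grows $\targetset_\algoepisode$ or $\algozerostates_\algoepisode$ — then neither $\algocollapsedstates$ nor $\States$ changes and $\initialstate$ is untouched, so the invariant is inherited verbatim. If a representative state $s_{(R,B)}$ is added in \cref{alg:dql:line:add_rep_state}, then $\States_{\algoepisode+1} = (\States_\algoepisode \setminus R) \union \{s_{(R,B)}\}$, and $\algocollapsedstates_{\algoepisode+1}$ agrees with $\algocollapsedstates_\algoepisode$ except that it now maps every $t \in R \union \{s_{(R,B)}\}$ to $s_{(R,B)}$ (line \ref{alg:dql:line:set_rep}). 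I would first observe $R \subseteq \States_\algoepisode$: the path passed to $\appear$ starts in $\initialstate \in \States_\algoepisode$ (induction hypothesis) and all its later states are set to $\algorepresentative_\algoepisode(\cdot)$, which lies in $\States_\algoepisode$ by the lemma for episode $\algoepisode$ (induction hypothesis). A direct case check then shows $\algocollapsedstates_{\algoepisode+1}(t) = t$ holds precisely for $t = s_{(R,B)}$ and for $t \in \States_\algoepisode \setminus R$ with $\algocollapsedstates_\algoepisode(t) = t$, i.e.\ — using the induction hypothesis for the invariant — precisely for $t \in \{s_{(R,B)}\} \union (\States_\algoepisode \setminus R) = \States_{\algoepisode+1}$. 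Finally, $\initialstate \in \States_{\algoepisode+1}$ because if $\initialstate \in R$ it is reassigned to $s_{(R,B)}$ in \cref{alg:dql:line:update_initial}, and otherwise $\initialstate \in \States_\algoepisode \setminus R \subseteq \States_{\algoepisode+1}$.

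The only real obstacle is bookkeeping rather than mathematics: one has to be careful that $R \subseteq \States_\algoepisode$ (hence the algebra of the case check) genuinely follows from the induction hypothesis and not from a forward reference, and that the termination of the $\algorepresentative$-iteration is invoked only in the qualitative form already justified by \cref{stm:dql:finite_representative_states} and \cref{stm:dql:terminate_or_infinite_episodes}, so that no circularity creeps in. Folding both "$\initialstate \in \States_\algoepisode$" and the fixed-point characterization into a single induction invariant is what makes this clean.
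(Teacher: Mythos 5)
Your proof is correct and follows essentially the same route as the paper's: an induction over episodes observing that the only change to $\algocollapsedstates$ happens when a representative $s_{(R,B)}$ is added, at which point all of $R$ is redirected to $s_{(R,B)} \in \States_{\algoepisode+1}$. The paper's own argument is a two-line version of this; your added bookkeeping (the fixed-point characterization of $\States_\algoepisode$, $\initialstate \in \States_\algoepisode$, and $R \subseteq \States_\algoepisode$ via the induction hypothesis) is sound and merely makes explicit what the paper leaves implicit.
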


\begin{proof}
	We prove by induction:
	Initially, we have $\algorepresentative_1(s) = \algocollapsedstates_1(s) = s$ for all $s \in \States_1$ by definition.
	Whenever we modify $\States_\algoepisode$, i.e.\ remove some states $R$ and add a representative $s_{(R, B)}$, we set $\algocollapsedstates_{\algoepisode + 1}(s) \gets s_{(R, B)} \in \States_{\algoepisode + 1}$ for all $s \in R$. 
\end{proof}
In order to properly reason about the paths sampled by the algorithm, we introduce a special MDP which corresponds to the current \enquote{view} of the given MDP.
\begin{definition} \label{def:dql:sampling_mdp}
	For any episode $\algoepisode$, let the \emph{sampling MDP} $\MDP_\algoepisode = (\States_\algoepisode, \Actions_\algoepisode, \stateactions_\algoepisode, \mdptransitions_\algoepisode)$,
	\begin{align*}
		\mdptransitions_\algoepisode(s, a) & = \{s \mapsto 1\} \quad \text{for $s \in \States_\algoepisode \intersection (\targetset_\algoepisode \union \algozerostates_\algoepisode)$, $a \in \stateactions_\algoepisode(s)$, and} \\
		\mdptransitions_\algoepisode(s, a, s') & = {\sum}_{\{s'' \in \States \mid \algorepresentative_\algoepisode(s'') = s'\}} \mdptransitions(\actionstate<\MDP>(a), a, s'') \quad \text{for other states $s$, $a \in \stateactions_\algoepisode(s)$},
	\end{align*}
	and $\Actions_\algoepisode = \Union_{s \in \States_\algoepisode} \stateactions_\algoepisode(s)$.
\end{definition}
Note that the sampling MDP is well-defined due to \cref{stm:dql:disjoint_actions,stm:dql:representative_in_states}.
\begin{lemma} \label{stm:dql:sampling_mdp}
	Fix an execution of the algorithm until some episode $\algoepisode$ and let $\finitepath$ be the finite path sampled by the algorithm during episode $\algoepisode$.
	The probability of sampling this path equals the probability of obtaining this path on $\MDP_\algoepisode$ following the strategy $\strategy_\algoepisode$ starting in state $\initialstate$.
\end{lemma}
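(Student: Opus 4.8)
The plan is to fix an execution of \cref{alg:dql} up to the start of episode $\algoepisode$ and prove the claim by induction on the length of $\finitepath$, showing more precisely that the conditional probability (given that execution prefix) of episode $\algoepisode$ producing $\finitepath$ equals $\ProbabilityMDP<\MDP_\algoepisode, \initialstate><\strategy_\algoepisode>$ of the cylinder of $\finitepath$. First I would record that, once the prefix is fixed, the objects $\States_\algoepisode$, $\stateactions_\algoepisode$, $\targetset_\algoepisode$, $\algozerostates_\algoepisode$, the resolution map $\algorepresentative_\algoepisode$, and the action sets $\umaxactions_\algoepisode$ (hence the sampling strategy $\strategy_\algoepisode$) are all determined, and none of them are modified while episode $\algoepisode$ runs -- the EC-handling block and the representative bookkeeping only execute after the inner loop, and although the bound update in \cref{alg:dql:line:update_bounds} changes the estimates mid-episode, $\umaxactions_\algoepisode(s)$ was fixed at the top of the episode, so the action selection really follows the memoryless strategy $\strategy_\algoepisode$ on $\MDP_\algoepisode$ throughout. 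Well-definedness of $\MDP_\algoepisode$ is supplied by \cref{stm:dql:disjoint_actions,stm:dql:representative_in_states}, and by the Markov property the successor-oracle sample for a played action $a$ is drawn from $\mdptransitions(\actionstate<\MDP>(a), a)$ independently of the algorithm's internal state.

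For the base case, the sampled path of length one is $s_{\algostep_\algoepisode} = \initialstate$, the (possibly collapsed) current initial state, which is also the start state of $\MDP_\algoepisode$; both probabilities are $1$. For the step, write $\finitepath = \finitepath'\, a \, s'$ with $\last{\finitepath'} = s$ and assume the statement for $\finitepath'$. Since $\finitepath$ is genuinely produced by the algorithm, $s \in \States_\algoepisode$ and $s \notin \targetset_\algoepisode \union \algozerostates_\algoepisode$, so the inner sampling loop (\cref{alg:dql:line:sample_loop}) continues from $s$: it picks $a$ with probability $\strategy_\algoepisode(s, a) = \cardinality{\umaxactions_\algoepisode(s)}^{-1}$ (and $0$ unless $a \in \umaxactions_\algoepisode(s)$), queries $\successor(a)$ to get an original state $s'' \in \States$ with probability $\mdptransitions(\actionstate<\MDP>(a), a, s'')$, and in \cref{alg:dql:line:resolve_rep} sets the next path state to $\algorepresentative_\algoepisode(s'')$. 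Hence the next state equals $s'$ with probability ${\sum}_{\{s'' \in \States \mid \algorepresentative_\algoepisode(s'') = s'\}} \mdptransitions(\actionstate<\MDP>(a), a, s'')$, which is $\mdptransitions_\algoepisode(s, a, s')$ by \cref{def:dql:sampling_mdp} (and lands in $\States_\algoepisode$ by \cref{stm:dql:representative_in_states}, so the push-forward is a distribution on $\States_\algoepisode$). Combining this transition probability and the action probability $\strategy_\algoepisode(s, a)$ with the inductive hypothesis yields exactly the probability that $\MDP_\algoepisode$ under $\strategy_\algoepisode$ realises $\finitepath$ from $\initialstate$. When $s$ is a representative state $s_{(R,B)}$ and $a \in \stateactions_\algoepisode(s_{(R,B)})$, one also invokes the sampling convention stated after \eqref{eq:dql:choice_of_limit}: the algorithm first navigates inside the EC to $\actionstate<\MDP>(a)$ and then plays $a$; since this navigation is \enquote{paused} and touches no counters or bounds, it does not alter the computation above.

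The hard part will be getting the bookkeeping around the two inner-loop stopping conditions right and reconciling it with the self-loops present in $\MDP_\algoepisode$. The episode terminates either on reaching a state of $\targetset_\algoepisode \union \algozerostates_\algoepisode$ or on exhausting the $2\algolimit^3$-step budget, whereas those absorbing states carry self-loops in $\MDP_\algoepisode$ and its paths never stop. The key observation is that a path the algorithm actually samples contains no state of $\targetset_\algoepisode \union \algozerostates_\algoepisode$ before its last position, so along $\finitepath$ the transitions $\mdptransitions_\algoepisode$ never take the degenerate $\{s \mapsto 1\}$ branch, and $\finitepath$ is simply an initial segment of an infinite path of $\MDP_\algoepisode^{\strategy_\algoepisode}$; thus reading \enquote{obtaining this path on $\MDP_\algoepisode$} as the cylinder event, its probability is precisely the product computed in the induction, and whether $\finitepath$ stopped because of absorption or because of the step budget is irrelevant since after an absorbing state all mass stays on the self-loop. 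A secondary subtlety worth flagging is that the successor oracle is queried with an action of the \emph{original} MDP while the path states live in $\States_\algoepisode$ -- this is exactly why \cref{alg:dql:line:resolve_rep} and the summation in \cref{def:dql:sampling_mdp} are arranged as they are -- and that the argument is implicitly conditioned on all previously collapsed sets being true ECs of $\MDP$, the event whose probability \cref{stm:dql:appear_is_ec_probability} controls, as otherwise the navigation convention need not be realisable.
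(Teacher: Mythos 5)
Your proof is correct and follows essentially the same route as the paper's: induction over the path, using the Markov property to match the uniform action selection against $\strategy_\algoepisode$ and the pushed-forward successor distribution $\sum_{\{s'' \mid \algorepresentative_\algoepisode(s'') = s'\}} \mdptransitions(\actionstate<\MDP>(a), a, s'')$ against $\mdptransitions_\algoepisode$ from \cref{def:dql:sampling_mdp}. The additional care you take with the stopping conditions, the self-loops on $\targetset_\algoepisode \union \algozerostates_\algoepisode$, and the navigation convention inside collapsed ECs is not spelled out in the paper's proof but is consistent with it and only strengthens the argument.
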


\begin{proof}
	We prove by induction over the path $\finitepath$, using the Markov property.
	We show that for any finite prefix, the probability of selecting action $a$ and then reaching state $s'$ in the next step is equal in both the algorithm and the sampling MDP.
	Observe that we always have $\initialstate \in \States_\algoepisode$ due to \cref{alg:dql:line:update_initial} and the induction start is trivial.

	For the induction step, suppose we are in a state $s$.
	By construction of the algorithm, $s \notin \targetset_\algoepisode \union \algozerostates_\algoepisode$.
	The algorithm now uniformly selects an action $a$ from $\umaxactions_\algoepisode(s)$, i.e.\ with probability $\cardinality{\umaxactions_\algoepisode(s)}^{-1}$ for any such action.
	Then, a successor $s'' \in \States$ is sampled according to $\successor(s, a)$, i.e.\ with probability $\mdptransitions(s, a, s'')$.
	The overall successor then equals $s' = \algorepresentative_\algoepisode(s'')$.
	We have $s' \in \States_\algoepisode$ by \cref{stm:dql:representative_in_states}.
	Hence, a state $s' \in \States_\algoepisode$ is sampled with probability $\sum_{\{s'' \in \States \mid \algorepresentative_\algoepisode(s'') = s'\}} \mdptransitions(s, a, s'')$, just as in the MDP $\MDP_\algoepisode$ under strategy $\strategy_\algoepisode$. 
\end{proof}

\begin{assumption} \label{asm:dql:always_collapse_ec}
	Whenever the algorithm reaches \cref{alg:dql:line:appear}, $(\mathsf{R}, \mathsf{B})$ is an EC of $\MDP_\algoepisode$.
\end{assumption}

\begin{lemma} \label{stm:dql:always_collapse_ec_probability}
	The probability that \cref{asm:dql:always_collapse_ec} is violated during the execution of \cref{alg:dql} is bounded by $\frac{\delta}{4}$.
\end{lemma}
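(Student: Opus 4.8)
The plan is to reduce the claim to a per-episode statement and then to a union bound. The algorithm reaches \cref{alg:dql:line:appear} only at the end of an episode $\algoepisode$ whose sampling loop ran for $2\algolimit^3$ steps without hitting $\targetset_\algoepisode \union \algozerostates_\algoepisode$; by \cref{stm:dql:finite_representative_states} this happens at most $\cardinality{\Actions}$ times during the whole run. So it suffices to show that, each time \cref{alg:dql:line:appear} is reached, the pair $(\mathsf R,\mathsf B)$ returned by $\appear$ fails to be an EC of $\MDP_\algoepisode$ with probability at most $\frac{\delta}{4\cardinality{\Actions}}$; a union bound over the (at most $\cardinality{\Actions}$) visits then yields the bound $\frac{\delta}{4}$.

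For the per-episode bound I would condition on the entire execution of the algorithm up to the start of episode $\algoepisode$. This fixes the sampling MDP $\MDP_\algoepisode$ (well-defined by \cref{def:dql:sampling_mdp} together with \cref{stm:dql:disjoint_actions,stm:dql:representative_in_states}), the sets $\targetset_\algoepisode,\algozerostates_\algoepisode$, and the memoryless sampling strategy $\strategy_\algoepisode$. By \cref{stm:dql:sampling_mdp} and the Markov property, conditioned on this prefix the episode-$\algoepisode$ path is distributed exactly as a fresh $\strategy_\algoepisode$-run of $\MDP_\algoepisode$ started in $\initialstate$, and the event "the algorithm reaches \cref{alg:dql:line:appear}" is precisely the event $\overline{\boundedreach<2\algolimit^3> T}$ for that run, with $T = \targetset_\algoepisode \union \algozerostates_\algoepisode$. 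Note $T$ is absorbing in $\MDP_\algoepisode$ under every strategy by construction of $\mdptransitions_\algoepisode$, so the hypotheses of \cref{stm:dql:appear_is_ec_probability} are met with $\MDP := \MDP_\algoepisode$, $\targetset := T$, $i := \algolimit$, $\strategy := \strategy_\algoepisode$ (using $\algolimit \ge \cardinality{\Actions}$ from \eqref{eq:dql:choice_of_limit}, which dominates $\kappa = \cardinality{\States_{\strategy_\algoepisode}}+1$ since $\States_{\strategy_\algoepisode} = \Union_{s}\umaxactions_\algoepisode(s) \subseteq \Actions_\algoepisode$ and collapsing only removes actions; if this were tight then $\MDP_\algoepisode$ would be a Markov chain and $\appear \in \Ecs(\MDP_\algoepisode)$ would be immediate anyway). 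On the first branch of \cref{stm:dql:appear_is_ec_probability}, $T$ is a.s.\ reached within $2\algolimit^3$ steps, so the algorithm a.s.\ does \emph{not} reach \cref{alg:dql:line:appear} and there is nothing to bound; on the second branch the lemma gives $\ProbabilityMDP<\MDP_\algoepisode, \initialstate><\strategy_\algoepisode>[\overline{\appear(\infinitepath,\algolimit,2\algolimit^3) \in \Ecs(\MDP_\algoepisode)} \mid \overline{\boundedreach<2\algolimit^3> T}] \le 2(1+\algolimit^2)\cdot e^{-(\algolimit-1)\delta_{\min}(\strategy_\algoepisode)^{\kappa}/\kappa}\cdot \delta_{\min}(\strategy_\algoepisode)^{-\kappa}$.

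It then remains to bound this quantity uniformly over all reachable configurations. Each $\mdptransitions_\algoepisode(s,a,s')$ is a sum of original transition probabilities $\mdptransitions(\actionstate<\MDP>(a),a,\cdot)$, each of which is at least $p_{\min}$, and $\strategy_\algoepisode$ randomizes uniformly over available actions, so $\delta_{\min}(\strategy_\algoepisode) \ge p_{\min}$; also $\kappa \le \cardinality{\States}+1$. The displayed bound is monotone in these parameters, so it is at most $2(1+\algolimit^2)\cdot e^{-(\algolimit-1)p_{\min}^{\cardinality{\States}+1}/(\cardinality{\States}+1)}\cdot p_{\min}^{-(\cardinality{\States}+1)}$, which by the first inequality of \eqref{eq:dql:choice_of_limit} is at most $\frac{\delta}{4\cardinality{\Actions}}$. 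Combining with \cref{stm:dql:finite_representative_states} via the union bound gives the claim.

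The main obstacle I expect is the measure-theoretic bookkeeping of the conditioning step: $\MDP_\algoepisode$, $\strategy_\algoepisode$, $\targetset_\algoepisode$, $\algozerostates_\algoepisode$ are all measurable functions of the execution prefix, so one has to argue carefully that conditioned on the prefix the episode path is genuinely a fresh $\strategy_\algoepisode$-run of $\MDP_\algoepisode$ (which is essentially the content of \cref{stm:dql:sampling_mdp} plus the Markov property) and that the bound from \cref{stm:dql:appear_is_ec_probability} is uniform enough to survive integration over prefixes — this is exactly what the uniform estimates $\delta_{\min}(\strategy_\algoepisode) \ge p_{\min}$ and $\kappa \le \cardinality{\States}+1$, and hence the otherwise opaque choice \eqref{eq:dql:choice_of_limit} of $\algolimit$, are there to guarantee. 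Everything else (the reduction to $\overline{\boundedreach<2\algolimit^3> T}$, the monotonicity estimate, the final union bound) is routine.
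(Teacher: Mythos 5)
Your proof follows essentially the same route as the paper's: apply \cref{stm:dql:appear_is_ec_probability} to $\MDP_\algoepisode$ with target $\targetset_\algoepisode \union \algozerostates_\algoepisode$ and strategy $\strategy_\algoepisode$, translate back to the algorithm's sampling via \cref{stm:dql:sampling_mdp}, bound the number of visits to \cref{alg:dql:line:appear} by $\cardinality{\Actions}$ using \cref{stm:dql:finite_representative_states}, and conclude with a union bound and \eqref{eq:dql:choice_of_limit}; your handling of the conditioning on the execution prefix and of the uniformity of the estimate is, if anything, more explicit than the paper's. The only slip is the parenthetical justification of $\algolimit \geq \kappa$ in the boundary case $\cardinality{\States_{\strategy_\algoepisode}} = \cardinality{\Actions}$ (that case would not make $\MDP_\algoepisode$ a Markov chain --- it only means every remaining action is $\upperbound$-optimal in its state), but this off-by-one is glossed over in the paper as well and is immaterial since $\algolimit$ may be chosen arbitrarily large.
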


\begin{proof}
	We apply \cref{stm:dql:appear_is_ec_probability} with $\MDP = \MDP_\algoepisode$, $\targetset = \targetset_\algoepisode \union \algozerostates_\algoepisode$ and $\strategy = \strategy_\algoepisode$.
	By construction of $\MDP_\algoepisode$ and the choice of $\targetset$, we have that $\strategy_\algoepisode$ trivially satisfies the condition of this lemma, since each state in $\targetset$ only has self-loops in $\MDP_\algoepisode$.
	Clearly, we have that $\cardinality{\States_\strategy} \leq \sum_{s \in \States_\algoepisode} \cardinality{\stateactions(s)} \leq \cardinality{\Actions}$, since no actions are added during the execution of the algorithm.
	Consequently, we have that either $\ProbabilityMDP<\MDP_\algoepisode, \initialstate><\strategy_\algoepisode>[\boundedreach<2i^3> (\targetset_\algoepisode \union \algozerostates_\algoepisode)] = 1$ or
	\begin{equation*}
		\ProbabilityMDP*<\MDP_\algoepisode, \initialstate><\strategy_\algoepisode>[{App}_\algolimit \mid \overline{\boundedreach<2\algolimit^3> (\targetset_\algoepisode \union \algozerostates_\algoepisode)}] \geq 1 - 2 (1 + \algolimit^2) \cdot e^{- (\algolimit - 1) \frac{p_{\min}(\strategy)^{\cardinality{\States} + 1}}{\cardinality{\States} + 1}} \cdot p_{\min}(\strategy)^{-(\cardinality{\States}+1)},
	\end{equation*}
	where ${App}_\algolimit$ are all paths $\infinitepath \in \Infinitepaths<\MDP_\algoepisode>$ such that $\appear(\infinitepath, \algolimit, 2\algolimit^3)$ is an EC in $\MDP_\algoepisode$.

	Now, observe that the algorithm only enters \cref{alg:dql:line:appear} if after $2 \algolimit^3$ steps neither $\targetset_\algoepisode$ nor $\algozerostates_\algoepisode$ is reached.
	By applying \cref{stm:dql:sampling_mdp}, we get that the probability of $(R, B)$ being an EC given that \cref{alg:dql:line:appear} is entered exactly equals $\ProbabilityMDP<\MDP_\algoepisode, \initialstate><\strategy_\algoepisode>[{App}_\algolimit \mid \overline{\boundedreach<2\algolimit^3> (\targetset_\algoepisode \union \algozerostates_\algoepisode)}]$.
	Since \cref{alg:dql:line:appear} is entered at most $\cardinality{\Actions}$ times due to \cref{stm:dql:finite_representative_states}, the statement follows by inserting the definition of $\algolimit$ from \eqref{eq:dql:choice_of_limit}. 
\end{proof}

\begin{lemma} \label{stm:dql:values_collapse}
	Assume that \cref{asm:dql:always_collapse_ec} holds and fix some episode $\algoepisode$.
	Let $s \in \States_\algoepisode$ some state of the MDP $\MDP_\algoepisode$ and $s' \in \States$ such that $\algorepresentative_\algoepisode(s') = s$
	Then, $s$ and $s'$ have the same value:
	\begin{equation*}
		\val_\algoepisode(s) = \ProbabilityMDPmax<\MDP_\algoepisode, s>[\reach \targetset_\algoepisode] = \ProbabilityMDPmax<\MDP, s'>[\reach \targetset] = \val(s')
	\end{equation*}
\end{lemma}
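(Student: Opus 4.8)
The statement is really a statement about the \emph{sampling MDP} of \cref{def:dql:sampling_mdp}, so the plan is to prove, by induction on the episode index $\algoepisode$, the invariant
\begin{equation*}
	\ProbabilityMDPmax<\MDP_\algoepisode, s>[\reach \targetset_\algoepisode] = \ProbabilityMDPmax<\MDP, s'>[\reach \targetset] \qquad \text{whenever } \algorepresentative_\algoepisode(s') = s,
\end{equation*}
which is exactly the claimed chain of equalities (using $\val_\algoepisode(s) = \ProbabilityMDPmax<\MDP_\algoepisode, s>[\reach\targetset_\algoepisode]$ and $\val(s') = \ProbabilityMDPmax<\MDP, s'>[\reach\targetset]$). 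The base case $\algoepisode = 1$ is immediate: $\MDP_1$ differs from $\MDP$ only by making the states of $\targetset_1 = \targetset$ absorbing, which does not change $\reach\targetset$ since a path witnesses $\reach\targetset$ as soon as it first hits $\targetset$; and $\algorepresentative_1 = \mathrm{id}$, $\algozerostates_1 = \emptyset$.

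For the inductive step I would observe that $\MDP_{\algoepisode+1}$ is obtained from $\MDP_\algoepisode$ by exactly one of the following: (i) nothing changes (when \cref{alg:dql:line:appear} is not reached; note $\mathsf{B}=\emptyset$ is impossible by the pigeon-hole argument in the proof of \cref{stm:dql:finite_representative_states}), in which case the invariant is inherited verbatim from the induction hypothesis; or (ii) a set $(\mathsf{R},\mathsf{B})$ is processed, which by \cref{asm:dql:always_collapse_ec} is a genuine EC of $\MDP_\algoepisode$. In the last situation the three branches of the algorithm are, up to the purely cosmetic difference of keeping the representative of a target-EC (resp.\ a dead-end EC) as an absorbing target state (resp.\ an absorbing state added to $\algozerostates$) instead of merging it into a single $\targetstate$ (resp.\ $\sinkstate$), precisely the construction $\collapse(\MDP_\algoepisode, \{(\mathsf{R},\mathsf{B})\}, \initialstate, \targetset_\algoepisode)$ of \cref{sec:brtdp:collapsing}. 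Concretely: the branch $\targetset_\algoepisode\cap\mathsf{R}\neq\emptyset$ corresponds to collapsing into the target (all states of $\mathsf{R}$ have value $1$ by \cref{stm:ec_same_value}, so adding $\mathsf{R}$ to $\targetset_{\algoepisode+1}$ and making it absorbing is value-neutral); the branch $C=\emptyset$ corresponds to collapsing a bottom EC into the sink (no state of $\mathsf{R}$ is or represents a target state, since target-ECs are never collapsed but moved into $\targetset$, hence every state of $\mathsf{R}$ has value $0$); and the remaining branch is the literal MEC-quotient step, with $\algocollapsedstates_{\algoepisode+1}$ playing the role of the map $\eccollapsed$ for this single collapse and $\algorepresentative_{\algoepisode+1} = \algocollapsedstates_{\algoepisode+1}\circ\algorepresentative_\algoepisode$. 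In all three cases \cref{stm:collapse:reachability_equal} (together with the elementary observation about absorbing target/sink states) yields $\ProbabilityMDPmax<\MDP_\algoepisode, \bar s>[\reach\targetset_\algoepisode] = \ProbabilityMDPmax<\MDP_{\algoepisode+1}, \algocollapsedstates_{\algoepisode+1}(\bar s)>[\reach\targetset_{\algoepisode+1}]$ for every $\bar s \in \States_\algoepisode$; composing with the induction hypothesis applied at $\bar s = \algorepresentative_\algoepisode(s')$ gives the invariant for $\algoepisode+1$.

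Before invoking these results I would also record the small well-definedness facts needed to apply them: $\MDP_\algoepisode$ is a bona fide MDP with non-empty action sets (via \cref{stm:dql:disjoint_actions,stm:dql:representative_in_states}), and $(\mathsf{R},\mathsf{B})$ is a set of pairwise-disjoint ECs of $\MDP_\algoepisode$ with $\mathsf{R},\mathsf{B}\neq\emptyset$ as required by the $\collapse$ definition. The main obstacle I anticipate is precisely the matching of bookkeeping: verifying that the limited-information surrogates $\targetset_\algoepisode$ and $\algozerostates_\algoepisode$ (which grow over episodes) behave, as far as $\reach$ is concerned, exactly like the single absorbing states $\targetstate,\sinkstate$ of the full-information collapse, and that the ``layered'' resolution $\algorepresentative_\algoepisode$ composes correctly so that nested collapses of ever-larger ECs are handled automatically by the induction rather than needing a direct argument about ECs of $\MDP$. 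Everything else is a routine transfer of \cref{stm:collapse:reachability_equal} along the per-episode quotient.
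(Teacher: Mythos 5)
Your proposal is correct and follows essentially the same route as the paper's proof: induction on the episode index, the same base-case observation that $\MDP_1$ only makes targets absorbing, and the same three-way case distinction in the inductive step, resolved via \cref{stm:ec_same_value} and the reasoning of \cref{stm:collapse:reachability_equal}. Your extra care about the bookkeeping (composition of $\algorepresentative$, matching $\targetset_\algoepisode$ and $\algozerostates_\algoepisode$ with the absorbing $\targetstate,\sinkstate$ of the full-information collapse) is a welcome elaboration of details the paper treats implicitly, but it is not a different argument.
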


\begin{proof}
	We prove by induction over the episode number.
	Initially, we have that $\MDP_1$ is quite similar to the original MDP $\MDP$.
	Recall that $\algozerostates_1 = \emptyset$ and $\algorepresentative_1(s) = s$ for all states.
	Hence, the only difference lies in the transition function of all states $s \in \targetset$.
	These only have self-loops in $\MDP_1$, while in $\MDP$ they may have arbitrary transitions.
	This is irrelevant for the value of the states, since it equals 1 in both cases.

	Now fix an arbitrary episode $\algoepisode$.
	We have that $\val_\algoepisode(s) = \val(s')$ \alabel{proof:dql:values_collapse:ih}{IH} for any two states $s$, $s'$ as in the claim.
	$\MDP_\algoepisode$ is only modified when \cref{alg:dql:line:appear} is entered.
	Let $(R, B)$ the identified set of states and actions.
	Due to \cref{asm:dql:always_collapse_ec}, $(R, B)$ is an EC of $\MDP_\algoepisode$.
	To conclude, we distinguish the three cases in the algorithm:
	\begin{itemize}
		\item
		$\targetset_\algoepisode \intersection R \neq \emptyset$: Since $(R, B)$ is an EC, any state $s \in R$ can reach $\targetset_\algoepisode$ with probability one.
		Hence $\val_{\algoepisode + 1}(s) = 1 = \val_\algoepisode(s) = \val(s')$ \ref{proof:dql:values_collapse:ih}.
		In particular, by adding all states of $R$ to $\targetset_{\algoepisode + 1}$, we do not change their value.

		\item
		$C = \emptyset$: Once in $R$, this EC cannot be left, i.e.\ $\ProbabilityMDPmax<\MDP_\algoepisode, s>[\reach \overline{R}] = 0$ for all $s \in R$.
		Consequently, we have that $\val_\algoepisode(s) = 0 = \val(s')$ \ref{proof:dql:values_collapse:ih}.
		This value is unchanged by adding the states of $R$ to $\algozerostates_{\algoepisode + 1}$ and thus introducing a self-loop in $\MDP_\algoepisode$.

		\item
		Add a representative state:
		By assumption, we have that $\algorepresentative_\algoepisode(s') \in R$ and thus $\algorepresentative_{\algoepisode + 1}(s') = s_{(R, B)}$.
		We need to prove that $\val_{\algoepisode + 1}(s_{(R, B)}) = \val(s')$.
		As $(R, B)$ is an EC by assumption, each state in $R$ has the same value by \cref{stm:ec_same_value}.
		The representative state $s_{(R,B)}$ has this value by applying the same reasoning as in \cref{stm:collapse:reachability_equal}. \qedhere
	\end{itemize}
\end{proof}

\begin{lemma} \label{stm:dql:ec_contained_in_previous}
	Assume that \cref{asm:dql:always_collapse_ec} holds and fix some episode $\algoepisode$.
	For any EC $(R, B) \in \Ecs(\MDP_\algoepisode)$ and $\algoepisode' \leq \algoepisode$ there exists an EC $(R', B') \in \Ecs(\MDP_{\algoepisode'})$ with $B \subseteq B'$.
\end{lemma}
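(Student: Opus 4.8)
The plan is a downward induction on $\algoepisode'$, with $\algoepisode$ and the EC $(R,B)\in\Ecs(\MDP_\algoepisode)$ fixed. The base case $\algoepisode'=\algoepisode$ is immediate with $(R',B')=(R,B)$. For the step it suffices to establish the \emph{one-step} version of the statement — for every $\algoepisode$, every EC $(R'',B'')$ of $\MDP_{\algoepisode+1}$ satisfies $B''\subseteq B'$ for some EC $(R',B')$ of $\MDP_\algoepisode$ — and then compose: the induction hypothesis at $\algoepisode'+1$ yields an EC of $\MDP_{\algoepisode'+1}$ whose action set contains $B$, and the one-step claim applied to it yields the desired EC of $\MDP_{\algoepisode'}$, giving $B\subseteq B'$. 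That $B$ remains a legitimate action set (actions removed when collapsing never reappear) is read off \cref{alg:dql}, and well-definedness of the objects used below is \cref{stm:dql:disjoint_actions,stm:dql:representative_in_states}.

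To prove the one-step claim I would case on how \cref{alg:dql} produces $\MDP_{\algoepisode+1}$ from $\MDP_\algoepisode$. If episode $\algoepisode$ never reaches \cref{alg:dql:line:appear}, then $\MDP_{\algoepisode+1}=\MDP_\algoepisode$ and any EC works. Otherwise, by \cref{asm:dql:always_collapse_ec} the pair $(\mathsf{R},\mathsf{B})$ returned by $\appear$ is an EC of $\MDP_\algoepisode$ with $\mathsf{B}\neq\emptyset$, and exactly one of three branches runs. In the target branch some state of $\mathsf{R}$ lies in $\targetset_\algoepisode$ and is hence absorbing in $\MDP_\algoepisode$; since an EC is internally strongly connected, $\mathsf{R}$ must be exactly that state, so $\targetset_{\algoepisode+1}=\targetset_\algoepisode$ and again $\MDP_{\algoepisode+1}=\MDP_\algoepisode$.

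The two substantive branches are the zero branch ($C=\emptyset$) and the representative branch. In both, I would first note that $\mathsf{R}$ is disjoint from $\targetset_\algoepisode\union\algozerostates_\algoepisode$ (a state there is reached at most once before the sampling loop stops, so cannot occur $\algolimit\geq 2$ times on the recorded path), hence its states are non-absorbing in $\MDP_\algoepisode$. In the zero branch only $\algozerostates$ changes ($\algozerostates_{\algoepisode+1}=\algozerostates_\algoepisode\union\mathsf{R}$), so $\MDP_{\algoepisode+1}$ differs from $\MDP_\algoepisode$ only in that the states of $\mathsf{R}$ become absorbing; given an EC $(R'',B'')$ of $\MDP_{\algoepisode+1}$, if $R''\intersection\mathsf{R}=\emptyset$ the two transition functions agree on $R''$ so $(R'',B'')$ is already an EC of $\MDP_\algoepisode$, and if $R''$ meets $\mathsf{R}$ then $R''=\{s\}$ with $s\in\mathsf{R}$ and $B''\subseteq\stateactions_\algoepisode(s)\subseteq\mathsf{B}$ (the last inclusion being precisely $C=\emptyset$), so $(\mathsf{R},\mathsf{B})$ works. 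In the representative branch $\MDP_{\algoepisode+1}$ is the one‑EC collapse $\collapse(\MDP_\algoepisode,\{(\mathsf{R},\mathsf{B})\},\cdot,\cdot)$ of \cref{sec:brtdp:collapsing}, up to the cosmetic absence of fresh $\targetstate,\sinkstate$ and of the $\ecremain$ action (which only create trivial ECs): \cref{alg:dql:line:set_rep} realises the collapse quotient map and the rule of \cref{def:dql:sampling_mdp} gives $\mdptransitions_{\algoepisode+1}(s,a,s_{(\mathsf{R},\mathsf{B})})=\sum_{r\in\mathsf{R}}\mdptransitions_\algoepisode(s,a,r)$. Hence \cref{stm:collapse:ec_correspondence_collapse_to_normal} applies: an EC of $\MDP_{\algoepisode+1}$ avoiding $s_{(\mathsf{R},\mathsf{B})}$ is directly an EC of $\MDP_\algoepisode$, while one containing $s_{(\mathsf{R},\mathsf{B})}$ lifts to the EC $(\ecstates(R''),B''\union\mathsf{B})$ of $\MDP_\algoepisode$; in both cases $B''\subseteq B'$.

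I expect the representative branch to be the main obstacle: one must make precise that $\MDP_{\algoepisode+1}$ really is a single‑EC collapse of $\MDP_\algoepisode$ (matching $\algorepresentative_{\algoepisode+1}$ restricted to $\States_\algoepisode$ with the collapse map, and verifying the transition rule), and check that the minor structural differences from \cref{sec:brtdp:collapsing} leave the relevant ECs untouched, so that \cref{stm:collapse:ec_correspondence_collapse_to_normal} — or a verbatim re‑run of its proof — can be invoked. The reduction, the target and zero branches, and the action‑set bookkeeping are all routine.
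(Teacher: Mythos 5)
Your proposal is correct and follows essentially the same route as the paper's proof: an induction over episodes reduced to a one-step claim, a case split on the three branches of the EC-update, using \cref{asm:dql:always_collapse_ec} to know $(\mathsf{R},\mathsf{B})$ is an EC of $\MDP_\algoepisode$, and \cref{stm:collapse:ec_correspondence_collapse_to_normal} for the representative branch. Your handling of the target and zero branches is in fact slightly more careful than the paper's (observing that an EC containing an absorbing state must be a singleton, and using $C=\emptyset$ explicitly), but the overall argument is the same.
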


\begin{proof}
	Note that we do not necessarily have that $R \subseteq R'$, since some states of the EC may have been replaced by a representative state.

	We prove by induction on the episode $\algoepisode$.
	Fix any such episode $\algoepisode$ and EC $(R, B) \in \Ecs(\MDP_{\algoepisode + 1})$.
	We only modify the MDP $\MDP_\algoepisode$ when the algorithm enters \cref{alg:dql:line:appear}, hence w.l.o.g.\ we assume that this happened in episode $\algoepisode$.
	Let $(\mathsf{R}, \mathsf{B})$ be the set of states and actions identified in \cref{alg:dql:line:appear} during episode $\algoepisode$.
	By \cref{asm:dql:always_collapse_ec}, $(\mathsf{R}, \mathsf{B})$ is an EC of $\MDP_\algoepisode$.
	As above, we distinguish the three cases in the algorithm:
	\begin{itemize}
		\item $\targetset_\algoepisode \intersection \mathsf{R} \neq \emptyset$:
			Then, all actions in $\mathsf{B}$ are changed to a self-loop in $\MDP_{\algoepisode + 1}$ and hence we either have $B = \{a\} \subseteq \mathsf{B}$ or $B \intersection \mathsf{B} = \emptyset$.
			In the former case, $(\mathsf{R}, \mathsf{B})$ satisfies the conditions of the claim.
			In the latter, the EC $(R, B)$ already existed in $\MDP_\algoepisode$, since no state or action of $(R, B)$ was modified.
		\item $C = \emptyset$:
			Analogously to the above, all actions in $\mathsf{B}$ are now a self-loop in $\MDP_{\algoepisode + 1}$ and the same reasoning applies.
		\item Add a representative state:
			If $s_{(\mathsf{R}, \mathsf{B})} \notin R$, we necessarily have that $\mathsf{B} \intersection B = \emptyset$.
			Hence, the EC $(R, B)$ again already existed in $\MDP_\algoepisode$, since none of its components was modified by this step.
			If instead $s_{(\mathsf{R}, \mathsf{B})} \in R$, we have that $(\mathsf{R} \union R, \mathsf{B} \union B)$ is an EC in $\MDP_\algoepisode$, following the same reasoning as in \cref{stm:collapse:ec_correspondence_collapse_to_normal}. \qedhere 
	\end{itemize}
\end{proof}

\begin{lemma} \label{stm:dql:ec_bounds_one_or_zero}
	Assume that \cref{asm:dql:always_collapse_ec} holds and fix some step $\algostep$ with corresponding episode $\algoepisode$.
	Let $(R, B) \in \Ecs(\MDP_\algoepisode)$ be any EC in $\MDP_\algoepisode$.
	For any $a \in B$ we have that (i)~if $\actionstate<\MDP_\algoepisode>(a) \in \algozerostates_\algoepisode$, then $\upperbound_\algostep(a) = 0$ and (ii)~$\upperbound_\algostep(a) = 1$ otherwise.
\end{lemma}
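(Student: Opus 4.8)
The plan is to prove parts (i) and (ii) together by strong induction on the step $\algostep$, using throughout that $\upperbound$ is non-increasing along the execution and always lies in $[0,1]$ (the only operations on $\upperbound$ are successful updates, which strictly decrease it, and the reset to $0$ in \cref{alg:dql:line:set_upper_zero}). Note that only \cref{asm:dql:always_collapse_ec} is used; the statistical assumptions \cref{asm:dql_no_ec:sampled_value_close_to_real_value,asm:dql_no_ec:converged_successful_update} play no role, since the argument is purely structural.

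Part (i) I would settle directly, without the induction hypothesis. If $\actionstate<\MDP_\algoepisode>(a)\in\algozerostates_\algoepisode$, this state was put into $\algozerostates$ at the end of some earlier episode $\algoepisode'<\algoepisode$ in the branch with $C=\emptyset$, together with the set $\mathsf{B}'$ of actions identified by $\appear$; because $C=\Union_{s\in\mathsf{R}'}\stateactions_{\algoepisode'}(s)\setminus\mathsf{B}'=\emptyset$, we have $\mathsf{B}'=\Union_{s\in\mathsf{R}'}\stateactions_{\algoepisode'}(s)$. Since $\algozerostates$-states and their action sets are never modified afterwards, $a$ still satisfies $\actionstate(a)\in\mathsf{R}'$ and $a\in\mathsf{B}'$, so $\upperbound$ of $a$ was set to $0$ in \cref{alg:dql:line:set_upper_zero}; being non-increasing and nonnegative, $\upperbound_\algostep(a)=0$.

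For part (ii) I would first collect two structural observations. First, $R\cap\algozerostates_\algoepisode=\emptyset$: a state in $\algozerostates_\algoepisode$ is a self-loop in $\MDP_\algoepisode$, so any EC containing it is a singleton, which would force $\actionstate<\MDP_\algoepisode>(a)\in\algozerostates_\algoepisode$, contrary to the hypothesis of (ii). Second, by \cref{stm:dql:ec_contained_in_previous}, for every episode $\algoepisode'\le\algoepisode$ there is an EC $(R',B')\in\Ecs(\MDP_{\algoepisode'})$ with $B\subseteq B'$; hence $a$ is an EC-action of $\MDP_{\algoepisode'}$ as well, and by the same singleton argument $\actionstate<\MDP_{\algoepisode'}>(a)\notin\algozerostates_{\algoepisode'}$. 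Now suppose for contradiction that $\upperbound_\algostep(a)<1$. Since $\upperbound$ starts at $1$ and is non-increasing, there is a unique step $\algostep^*<\algostep$ with $\upperbound_{\algostep^*}(a)=1$ and $\upperbound_{\algostep^*+1}(a)<1$. The decrease at $\algostep^*$ cannot be the reset of \cref{alg:dql:line:set_upper_zero} (that would put $\actionstate(a)$ into $\algozerostates$, which only grows, contradicting $\actionstate<\MDP_\algoepisode>(a)\notin\algozerostates_\algoepisode$), so it is a successful $\upperbound$-update of $a$ at step $\algostep^*$, i.e.\ $\tfrac1\delay\sum_{i=1}^\delay\upperbound_{k_i}(s'_{k_i})<\upperbound_{\algostep^*}(a)-2\updatestep=1-2\updatestep$, where $k_1<\dots<k_\delay=\algostep^*$ are the last $\delay$ visits to $a$ and $s'_{k_i}=\algorepresentative_{\algoepisode(k_i)}(s''_{k_i})$ is the collapsed successor sampled at step $k_i$ (with $s''_{k_i}$ the original-MDP successor and $\algoepisode(k_i)$ its episode). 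I would then show every summand equals $1$, giving the contradiction: $s''_{k_i}\in\support(\mdptransitions(\actionstate<\MDP>(a),a))$, and since $a$ is internal to the EC of $\MDP_{\algoepisode^*}$ containing it (use the second observation at $\algoepisode^*$, together with part (i) to exclude $\actionstate<\MDP_{\algoepisode^*}>(a)\in\algozerostates_{\algoepisode^*}$, which would give $\upperbound_{\algostep^*}(a)=0\ne1$), the representative $\algorepresentative_{\algoepisode^*}(s''_{k_i})$ lies in that EC; pulling this back with \cref{stm:dql:ec_contained_in_previous} shows $s'_{k_i}$ itself lies in an EC of $\MDP_{\algoepisode(k_i)}$ that avoids $\algozerostates_{\algoepisode(k_i)}$, so this EC contains an action $a'_i$ with $\actionstate<\MDP_{\algoepisode(k_i)}>(a'_i)=s'_{k_i}$; the induction hypothesis applied at the step $k_i<\algostep$ yields $\upperbound_{k_i}(a'_i)=1$, whence $\upperbound_{k_i}(s'_{k_i})=\max_{a''\in\stateactions_{\algoepisode(k_i)}(s'_{k_i})}\upperbound_{k_i}(a'')=1$. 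Then $\tfrac1\delay\sum_i\upperbound_{k_i}(s'_{k_i})=1\not<1-2\updatestep$, contradicting the success of the update; hence $\upperbound_\algostep(a)=1$.

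The step I expect to be the main obstacle is the last one: relating the successor $s'_{k_i}$ recorded in the accumulator during an earlier episode $\algoepisode(k_i)$ to the EC of the current sampling MDP $\MDP_{\algoepisode^*}$. This requires transitivity of the representative map across episodes (that $\algorepresentative_{\algoepisode^*}$ is obtained by further collapsing $\algorepresentative_{\algoepisode(k_i)}$) and the "backward" direction of the EC correspondence in \cref{stm:dql:ec_contained_in_previous} (an EC of a later sampling MDP pulls back to an EC of an earlier one whose state set is the full pre-image), both implicit in that lemma's proof but needing a careful restatement here, plus the fact that no state which eventually collapses into an EC can ever have entered $\algozerostates$. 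The remaining work is a short case analysis at episode boundaries (target-EC, zero-EC, new representative in \cref{alg:dql:line:appear}) confirming that the relevant upper bounds are only ever changed in the ways already accounted for.
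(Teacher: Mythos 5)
Your proof is correct and follows essentially the same route as the paper's: part~(i) directly from the zero-reset plus monotonicity of $\upperbound$, and part~(ii) by induction on the step, pulling the EC back through earlier episodes via \cref{stm:dql:ec_contained_in_previous} so that every sampled (collapsed) successor lies in an EC without zero-states, whence the induction hypothesis forces every accumulated value to be $1$ and no successful decrease can occur. Your framing as a contradiction at the first decreasing step, and your detour through $\MDP_{\algoepisode^*}$ before pulling back, are only cosmetic departures from the paper's direct argument.
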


\begin{proof}
	Item~(i) immediately follows from the definition of the algorithm and $\MDP_\algoepisode$.
	When a state is added to $\algozerostates_\algoepisode$, we set $\upperbound_\algostep(a) = 0$ for all its actions.
	We prove Item~(ii) by induction, showing that the statement holds for all ECs at each step $\algostep$.
	Initially, we have $\upperbound_1(a) = 1$ for all actions by definition of the algorithm.
	For the induction step fix some step $\algostep$.
	We have that $\upperbound_{\algostep'}(a) = 1$ for all actions $a$ in all ECs without zero-states for all $\algostep' \leq \algostep$ \alabel{proof:dql:ec_bounds_one_or_zero:ih}{IH}.
	Now, let $\algoepisode'$ be the episode of step $\algostep + 1$ and fix any EC $(R, B)$ in $\MDP_{\algoepisode'}$ with $R \intersection \algozerostates_\algoepisode = \emptyset$.
	By repeatedly applying \cref{stm:dql:ec_contained_in_previous}, there exists an EC $(R_{\algoepisode'}, B_{\algoepisode'}) \in \Ecs(\MDP_{\algoepisode'})$ with $B \subseteq B_{\algoepisode'}$ for all $\algoepisode' \leq \algoepisode$.
	Since we have no zero-states in the EC in step $\algostep + 1$, none of the $R_{\algoepisode'}$ contain zero-states either, by construction of the algorithm and $\MDP_\algoepisode$.
	Thus, the induction hypothesis \ref{proof:dql:ec_bounds_one_or_zero:ih} is applicable and we have that $\upperbound_{\algostep'}(a) = 1$ for any action $a \in B_{\algoepisode'}$ and $\algostep' \leq \algostep$.
	Hence, we necessarily have that $\upperbound_{\algostep'}(s) = 1$ for all $s \in R_{\algoepisode'}$ and $\algostep' \leq \algostep$ (also using \cref{stm:dql:collapsed_states_unchanged}).
	Whenever any action $a \in B$ is selected at any step $\algostep' \leq \algostep$ during episode $\algoepisode' \leq \algoepisode$, all of its successors are part of the EC $(R_{\algoepisode'}, B_{\algoepisode'})$, thus $\upperbound_{\algostep'}(s) = 1$ for all successors by the above reasoning.
	Consequently, we always add a value of $1$ to $\acc_\algostep^\upperbound(a)$ and whenever an $\upperbound$-update is attempted for action $a$ at some step $\algostep' \leq \algostep$, we would set $\upperbound_{\algostep'}(a) = 1$. 
\end{proof}

\begin{lemma} \label{stm:dql:bounds_monotone}
	Assume that \cref{asm:dql:always_collapse_ec} holds and fix some step $\algostep$ with corresponding episode $\algoepisode$.
	Let $\algostep' \geq \algostep$ with episode $\algoepisode' \geq \algoepisode$.
	We have for any state $s \in \States$ that $\upperbound_{\algostep'}(\algorepresentative_{\algoepisode'}(s)) \leq \upperbound_{\algostep}(\algorepresentative_\algoepisode(s))$ and $\lowerbound_{\algostep}(\algorepresentative_\algoepisode(s)) \leq \lowerbound_{\algostep'}(\algorepresentative_{\algoepisode'}(s))$.
\end{lemma}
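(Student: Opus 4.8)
The plan is to prove both inequalities simultaneously by strong induction on the step $\algostep'$: assuming the statement for all earlier steps, I show that advancing the algorithm by one step — a regular sampling step, or the end-of-episode transition that runs the code after \cref{alg:dql:line:sample_loop} — can only decrease $\upperbound_\algostep(\algorepresentative_\algoepisode(s))$ and increase $\lowerbound_\algostep(\algorepresentative_\algoepisode(s))$; the claim for an arbitrary pair $\algostep\le\algostep'$ then follows by chaining with the induction hypothesis. Most single-step transitions are easy. A regular sampling step leaves $\algocollapsedstates_\algoepisode$, hence $\algorepresentative_\algoepisode$, unchanged, and every per-action value moves monotonically by the update rule of \cref{alg:dql_no_ec} (a successful $\upperbound$-update strictly decreases, a successful $\lowerbound$-update strictly increases the value, cf.\ the proof of \cref{stm:dql_no_ec:successful_update_count}), so the state-bounds over the fixed action set do too. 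An end-of-episode transition entering \cref{alg:dql:line:appear} in the target- or zero-EC branch likewise leaves $\algocollapsedstates$ and $\algorepresentative$ unchanged and only performs the assignments $\lowerbound\gets 1$ and $\upperbound\gets 0$, which are monotone because all bounds stay in $[0,1]$; transitions touching nothing are trivial.

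The substantial case is the third branch of \cref{alg:dql:line:appear}, where the states $\mathsf{R}$ of an EC $(\mathsf{R},\mathsf{B})$ of $\MDP_\algoepisode$ (an EC by \cref{asm:dql:always_collapse_ec}) are replaced by a fresh representative $s_{(\mathsf{R},\mathsf{B})}$ with $\stateactions_{\algoepisode+1}(s_{(\mathsf{R},\mathsf{B})})=C=\bigcup_{s\in\mathsf{R}}\stateactions_\algoepisode(s)\setminus\mathsf{B}$. Two preliminary observations: first, $\mathsf{R}$ contains no state of $\targetset_\algoepisode\cup\algozerostates_\algoepisode$, since the loop of \cref{alg:dql:line:sample_loop} stops on entering such a state, so no state from which an action is ever taken in that episode — in particular no state of $\mathsf{R}$ — can lie there; second, by definition of $\appear$ every $s'\in\mathsf{R}$ has an action of $\mathsf{B}$ enabled. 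The representative map changes only for states $s$ with $\algorepresentative_\algoepisode(s)\in\mathsf{R}$ (for the others, action-wise monotonicity as above suffices), and then $\algorepresentative_{\algoepisode+1}(s)=s_{(\mathsf{R},\mathsf{B})}$. Let $\algostep$ be the step of the collapse; no bound changes there. For the upper bound, \cref{stm:dql:ec_bounds_one_or_zero}\,(ii) gives $\upperbound_\algostep(a)=1$ for every $a\in\mathsf{B}$ (no state of $\mathsf{R}$ is a zero state), so $\upperbound_\algostep(s')=1$ for all $s'\in\mathsf{R}$ by the second observation, while $\upperbound_\algostep(s_{(\mathsf{R},\mathsf{B})})=\max_{a\in C}\upperbound_\algostep(a)\le 1$, yielding $\upperbound_\algostep(s_{(\mathsf{R},\mathsf{B})})\le\upperbound_\algostep(\algorepresentative_\algoepisode(s))$.

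For the lower bound I must show $\lowerbound_\algostep(s_{(\mathsf{R},\mathsf{B})})=\max_{a\in C}\lowerbound_\algostep(a)\ge\lowerbound_\algostep(\algorepresentative_\algoepisode(s))$. Put $\ell=\max_{s'\in\mathsf{R}}\lowerbound_\algostep(s')\ge\lowerbound_\algostep(\algorepresentative_\algoepisode(s))$; if $\ell=0$ this is immediate, so assume $\ell>0$ and choose $s^\dagger\in\mathsf{R}$ and $a^\dagger\in\stateactions_\algoepisode(s^\dagger)$ with $\lowerbound_\algostep(a^\dagger)=\ell$. The key claim is that $a^\dagger$ does not stay inside $\mathsf{R}$, i.e.\ $\support(\mdptransitions_\algoepisode(s^\dagger,a^\dagger))\not\subseteq\mathsf{R}$. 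A positive value $\lowerbound_\algostep(a^\dagger)$ must have been produced by a successful $\lowerbound$-update at some step $k_\delay<\algostep$ (no state of $\mathsf{R}$ is ever a target, so this value is never set to $1$), whose result equals $\frac{1}{\delay}\sum_{i=1}^\delay\lowerbound_{k_i}(s'_{k_i})-\updatestep$ over the resolved successors $s'_{k_i}$ of the $\delay$ most recent visits; each $s'_{k_i}$ is the episode-$\algoepisode(k_i)$ representative of a raw successor in $\support(\mdptransitions(\actionstate<\MDP>(a^\dagger),a^\dagger))$, so by the induction hypothesis $\lowerbound_{k_i}(s'_{k_i})\le\lowerbound_\algostep(s'')$ for some $s''\in\support(\mdptransitions_\algoepisode(s^\dagger,a^\dagger))$, whence $\ell=\lowerbound_\algostep(a^\dagger)<\max_{s''\in\support(\mdptransitions_\algoepisode(s^\dagger,a^\dagger))}\lowerbound_\algostep(s'')$. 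Were $a^\dagger$ internal to $\mathsf{R}$ this maximum would be $\le\ell$, a contradiction. Since EC actions stay in $\mathsf{R}$ by \cref{def:ec}\,(i), $a^\dagger\notin\mathsf{B}$, hence $a^\dagger\in C$ and $\lowerbound_\algostep(s_{(\mathsf{R},\mathsf{B})})\ge\lowerbound_\algostep(a^\dagger)=\ell\ge\lowerbound_\algostep(\algorepresentative_\algoepisode(s))$.

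The main obstacle is precisely this lower-bound step: the accumulator underlying $\lowerbound_\algostep(a^\dagger)$ records successor state-bounds read at earlier steps under the then-finer representative maps, so comparing them with current bounds is possible only by invoking the very statement being proved — which is why the proof must be organised as a strong induction over steps rather than a plain one-step chain. A secondary nuisance is the bookkeeping relating the raw MDP $\MDP$, the sampling MDPs $\MDP_\algoepisode$ of \cref{def:dql:sampling_mdp}, and the output of $\appear$ — in particular that "$a^\dagger$ stays in $\mathsf{R}$" is equivalent to $\support(\mdptransitions_\algoepisode(s^\dagger,a^\dagger))\subseteq\mathsf{R}$ and that $\algorepresentative_\algoepisode$ sends the raw successors of $a^\dagger$ into that support; these follow from \cref{stm:dql:disjoint_actions} and \cref{stm:dql:representative_in_states} but require care, as does making the step/episode correspondence across the EC-handling code precise.
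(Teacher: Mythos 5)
Your proposal is correct, and for the difficult part --- the lower bound across an EC collapse --- it takes a genuinely different route from the paper. The paper organises the argument around an auxiliary invariant, maintained by a parallel induction alongside the main claim: for every EC $(R,B)$ of $\MDP_\algoepisode$ without target states, every internal action $a \in B$ satisfies $\lowerbound_\algostep(a) \leq \max_{s \in R,\, a' \in \stateactions_\algoepisode(s) \setminus B} \lowerbound_\algostep(a')$; at collapse time this invariant immediately gives that restricting to the outgoing actions $C$ does not decrease the state bound. You instead argue only at the moment of collapse, by contradiction: any action $a^\dagger$ realising the maximal lower bound $\ell > 0$ over $\mathsf{R}$ owes that value to a successful update, which by the $-\updatestep$ slack in \cref{alg:dql_no_ec:line:update_lowerbound_value} and the induction hypothesis at the earlier visit times forces $\lowerbound_\algostep(a^\dagger)$ to be \emph{strictly} below the maximal current lower bound of its successors in $\MDP_\algoepisode$ --- impossible if $a^\dagger$ were internal, since then all successors lie in $\mathsf{R}$. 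This buys you a shorter proof that avoids carrying the auxiliary invariant through every step, at the price of leaning on the strictness $\updatestep > 0$ and on the small side argument that actions owned by states of a non-target $\mathsf{R}$ never had their lower bound set to $1$ in \cref{alg:dql:line:set_lower_one} (which holds because such states would have been absorbed into $\targetset$, and the third branch requires $\targetset_\algoepisode \intersection \mathsf{R} = \emptyset$). Your treatment of the upper bound at collapse via \cref{stm:dql:ec_bounds_one_or_zero} is also slightly more careful than the paper's appeal to shrinking action sets, since $C$ aggregates actions from several states of $\mathsf{R}$ and is not literally a subset of any single state's former action set; showing all states of $\mathsf{R}$ carry upper bound $1$ closes that gap cleanly.
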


\begin{proof}
	The bounds of actions are modified by (i)~the usual update, which only increases or decreases, respectively (ii)~in \cref{alg:dql:line:set_upper_zero,alg:dql:line:set_lower_one}, where upper bounds are set to $0$ and lower bounds set to $1$, or (iii)~when an EC is collapsed and thus the set of available actions is modified in \cref{alg:dql:line:set_representative_actions}.
	Cases (i) and (ii) preserve monotonicity of the state bound by definition.
	Case (iii) is proven separately for upper and lower bounds, with the proof of the lower bound being significantly more involved.
	For the upper bounds, observe that $\stateactions_{\algoepisode'}(s) \subseteq \stateactions_{\algoepisode}(s)$ by definition, i.e.\ we never add new actions to any state.
	Consequently, the maximum over the set of available actions does not increase.
	For the lower bounds, we have to show that while collapsing ECs and thus removing actions, we never remove all those which are optimal w.r.t.\ the lower bound, i.e.\ all actions $a \in \stateactions_\algoepisode(s)$ with $\lowerbound_\algostep(a) = \lowerbound_\algostep(s)$.

	We proceed by additionally proving an auxiliary statement by induction on the step $\algostep$ in parallel.
	In particular, we prove that for any step $\algostep$ with corresponding episode $\algoepisode$ (i)~the statement of the lemma holds \alabel{proof:dql:lower_bounds_monotone:induction_lemma}{IH1} and (ii)~$\lowerbound_{\algostep}(a) \leq \max_{s \in R, a' \in \stateactions_{\algoepisode}(s) \setminus B} \lowerbound_\algostep(a')$ for all actions $a \in B$ (or $0$ if no such actions $a'$ exist) in all ECs $(R, B) \in \Ecs(\MDP_\algoepisode)$ without a target state, i.e.\ $R \intersection \targetset_\algoepisode = \emptyset$. \alabel{proof:dql:lower_bounds_monotone:induction_aux}{IH2}.

	Initially, we have $\lowerbound_1(a) = 0$ by definition of the algorithm and both statements trivially hold.
	For the induction step fix some time step $\algostep$.
	We first treat the case when the lower bound of action an action $a$ is successfully updated in step $\algostep$ and later on deal with the case of an EC being collapsed.
	Note that \ref{proof:dql:lower_bounds_monotone:induction_lemma} trivially holds in this case, since the value of $a$ is never decreased.
	We only need to show the second statement \ref{proof:dql:lower_bounds_monotone:induction_aux}, thus assume that the updated action $a$ is an internal action of some EC $(R, B)$, i.e.\ $a \in B$.
	For readability, denote $C = \Union_{s \in R} \stateactions_\algoepisode(s) \setminus B$ the set of outgoing actions of $(R, B)$.
	If $C = \emptyset$, the statement follows directly:
	Since all lower bounds are initialised to zero, the EC does not contain any target states by assumption, and there are no outgoing actions, the algorithm never updates the lower bound of any action in $B$ to a non-zero value.
	Thus, assume that $C \neq \emptyset$.
	By applying \ref{proof:dql:lower_bounds_monotone:induction_aux} to all states of the EC $(R, B)$, we get that ${\max}_{a' \in C} \lowerbound_\algostep(a') = {\max}_{s \in R} \lowerbound_\algostep(s)$ \plabel{proof:dql:lower_bounds_monotone:applied_hypothesis}.
	Furthermore, let $k_1 < \ldots < k_\delay = \algostep$ the steps of the most recent visits to $a$ with corresponding episodes $\algoepisode_1 \leq \ldots \leq \algoepisode_\delay = \algoepisode$ and sampled successors $s_{k_i}'$.
	Now, let $R_i = \algorepresentative_{\algoepisode_i}(\ecstates_\algoepisode(R))$ for $1 \leq i \leq \delay$ the set of states in episode $\algoepisode_i$ which eventually are collapsed to $R$.
	By applying the reasoning of \cref{stm:dql:ec_contained_in_previous} and \ref{stm:collapse:ec_correspondence_collapse_to_normal}, there exists a set of actions $B_i$ with $B \subseteq B_i$ such that $(R_i, B_i)$ is an EC in $\MDP_{\algoepisode_i}$ and thus $s_{k_i}' \in R_i$ \plabel{proof:dql:lower_bounds_monotone:succ_in_Ri}, since $a \in B_i$.
	By construction, we have that $\algorepresentative_{\algoepisode}(R_i) = R$ \plabel{proof:dql:lower_bounds_monotone:rep_is_R}.
	Finally, we observe that the value of the outgoing actions does not decrease, hence the value we assign to $a$ in step $\algostep$ satisfies
	\begin{align*}
		& \lowerbound_{\algostep + 1}(a) + \updatestep \overset{\text{def}}{~=~} \frac{1}{\delay} {\sum}_{i = 1}^\delay \lowerbound_{k_i}(s_{k_i}') \\
		& \qquad \overset{\ref{proof:dql:lower_bounds_monotone:succ_in_Ri}}{~\leq~} \frac{1}{\delay} {\sum}_{i = 1}^\delay {\max}_{s \in R_i} \lowerbound_{k_i}(s) \\
		& \qquad \overset{\ref{proof:dql:lower_bounds_monotone:induction_lemma}}{~\leq~} \frac{1}{\delay} {\sum}_{i = 1}^\delay {\max}_{s \in R_i} \lowerbound_\algostep(\algorepresentative_\algoepisode(s)) \\
		& \qquad \overset{\ref{proof:dql:lower_bounds_monotone:rep_is_R}}{~=~} \frac{1}{\delay} {\sum}_{i = 1}^\delay {\max}_{s \in R_{\delay}} \lowerbound_\algostep(s) \\
		& \qquad ~=~ {\max}_{s \in R_{\delay}} \lowerbound_\algostep(s) \\
		& \qquad \overset{\ref{proof:dql:lower_bounds_monotone:applied_hypothesis}}{~=~} {\max}_{a' \in C_{\delay}} \lowerbound_\algostep(a').
	\end{align*}
	This concludes proof of the first part.

	For the second part, i.e.\ when a set of states is collapsed by the algorithm, we have that the collapsed set $(R, B)$ is an EC by \cref{asm:dql:always_collapse_ec} and $B$ are only internal actions.
	If the collapsed EC contains target states, the statement trivially holds.
	Otherwise, we apply the result of the first part and get that the lower bound assigned to any action in $B$ is less or equal to outgoing actions.
	Thus, removing the actions in $B$ from the set of available actions does not reduce the value of the obtained representative state. 
\end{proof}
With basic properties about the sampling MDP in place, we can now mimic the previous idea of defining \enquote{converged} state-action pairs and, using those, show that the algorithm eventually converges with high probability.
\begin{definition} \label{def:dql:converged_bounds}
	For every step $\algostep$ during episode $\algoepisode$, define $\ConvergedUpperBounds<\algostep>, \ConvergedLowerBounds<\algostep> \subseteq \Actions_{\algoepisode}$ by
	\begin{align*}
		\ConvergedUpperBounds<\algostep> & \coloneqq \{a \mid \upperbound_\algostep(a) - \ExpectedSumMDP{\mdptransitions_\algoepisode}{\actionstate<\MDP_\algoepisode>(a)}{a}{\ExpectedSumStrat{\strategy_\algostep}{\upperbound_\algostep}} \leq 3 \updatestep\} \text{ and}\\
		\ConvergedLowerBounds<\algostep> & \coloneqq \{a \mid \ExpectedSumMDP{\mdptransitions_\algoepisode}{\actionstate<\MDP_\algoepisode>(a)}{a}{\ExpectedSumStrat{\strategy_\algostep}{\lowerbound_\algostep}} - \lowerbound_\algostep(a) \leq 3 \updatestep\}.
	\end{align*}
	Again, an action $a$ is \emph{$\upperbound$-converged ($\lowerbound$-converged) at step $\algostep$} if $a \in \ConvergedUpperBounds<\algostep>$ ($a \in \ConvergedLowerBounds<\algostep>$).
\end{definition}
\begin{assumption} \label{asm:dql:sampled_value_close_to_real_value}
	Suppose an $\upperbound$-update of the action $a$ is attempted at step $\algostep$.
	Let $k_1 < k_2 < \ldots < k_\delay = \algostep$ be the steps of the $\delay$ most recent visits to $a$, and $e_1 \leq e_2 \leq \ldots \leq e_\delay$ the respective episodes.
	Then $\frac{1}{\delay} \sum_{i=1}^\delay \val_{\algoepisode_i}(s_{k_i}') \geq \val_{\algoepisode_\delay}(a) - \updatestep$.
	Analogously, for an attempted $\lowerbound$-update, we have $\frac{1}{\delay} \sum_{i=1}^\delay \val_{\algoepisode_i}(s_{k_i}') \leq \val_{\algoepisode_\delay}(a) + \updatestep$.
\end{assumption}

\begin{assumption} \label{asm:dql:converged_successful_update}
	Suppose an update of the upper bound (lower bound) of the action $a$ is attempted at step $\algostep$.
	Let $k_1 < k_2 < \ldots < k_\delay = \algostep$ be the steps of the $\delay$ most recent visits to $a$.
	If $a$ is not $\upperbound$-converged ($\lowerbound$-converged) at step $k_1$, the update at step $\algostep$ is successful.
\end{assumption}
We replicate most of the statements from the previous DQL algorithm.

\begin{lemma} \label{stm:dql:properties}
	The following properties hold for \cref{alg:dql}.
	\begin{enumerate}
		\item \label{item:stm:dql:properties:success_updates}
		The number of successful updates of $\upperbound$ and $\lowerbound$ is bounded by $\frac{\cardinality{\Actions}}{\updatestep}$ each.

		\item \label{item:stm:dql:properties:attempted_updates}
		The number of attempted updates of $\upperbound$ and $\lowerbound$ is bounded by $\updatecount$.

		\item \label{item:stm:dql:properties:sampled_value_close_to_real_value_probability}
		Assume that \cref{asm:dql:always_collapse_ec} holds.
		Then, the probability that \cref{asm:dql:sampled_value_close_to_real_value} is violated during the execution of \cref{alg:dql} is bounded by $\frac{\delta}{4}$.

		\item \label{item:stm:dql:properties:bounds_ordered}
		Assume that \cref{asm:dql:always_collapse_ec,asm:dql:sampled_value_close_to_real_value} hold.
		Then, we have $\lowerbound_\algostep(a) \leq \val_\algoepisode(a) \leq \upperbound_\algostep(a)$ for all episodes $\algoepisode$, steps $\algostep \geq \algostep_\algoepisode$, and actions $a \in \Actions_\algoepisode$.

		\item \label{item:stm:dql:properties:strategy_bounds}
		We have for every step $\algostep$ in episode $\algoepisode$ and state $s \in \States_\algoepisode$ that
		\begin{equation*}
			\ExpectedSumStrat{\strategy_\algostep}{\upperbound_\algostep}(s) = \upperbound_\algostep(s) \text{\quad and \quad} \ExpectedSumStrat{\strategy_\algostep}{\lowerbound_\algostep}(s) \leq \lowerbound_\algostep(s).
		\end{equation*}

		\item \label{item:stm:dql:properties:update_succeeds}
		If $a \notin \ConvergedUpperBounds<\algostep>$, then $a \notin \ConvergedUpperBounds<\algostep'>$ for all $\algostep' \geq \algostep$ until an $\upperbound$-update of action $a$ succeeds or the upper bound is set to $0$ in \cref{alg:dql:line:set_upper_zero}.

		\item \label{item:stm:dql:properties:converged_successful_update_probability}
		The probability that \cref{asm:dql:converged_successful_update} is violated during the execution of \cref{alg:dql_no_ec} is bounded by $\frac{\delta}{4}$.

		\item \label{item:stm:dql:properties:unsuccessful_update_means_converged}
		Assume that \cref{asm:dql:converged_successful_update} holds.
		If an attempted $\upperbound$-update  of action $a$ at step $\algostep$ fails and $\learn^\upperbound_{\algostep+1}(a) = \lfalse$, then $a \in \ConvergedUpperBounds<\algostep+1>$.
		Once no more updates of $\upperbound$ succeed, the analogous statement holds true for the lower bounds.

		\item \label{item:stm:dql:properties:non_converged_visit_count}
		Assume that \cref{asm:dql:converged_successful_update} holds.
		Then, there are at most $2 \delay \cdot \frac{\cardinality{\Actions}}{\updatestep}$ visits to state-action pairs which are not $\upperbound$-converged.
		Once the upper bounds are not updated any more, there are at most $2 \delay \cdot \frac{\cardinality{\Actions}}{\updatestep}$ visits to state-action pairs which are not $\lowerbound$-converged.
	\end{enumerate}
\end{lemma}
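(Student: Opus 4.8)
The plan is to observe that the nine items of \cref{stm:dql:properties} are, in order, the exact counterparts of \cref{stm:dql_no_ec:successful_update_count,stm:dql_no_ec:attempted_update_count,stm:dql_no_ec:sampled_value_close_to_real_value_probability,stm:dql_no_ec:bounds_ordered,stm:dql_no_ec:expected_sum_strategy,stm:dql_no_ec:nonconverged_bounds_monotonic,stm:dql_no_ec:converged_successful_update_probability,stm:dql_no_ec:unsuccessful_update_means_converged,stm:dql_no_ec:non_converged_visit_count} from the end-component-free setting, so I would prove the lemma item by item, each time replaying the earlier argument over the episode-dependent \emph{sampling MDP} $\MDP_\algoepisode$ of \cref{def:dql:sampling_mdp} in place of the fixed MDP, with $\val$ replaced by $\val_\algoepisode$ and $\mdptransitions$ by $\mdptransitions_\algoepisode$. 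Two facts make this substitution legitimate: \cref{stm:dql:sampling_mdp} says the distribution over the path sampled in episode $\algoepisode$ is precisely that of $\MDP_\algoepisode$ under $\strategy_\algoepisode$, and \cref{stm:dql:values_collapse} says collapsing never changes values, so $\val_\algoepisode(\algorepresentative_\algoepisode(s'')) = \val(s'')$ for every original state $s''$ and, expanding $\mdptransitions_\algoepisode$, $\val_\algoepisode(a) = \val(\actionstate<\MDP>(a), a)$ is in fact independent of $\algoepisode$. In addition, \cref{stm:dql:finite_representative_states,stm:dql:collapsed_states_unchanged} guarantee that no action is ever created and that a removed action's bounds are frozen, so every counting argument over $\Actions$ remains valid.

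\textbf{The combinatorial items (1, 2, 5, 8, 9).}
These carry over almost verbatim. For item~1, each successful $\upperbound$-update (resp.\ $\lowerbound$-update) still strictly decreases (increases) the action's bound by at least $\updatestep$, and the bounds stay in $[0,1]$ --- \cref{stm:dql:ec_bounds_one_or_zero} and the explicit $0/1$ assignments at the EC-collapse steps are only needed to confirm that this range is respected; since by \cref{stm:dql:finite_representative_states,stm:dql:collapsed_states_unchanged} there are at most $\cardinality{\Actions}$ actions in total and a frozen action contributes nothing further, the global bound $\cardinality{\Actions}/\updatestep$ follows. Item~2 then follows from item~1 and the $\learn$-flag bookkeeping exactly as in \cref{stm:dql_no_ec:attempted_update_count}. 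Item~5 is immediate from the definition of $\strategy_\algostep$ as an upper-bound maximiser and from $\lowerbound_\algostep(s)$ being a maximum over actions, as in \cref{stm:dql_no_ec:expected_sum_strategy}. Item~8 is the same three-case split as \cref{stm:dql_no_ec:unsuccessful_update_means_converged}, now reading $\ConvergedUpperBounds<\cdot>$ and $\ConvergedLowerBounds<\cdot>$ over $\MDP_\algoepisode$, and the caveat ``once no more $\upperbound$-updates succeed'' still covers the lower-bound case since $\ConvergedLowerBounds<\cdot>$ can only move when an upper or a lower bound changes. Item~9 replays \cref{stm:dql_no_ec:non_converged_visit_count} verbatim, feeding in items~1, 6, and~8.

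\textbf{The statistical and inductive items (3, 4, 6, 7).}
Items~3 and~7 are the Hoeffding arguments. By the Markov property the raw oracle outputs $s_{k_i}''$ are i.i.d.\ $\sim \mdptransitions(\actionstate<\MDP>(a), a)$, and by \cref{stm:dql:values_collapse} the variables $Y_i \coloneqq \val_{\algoepisode_i}(s_{k_i}') = \val(s_{k_i}'')$ are i.i.d.\ with common mean $\sum_{s''} \mdptransitions(\actionstate<\MDP>(a), a, s'') \val(s'') = \val_{\algoepisode_\delay}(a)$; likewise the variables $X_i \coloneqq \ExpectedSumStrat{\strategy_{k_1}}{\upperbound_{k_1}}(s_{k_i}')$ (and the $\lowerbound_{k_1}$ analogue) are i.i.d.\ because they are all defined through $\upperbound_{k_1}$ and $\strategy_{k_1}$. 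So the Hoeffding bound applies exactly as before, and a union bound over the at most $\updatecount$ attempted updates (item~2), together with a union over the upper- and lower-bound directions, yields the claimed $\tfrac{\delta}{4}$ bounds, just as in \cref{stm:dql_no_ec:sampled_value_close_to_real_value_probability,stm:dql_no_ec:converged_successful_update_probability}. Item~4 is an induction on $\algostep$: the ``no update'' and ``ordinary update'' branches are identical to \cref{stm:dql_no_ec:bounds_ordered} (using \cref{asm:dql:sampled_value_close_to_real_value} and the monotonicity of \cref{stm:dql:bounds_monotone}), and the only genuinely new cases are the three branches of the EC-collapse step, where \cref{stm:dql:values_collapse} (together with \cref{stm:dql:ec_bounds_one_or_zero}) shows that the hard-coded assignments $\lowerbound \gets 1$ in \cref{alg:dql:line:set_lower_one}, $\upperbound \gets 0$ in \cref{alg:dql:line:set_upper_zero}, and the re-homing of actions to a representative state are all consistent with $\val_\algoepisode$. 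Item~6 is the analogue of \cref{stm:dql_no_ec:nonconverged_bounds_monotonic} plus the extra escape clause ``$\upperbound$ set to $0$ in \cref{alg:dql:line:set_upper_zero}'', which is the only other way an $\upperbound$-value can drop.

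\textbf{Main obstacle.}
I expect the crux to be checking that the i.i.d.-plus-Hoeffding machinery still fires despite the value function $\val_\algoepisode$ drifting between episodes and despite actions being re-homed to representative states. Everything rests on the invariant --- already isolated in \cref{stm:dql:values_collapse,stm:dql:bounds_monotone} --- that collapsing preserves values and keeps $\upperbound$ non-increasing and $\lowerbound$ non-decreasing across the re-indexing; the subtlety in item~4 of threading the induction through the EC-collapse cases, and the lower-bound part of \cref{stm:dql:bounds_monotone} (that removing internal EC actions never deletes all $\lowerbound$-optimal ones), are the places where care is needed, but these are exactly the statements proved immediately above, so with them in hand the nine proofs are routine transcriptions of their EC-free namesakes.
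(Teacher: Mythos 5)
Your proposal matches the paper's proof essentially step for step: the same item-by-item reduction to the EC-free lemmas, justified by the same supporting facts (\cref{stm:dql:sampling_mdp,stm:dql:values_collapse,stm:dql:bounds_monotone,stm:dql:finite_representative_states,stm:dql:collapsed_states_unchanged,stm:dql:ec_bounds_one_or_zero}). The only detail to tighten is in item~7: to make the $X_i$ identically distributed you must evaluate the representative map frozen at episode $\algoepisode_1$, i.e.\ take $X_i = \ExpectedSumStrat{\strategy_{k_1}}{\upperbound_{k_1}}(\algorepresentative_{\algoepisode_1}(s_{k_i}''))$ rather than using the episode-$\algoepisode_i$ representative $s_{k_i}'$, and then transfer back via \cref{stm:dql:bounds_monotone} --- which is exactly what the paper does.
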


\begin{proof}
	\Cref{item:stm:dql:properties:success_updates,item:stm:dql:properties:attempted_updates} follow directly as in \cref{stm:dql_no_ec:successful_update_count,stm:dql_no_ec:attempted_update_count}.
	The only additional observation is that the algorithm never adds new actions and that the changes to the bounds outside of \cref{alg:dql:line:update_bounds} never reset the progress of an action's bounds.

	\cref{item:stm:dql:properties:sampled_value_close_to_real_value_probability} can be proven completely analogous to \cref{stm:dql_no_ec:sampled_value_close_to_real_value_probability}, since this proof only relies on the Markov property of the successor sampling.
	We need to adjust the definition of $Y_i$ slightly to incorporate the modifications of the algorithm.
	Let thus $s'_{k_i} \in \States$ denote the states obtained by the successor oracle in \cref{alg:dql:line:successor_oracle}.
	By \cref{stm:dql:values_collapse} we have that $\val_\algoepisode(\algorepresentative_{\algoepisode_i}(s'_{k_i})) = \val_\algoepisode(s''_{k_i})$, and thus $Y_i = \val_\algoepisode(\algorepresentative_{\algoepisode_i}(s'_{k_i}))$ are i.i.d.

	For \cref{item:stm:dql:properties:bounds_ordered}, we first show that all newly introduced updates of $\upperbound$ and $\lowerbound$ are correct.
	Using \cref{asm:dql:always_collapse_ec}, we prove the two special cases.
	The algorithm sets $\upperbound_\algostep(a) \gets 0$ if an EC $(R, B)$ without outgoing transitions and no target state is identified.
	In this case, we clearly have that $\val_\algoepisode(a) = 0$ for all $s \in R$.
	Similarly, setting $\lowerbound_\algoepisode(a) \gets 1$ when any state in the EC $(R, B)$ is an accepting state is correct, since clearly $\val_\algoepisode(a) = 1$ for all $s \in R$, $a \in \stateactions_\algoepisode \intersection B$.
	Due to \cref{stm:dql:values_collapse}, copying the respective bounds to the representative state $s_{(R, B)}$ (which happens implicitly in \cref{alg:dql:line:set_representative_actions}) is correct, too.
	Now, the reasoning of \cref{stm:dql_no_ec:bounds_ordered} applies.

	\Cref{item:stm:dql:properties:strategy_bounds,item:stm:dql:properties:update_succeeds} can be proven as in \cref{stm:dql_no_ec:nonconverged_bounds_monotonic}.

	\Cref{item:stm:dql:properties:converged_successful_update_probability} is proven analogous to \cref{item:stm:dql:properties:sampled_value_close_to_real_value_probability}, following the proof of \cref{stm:dql_no_ec:converged_successful_update_probability}.
	Again, this claim only depends on the sampled successors.
	We define $X_i = \ExpectedSumStrat{\strategy_{k_1}}{\upperbound_{k_1}}(\algorepresentative_{\algoepisode_1}(s_{k_i}''))$.
	Since we do not modify the underlying transition probabilities, from which $s_{k_i}''$ is obtained, these $X_i$ are i.i.d.\ again and we can apply the same reasoning.
	To conclude the proof as before, we need to employ \cref{stm:dql:bounds_monotone}.
	Note that since we only speak about the actual computed bounds $\upperbound$ and $\lowerbound$, we do not need to employ \cref{stm:dql:values_collapse}.

	\cref{item:stm:dql:properties:unsuccessful_update_means_converged} follows directly as in \cref{stm:dql_no_ec:unsuccessful_update_means_converged}.
	Similarly, \cref{item:stm:dql:properties:non_converged_visit_count} follows as in \cref{stm:dql_no_ec:non_converged_visit_count}, using \cref{item:stm:dql:properties:success_updates} instead of \cref{stm:dql_no_ec:successful_update_count}. 
\end{proof}

In the proof of correctness for the no-EC DQL algorithm, we applied \cref{stm:assignment_mdp_value} directly on the MDP to obtain bounds on the reachability of $\targetstate$ based on the values of $\upperbound$ and $\lowerbound$ in \cref{stm:dql_no_ec:bounds_close}.
Now, we cannot apply this lemma directly on either $\MDP$ or $\MDP_\algoepisode$ since both may contain ECs.
Hence, we apply the lemma on an MDP derived from $\MDP_\algoepisode$ to obtain a similar result.
Let us thus first define the set of all actions in \enquote{non-final} ECs as
\begin{equation*}
	E_\algoepisode = {\Union}_{\{(R, B) \in \Ecs(\MDP_\algoepisode) \mid R \intersection (\targetset_\algoepisode \union \algozerostates_\algoepisode) = \emptyset\}} B.
\end{equation*}

\begin{lemma} \label{stm:dql:upper_bound_reachability}
	Assume that \cref{asm:dql:always_collapse_ec,asm:dql:sampled_value_close_to_real_value} hold and fix an episode $\algoepisode$.
	Then, we have for every state $s \in \States_\algoepisode$
	\begin{equation*}
		\upperbound_{\algoepisode}(s) - 3 \updatestep \cdot \cardinality{\States} p_{\min}^{-\cardinality{\States}} - \ProbabilityMDP<\MDP_\algoepisode, s><\strategy_\algoepisode>[\reach \overline{\ConvergedUpperBounds<\algoepisode>}] - \ProbabilityMDP<\MDP_\algoepisode, s><\strategy_\algoepisode>[\reach E_\algoepisode] \leq \ProbabilityMDP<\MDP_\algoepisode, s><\strategy_\algoepisode>[\reach \targetset_\algoepisode].
	\end{equation*}
\end{lemma}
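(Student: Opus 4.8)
This statement is the upper-bound half of \cref{stm:dql_no_ec:bounds_close}, the only genuinely new feature being the slack term $\ProbabilityMDP<\MDP_\algoepisode, s><\strategy_\algoepisode>[\reach E_\algoepisode]$, which absorbs the contribution of the ECs of $\MDP_\algoepisode$ that the algorithm has not detected yet. The plan is to replay that proof: apply \cref{stm:assignment_mdp_value} with assignment $X = \upperbound_\algoepisode$, strategy $\strategy = \strategy_\algoepisode$, and constants $\kappa_l = -1$, $\kappa_u = 3\updatestep$. As $\MDP_\algoepisode$ may still contain ECs, the lemma cannot be applied to it directly; as announced just before the statement, I would instead apply it to an MDP derived from $\MDP_\algoepisode$.

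Concretely, I would form $\widetilde\MDP$ from $\MDP_\algoepisode$ by redirecting every state-action pair $(s, a)$ with $a \in E_\algoepisode$ directly into the target set $\targetset_\algoepisode$ (non-empty throughout the run), leaving all other transitions -- in particular the self-loops on $\targetset_\algoepisode$ and $\algozerostates_\algoepisode$ -- untouched, so that $\widetilde\MDP$ has at most $\cardinality{\States}$ states. The crucial point is that $\widetilde\MDP$ has no non-trivial ECs: redirected pairs land in the absorbing set $\targetset_\algoepisode$ and so cannot lie in one, hence any non-trivial EC of $\widetilde\MDP$ uses only pairs with transitions inherited unchanged from $\MDP_\algoepisode$ and is therefore already an EC $(R, B)$ of $\MDP_\algoepisode$ with $B \intersection E_\algoepisode = \emptyset$ and $R \intersection (\targetset_\algoepisode \union \algozerostates_\algoepisode) = \emptyset$; but the MEC of $\MDP_\algoepisode$ containing $(R, B)$ is either non-final, forcing $B \subseteq E_\algoepisode$, or it meets $\targetset_\algoepisode \union \algozerostates_\algoepisode$, forcing it to be a singleton of those (self-looping, by \cref{def:dql:sampling_mdp}) states -- a contradiction in either case. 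Furthermore $\strategy_\algoepisode$ is a valid memoryless strategy on $\widetilde\MDP$ with $\delta_{\min}(\strategy_\algoepisode) \geq p_{\min}$ (the new transitions are deterministic and the rest are unchanged), it still maximises $\upperbound_\algoepisode$ in every state, so that $\ExpectedSumStrat{\strategy_\algoepisode}{\upperbound_\algoepisode}(s) = \upperbound_\algoepisode(s)$ by Item~\ref{item:stm:dql:properties:strategy_bounds} of \cref{stm:dql:properties}, and by \cref{stm:dql:ec_bounds_one_or_zero} we have $\upperbound_\algoepisode(a) = 1$ for $a \in E_\algoepisode$, so the residual of each redirected pair in $\widetilde\MDP$ equals $1 - 1 = 0$ and in particular lies in $[\kappa_l, \kappa_u]$.

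\cref{stm:assignment_mdp_value} applied to $\widetilde\MDP$ then gives -- writing $\MDP'$ for the further modification the lemma produces and $\mathcal{K}$ for its converged set -- that $\upperbound_\algoepisode(s) - 3\updatestep \cdot \cardinality{\States} p_{\min}^{-\cardinality{\States}} \leq \ProbabilityMDP<\MDP', s><\strategy_\algoepisode>[\reach \targetset_\algoepisode]$. It then remains to push the right-hand side back to $\MDP_\algoepisode$. On every pair whose transitions are inherited unchanged from $\MDP_\algoepisode$, membership in $\mathcal{K}$ is equivalent to membership in $\ConvergedUpperBounds<\algoepisode>$, and $\MDP'$ agrees with $\widetilde\MDP$ outside $\overline{\ConvergedUpperBounds<\algoepisode>}$, so, exactly as for \eqref{eq:stm:dql_no_ec:bounds_close:up_probs}, $\ProbabilityMDP<\MDP', s><\strategy_\algoepisode>[\reach \targetset_\algoepisode] \leq \ProbabilityMDP<\widetilde\MDP, s><\strategy_\algoepisode>[\reach \targetset_\algoepisode] + \ProbabilityMDP<\widetilde\MDP, s><\strategy_\algoepisode>[\reach \overline{\ConvergedUpperBounds<\algoepisode>}]$. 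Finally, in $\widetilde\MDP$ the set $\targetset_\algoepisode$ is reached only by reaching $\targetset_\algoepisode$ already in $\MDP_\algoepisode$ or by first hitting a redirected $E_\algoepisode$-pair, and $\widetilde\MDP$ coincides with $\MDP_\algoepisode$ until such a pair is hit; hence $\ProbabilityMDP<\widetilde\MDP, s><\strategy_\algoepisode>[\reach \targetset_\algoepisode] \leq \ProbabilityMDP<\MDP_\algoepisode, s><\strategy_\algoepisode>[\reach \targetset_\algoepisode] + \ProbabilityMDP<\MDP_\algoepisode, s><\strategy_\algoepisode>[\reach E_\algoepisode]$ and $\ProbabilityMDP<\widetilde\MDP, s><\strategy_\algoepisode>[\reach \overline{\ConvergedUpperBounds<\algoepisode>}] \leq \ProbabilityMDP<\MDP_\algoepisode, s><\strategy_\algoepisode>[\reach \overline{\ConvergedUpperBounds<\algoepisode>}]$. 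Chaining these and rearranging yields the claim.

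The step I expect to be most delicate is the book-keeping around $\widetilde\MDP$: proving its EC-freeness cleanly (combining the lifting argument above with the singleton shape of the final and zero-valued MECs of $\MDP_\algoepisode$), verifying that the converged set of $\widetilde\MDP$ really restricts to $\ConvergedUpperBounds<\algoepisode>$ off the redirected pairs, and checking that each redirect adds no more reachability mass to $\targetset_\algoepisode$ or to $\overline{\ConvergedUpperBounds<\algoepisode>}$ than is already charged to the slack probabilities. None of this is deep, but it must be done carefully so that the correspondence with the algorithm's actual sampling behaviour survives, even though the probabilistic estimate underlying \cref{stm:assignment_mdp_value} is the very same one already used in \cref{stm:dql_no_ec:bounds_close}.
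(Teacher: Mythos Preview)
Your proposal is correct and follows essentially the same route as the paper: build an EC-free variant of $\MDP_\algoepisode$ by redirecting the $E_\algoepisode$-actions to the target (using \cref{stm:dql:ec_bounds_one_or_zero} to show the residual vanishes there), apply \cref{stm:assignment_mdp_value} with $\kappa_u = 3\updatestep$, identify $\mathcal{K}$ with $\ConvergedUpperBounds<\algoepisode>$ via the case distinction, and then transfer the resulting reachability bound back to $\MDP_\algoepisode$ by charging the discrepancy to $\reach E_\algoepisode$ and $\reach \overline{\ConvergedUpperBounds<\algoepisode>}$.

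The only notable difference is cosmetic: the paper explicitly adjoins fresh absorbing states $\targetstate$ and $\sinkstate$ (redirecting $\targetset_\algoepisode \to \targetstate$, $\algozerostates_\algoepisode \to \sinkstate$, and $E_\algoepisode \to \targetstate$) so that the derived MDP literally satisfies \cref{asm:mec_free} and the target in \cref{stm:assignment_mdp_value} is the single state $\{\targetstate\}$. Your $\widetilde\MDP$ instead keeps the self-loops on $\targetset_\algoepisode \cup \algozerostates_\algoepisode$, which may leave several trivial singleton MECs rather than exactly two; this does not harm the argument, but to invoke \cref{stm:assignment_mdp_value} exactly as stated you would still need to funnel these into a single $\targetstate$/$\sinkstate$ pair (or note that its proof goes through unchanged for multiple absorbing target/sink states). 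With that small adjustment your writeup matches the paper's proof step for step.
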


\begin{proof}
	We first want to derive an MDP from $\MDP_\algoepisode$ without any ECs but still capturing its behaviour.
	For this, recall that there are two kinds of ECs in $\MDP_\algoepisode$.
	Firstly, there are ECs which correspond to ECs in the original $\MDP$.
	Secondly, we get a self-loop EC for each identified target- or zero-state, i.e.\ states in $\targetset_\algoepisode$ or $\algozerostates_\algoepisode$.
	We define the derived MDP $\MDP'_\algoepisode = (\States_\algoepisode \union \{\targetstate, \sinkstate\}, \Actions_\algoepisode \union \{a_+, a_-\}, \mdptransitions_\algoepisode', \stateactions_\algoepisode')$, where
	\begin{align*}
		\mdptransitions_\algoepisode'(s_\circ, a_\circ) & = \{s_\circ \mapsto 1\} \quad \text{for $\circ \in \{+, -\}$} \\
		\mdptransitions_\algoepisode'(s, a) & = \{\targetstate \mapsto 1\} \quad \text{for all $s \in \targetset_\algoepisode$, $a \in \stateactions_\algoepisode(s)$,} \\
		\mdptransitions_\algoepisode'(s, a) & = \{\sinkstate \mapsto 1\} \quad \text{for all $s \in \algozerostates_\algoepisode$, $a \in \stateactions_\algoepisode(s)$,} \\
		\mdptransitions_\algoepisode'(s, a) & = \{\targetstate \mapsto 1\} \quad \text{for all $a \in E$, $s = \actionstate<\MDP_\algoepisode>(a)$,} \\
		\mdptransitions_\algoepisode'(s, a) & = \mdptransitions_\algoepisode(s, a) \quad \text{for all other $s \in \States_\algoepisode$, $a \in \stateactions_\algoepisode(s)$},
	\end{align*}
	and $\stateactions_\algoepisode'(s) = \stateactions_\algoepisode(s)$ for $s \in \States_\algoepisode$ and $\stateactions_\algoepisode'(s_\circ) = \{a_\circ\}$ for $\circ \in \{+, -\}$.
	In essence, $\MDP'_\algoepisode$ equals $\MDP_\algoepisode$ except that we (i)~added the special states $\targetstate$ and $\sinkstate$, (ii)~all states in $\targetset_\algoepisode$ and $\algozerostates_\algoepisode$ move to $\targetstate$ and $\sinkstate$, respectively, and (iii)~all actions in ECs outside of $\targetset_\algoepisode$ and $\algozerostates_\algoepisode$ move to $\targetstate$, in the spirit of \cref{stm:dql:ec_bounds_one_or_zero}.

	Clearly, $\MDP'_\algoepisode$ has no ECs except the special states $\targetstate$ and $\sinkstate$ and thus satisfies \cref{asm:mec_free}.
	Moreover, the probability of reaching $\targetstate$ in $\MDP'_\algoepisode$ equals the probability of reaching $\targetset_\algoepisode \union E_\algoepisode$ in $\MDP_\algoepisode$ by construction of $\MDP'_\algoepisode$ \plabel{proof:dql:upper_bound:reach_probability_correspondence}.

	Now, we extend $\strategy_\algoepisode$ to select action $a_\circ$ in the special state $s_\circ$ to obtain $\strategy'_\algoepisode$.
	Furthermore, we set $X(s, a) = \upperbound_{\algoepisode}(a)$ for all states $s \in \States_\algoepisode$, $a \in \stateactions_\algoepisode(s)$, $X(\targetstate, a_+) = 1$, and $X(\sinkstate, a_-) = 0$.
	We apply \cref{stm:assignment_mdp_value} with $\MDP = \MDP'_\algoepisode$, $\strategy = \strategy'_\algoepisode$, $\kappa_l = -1$, and $\kappa_u = 3 \updatestep$.
	As a result, for each state $s \in \States_\algoepisode$ we have
	\begin{equation*}
		\ExpectedSumStrat{\strategy'_\algoepisode}{X}(s) - \ProbabilityMDP<\MDP', s><\strategy'_\algoepisode>[\reach \{\targetstate\}] \leq 3 \updatestep \cdot \cardinality{\States} p_{\min}^{-\cardinality{\States}},
	\end{equation*}
	where $\MDP'$ is the MDP defined in the lemma.
	Observe that for $s \in \States_\algoepisode$ \plabel{proof:dql:upper_bound:expected_sum_states}
	\begin{equation*}
		\ExpectedSumStrat{\strategy'_\algoepisode}{X}(s) = {\sum}_{a \in \stateactions'_\algoepisode(s)} \strategy'_\algoepisode(s, a) \cdot X(s, a) = {\sum}_{a \in \stateactions_\algoepisode(s)} \strategy_\algoepisode(s, a) \cdot \upperbound_\algoepisode(a) = \ExpectedSumStrat{\strategy_\algoepisode}{\upperbound_\algoepisode}(s).
	\end{equation*}
	To analyse how $\MDP'$ and $\MDP'_\algoepisode$ are related, we first need to derive the structure of $\mathcal{K}$ from the lemma.
	Thus, we now prove that $\mathcal{K} = \ConvergedUpperBounds<\algoepisode> \union \{a_+, a_-\}$.
	Recall that
	\begin{equation*}
		\mathcal{K} = \{a \in \Actions_\algoepisode \union \{a_+, a_-\} \mid X(s, a) - \ExpectedSumMDP{\mdptransitions_\algoepisode'}{s}{a}{\strategy'_\algoepisode[X]} \leq 3 \updatestep\}
	\end{equation*}
	and
	\begin{equation*}
		\ExpectedSumMDP{\mdptransitions_\algoepisode'}{s}{a}{\strategy'_\algoepisode[X]} = {\sum}_{s' \in \States_\algoepisode \union \{\targetstate, \sinkstate\}} \mdptransitions_\algoepisode'(s, a, s') \cdot {\sum}_{a' \in \stateactions_\algoepisode'(s')} \strategy(s', a') \cdot X(s', a').
	\end{equation*}
	Clearly, $a_+$ and $a_-$ satisfy the requirements due to their self-loop.
	Furthermore, we have $\ExpectedSumStrat{\strategy'_\algoepisode}{X}(\targetstate) = 1$, $\ExpectedSumStrat{\strategy'_\algoepisode}{X}(\sinkstate) = 0$ \plabel{proof:dql:upper_bound:expected_sum_special}.
	Now, let $a \in \Actions_\algoepisode$ and $s \in \States_\algoepisode$ the corresponding state.
	By definition, we have $X(s, a) = \upperbound_{\algoepisode}(a)$, hence we need to show that $\ExpectedSumMDP{\mdptransitions_\algoepisode'}{s}{a}{\ExpectedSumStrat{\strategy'_\algoepisode}{X}} = \ExpectedSumMDP{\mdptransitions_\algoepisode}{s}{a}{\ExpectedSumStrat{\strategy_\algoepisode}{\upperbound_{\algoepisode}}}$.
	We proceed with a case distinction.
	\begin{itemize}
		\item
		$s \in \targetset_\algoepisode \union \algozerostates_\algoepisode$: By definition of the algorithm, we have $\upperbound_\algoepisode(s) = 1$ or $0$, respectively.
		The unique successor under any action $a \in \stateactions_\algoepisode(s)$ in $\MDP_\algoepisode$ equals $s$ by definition, thus $\ExpectedSumMDP{\mdptransitions_\algoepisode}{s}{a}{\ExpectedSumStrat{\strategy_\algoepisode}{\upperbound_{\algoepisode}}} = \upperbound_\algoepisode(s)$.
		In $\MDP'_\algoepisode$, the unique successor equals $\targetstate$ or $\sinkstate$, respectively.
		Thus, with \ref{proof:dql:upper_bound:expected_sum_special}, we have $\ExpectedSumStrat{\strategy'_\algoepisode}{X}(s) = \ExpectedSumStrat{\strategy_\algoepisode}{\upperbound_{\algoepisode}}(s)$.
		The claim follows.

		\item
		$a \in E$: Note that this case implies that $s \notin \targetset_\algoepisode \union \algozerostates_\algoepisode$.
		Due to \cref{stm:dql:ec_bounds_one_or_zero}, we have that $\upperbound_\algoepisode(a) = 1$ for all such actions.
		Recall that $\strategy_\algoepisode$ follows actions maximizing $\upperbound_\algoepisode$.
		Consequently, $\ExpectedSumStrat{\strategy_\algoepisode}{\upperbound_{\algoepisode}}(s') = \ExpectedSumStrat{\strategy'_\algoepisode}{X}(s') = \upperbound_{\algoepisode}(s') = 1$ for all states $s'$ inside an non-trivial EC of $\MDP_\algoepisode$.
		Thus, we also have $\ExpectedSumMDP{\mdptransitions_\algoepisode}{s}{a}{\ExpectedSumStrat{\strategy_\algoepisode}{\upperbound_{\algoepisode}}} = 1$.
		From the definition of $\MDP'_\algoepisode$ and \ref{proof:dql:upper_bound:expected_sum_special}, we directly get $\ExpectedSumMDP{\mdptransitions_\algoepisode'}{s}{a}{\ExpectedSumStrat{\strategy'_\algoepisode}{X}} = 1$.

		\item
		$s \notin \targetset_\algoepisode \union \algozerostates_\algoepisode$, $a \notin E$: By definition, we have $\mdptransitions_\algoepisode(s, a) = \mdptransitions_\algoepisode'(s, a)$.
		Together with \ref{proof:dql:upper_bound:expected_sum_states} and \ref{proof:dql:upper_bound:expected_sum_special}, the statement follows.
	\end{itemize}

	Recall that $\MDP'$ is defined as $\MDP'_\algoepisode$ except that $\mdptransitions'(s, a) = \{\targetstate \mapsto X(s, a), \sinkstate \mapsto 1 - X(s, a)\}$ for all $a \notin \mathcal{K}$.
	Hence, as in \cref{stm:dql_no_ec:bounds_close}, we get that for all states $s \in \States_\algoepisode$
	\begin{equation*}
		\ProbabilityMDP<\MDP', s><\strategy'_\algoepisode>[\reach \{\targetstate\}] - \ProbabilityMDP<\MDP'_\algoepisode, s><\strategy'_\algoepisode>[\reach \overline{\ConvergedUpperBounds<\algoepisode>}] \leq \ProbabilityMDP<\MDP'_\algoepisode, s><\strategy'_\algoepisode>[\reach \{\targetstate\}],
	\end{equation*}
	and thus with \ref{proof:dql:upper_bound:reach_probability_correspondence} we get \plabel{proof:dql:upper_bound:probability_bound}
	\begin{equation*}
		\ExpectedSumStrat{\strategy'_\algoepisode}{X}(s) - 3 \updatestep \cdot \cardinality{\States} p_{\min}^{-\cardinality{\States}} - \ProbabilityMDP<\MDP'_\algoepisode, s><\strategy'_\algoepisode>[\reach \overline{\ConvergedUpperBounds<\algoepisode>}] \leq \ProbabilityMDP<\MDP_\algoepisode, s><\strategy_\algoepisode>[\reach (\targetset_\algoepisode \union E_\algoepisode)].
	\end{equation*}
	Further, we have $\ExpectedSumStrat{\strategy'_\algoepisode}{X}(s) = \ExpectedSumStrat{\strategy_\algoepisode}{\upperbound_\algoepisode}(s) = \upperbound_\algoepisode(s)$ by \cref{stm:dql:properties}, \cref{item:stm:dql:properties:strategy_bounds} \plabel{proof:dql:upper_bound:strategy_correspondence}.

	To conclude the proof, we show that $\ProbabilityMDP<\MDP'_\algoepisode, s><\strategy'_\algoepisode>[\reach \overline{\ConvergedUpperBounds<\algoepisode>}] \leq \ProbabilityMDP<\MDP_\algoepisode, s><\strategy_\algoepisode>[\reach \overline{\ConvergedUpperBounds<\algoepisode>}]$ \plabel{proof:dql:upper_bound:K_reach_correspondence}.
	To this end, observe that (i)~for each state $s \in \States_\algoepisode$ and action $a \in \stateactions_\algoepisode(s)$ we either have $\mdptransitions_\algoepisode(s, a) = \mdptransitions'_\algoepisode(s, a)$ or $\support \mdptransitions'_\algoepisode(s, a) \subseteq \{\targetstate, \sinkstate\}$ and (ii)~the added states $\targetstate$ and $\sinkstate$ are absorbing.
	Thus, each run reaching $\overline{\ConvergedUpperBounds<\algoepisode>}$ in $\MDP'_\algoepisode$ has a corresponding, equally probable path in $\MDP_\algoepisode$.

	The overall claim follows from the above equations and a union bound.
	\begin{align*}
		& \upperbound_{\algoepisode}(s) - 3 \updatestep \cdot \cardinality{\States} p_{\min}^{-\cardinality{\States}} - \ProbabilityMDP<\MDP_\algoepisode, s><\strategy_\algoepisode>[\reach \overline{\ConvergedUpperBounds<\algoepisode>}] \\
		& \qquad \overset{\mathclap{\ref{proof:dql:upper_bound:strategy_correspondence}}}{~=~} \ExpectedSumStrat{\strategy'_\algoepisode}{X}(s) - 3 \updatestep \cdot \cardinality{\States} p_{\min}^{-\cardinality{\States}} - \ProbabilityMDP<\MDP_\algoepisode, s><\strategy_\algoepisode>[\reach \overline{\ConvergedUpperBounds<\algoepisode>}] \\
		& \qquad \overset{\mathclap{\ref{proof:dql:upper_bound:K_reach_correspondence}}}{~\leq~} \ExpectedSumStrat{\strategy'_\algoepisode}{X}(s) - 3 \updatestep \cdot \cardinality{\States} p_{\min}^{-\cardinality{\States}} - \ProbabilityMDP<\MDP'_\algoepisode, s><\strategy'_\algoepisode>[\reach \overline{\ConvergedUpperBounds<\algoepisode>}] \\
		& \qquad \overset{\mathclap{\ref{proof:dql:upper_bound:probability_bound}}}{~\leq~} \ProbabilityMDP<\MDP_\algoepisode, s><\strategy_\algoepisode>[\reach (\targetset_\algoepisode \union E_\algoepisode)]. \qedhere
	\end{align*}
\end{proof}

\begin{lemma} \label{stm:dql:lower_bound_reachability}
	Assume that \cref{asm:dql:always_collapse_ec,asm:dql:sampled_value_close_to_real_value} hold and fix an episode $\algoepisode$.
	Then, we have for every state $s \in \States_\algoepisode$
	\begin{equation*}
		\ProbabilityMDP<\MDP_\algoepisode, s><\strategy_\algoepisode>[\reach \targetset_\algoepisode] \leq \lowerbound_\algoepisode(s) + 3 \updatestep \cdot \cardinality{\States} p_{\min}^{-\cardinality{\States}} + \ProbabilityMDP<\MDP_\algoepisode, s><\strategy_\algoepisode>[\reach \overline{\ConvergedLowerBounds<\algoepisode>}] + \ProbabilityMDP<\MDP_\algoepisode, s><\strategy_\algoepisode>[\reach E_\algoepisode].
	\end{equation*}
\end{lemma}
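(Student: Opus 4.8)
The plan is to dualise the proof of \cref{stm:dql:upper_bound_reachability} step by step. I would reuse verbatim the EC-free derived MDP $\MDP'_\algoepisode$ built there (states in $\targetset_\algoepisode$ redirected to $\targetstate$, states in $\algozerostates_\algoepisode$ to $\sinkstate$, and every action in $E_\algoepisode$ redirected to $\targetstate$), which satisfies \cref{asm:mec_free} and for which $\ProbabilityMDP<\MDP'_\algoepisode, s><\strategy'_\algoepisode>[\reach \{\targetstate\}] = \ProbabilityMDP<\MDP_\algoepisode, s><\strategy_\algoepisode>[\reach (\targetset_\algoepisode \union E_\algoepisode)]$, where $\strategy'_\algoepisode$ is the canonical extension of $\strategy_\algoepisode$. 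Since $\targetset_\algoepisode \subseteq \targetset_\algoepisode \union E_\algoepisode$, this gives $\ProbabilityMDP<\MDP_\algoepisode, s><\strategy_\algoepisode>[\reach \targetset_\algoepisode] \leq \ProbabilityMDP<\MDP'_\algoepisode, s><\strategy'_\algoepisode>[\reach \{\targetstate\}]$, so it suffices to bound the right-hand side from above by the claimed expression.

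For that I would invoke \cref{stm:assignment_mdp_value} on $\MDP'_\algoepisode$ with $\strategy = \strategy'_\algoepisode$, the assignment $X(s, a) = \lowerbound_\algoepisode(a)$ (and $X(\targetstate, a_+) = 1$, $X(\sinkstate, a_-) = 0$), and the \emph{dual} constants $\kappa_l = -3\updatestep$, $\kappa_u = 1$, exactly as in the lower-bound half of \cref{stm:dql_no_ec:bounds_close}. This yields $\ProbabilityMDP<\MDP', s><\strategy'_\algoepisode>[\reach \{\targetstate\}] - \ExpectedSumStrat{\strategy'_\algoepisode}{X}(s) \leq 3 \updatestep \cdot \cardinality{\States} p_{\min}^{-\cardinality{\States}}$, together with the standard comparison $\ProbabilityMDP<\MDP'_\algoepisode, s><\strategy'_\algoepisode>[\reach \{\targetstate\}] \leq \ProbabilityMDP<\MDP', s><\strategy'_\algoepisode>[\reach \{\targetstate\}] + \ProbabilityMDP<\MDP'_\algoepisode, s><\strategy'_\algoepisode>[\reach \overline{\mathcal{K}}]$ (the two MDPs agree on $\mathcal{K}$). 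The term $\ExpectedSumStrat{\strategy'_\algoepisode}{X}(s)$ equals $\ExpectedSumStrat{\strategy_\algoepisode}{\lowerbound_\algoepisode}(s)$ by the same short computation as in the proof of \cref{stm:dql:upper_bound_reachability}, and $\ExpectedSumStrat{\strategy_\algoepisode}{\lowerbound_\algoepisode}(s) \leq \lowerbound_\algoepisode(s)$ by \cref{stm:dql:properties}, \cref{item:stm:dql:properties:strategy_bounds}.

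The main obstacle, and the only genuine departure from the upper-bound argument, is the structure of the critical set $\mathcal{K}$. For $X = \upperbound_\algoepisode$ the actions in $E_\algoepisode$ landed inside $\mathcal{K}$ because \cref{stm:dql:ec_bounds_one_or_zero} forces $\upperbound_\algoepisode(a) = 1$ there; for $X = \lowerbound_\algoepisode$ this fails, since an action $a \in E_\algoepisode$ moves to $\targetstate$ in $\MDP'_\algoepisode$ (one-step value $1$) while $\lowerbound_\algoepisode(a)$ may be strictly below $1 - 3\updatestep$. Hence I would only establish the weaker inclusion $\overline{\mathcal{K}} \subseteq \overline{\ConvergedLowerBounds<\algoepisode>} \union E_\algoepisode$ (up to the trivially-$\mathcal{K}$ actions $a_+, a_-$): for $a \notin E_\algoepisode$ with $\actionstate<\MDP_\algoepisode>(a) \notin \targetset_\algoepisode \union \algozerostates_\algoepisode$ one has $\mdptransitions'_\algoepisode(s, a) = \mdptransitions_\algoepisode(s, a)$, so $\mathcal{K}$-membership coincides with $\lowerbound$-convergedness w.r.t.\ $\mdptransitions_\algoepisode$, while actions at states of $\targetset_\algoepisode \union \algozerostates_\algoepisode$ lie in $\mathcal{K}$ because of the boundary values $1$ and $0$. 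A union bound then gives $\ProbabilityMDP<\MDP'_\algoepisode, s><\strategy'_\algoepisode>[\reach \overline{\mathcal{K}}] \leq \ProbabilityMDP<\MDP'_\algoepisode, s><\strategy'_\algoepisode>[\reach \overline{\ConvergedLowerBounds<\algoepisode>}] + \ProbabilityMDP<\MDP'_\algoepisode, s><\strategy'_\algoepisode>[\reach E_\algoepisode]$, which is exactly where the extra $\ProbabilityMDP<\MDP_\algoepisode, s><\strategy_\algoepisode>[\reach E_\algoepisode]$ summand in the statement comes from.

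Finally I would transfer the two reachability terms back from $\MDP'_\algoepisode$ to $\MDP_\algoepisode$: since $\MDP'_\algoepisode$ differs from $\MDP_\algoepisode$ only by redirecting some actions to the absorbing states $\targetstate, \sinkstate$, every run that reaches the action set $\overline{\ConvergedLowerBounds<\algoepisode>}$ (resp.\ $E_\algoepisode$) in $\MDP'_\algoepisode$ has an equally probable prefix in $\MDP_\algoepisode$, so both probabilities can only decrease in $\MDP'_\algoepisode$ — this is the argument already used for the analogous step in \cref{stm:dql:upper_bound_reachability}. Chaining all inequalities produces the claimed bound. I expect the bookkeeping distinguishing $\mathcal{K}$ from $\ConvergedLowerBounds<\algoepisode>$, and in particular pinning down which $E_\algoepisode$-actions fall out of $\mathcal{K}$, to be the delicate part; everything else is a routine mirror image of \cref{stm:dql:upper_bound_reachability}.
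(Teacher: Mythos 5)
Your proposal is correct, and it reaches the claimed bound along a route that differs from the paper's at exactly the point you flag as delicate. The paper does \emph{not} reuse the derived MDP from \cref{stm:dql:upper_bound_reachability} verbatim: it redefines the transitions of actions $a \in E_\algoepisode$ to $\mdptransitions_\algoepisode'(s, a) = \{\targetstate \mapsto \ExpectedSumMDP{\mdptransitions_\algoepisode}{s}{a}{\ExpectedSumStrat{\strategy_\algoepisode}{\lowerbound_\algoepisode}}, \sinkstate \mapsto 1 - \ExpectedSumMDP{\mdptransitions_\algoepisode}{s}{a}{\ExpectedSumStrat{\strategy_\algoepisode}{\lowerbound_\algoepisode}}\}$, i.e.\ a coin flip calibrated to the expected successor lower bound. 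With this choice the one-step expectation of $X$ under such an action reproduces $\ExpectedSumMDP{\mdptransitions_\algoepisode}{s}{a}{\ExpectedSumStrat{\strategy_\algoepisode}{\lowerbound_\algoepisode}}$, so every $E_\algoepisode$-action satisfies the $\mathcal{K}$-condition iff it is $\lowerbound$-converged, and the paper retains the exact identity $\mathcal{K} = \ConvergedLowerBounds<\algoepisode> \union \{a_+, a_-\}$ in parallel with the upper-bound proof; the price is that $\ProbabilityMDP<\MDP'_\algoepisode, s><\strategy'_\algoepisode>[\reach \{\targetstate\}]$ no longer dominates $\ProbabilityMDP<\MDP_\algoepisode, s><\strategy_\algoepisode>[\reach \targetset_\algoepisode]$ outright, and the term $\ProbabilityMDP<\MDP_\algoepisode, s><\strategy_\algoepisode>[\reach E_\algoepisode]$ enters at the front of the chain via $\ProbabilityMDP<\MDP_\algoepisode, s><\strategy_\algoepisode>[\reach \targetset_\algoepisode] - \ProbabilityMDP<\MDP_\algoepisode, s><\strategy_\algoepisode>[\reach E_\algoepisode] \leq \ProbabilityMDP<\MDP'_\algoepisode, s><\strategy'_\algoepisode>[\reach \{\targetstate\}]$, since paths avoiding $E_\algoepisode$ use only unmodified transitions. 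You instead keep the $E_\algoepisode$-actions pointing at $\targetstate$, accept that they may drop out of $\mathcal{K}$, and recover the same summand through the inclusion $\overline{\mathcal{K}} \subseteq \overline{\ConvergedLowerBounds<\algoepisode>} \union E_\algoepisode$ and a union bound on the reachability of $\overline{\mathcal{K}}$; your verification of that inclusion (transitions agree off $E_\algoepisode$ and off $\targetset_\algoepisode \union \algozerostates_\algoepisode$, the boundary states are trivially in $\mathcal{K}$, and the constraint $\lowerbound_\algoepisode(a) - \ExpectedSumMDP{\mdptransitions_\algoepisode}{s}{a}{\ExpectedSumStrat{\strategy_\algoepisode}{\lowerbound_\algoepisode}} \leq 1$ is vacuous) is sound, as is the transfer of both reachability terms back to $\MDP_\algoepisode$, which mirrors the argument already made for $\overline{\ConvergedUpperBounds<\algoepisode>}$ in \cref{stm:dql:upper_bound_reachability}. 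What each buys: the paper's construction keeps the case analysis of $\mathcal{K}$ structurally identical to the upper-bound proof at the cost of a second, slightly less obvious derived MDP; yours reuses one construction for both lemmas at the cost of a weaker characterization of $\mathcal{K}$ and an extra union bound. Both yield the identical final inequality.
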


\begin{proof}
	As in \cref{stm:dql:upper_bound_reachability}, we construct a second MDP without ECs, but slightly modify the transition function.
	In particular, let $\MDP'_\algoepisode = (\States_\algoepisode \union \{\targetstate, \sinkstate\}, \Actions_\algoepisode \union \{a_+, a_-\}, \mdptransitions_\algoepisode', \stateactions_\algoepisode')$ be defined as before.
	However, for $a \in E_\algoepisode$ and $s = \actionstate<\MDP_\algoepisode>(a)$, we define
	\begin{equation*}
		\mdptransitions_\algoepisode'(s, a) = \{\targetstate \mapsto \ExpectedSumMDP{\mdptransitions_\algoepisode}{s}{a}{\ExpectedSumStrat{\strategy_\algoepisode}{\lowerbound_\algoepisode}}, \sinkstate \mapsto 1 - \ExpectedSumMDP{\mdptransitions_\algoepisode}{s}{a}{\ExpectedSumStrat{\strategy_\algoepisode}{\lowerbound_\algoepisode}}\}.
	\end{equation*}
	Again, $\MDP'_\algoepisode$ has no ECs except in the two special states and thus \cref{stm:assignment_mdp_value} is applicable.
	We set $X(s, a) = \lowerbound_{\algoepisode}(a)$ for all states $s \in \States_\algoepisode$, $X(\targetstate, a_+) = 1$, and $X(\sinkstate, a_-) = 0$.
	As above, we have that $\ExpectedSumStrat{\strategy'_\algoepisode}{X}(s) = \ExpectedSumStrat{\strategy_\algoepisode}{\lowerbound_{\algoepisode}}(s)$ for all $s \in \States_\algoepisode$.
	We apply the lemma with $\MDP = \MDP'_\algoepisode$, $\strategy = \strategy'_\algoepisode$, $\kappa_l = -3 \updatestep$, and $\kappa_u = 1$.
	Thus, for each state $s \in \States_\algoepisode$
	\begin{equation*}
		\ProbabilityMDP<\MDP', s><\strategy'_\algoepisode>[\reach \{\targetstate\}] - \ExpectedSumStrat{\strategy'_\algoepisode}{X}(s) \leq 3 \updatestep \cdot \cardinality{\States} p_{\min}^{-\cardinality{\States}},
	\end{equation*}
	where $\MDP'$ is the MDP defined in the lemma.
	We again show that $\mathcal{K} = \ConvergedLowerBounds<\algoepisode> \union \{a_+, a_-\}$ by case distinction as follows:
	\begin{itemize}
		\item
		Trivially, $a_+, a_- \in \mathcal{K}$, $\ExpectedSumStrat{\strategy'_\algoepisode}{X}(\targetstate) = 1$, and $\ExpectedSumStrat{\strategy'_\algoepisode}{X}(\sinkstate) = 0$.

		\item
		$s \in \targetset_\algoepisode \union \algozerostates_\algoepisode$: The claims follow by an analogous argument.
		Recall that for these states we have $\upperbound_\algoepisode(a) = \lowerbound_\algoepisode(a)$ for all $a \in \stateactions_\algoepisode(s)$.

		\item
		$a \in E$: Inserting the definitions, we get
		\begin{align*}
			& \ExpectedSumMDP{\mdptransitions'_\algoepisode}{s}{a}{\ExpectedSumStrat{\strategy'_\algoepisode}{X}} = \mdptransitions'_\algoepisode(s, a, \targetstate) \cdot \ExpectedSumStrat{\strategy'_\algoepisode}{X}(\targetstate) + \mdptransitions'_\algoepisode(s, a, \sinkstate) \cdot \ExpectedSumStrat{\strategy'_\algoepisode}{X}(\sinkstate) \\
			& \qquad = \ExpectedSumMDP{\mdptransitions_\algoepisode}{s}{a}{\ExpectedSumStrat{\strategy_\algoepisode}{\lowerbound_\algoepisode}} \cdot 1 + (1 - \ExpectedSumMDP{\mdptransitions_\algoepisode}{s}{a}{\ExpectedSumStrat{\strategy_\algoepisode}{\lowerbound_\algoepisode}}) \cdot 0 \\
			& \qquad = \ExpectedSumMDP{\mdptransitions_\algoepisode}{s}{a}{\ExpectedSumStrat{\strategy_\algoepisode}{\lowerbound_\algoepisode}}.
		\end{align*}

		\item
		$s \notin \targetset_\algoepisode \union \algozerostates_\algoepisode$, $a \notin E$: Follows analogously.
	\end{itemize}

	As in \cref{stm:dql_no_ec:bounds_close}, we also get for all states $s \in \States_\algoepisode$ that
	\begin{equation*}
		\ProbabilityMDP<\MDP'_\algoepisode, s><\strategy'_\algoepisode>[\reach \{\targetstate\}] \leq \ProbabilityMDP<\MDP', s><\strategy'_\algoepisode>[\reach \{\targetstate\}] + \ProbabilityMDP<\MDP'_\algoepisode, s><\strategy'_\algoepisode>[\reach \overline{\ConvergedLowerBounds<\algoepisode>}].
	\end{equation*}
	Similar to the above proof, we have $\ExpectedSumStrat{\strategy'_\algoepisode}{X}(s) = \ExpectedSumStrat{\strategy_\algoepisode}{\lowerbound_\algoepisode}(s) \leq \lowerbound_\algoepisode(s)$ by \cref{stm:dql:properties}, \cref{item:stm:dql:properties:strategy_bounds}.
	With completely analogous reasoning, we can show that $\ProbabilityMDP<\MDP'_\algoepisode, s><\strategy'_\algoepisode>[\reach \overline{\ConvergedLowerBounds<\algoepisode>}] \leq \ProbabilityMDP<\MDP_\algoepisode, s><\strategy_\algoepisode>[\reach \overline{\ConvergedLowerBounds<\algoepisode>}]$.
	Putting all equations together, we get that
	\begin{equation*}
		\ProbabilityMDP<\MDP'_\algoepisode, s><\strategy'_\algoepisode>[\reach \{\targetstate\}] \leq \lowerbound_\algoepisode(s) + 3 \updatestep \cdot \cardinality{\States} p_{\min}^{-\cardinality{\States}} + \ProbabilityMDP<\MDP_\algoepisode, s><\strategy_\algoepisode>[\reach \overline{\ConvergedLowerBounds<\algoepisode>}].
	\end{equation*}
	Now, it remains to show that $\ProbabilityMDP<\MDP_\algoepisode, s><\strategy_\algoepisode>[\reach \targetset_\algoepisode] - \ProbabilityMDP<\MDP_\algoepisode, s><\strategy_\algoepisode>[\reach E_\algoepisode] \leq \ProbabilityMDP<\MDP'_\algoepisode, s><\strategy'_\algoepisode>[\reach \{\targetstate\}]$.
	This claim follows with the same reasoning as before, since we have that $\mdptransitions_\algoepisode(s, a) = \mdptransitions'_\algoepisode(s, a)$ for $a \notin E_\algoepisode, s = \actionstate<\MDP_\algoepisode>(a)$.
	Thus, every path in $\MDP_\algoepisode$ which does not visit $E$ has a corresponding, equally probable path in $\MDP_\algoepisode'$.
	The overall claim follows. 
\end{proof}

\begin{theorem} \label{stm:dql_correct}
	\cref{alg:dql} terminates and yields a correct result with probability at least $1 - \delta$ after at most $\mathcal{O}(\mathrm{POLY}(\cardinality{\Actions}, p_{\min}^{-\cardinality{\States}}, \varepsilon^{-1}, \ln \delta))$ steps.
\end{theorem}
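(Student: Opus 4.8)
The plan is to transplant the proof of \cref{stm:dql_no_ec_correct} onto the \emph{sampling MDPs} $\MDP_\algoepisode$ of \cref{def:dql:sampling_mdp}, using the end-component bookkeeping lemmas above to absorb the extra machinery. First I would restrict attention to executions on which \cref{asm:dql:always_collapse_ec,asm:dql:sampled_value_close_to_real_value,asm:dql:converged_successful_update} all hold; by \cref{stm:dql:always_collapse_ec_probability} and by \cref{stm:dql:properties}, \cref{item:stm:dql:properties:sampled_value_close_to_real_value_probability,item:stm:dql:properties:converged_successful_update_probability}, the complementary event has measure at most $\tfrac{3\delta}{4}$. On such an execution, correctness upon termination is immediate: if the main loop exits at step $\algostep$ in episode $\algoepisode$ then $\upperbound_\algostep(\initialstate) - \lowerbound_\algostep(\initialstate) < \varepsilon$ by the loop guard, $\lowerbound_\algostep(\initialstate) \le \val_\algoepisode(\initialstate) \le \upperbound_\algostep(\initialstate)$ by \cref{stm:dql:properties}, \cref{item:stm:dql:properties:bounds_ordered}, and $\val_\algoepisode(\initialstate)$ coincides with the value of the original initial state by \cref{stm:dql:values_collapse}. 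So everything reduces to proving termination, with the right probability and within the right number of steps.

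For termination I would argue by contradiction, assuming the set of good, non-terminating executions has positive measure; by \cref{stm:dql:terminate_or_infinite_episodes} each such execution comprises infinitely many episodes. The first phase is EC-stabilization. By \cref{stm:dql:finite_representative_states} the body of \cref{alg:dql:line:appear} is entered at most $\cardinality{\Actions}$ times, and — since $\mathsf{B}$ is always non-empty there (pigeonhole, as in that proof) and each of the three branches modifies $\States_\algoepisode$, $\targetset_\algoepisode$, or $\algozerostates_\algoepisode$ — there is an episode $\hat{\algoepisode}$ after which $\MDP_\algoepisode$, $\targetset_\algoepisode$, $\algozerostates_\algoepisode$ are frozen, say to $\MDP^\star$, $\targetset^\star$, $\algozerostates^\star$. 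No episode after $\hat{\algoepisode}$ ever runs the full $2\algolimit^3$ steps; since the per-episode probability of reaching $\targetset^\star \union \algozerostates^\star$ within $2\algolimit^3$ steps under the current sampling strategy is either $0$ or bounded below (cf.\ \cref{stm:markov_chain_minimum_reachability}), over infinitely many episodes it must equal $1$, i.e.\ $\ProbabilityMDP<\MDP^\star, \initialstate><\strategy_\algoepisode>[\boundedreach<2\algolimit^3>(\targetset^\star \union \algozerostates^\star)] = 1$. Using the BSCC/EC correspondence from the proof of \cref{stm:dql:appear_is_ec_probability} together with \cref{stm:dql:ec_bounds_one_or_zero}, this is meant to force the probability that the sampling strategy lingers in a not-yet-collapsed, spuriously value-$1$ end component to vanish, hence $\ProbabilityMDP<\MDP^\star, \initialstate><\strategy_\algoepisode>[\reach E_\algoepisode] \to 0$ along the execution.

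The remaining phases mirror \cref{stm:dql_no_ec_correct} on $\MDP^\star$ with absorbing set $\targetset^\star \union \algozerostates^\star$. By \cref{stm:dql:properties}, \cref{item:stm:dql:properties:success_updates} there are only finitely many successful $\upperbound$-updates, so eventually $\upperbound$, the sampling strategy $\strategy_\algoepisode$, and the step strategy $\strategy_\algostep$ all stabilize and agree; then \cref{stm:dql:properties}, \cref{item:stm:dql:properties:non_converged_visit_count} bounds the number of visits to $\upperbound$- and $\lowerbound$-non-converged actions, and feeding this into the Markov-process recurrence lemma \cref{stm:markov_process_repeating} exactly as before yields $\ProbabilityMDP<\MDP^\star, \initialstate><\strategy_\algostep>[\reach \overline{\ConvergedUpperBounds<\algostep>}] \to 0$ and $\ProbabilityMDP<\MDP^\star, \initialstate><\strategy_\algostep>[\reach \overline{\ConvergedLowerBounds<\algostep>}] \to 0$. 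Plugging these three vanishing quantities, together with the constant slack $3\updatestep \cdot \cardinality{\States} p_{\min}^{-\cardinality{\States}} < \tfrac{\varepsilon}{2}$ afforded by the choice of $\updatestep$ (a slightly smaller $\updatestep$ being permissible), into \cref{stm:dql:upper_bound_reachability} and \cref{stm:dql:lower_bound_reachability}, and subtracting the two inequalities at $\initialstate$, gives $\upperbound_\algostep(\initialstate) - \lowerbound_\algostep(\initialstate) < \varepsilon$ for all large $\algostep$, contradicting non-termination.

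Finally, the $\mathcal{O}(\mathrm{POLY}(\cardinality{\Actions}, p_{\min}^{-\cardinality{\States}}, \varepsilon^{-1}, \ln\delta))$ step bound follows essentially as in \cref{stm:dql_no_ec_correct}, and is in fact simpler: episodes are hard-capped at $2\algolimit^3$ steps by the guard in \cref{alg:dql:line:sample_loop}, so no probabilistic bound on path length is needed, and $\algolimit$ chosen according to \eqref{eq:dql:choice_of_limit} is polynomial in the stated parameters. At most $\cardinality{\Actions}$ episodes are EC-detection episodes; for the rest, a de~Moivre--Laplace estimate identical to the one in \cref{stm:dql_no_ec_correct} shows that with probability at least $1 - \tfrac{\delta}{4}$ the polynomially many allowed non-converged visits from \cref{stm:dql:properties}, \cref{item:stm:dql:properties:non_converged_visit_count} are exhausted after polynomially many sampled paths, which together bounds the total number of steps. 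Summing the error budgets, $\tfrac{3\delta}{4}$ for the three assumptions and $\tfrac{\delta}{4}$ for the de~Moivre--Laplace step, yields the claimed $1 - \delta$. The hard part will be the second paragraph: making rigorous that, after EC-stabilization, the term $\ProbabilityMDP<\MDP^\star, \initialstate><\strategy_\algostep>[\reach E_\algoepisode]$ appearing in \cref{stm:dql:upper_bound_reachability,stm:dql:lower_bound_reachability} really does tend to $0$ — i.e.\ tying the event \enquote{the sampling strategy reaches an internal action of a not-yet-collapsed, value-$1$ end component} back either to an EC detection (contradicting \cref{stm:dql:finite_representative_states}) or to the already-vanishing $\ProbabilityMDP[\reach \overline{\ConvergedUpperBounds<\algostep>}]$.
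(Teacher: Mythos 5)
Your proposal is correct and follows essentially the same route as the paper's proof: restrict to executions satisfying \cref{asm:dql:always_collapse_ec,asm:dql:sampled_value_close_to_real_value,asm:dql:converged_successful_update}, obtain correctness upon termination from \cref{stm:dql:properties}, and prove termination by contradiction via stabilization of the collapsed MDP and of the upper bounds, \cref{stm:markov_process_repeating}, and \cref{stm:dql:upper_bound_reachability,stm:dql:lower_bound_reachability}, with the step bound inherited from \cref{stm:dql_no_ec_correct} together with the hard cap on episode length. The step you flag as the hard part, namely that $\ProbabilityMDP<\MDP_\algoepisode, \initialstate><\strategy_\algostep>[\reach E_\algoepisode]$ vanishes after stabilization, is dispatched in the paper by exactly the argument you sketch: a positive value would, by \cref{stm:dql:appear_is_ec_probability}, yield a per-episode positive probability of detecting a fresh EC, contradicting \cref{stm:dql:finite_representative_states}.
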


\begin{proof}
	This proof is largely analogous to the proof of \cref{stm:dql_no_ec_correct}, and we shorten some of its parts.
	Again, we only consider executions where \cref{asm:dql:converged_successful_update,asm:dql:sampled_value_close_to_real_value,asm:dql:always_collapse_ec} hold.
	By \cref{stm:dql:always_collapse_ec_probability,stm:dql:properties}, \cref{item:stm:dql:properties:converged_successful_update_probability,item:stm:dql:properties:sampled_value_close_to_real_value_probability} together with the union bound, this happens with probability at least $1 - \delta$.
	Correctness of the result upon termination follows from \cref{stm:dql:properties}, \cref{item:stm:dql:properties:bounds_ordered}.

	We show by contradiction that the algorithm terminates for almost all considered executions.
	Thus, assume that the execution does not halt with non-zero probability.
	By \cref{stm:dql:terminate_or_infinite_episodes}, all of these executions experience an infinite number of episodes.

	Due to \cref{stm:dql:properties}, \cref{item:stm:dql:properties:attempted_updates}, there are only finitely many attempted updates on all considered executions and the algorithm eventually does not change $\upperbound$, since no successful updates can occur from some step $\algostep$ onwards.
	Similarly, there are only finitely many EC collapses due to \cref{stm:dql:finite_representative_states}, and eventually the sampling MDP $\MDP_\algoepisode$ stabilizes.
	This means that all following samples are obtained by sampling according to the strategy $\strategy_\algostep$ on the MDP $\MDP_\algoepisode$.
	Again, we employ \cref{stm:markov_process_repeating} to continue the proof and we get $\ProbabilityMDP<\MDP_\algoepisode, \initialstate><\strategy_\algostep>[\reach \overline{\ConvergedUpperBounds<\algostep>}] = 0$ and $\ProbabilityMDP<\MDP_\algoepisode, \initialstate><\strategy_\algostep>[\reach \overline{\ConvergedLowerBounds<\algostep>}] = 0$ on almost all considered executions.
	By an analogous argument, we can show that $\ProbabilityMDP<\MDP_\algoepisode, \initialstate><\strategy_\algostep>[\reach E_\algoepisode] = 0$, since otherwise by \cref{stm:dql:appear_is_ec_probability} (with $\targetset = \targetset_\algoepisode \union \algozerostates_\algoepisode$) we have a non-zero probability of detecting a new EC, contradicting our assumption.

	Thus, by applying \cref{stm:dql:upper_bound_reachability}
	\begin{equation*}
		\ProbabilityMDP<\MDP_\algoepisode, \initialstate><\strategy_\algoepisode>[\reach \targetset_\algoepisode] \geq \upperbound_{\algoepisode}(\initialstate) - 3 \updatestep \cdot \cardinality{\States} p_{\min}^{-\cardinality{\States}} - \ProbabilityMDP<\MDP_\algoepisode, \initialstate><\strategy_\algoepisode>[\reach \overline{\ConvergedUpperBounds<\algoepisode>}] - \ProbabilityMDP<\MDP_\algoepisode, \initialstate><\strategy_\algoepisode>[\reach E_\algoepisode] > \upperbound_{\algoepisode}(\initialstate) - \frac{\varepsilon}{2}.
	\end{equation*}
	Dually, with \cref{stm:dql:lower_bound_reachability} we get
	\begin{equation*}
		\ProbabilityMDP<\MDP_\algoepisode, \initialstate><\strategy_\algoepisode>[\reach \targetset_\algoepisode] \leq \lowerbound_\algoepisode(\initialstate) + 3 \updatestep \cdot \cardinality{\States} p_{\min}^{-\cardinality{\States}} + \ProbabilityMDP<\MDP_\algoepisode, \initialstate><\strategy_\algoepisode>[\reach \overline{\ConvergedLowerBounds<\algoepisode>}] + \ProbabilityMDP<\MDP_\algoepisode, \initialstate><\strategy_\algoepisode>[\reach E_\algoepisode] < \lowerbound_{\algoepisode}(\initialstate) + \frac{\varepsilon}{2}.
	\end{equation*}
	Together, $\upperbound_\algoepisode(\initialstate) - \lowerbound_\algoepisode(\initialstate) < \varepsilon$, contradicting the assumption.

	For the step bound, we can mostly replicate the idea of the DQL variant without ECs.
	In particular, we can bound the number of paths by the same argument:
	The probability of reaching a non-$\upperbound$- / non-$\lowerbound$-converged action within $\cardinality{\States}$ steps is at least $p_{\min}^{\cardinality{\States}}$ (or 0).
	By \cref{stm:dql:properties}, \cref{item:stm:dql:properties:non_converged_visit_count} we again get that the number of visits to such actions is bounded.
	Since $\algolimit \geq \cardinality{\Actions} \geq \cardinality{\States}$ and thus the sampling is not stopped early due to that condition, we again can bound the maximal number of paths by the same $n$.
	For the length of the paths, observe that they are bounded by $2 \algolimit^3$ by construction of the algorithm.
	From the definition of $\algolimit$ in Equation~\eqref{eq:dql:choice_of_limit}, we see that this bound is polynomial, too, by considering the Taylor expansion of the exponential. 
\end{proof}

\begin{remark}
	To conclude, we briefly outline extensions to other objectives.

	For safety, i.e.\ maximizing the probability of remaining inside a given set of states forever (or, equivalently, minimizing reachability of unsafe states), we only need to change the treatment of end components slightly.
	Assume w.l.o.g.\ that any unsafe state is collapsed into one sink state $\sinkstate$ (e.g.\ by testing for every encountered state whether it is safe and, if not, replace it by $\sinkstate$).
	Then, whenever we identify an end component, we know that this end component does not contain a sink state but rather only comprises safe states.
	However, this actually is exactly what we are looking for: a possibility of staying safe forever.
	Thus, we assign a value of $1$ to all actions in such an EC.
	And indeed, by \cref{stm:mdp_almost_sure_absorption}, we know that ECs are the \emph{only} place that allow us to stay safe forever.
	Together, we can derive the desired result.

	Extending to total reward has two major hurdles.
	Firstly, the total reward can be infinite, and we would first need to identify whether this is the case.
	To this end, we need to identify \emph{all} end components in the system and check for each that it yields zero reward.
	Here, we would need to employ graph-based reasoning akin to \cite{DBLP:conf/cav/AshokKW19}, as we need to ensure that we have not missed any transition.
	Once this is established (or guaranteed due to domain knowledge), we can derive an upper bound on the total reward if we are given an upper bound on the reward that can be obtained in one step $r_{\max}$.
	This bound is in the order $\mathcal{O}(p_{\min}^{-\cardinality{\States}} \cdot r_{\max})$.
	Using this bound as initial value for the upper bound then would lead to a correct algorithm.
	(See also \cite[Appendix~B]{DBLP:conf/cav/MeggendorferW24} and \cite[Section~4]{DBLP:journals/fmsd/ChenFKPS13} for related discussions.)

	Finally, an extension to mean payoff (aka.\ long run average reward) or general $\omega$-regular objectives in a model-free setting seems to be rather unlikely.
	Both inherently are infinite horizon objectives, while sampling only ever gives us finite information.
	As such, we likely need to use graph-based reasoning to reach meaningful conclusions.
	In particular, for $\omega$-regular objectives, we would need to know at least the graph structure of identified end components to decide whether they are winning or not, and for mean payoff we even would need bounds on the transition probabilities.
	As a special case, models where each end component is guaranteed to only comprise a single state could be tractable.
\end{remark}
 	\section{Conclusion and Future Work} \label{sec:conclusion}

In this work, we improved and extended the ideas of \cite{DBLP:conf/atva/BrazdilCCFKKPU14}, fixing several imprecisions and issues of the proofs.
This results in a framework for verifying MDP, using learning algorithms.
Building upon exiting methods, we thus provide novel techniques to analyse infinite-horizon reachability properties of arbitrary MDPs, yielding either exact bounds in the white-box scenario or probabilistically correct bounds in the black-box scenario.
Moreover, we presented a generalization of the methods of \cite{DBLP:conf/atva/BrazdilCCFKKPU14}, allowing for further, more sophisticated applications.

We deliberately omit an experimental evaluation.
Since the inception of the presented idea, multiple tools have implemented variants and extensions thereof for several objectives and model classes.
In particular, we want to point to the tool PET \cite{DBLP:conf/atva/Meggendorfer22,DBLP:conf/cav/MeggendorferW24}, which implements and evaluates the general complete information algorithm and presents a detailed evaluation.
Moreover, as already mentioned, for DQL the associated constants are infeasible for practical application:
Already for an MDP with $10$ States, $20$ actions and $p_{\min} = 0.1$, we obtain $\delay \approx 10^{26}$ for $\varepsilon = 0.1$ and $\delta = 0.01$.

Given this framework, an interesting direction for future work would be to extend this approach with more sophisticated learning algorithms.
Another, orthogonal direction is to explore whether our approach can be combined with symbolic methods.

	\printbibliography
	\appendix
	\section{Auxiliary Statements} \label{sec:appendix:auxiliary}

In this chapter we provide some general statements about Markov chains and decision processes which are used in various proofs for the DQL algorithms.

\subsubsection*{From Reachability to Step-bounded Reachability}
In this section we prove several statements relating the infinite-horizon reachability with the reachability after a sufficiently large number of steps.
\begin{lemma} \label{stm:markov_chain_minimum_reachability}
	For any Markov chain $\MC = (\States, \mctransitions)$, state $s$, and target set $\targetset$, we have that either $\ProbabilityMC<\MC, s>[\reach \targetset] = 0$ or $\ProbabilityMC<\MC, s>[\boundedreach<\cardinality{\States}> \targetset] \geq \delta_{\min}^{\cardinality{\States}}$, where $\delta_{\min}$ is the minimal transition probability, i.e.\ $\delta_{\min} = \min \{ \mctransitions(s, s') \mid s \in \States, s' \in \support \mctransitions(s) \}$.
\end{lemma}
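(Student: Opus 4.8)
The plan is the standard \enquote{shortest witnessing path} argument. If $\ProbabilityMC<\MC, s>[\reach \targetset] = 0$ there is nothing to show, so assume $\ProbabilityMC<\MC, s>[\reach \targetset] > 0$. If $s \in \targetset$, then $\ProbabilityMC<\MC, s>[\boundedreach<\cardinality{\States}> \targetset] = 1 \geq \delta_{\min}^{\cardinality{\States}}$ and we are done, so I may additionally assume $s \notin \targetset$. (Note that without loss of generality $\delta_{\min} \leq 1$, so raising it to a larger power only decreases the bound.)

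Since the reachability event $\reach \targetset$ is, by the measure-theoretic setup referenced earlier (cylinder sets, \cite[Chapter~10]{DBLP:books/daglib/0020348}), the countable union over all finite prefixes $\finitepath \in \Finitepaths<\MC>$ that first meet $\targetset$ in their last state of the corresponding cylinder sets, a positive reachability probability forces the existence of at least one such prefix with positive measure. That is, there is a finite path $\finitepath = s_1 s_2 \dots s_n \in \Finitepaths<\MC>$ with $s_1 = s$, $s_n \in \targetset$, $s_i \notin \targetset$ for $i < n$, and $\mctransitions(s_i, s_{i+1}) > 0$ for all $1 \leq i < n$. I would then apply a loop-removal step: as long as $n > \cardinality{\States}$, some state repeats along $\finitepath$, say $s_j = s_k$ with $j < k$, and deleting the segment $s_{j+1} \dots s_k$ yields a strictly shorter path from $s$ to $\targetset$ all of whose transitions still have positive probability (and which still first meets $\targetset$ in its last state). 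Iterating, I obtain such a witnessing path with $n \leq \cardinality{\States}$, hence with at most $\cardinality{\States} - 1$ transitions.

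The final step is to lower-bound the measure of the cylinder set $\{\infinitepath \in \Infinitepaths<\MC> \mid \infinitepath(i) = s_i \text{ for all } 1 \leq i \leq n\}$, which equals $\prod_{i=1}^{n-1} \mctransitions(s_i, s_{i+1})$. Every factor is positive, hence at least $\delta_{\min}$ by definition of $\delta_{\min}$, so the product is at least $\delta_{\min}^{n-1} \geq \delta_{\min}^{\cardinality{\States}}$. Moreover every path in this cylinder set visits $s_n \in \targetset$ at index $n \leq \cardinality{\States} \leq \cardinality{\States} + 1$, so the cylinder set is contained in $\boundedreach<\cardinality{\States}> \targetset$; monotonicity of the probability measure then yields $\ProbabilityMC<\MC, s>[\boundedreach<\cardinality{\States}> \targetset] \geq \delta_{\min}^{\cardinality{\States}}$.

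There is no genuine obstacle here; the only points requiring attention are the index bookkeeping in the definition of $\boundedreach<\cdot>$ (a simple path over $\cardinality{\States}$ states uses $\cardinality{\States}-1$ transitions, comfortably within the allowed horizon) and making the correspondence between $\reach \targetset$ and the union of witnessing cylinder sets precise, both of which are routine given the established probability space on $\Infinitepaths<\MC>$.
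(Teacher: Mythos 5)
Your proof is correct and follows essentially the same route as the paper's: extract a positive-probability finite witnessing path, shorten it to at most $\cardinality{\States}$ states by the pigeon-hole principle, and bound the probability of the resulting cylinder set by $\delta_{\min}^{\cardinality{\States}}$. The extra care you take with the cylinder-set decomposition and index bookkeeping is welcome but does not change the argument.
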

\begin{proof}
	Fix the Markov chain $\MC$, state $s$, and target set $\targetset$ as in the lemma.
	In the first case there is nothing to prove, thus assume that $\ProbabilityMC<\MC, s>[\reach \targetset] > 0$.
	This means that there exists a finite path $\finitepath$ from $s$ to some state in $\targetset$.
	By the pigeon-hole principle, we can assume this path has length at most $\cardinality{\States}$.
	Clearly, the probability of any single transition on this path is at least $\delta_{\min}$ and thus the overall probability of this path is at least $\delta_{\min}^{\cardinality{\States}}$. 
\end{proof}

\begin{corollary} \label{stm:mdp_minimum_reachability}
	For any MDP $\MDP = (\States,  \Actions, \stateactions, \mdptransitions)$, memoryless strategy $\strategy \in \StrategiesMD<\MDP>$, state $s$, and target set $\targetset$, we have that either $\ProbabilityMDP<\MDP, s><\strategy>[\reach \targetset] = 0$ or $\ProbabilityMDP<\MDP, s><\strategy>[\boundedreach<\cardinality{\States}> \targetset] \geq \delta_{\min}(\strategy)^{\cardinality{\States}}$, where $\delta_{\min}(\strategy) = \min \{ \strategy(s, a) \cdot \mdptransitions(s, a, s') \mid s \in \States, a \in \stateactions(s), \strategy(s, a) > 0, s' \in \support \mdptransitions(s, a, s') \}$.
\end{corollary}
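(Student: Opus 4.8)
The plan is to reduce the statement to its Markov chain analogue, \cref{stm:markov_chain_minimum_reachability}, by passing to the induced Markov chain. First I would fix $\MDP = (\States, \Actions, \stateactions, \mdptransitions)$, the memoryless strategy $\strategy$, the state $s$, and the target set $\targetset$ as in the statement. If $\ProbabilityMDP<\MDP, s><\strategy>[\reach \targetset] = 0$, there is nothing to show, so assume this probability is positive.

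Since $\strategy$ is memoryless, $\MDP^\strategy$ can be identified with a Markov chain $\MC_\strategy = (\States, \mctransitions^\strategy)$ over the original state set, where $\mctransitions^\strategy(s, s') = \sum_{a \in \stateactions(s)} \strategy(s, a) \cdot \mdptransitions(s, a, s')$, and the measures $\ProbabilityMDP<\MDP, s><\strategy>$ and $\ProbabilityMC<\MC_\strategy, s>$ agree on all (bounded) reachability events; this is exactly the path correspondence used when $\MDP^\strategy$ was introduced in \cref{sec:preliminaries}. In particular $\ProbabilityMC<\MC_\strategy, s>[\reach \targetset] > 0$ and $\ProbabilityMC<\MC_\strategy, s>[\boundedreach<\cardinality{\States}> \targetset] = \ProbabilityMDP<\MDP, s><\strategy>[\boundedreach<\cardinality{\States}> \targetset]$.

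The key observation is that the minimal positive transition probability of $\MC_\strategy$ is at least $\delta_{\min}(\strategy)$: whenever $\mctransitions^\strategy(s, s') > 0$, at least one summand $\strategy(s, a) \cdot \mdptransitions(s, a, s')$ is positive, hence by the very definition of $\delta_{\min}(\strategy)$ that summand is at least $\delta_{\min}(\strategy)$, and therefore $\mctransitions^\strategy(s, s') \geq \delta_{\min}(\strategy)$, since a sum of nonnegative terms dominates each of its terms. Applying \cref{stm:markov_chain_minimum_reachability} to $\MC_\strategy$ (whose state set is again $\States$, so the exponent is unchanged) gives $\ProbabilityMC<\MC_\strategy, s>[\boundedreach<\cardinality{\States}> \targetset] \geq \delta_{\min}(\MC_\strategy)^{\cardinality{\States}} \geq \delta_{\min}(\strategy)^{\cardinality{\States}}$, which transfers back to $\MDP$ via the correspondence above and concludes the proof.

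There is no genuine obstacle here; the only two points that need a sentence of care are (i) that passing to $\MC_\strategy$ keeps the state count $\cardinality{\States}$ fixed, so the exponent in the bound does not change, and (ii) the lower bound $\delta_{\min}(\MC_\strategy) \geq \delta_{\min}(\strategy)$. As an alternative one could argue directly by pigeon-hole, mirroring the proof of \cref{stm:markov_chain_minimum_reachability}: positivity of the reachability probability yields a finite path from $s$ into $\targetset$ of length at most $\cardinality{\States}$ using only state-action pairs with $\strategy(s, a) > 0$ and $\mdptransitions(s, a, s') > 0$, each step contributing a factor at least $\delta_{\min}(\strategy)$; I would, however, prefer the reduction to \cref{stm:markov_chain_minimum_reachability} for brevity and to avoid repeating the argument.
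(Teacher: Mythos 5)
Your proof is correct and follows exactly the paper's route: the paper's own argument is a one-line reduction, applying \cref{stm:markov_chain_minimum_reachability} to the induced Markov chain $\MDP^\strategy$. Your additional observations --- that the state set (and hence the exponent) is unchanged and that the minimal positive transition probability of the induced chain is at least $\delta_{\min}(\strategy)$ --- are exactly the details the paper leaves implicit.
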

\begin{proof}
	Follows directly from the above lemma by applying it to $\MDP^\strategy$. 
\end{proof}
The following lemma shows that on a large enough horizon, step-bounded and unbounded reachability values coincide up to a small error, similar in spirit to \cite[Lemma~2]{DBLP:journals/ml/KearnsS02}.
\begin{lemma} \label{stm:markov_chain_finite_reach_bound}
	Given a Markov chain $\MC = (\States, \mctransitions)$, a state $s \in \States$, a constant $\tau \in (0, 1]$, and a target set $\targetset$, for $N \geq \ln(\frac{2}{\tau}) \cdot \cardinality{\States} \delta_{\min}^{-\cardinality{\States}}$ we have
	\begin{equation*}
		\ProbabilityMC<\MC, s>[\reach \targetset] - \ProbabilityMC<\MC, s>[\boundedreach<N> \targetset] \leq \tau.
	\end{equation*}
\end{lemma}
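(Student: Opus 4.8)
The plan is to reduce infinite-horizon reachability to a block-wise geometric estimate built on \cref{stm:markov_chain_minimum_reachability}. Note first that $\boundedreach<N>\targetset \subseteq \reach\targetset$, so the left-hand side equals $\ProbabilityMC<\MC, s>[\reach\targetset \setminus \boundedreach<N>\targetset]$, i.e.\ the probability of eventually hitting $\targetset$ but only strictly after step $N$. The idea is to split the time axis into consecutive blocks of $\cardinality{\States}$ steps: within any block, as long as the current state can still reach $\targetset$ at all, \cref{stm:markov_chain_minimum_reachability} guarantees that $\targetset$ is hit within that block with probability at least $\delta_{\min}^{\cardinality{\States}}$, so the probability of surviving $k$ blocks without hitting $\targetset$ decays geometrically in $k$.

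Concretely, write $h(s) = \ProbabilityMC<\MC, s>[\reach\targetset]$ and $h_m(s) = \ProbabilityMC<\MC, s>[\boundedreach<m>\targetset]$, and prove by induction on $k \in \Naturals$ that for every state $s$,
\begin{equation*}
	h(s) - h_{k \cardinality{\States}}(s) \leq (1 - \delta_{\min}^{\cardinality{\States}})^k .
\end{equation*}
The base case $k = 0$ is immediate from $h(s) \leq 1$. For the step, decompose both $h$ and $h_{(k+1)\cardinality{\States}}$ according to the length-$\cardinality{\States}$ prefix of the path and use the Markov property: $h(s) = h_{\cardinality{\States}}(s) + \sum_{\finitepath} \ProbabilityMC<\MC, s>[\finitepath]\, h(\last{\finitepath})$ and $h_{(k+1)\cardinality{\States}}(s) = h_{\cardinality{\States}}(s) + \sum_{\finitepath} \ProbabilityMC<\MC, s>[\finitepath]\, h_{k\cardinality{\States}}(\last{\finitepath})$, where $\finitepath$ ranges over the finitely many $\cardinality{\States}$-step paths from $s$ that avoid $\targetset$. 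Subtracting gives $h(s) - h_{(k+1)\cardinality{\States}}(s) = \sum_{\finitepath} \ProbabilityMC<\MC, s>[\finitepath] \bigl(h(\last{\finitepath}) - h_{k\cardinality{\States}}(\last{\finitepath})\bigr)$. By the induction hypothesis each summand is at most $(1 - \delta_{\min}^{\cardinality{\States}})^k$; and $\sum_{\finitepath} \ProbabilityMC<\MC, s>[\finitepath]$ is exactly the probability of not hitting $\targetset$ within $\cardinality{\States}$ steps, which is $\leq 1 - \delta_{\min}^{\cardinality{\States}}$ whenever $h(s) > 0$ by \cref{stm:markov_chain_minimum_reachability} (and if $h(s) = 0$ the entire left-hand side is $\leq h(s) = 0$). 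This yields the bound for $k+1$.

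To finish, set $k = \lfloor N / \cardinality{\States}\rfloor$; by monotonicity of $m \mapsto \boundedreach<m>\targetset$ together with the claim, $\ProbabilityMC<\MC, s>[\reach\targetset] - \ProbabilityMC<\MC, s>[\boundedreach<N>\targetset] \leq (1 - \delta_{\min}^{\cardinality{\States}})^{k} \leq e^{-\delta_{\min}^{\cardinality{\States}} k}$, and the hypothesis $N \geq \ln(\tfrac{2}{\tau}) \cdot \cardinality{\States}\, \delta_{\min}^{-\cardinality{\States}}$ makes the right-hand side at most $\tau$ — the factor $2$ in the threshold absorbing the rounding $k \geq N/\cardinality{\States} - 1$, with the degenerate case $\delta_{\min} = 1$ (where the chain is deterministic and $h = h_{\cardinality{\States}}$) checked directly. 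I expect the only real subtlety to be the clean book-keeping of the block decomposition: ensuring the path partition and the Markov property are applied correctly and that "avoid $\targetset$ during the block" lines up precisely with the complement of the one-block hitting event, after which the geometric-decay core is a direct consequence of \cref{stm:markov_chain_minimum_reachability}; the closing constant chase is then elementary.
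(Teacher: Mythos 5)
Your core argument is correct and takes a genuinely different route from the paper: the paper simply splits $\ProbabilityMC<\MC, s>[\reach \targetset]$ into the bounded part and the tail $\ProbabilityMC<\MC, s>[\reach^{>N} \targetset]$, imports the bound $\ProbabilityMC<\MC, s>[\reach^{>N} \targetset] \leq 2 c^N$ with $c = \exp(-\cardinality{\States}^{-1}\delta_{\min}^{\cardinality{\States}})$ wholesale from \cite{DBLP:journals/jacm/BrazdilKK14}, and only does the constant chase. You instead prove the tail bound from scratch via the block decomposition; this is essentially the proof of the imported lemma, it makes the argument self-contained, and the induction giving $h(s) - h_{k\cardinality{\States}}(s) \leq (1-\delta_{\min}^{\cardinality{\States}})^k$ is sound (the base case, the prefix decomposition via the Markov property, and the appeal to \cref{stm:markov_chain_minimum_reachability} for $\sum_{\finitepath}\ProbabilityMC<\MC, s>[\finitepath] \leq 1 - \delta_{\min}^{\cardinality{\States}}$ are all fine).

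The gap is in the closing constant chase, and it is a real one. Writing $x = \delta_{\min}^{\cardinality{\States}}$, the rounding $k = \lfloor N/\cardinality{\States}\rfloor \geq N/\cardinality{\States} - 1$ costs a multiplicative factor $(1-x)^{-1}$, so your final bound is $(1-x)^{-1} e^{-xN/\cardinality{\States}} \leq (1-x)^{-1}\,\tau/2$, and the factor $2$ in the threshold absorbs this only when $x \leq 1/2$. That does cover every chain in which some state has at least two successors (then $\delta_{\min} \leq 1/2$, hence $x \leq 2^{-\cardinality{\States}}$), but the remaining case is exactly the deterministic chain with $\delta_{\min} = 1$, where $(1-x)^{-1}$ blows up, and your parenthetical \enquote{checked directly} does not go through: $h = h_{\cardinality{\States}}$ does not yield $h - h_N \leq \tau$, because the hypothesis permits $N \approx \ln(\tfrac{2}{\tau})\cdot\cardinality{\States}$, which is smaller than $\cardinality{\States} - 1$ whenever $\tau > 2/e$. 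Concretely, for the deterministic line $s_1 \to \dots \to s_{100}$ with $\targetset = \{s_{100}\}$ absorbing, $\delta_{\min} = 1$, and $\tau = 0.9$, the value $N = 80$ satisfies $N \geq \ln(\tfrac{2}{\tau})\cdot 100$, yet $\ProbabilityMC<\MC, s_1>[\reach \targetset] - \ProbabilityMC<\MC, s_1>[\boundedreach<80> \targetset] = 1 - 0 > \tau$. So the deterministic case is not a removable technicality but an actual counterexample to the lemma as stated; note that the same instance also violates the bound $2c^N$ that the paper's proof imports, so the defect is shared by the paper's route and your derivation merely makes its location visible. The statement (and your proof) can be repaired by strengthening the hypothesis to $N \geq \cardinality{\States} + \ln(\tfrac{2}{\tau}) \cdot \cardinality{\States}\,\delta_{\min}^{-\cardinality{\States}}$, after which $k \geq \ln(\tfrac{2}{\tau})\,\delta_{\min}^{-\cardinality{\States}}$ and $(1-x)^k \leq e^{-xk} \leq \tau/2$ without any need to absorb the rounding.
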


\begin{proof}
	We can express $\ProbabilityMC<\MC, s>[\reach \targetset]$ as a sum of $\ProbabilityMC<\MC, s>[\boundedreach<N> \targetset]$ and $\ProbabilityMC<\MC, s>[\reach^{> N} \targetset]$, where $\reach^{> N} \targetset = \reach \targetset \setminus \boundedreach<N> \targetset$ are all paths which reach the set $\targetset$ but only after at least $N + 1$ steps.
	Clearly,
	\begin{equation*}
		\ProbabilityMC<\MC, s>[\reach \targetset] - \ProbabilityMC<\MC, s>[\boundedreach<N> \targetset] = \ProbabilityMC<\MC, s>[\reach^{> N} \targetset].
	\end{equation*}
	By \cite[Lemma~5.1]{DBLP:journals/jacm/BrazdilKK14} we have that $\ProbabilityMC<\MC, s>[\reach^{> N} \targetset] \leq 2 \cdot c^N$, where $c = \exp(-\cardinality{\States}^{-1} \delta_{\min}^{\cardinality{\States}})$.
	\begin{align*}
		2 \cdot c^N \leq \tau \quad & \Leftrightarrow \quad N \cdot \ln c \geq \ln \frac{\tau}{2} \quad \Leftrightarrow \quad N \geq \ln \frac{\tau}{2} \cdot (\ln c)^{-1} \\
			& \Leftrightarrow \quad N \geq \ln \frac{\tau}{2} \cdot - \cardinality{\States} \delta_{\min}^{-\cardinality{\States}} \quad \Leftrightarrow \quad N \geq \ln \frac{2}{\tau} \cdot  \cardinality{\States} \delta_{\min}^{-\cardinality{\States}} \qedhere
	\end{align*}
\end{proof}

\subsubsection*{Unique Solution of Bellman Equations}
Now, we prove that a particular class of Bellman equations has a unique solution by proving that the associated functor is a contraction.
\begin{lemma} \label{stm:bellman_unique_solution}
	Let $\MDP$ be an MDP, $\stateactions_? : \States \to \Actions$ a function mapping a state $s$ to a subset of its available actions $\stateactions_?(s) \subseteq \stateactions(s)$, $c : \States \to \Reals$ a cost function, and $\strategy$ a memoryless strategy on $\MDP$.
	Define $\States_= = \{s \mid \stateactions_?(s) = \emptyset\}$.

	If $\ProbabilityMDP<\MDP, s><\strategy>[\reach \States_=] > 0$ for all states $s \in \States$, then the system of Bellman equations
	\begin{equation*}
		f(s) = c(s) + {\sum}_{a \in \stateactions_?(s)} \strategy(s, a) \cdot \ExpectedSumMDP{\mdptransitions}{s}{a}{f}
	\end{equation*}
	has a unique solution $f$.
\end{lemma}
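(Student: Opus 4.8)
Since $\States$ is finite, I view $f$ as a vector in $\Reals^\States$ and the right-hand side of the Bellman system as the affine operator $T(f) = c + P f$, where $P \in \Reals^{\States \times \States}$ is the substochastic matrix $P(s, s') = \sum_{a \in \stateactions_?(s)} \strategy(s, a) \cdot \mdptransitions(s, a, s')$; note that $P(s, \cdot) = 0$ for every $s \in \States_=$. A function $f$ solves the system if and only if $(I - P) f = c$, so it suffices to show that $I - P$ is invertible, equivalently that $P^n \to 0$, equivalently that some power $T^N$ is a contraction with respect to $\norm{\cdot}_\infty$ on the complete space $\Reals^\States$. Given that, the Banach fixed point theorem applied to $T^N$ yields a unique fixed point $f^\star$, and a routine argument (namely $T f^\star$ is again a fixed point of $T^N$, hence equals $f^\star$; and any fixed point of $T$ is one of $T^N$) upgrades this to uniqueness of the fixed point of $T$ itself.

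\textbf{Key step.} Call a state $s$ \emph{leaky} if $\sum_{s'} P(s, s') < 1$, and let $U$ be the set of leaky states; clearly $\States_= \subseteq U$. I claim that from every state there is a path in the support digraph of $P$ (edges $s \to s'$ with $P(s, s') > 0$) to some state in $U$. Fix $s$; by hypothesis $\ProbabilityMDP<\MDP, s><\strategy>[\reach \States_=] > 0$, so there is a finite path $s = t_0, t_1, \dots, t_k$ with $t_k \in \States_=$ in which each step $t_j \to t_{j+1}$ is witnessed by some action $a_j$ with $\strategy(t_j, a_j) > 0$ and $\mdptransitions(t_j, a_j, t_{j+1}) > 0$. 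If at every $t_j$ a witnessing action can be chosen inside $\stateactions_?(t_j)$, then this is already a $P$-path to $\States_= \subseteq U$. Otherwise let $j^\star$ be minimal such that no witnessing action at $t_{j^\star}$ lies in $\stateactions_?(t_{j^\star})$; then the prefix $t_0, \dots, t_{j^\star}$ is a $P$-path, and since some action carrying positive $\strategy$-mass at $t_{j^\star}$ lies outside $\stateactions_?(t_{j^\star})$ we get $\support(\strategy(t_{j^\star})) \not\subseteq \stateactions_?(t_{j^\star})$, hence $t_{j^\star} \in U$.

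\textbf{Conclusion.} By the pigeonhole principle each such path can be shortened so that it reaches a state $u \in U$ after $m < \cardinality{\States}$ steps. Write $\mathbf{1}$ for the all-ones vector, $r = \mathbf{1} - P \mathbf{1} \geq 0$ for the exit-mass vector (so $r(u) > 0$ exactly for $u \in U$), and $p_{\min}$ for the smallest positive entry of $P$. The chosen path gives $(P^m r)(s) \geq p_{\min}^{\cardinality{\States}} \cdot \min_{u \in U} r(u) =: \eta > 0$. Telescoping, $\mathbf{1} - P^{\cardinality{\States}} \mathbf{1} = \sum_{n=0}^{\cardinality{\States}-1} P^n r \geq P^m r$ entrywise, so every row sum of $P^{\cardinality{\States}}$ is at most $1 - \eta$, i.e.\ $\norm{P^{\cardinality{\States}}}_\infty \leq 1 - \eta < 1$. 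Since $T^{\cardinality{\States}}(f) - T^{\cardinality{\States}}(g) = P^{\cardinality{\States}}(f - g)$, the operator $T^{\cardinality{\States}}$ is a $(1 - \eta)$-contraction, and the reduction of the first paragraph finishes the proof.

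\textbf{Main obstacle.} The only genuinely delicate point is the structural claim: a positive-probability step of the $\strategy$-induced chain aggregates contributions of several actions, so it need not be realisable by an action of $\stateactions_?$. The dichotomy above — either the entire witness path to $\States_=$ stays inside $\stateactions_?$, or its first failure point is forced to be leaky — is what resolves this, and it is the step that requires care.
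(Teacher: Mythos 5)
Your proof is correct and follows essentially the same route as the paper's: both show that the $\cardinality{\States}$-fold Bellman iterate is a sup-norm contraction (because probability mass leaks out of $\States_?$ within $\cardinality{\States}$ steps) and conclude via the Banach fixed-point theorem. Your leaky-state dichotomy merely makes explicit a point the paper's proof leaves implicit, namely that a witness path to $\States_=$ under $\strategy$ need not be realisable with actions from $\stateactions_?$ alone, but mass is lost either way.
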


\begin{proof}
	Define the iteration operator $F$ as
	\begin{align*}
		F(f)(s) = c(s) + {\sum}_{a \in \stateactions_?(s)} \strategy(s, a) \cdot \ExpectedSumMDP{\mdptransitions}{s}{a}{f}.
	\end{align*}
	Trivially, a function $f : \States \to \Reals$ is a solution to the equation system if and only if it is a fixed point of $F$, i.e.\ $F(f)(s) = f(s)$ for all states $s \in \States$.

	We show that $F^{\cardinality{\States}}$, i.e.\ $F$ applied $\cardinality{S}$ times, is a contraction and thus has a unique fixed point, obtainable by iterating $F$.
	This means that there exists a contraction factor $0 \leq \gamma < 1$ such that for two arbitrary $f, g : \States \to \Reals$, we have
	\begin{equation}
		\max_{s \in \States} \abs*{F^{\cardinality{\States}}(f)(s) - F^{\cardinality{\States}}(g)(s)} \leq \gamma \cdot \max_{s \in \States} \abs*{f(s) - g(s)}. \label{eq:stm:bellman_unique_solution:contraction}
	\end{equation}
	Let $P(s, s', k)$ be the probability of reaching state $s'$ starting from $s$ in exactly $k$ steps using the strategy $\strategy$ by using only actions from $\stateactions_?$.
	Note that for $s \in \States_=$ this implies $P(s, s', k) = 0$ for any $s' \in \States$ and any number $k$.
	For $s \in \States_? \coloneqq \States \setminus \States_=$, we have that
	\begin{equation*}
		F^{\cardinality{S}}(f)(s) = {\sum}_{s' \in \States} \left( {\sum}_{i = 0}^{\cardinality{S} - 1} P(s, s', i) \cdot c(s') \right) + {\sum}_{s' \in \States_?} P(s, s', \cardinality{S}) \cdot f(s')
	\end{equation*}
	Observe that the first term is independent of $f$, hence for $s \in S_?$ we have
	\begin{align*}
		& \abs*{F^{\cardinality{\States}}(f)(s) - F^{\cardinality{\States}}(g)(s)} \\
		& \qquad = \abs*{{\sum}_{s' \in \States_?} P(s, s', \cardinality{S}) \cdot f(s') - {\sum}_{s' \in \States_?} P(s, s', \cardinality{S}) \cdot g(s')} \\
		& \qquad \leq {\sum}_{s' \in \States_?} P(s, s', \cardinality{S}) \cdot \abs*{f(s') - g(s')} \\
		& \qquad \leq \left( {\sum}_{s' \in \States_?} P(s, s', \cardinality{S})  \right) \cdot \max_{s' \in \States} \abs*{f(s') - g(s')}.
	\end{align*}
	By assumption, we have that $\ProbabilityMDP<\MDP, s><\strategy>[\reach \States_=] > 0$.
	This implies that $\ProbabilityMDP<\MDP, s><\strategy>[\boundedreach<\cardinality{\States}> \States_=] \geq \delta_{\min}(\strategy) > 0$ by \cref{stm:mdp_minimum_reachability}.
	For $s \in S_=$, observe that $F^{\cardinality{S}}(f)(s) = f(s) = c(s)$ and hence
	\begin{equation*}
		\abs*{F^{\cardinality{\States}}(f)(s) - F^{\cardinality{\States}}(g)(s)} = \abs*{f(s) - g(s)} = \abs*{c(s) - c(s)} = 0.
	\end{equation*}
	Consequently, $\gamma = \max_{s \in \States_?} \sum_{s' \in \States_?} P(s, s', \cardinality{S}) \leq \delta_{\min}(\strategy) < 1$ satisfies Inequality~\eqref{eq:stm:bellman_unique_solution:contraction} and we have that $F^{\cardinality{\States}}$ is a contraction.
	By the Banach fixed point theorem we get that $F^{\cardinality{\States}}$ has a unique fixed point and thus the equation system has a unique solution. 
\end{proof}

\subsubsection*{From Local to Global Error Bounds}
The next lemma bounds the overall error of an approximation in an MDP given that the approximation is \enquote{close} locally.
By definition
\begin{equation*}
	\ExpectedSumMDP{\mdptransitions}{s}{a}{\ExpectedSumStrat{\strategy}{X}} = {\sum}_{s' \in \States} \mdptransitions(s, a, s') \cdot {\sum}_{a' \in \stateactions(s')} \strategy(s', a') \cdot f(s', a').
\end{equation*}
Thus, the term $X(s, a) - \ExpectedSumMDP{\mdptransitions}{s}{a}{\ExpectedSumStrat{\strategy}{X}}$ in the lemma essentially denotes the difference between the state-action value $X(s, a)$ and the expected value obtained from $X$ in the successors of $(s, a)$ following $\strategy$.
Consequently, $\mathcal{K}$ contains those state-action pairs for which the value under $X$ is consistent with the value of its successors up to some error.
\begin{lemma} \label{stm:assignment_mdp_value}
	Let $\MDP = (\States, \Actions, \stateactions, \mdptransitions)$ be an MDP satisfying \cref{asm:mec_free}, $X : \States \times \stateactions \to [0, 1]$ a function assigning a value between $0$ and $1$ to each state-action pair, $\strategy$ a memoryless strategy on $\MDP$, and $\kappa_l \leq \kappa_u$ two error bounds.
	Set
	\begin{equation*}
		\mathcal{K} \coloneqq \{(s, a) \mid \kappa_l \leq X(s, a) - \ExpectedSumMDP{\mdptransitions}{s}{a}{\ExpectedSumStrat{\strategy}{X}} \leq \kappa_u\}.
	\end{equation*}
	Define a new MDP $\MDP' = (\States, \Actions, \stateactions, \mdptransitions')$ where
	\begin{equation*}
		\mdptransitions'(s, a) = \begin{dcases*}
			\mdptransitions(s, a) & if $(s, a) \in \mathcal{K}$, and \\
			\{\targetstate \mapsto X(s, a), \sinkstate \mapsto 1 - X(s, a)\} & otherwise.
		\end{dcases*}
	\end{equation*}
	Then, for each state $s \in \States$ we have
	\begin{equation*}
		\kappa_l \leq \frac{\delta_{\min}(\strategy)^{\cardinality{\States}}}{\cardinality{\States}} \left( \ExpectedSumStrat{\strategy}{X}(s) - \ProbabilityMDP<\MDP', s><\strategy>[\reach \{\targetstate\}] \right) \leq \kappa_u,
	\end{equation*}
	where $\delta_{\min}(\strategy) = \min\{ \strategy(s, a) \cdot \mdptransitions(s, a, s') \mid s \in \States, a \in \stateactions(s), \strategy(s, a) > 0, s' \in \support(\mdptransitions(s, a)) \}$ is the smallest transition probability in the Markov chain $\MDP^\strategy$.
\end{lemma}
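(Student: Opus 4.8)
The plan is to identify the quantity
$h(s) := \ExpectedSumStrat{\strategy}{X}(s) - \ProbabilityMDP<\MDP', s><\strategy>[\reach \{\targetstate\}]$
with the (sign-flipped) expected total reward of a \emph{defective} Markov process that follows $\strategy$ but only through state-action pairs in $\mathcal{K}$ (``falling off'' whenever $\strategy$ would use a pair outside $\mathcal{K}$), the one-step reward at $s$ being the local discrepancy $X(s,a) - \ExpectedSumMDP{\mdptransitions}{s}{a}{\ExpectedSumStrat{\strategy}{X}}$ averaged over the $\mathcal{K}$-pairs that $\strategy$ picks there. Write $g(s) := \ExpectedSumStrat{\strategy}{X}(s)$ and $v'(s) := \ProbabilityMDP<\MDP', s><\strategy>[\reach \{\targetstate\}]$; then $g, v' : \States \to [0,1]$, $v'(\targetstate) = 1$, and $v'(\sinkstate) = 0$ (here and below I use the standing convention $X(\targetstate,\cdot) = 1$, $X(\sinkstate,\cdot) = 0$ — satisfied in every application — so that at the two terminal states $g$ agrees with $v'$ and, since $\kappa_l \le 0 \le \kappa_u$, both $(\targetstate,a)$ and $(\sinkstate,a)$ land in $\mathcal{K}$, keeping the terminal states absorbing in $\MDP'$). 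First I would subtract the fixed-point identity $v'(s) = \sum_a \strategy(s,a)\,\ExpectedSumMDP{\mdptransitions'}{s}{a}{v'}$ (valid for $s \notin \{\targetstate,\sinkstate\}$) from the definition of $g$: for $(s,a) \notin \mathcal{K}$ the summand vanishes since $\ExpectedSumMDP{\mdptransitions'}{s}{a}{v'} = X(s,a) v'(\targetstate) + (1-X(s,a)) v'(\sinkstate) = X(s,a)$, while for $(s,a) \in \mathcal{K}$ it equals $e(s,a) + \ExpectedSumMDP{\mdptransitions}{s}{a}{h}$ with $e(s,a) := X(s,a) - \ExpectedSumMDP{\mdptransitions}{s}{a}{\ExpectedSumStrat{\strategy}{X}}$. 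Hence
\[
  h(s) = \bar e(s) + {\sum}_{a \,:\, (s,a) \in \mathcal{K}} \strategy(s,a)\,\ExpectedSumMDP{\mdptransitions}{s}{a}{h}, \qquad \bar e(s) := {\sum}_{a \,:\, (s,a) \in \mathcal{K}} \strategy(s,a)\, e(s,a),
\]
for $s \notin \{\targetstate,\sinkstate\}$, with $h(\targetstate) = h(\sinkstate) = 0$; and since the weights $\strategy(s,a)$ over $\mathcal{K}$-pairs sum to at most $1$ and $\kappa_l \le 0 \le \kappa_u$, each $\bar e(s)$ is a convex combination of values in $[\kappa_l,\kappa_u] \cup \{0\}$, hence lies in $[\kappa_l, \kappa_u]$.

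Next I would feed this into \cref{stm:bellman_unique_solution}, taking the MDP $\MDP$, the same strategy $\strategy$, $\stateactions_?(s) := \{a \mid (s,a) \in \mathcal{K}\}$ for $s \notin \{\targetstate,\sinkstate\}$ and $\stateactions_?(\targetstate) = \stateactions_?(\sinkstate) = \emptyset$, and cost $c := \bar e$ (so $c(\targetstate) = c(\sinkstate) = 0$). Then $\{\targetstate,\sinkstate\} \subseteq \States_=$, and \cref{asm:mec_free} together with \cref{stm:mdp_almost_sure_absorption} gives $\ProbabilityMDP<\MDP, s><\strategy>[\reach \{\targetstate,\sinkstate\}] = 1$ for every $s$, so the hypothesis $\ProbabilityMDP<\MDP, s><\strategy>[\reach \States_=] > 0$ holds. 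One checks directly that $h$ solves this Bellman system — the case $s \in \States_= \setminus \{\targetstate,\sinkstate\}$ being automatic, because then $\strategy$ uses no $\mathcal{K}$-pair at $s$, forcing $\bar e(s) = 0$ and $v'(s) = g(s)$ — so by uniqueness $h$ equals that solution; equivalently, $h(s) = \sum_{s'} N(s,s')\, c(s')$, where $N(s,s')$ is the expected number of visits of the defective process to $s'$ when started at $s$.

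It then remains to bound $T(s) := \sum_{s'} N(s,s')$, the expected number of steps until the process halts. Clearly $T(s) \ge 1$. For the upper bound I would show that from any state the process halts within $\cardinality{\States}$ steps with probability at least $\delta_{\min}(\strategy)^{\cardinality{\States}}$: let $V$ be the set of states from which there is positive chance of ``falling off'' in one step (so $\States_= \subseteq V$, hence $V$ is reached almost surely under $\strategy$ in $\MDP$ since $\targetstate \in V$); a shortest path from $s$ to $V$ has length at most $\cardinality{\States}-1$ and its intermediate states lie outside $V$, hence use only $\mathcal{K}$-pairs, so the defective process realises this path with probability $\ge \delta_{\min}(\strategy)^{\cardinality{\States}-1}$ and then halts in the next step with probability $\ge \delta_{\min}(\strategy)$ (as $\strategy$ puts mass $\ge \delta_{\min}(\strategy)$ on some non-$\mathcal{K}$ action there) — the same idea underlying \cref{stm:mdp_minimum_reachability}. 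Iterating $\cardinality{\States}$ steps at a time, the probability of surviving $k\cardinality{\States}$ steps is at most $(1 - \delta_{\min}(\strategy)^{\cardinality{\States}})^k$, whence $T(s) \le \cardinality{\States}\,\delta_{\min}(\strategy)^{-\cardinality{\States}}$. Finally, since $N(s,\cdot) \ge 0$ and $c$ takes values in $[\kappa_l,\kappa_u]$, we get $\kappa_l\, T(s) \le h(s) \le \kappa_u\, T(s)$; combining with $1 \le T(s) \le \cardinality{\States}\,\delta_{\min}(\strategy)^{-\cardinality{\States}}$ and $\kappa_l \le 0 \le \kappa_u$, and multiplying by $\delta_{\min}(\strategy)^{\cardinality{\States}}/\cardinality{\States}$, yields $\kappa_l \le \frac{\delta_{\min}(\strategy)^{\cardinality{\States}}}{\cardinality{\States}}\bigl(\ExpectedSumStrat{\strategy}{X}(s) - \ProbabilityMDP<\MDP', s><\strategy>[\reach \{\targetstate\}]\bigr) \le \kappa_u$, as required.

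The main obstacle is getting the recurrence for $h$ exactly right: one must be careful that $v'$ satisfies the claimed fixed-point equation only at non-target states, that the non-$\mathcal{K}$ contributions genuinely cancel (which uses $v'(\targetstate) = 1$, $v'(\sinkstate) = 0$ and the convention on $X$ at the terminal states), and — most delicately — that $h$ is \emph{the} Bellman solution and not merely \emph{a} solution; this is exactly where \cref{stm:bellman_unique_solution}, and hence \cref{asm:mec_free} (which forbids the process from cycling forever inside $\mathcal{K}$-pairs with no chance of halting), is essential. Once the recurrence and uniqueness are in place, the quantitative absorption estimate and the closing arithmetic are routine.
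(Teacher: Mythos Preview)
Your proposal is correct and follows essentially the same route as the paper: both derive a Bellman-type recurrence for the difference $h(s) = \ExpectedSumStrat{\strategy}{X}(s) - v'(s)$ with one-step cost given by the local discrepancy on $\mathcal{K}$-pairs, invoke \cref{stm:bellman_unique_solution} (using \cref{asm:mec_free} to guarantee reachability of $\States_=$) for uniqueness, and then bound the accumulated cost via the geometric absorption estimate $\le \cardinality{\States}\,\delta_{\min}(\strategy)^{-\cardinality{\States}}$. Your write-up is in fact more careful about two points the paper leaves implicit---namely that one needs $X(\targetstate,\cdot)=1$, $X(\sinkstate,\cdot)=0$ and $\kappa_l\le 0\le\kappa_u$ (both hold in every application) for the terminal states to stay absorbing in $\MDP'$ and for $\bar e(s)\in[\kappa_l,\kappa_u]$ when the $\mathcal{K}$-weights sum to less than one---but the argument is otherwise the same.
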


\begin{proof}
	Define $v'(s) = \ProbabilityMDP<\MDP', s><\strategy>[\reach \{\targetstate\}]$.
	Furthermore, let $\mathcal{K}(s) = \{a \in \stateactions(s) \mid (s, a) \in \mathcal{K}\}$ and $\lnot \mathcal{K}(s) = \overline{\mathcal{K}(s)} \intersection \stateactions(s)$ the sets of all actions $a \in \stateactions(s)$ such that $(s, a) \in \mathcal{K}$ and $(s, a) \notin \mathcal{K}$, respectively.
	Observe that $v'$ is a solution to the following system of equations:
	\begin{align*}
		v'(\targetstate) & = 1 \\
		v'(\sinkstate) & = 0 \\
		v'(s) & = {\sum}_{a \in \mathcal{K}(s)} \strategy(s, a) \cdot \ExpectedSumMDP{\mdptransitions}{s}{a}{v'} + {\sum}_{a \in \lnot \mathcal{K}(s)} \strategy(s, a) \cdot X(s, a)
	\end{align*}
	We apply \cref{stm:bellman_unique_solution} to show that $v'$ is the unique solution.
	Let $\varepsilon(\targetstate) = 1$, $\varepsilon(\sinkstate) = 0$, and $\varepsilon(s) = {\sum}_{a \in \lnot \mathcal{K}(s)} \strategy(s, a) \cdot X(s, a)$ for all other $s \in \States$.
	Further, set $\stateactions_?(\targetstate) = \stateactions_?(\sinkstate) = \emptyset$ and $\stateactions_?(s) = \mathcal{K}(s)$ for all other $s \in \States$.
	Then, $\{\targetstate, \sinkstate\} \subseteq \States_=$.
	The MDP $\MDP'$ also satisfies \cref{asm:mec_free}, since no new ECs are introduced, and thus $\ProbabilityMDP<\MDP, s><\strategy>[\reach \States_=] = 1 > 0$ for all $s \in \States$ by \cref{stm:mdp_almost_sure_absorption}.
	Consequently, \cref{stm:bellman_unique_solution} is applicable and $v'$ is the unique solution of the above equations.

	$\ExpectedSumStrat{\strategy}{X}$ satisfies a similar set of equations:
	\begin{align*}
		\ExpectedSumStrat{\strategy}{X}(\targetstate) & = 1 \\
		\ExpectedSumStrat{\strategy}{X}(\sinkstate) & = 0 \\
		\ExpectedSumStrat{\strategy}{X}(s) & = {\sum}_{a \in \stateactions(s)} \strategy(s, a) \cdot X(s, a) \\
			& = {\sum}_{a \in \mathcal{K}(s)} \strategy(s, a) \cdot X(s, a) + {\sum}_{a \in \lnot \mathcal{K}(s)} \strategy(s, a) \cdot X(s, a) \\
			& = \kappa(s) + {\sum}_{a \in \mathcal{K}(s)} \strategy(s, a) \cdot \ExpectedSumMDP{\mdptransitions}{s}{a}{\ExpectedSumStrat{\strategy}{X}} + {\sum}_{a \in \lnot \mathcal{K}(s)} \strategy(s, a) \cdot X(s, a)
	\end{align*}
	where $\kappa(s) = \sum_{a \in \mathcal{K}(s)} \strategy(s, a) \cdot (X(s, a) - \ExpectedSumMDP{\mdptransitions}{s}{a}{\ExpectedSumStrat{\strategy}{X}})$ is bounded by $\kappa_l \leq \kappa(s) \leq \kappa_u$.
	Again, by \cref{stm:bellman_unique_solution}, these equations then have a unique fixed point, setting $\varepsilon(s) = \kappa(s) + {\sum}_{a \in \lnot \mathcal{K}(s)} \strategy(s, a) \cdot X(s, a)$.

	Now, we prove a bound for the difference between $X$ and $v'$ using the above characterizations.
	Observe that the above equation systems only differ structurally by the error term $\kappa(s)$.
	Let thus $f(s) = \ExpectedSumStrat{\strategy}{X}(s) - v'(s)$.
	This $f$ is a fixed point of the following equation system:
	\begin{align*}
		f(\targetstate) & = f(\sinkstate) = 0 \\
		f(s) & = \kappa(s) + {\sum}_{a \in \mathcal{K}(s)} \strategy(s, a) \cdot \ExpectedSumMDP{\mdptransitions}{s}{a}{f}
	\end{align*}
	Clearly, $f$ again is unique by \cref{stm:bellman_unique_solution}.

	Given a state $s$, the probability to reach the terminal states $\targetstate$ and $\sinkstate$ in $\cardinality{S}$ steps following strategy $\strategy$ is bounded from below by $\delta_{\min}(\strategy)^{\cardinality{\States}}$ due to \cref{stm:mdp_minimum_reachability}.
	Consequently, the probability of not reaching these states in $\cardinality{\States}$ steps is bounded from above by $1 - \delta_{\min}(\strategy)^{\cardinality{S}} < 1$.
	Hence, we can bound the difference between $\ExpectedSumStrat{\strategy}{X}$ and $v'$ by
	\begin{equation*}
		\kappa(s) \cdot {\sum}_{n=0}^\infty \cardinality{S} \left( 1 - \delta_{\min}(\strategy)^{\cardinality{S}} \right)^n = \kappa(s) \cdot \cardinality{S} \delta_{\min}(\strategy)^{-\cardinality{S}}. \qedhere 
	\end{equation*}
\end{proof}

\subsubsection*{Bounding Reachability on Similar MDP}
In this lemma, we show that MDP which are sufficiently \enquote{similar} also have similar reachability values.
In particular, we are concerned with MDP that agree on a subset of states.
For another notion of similarity (same transition structure but different transition probabilities) see \cite{DBLP:conf/fossacs/Chatterjee12}.
\begin{lemma} \label{stm:similar_mdp_bound}
	Let $\MDP = (\States, \Actions, \stateactions, \mdptransitions)$ be an MDP, $\targetset \subseteq \States$ a set of target states, $\mathcal{K} \subseteq \States \times \stateactions$ a set of state-action pairs, and $\MDP' = (\States', \Actions', \stateactions', \mdptransitions')$ an arbitrary MDP with $\mathcal{K} \subseteq \States' \times \stateactions'$ that coincides with $\MDP$ on $\mathcal{K}$ and $\targetset$, i.e.\ (i)~$\stateactions(s) = \stateactions'(s)$ for all $s \in \mathcal{K}$, (ii)~$\mdptransitions(s, a) = \mdptransitions'(s, a)$ for all $(s, a) \in \mathcal{K}$, and (iii)~$\targetset \subseteq \States'$.
	Moreover, let $\strategy$ be a strategy in $\MDP$, $s \in \States \intersection \States'$ an arbitrary state in both MDP, and $N \in \Naturals$ a natural number.
	Then,
	\begin{equation*}
		\ProbabilityMDP<\MDP, s><\strategy>[\boundedreach<N> \targetset] \geq \ProbabilityMDP<\MDP', s><\strategy'>[\boundedreach<N> \targetset] - \ProbabilityMDP<\MDP, s><\strategy>[\boundedreach<N> \overline{\mathcal{K}}],
	\end{equation*}
	where $\strategy'$ is an arbitrary strategy equal to $\strategy$ on all finite paths over $\mathcal{K}$, i.e.\ $\strategy(\finitepath) = \strategy'(\finitepath)$ for all $\finitepath \in \mathcal{K}^\star \times \States \intersection \Finitepaths<\MDP>$.
\end{lemma}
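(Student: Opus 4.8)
The plan is to carry out a coupling argument between paths of $\MDP$ under $\strategy$ and paths of $\MDP'$ under $\strategy'$, exploiting that the two MDP agree on $\mathcal{K}$ and $\targetset$. Concretely, I would fix a canonical probability space on which both path measures are realised, and couple the two stochastic processes step by step so that they produce \emph{identical} finite prefixes as long as the path stays inside $\mathcal{K}$. Since $\stateactions$ and $\stateactions'$ agree on $\mathcal{K}$, $\mdptransitions$ and $\mdptransitions'$ agree on $\mathcal{K}$, and $\strategy = \strategy'$ on all finite paths over $\mathcal{K}$ by hypothesis, the one-step transition kernels of the induced Markov chains $\MDP^\strategy$ and $\MDP'^{\strategy'}$ coincide on every history contained in $\mathcal{K}$. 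Hence the coupling can be maintained until the first time the path leaves $\mathcal{K}$ (i.e.\ visits some $(s',a') \notin \mathcal{K}$), and up to that stopping time the two processes are indistinguishable.

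The key steps, in order, would be: (1) Define the \enquote{bad} event $E = \boundedreach<N> \overline{\mathcal{K}}$, the set of paths that visit a state-action pair outside $\mathcal{K}$ within the first $N$ steps. (2) Observe that on the complement $\setcomplement{E}$, the coupled paths in $\MDP$ and $\MDP'$ agree on their first $N+1$ states, so a coupled path reaches $\targetset$ within $N$ steps in $\MDP'$ \emph{iff} it does so in $\MDP$ — here I would use that $\targetset \subseteq \States'$ and that target membership is the same in both. (3) Conclude
\begin{equation*}
	\ProbabilityMDP<\MDP', s><\strategy'>[\boundedreach<N> \targetset] \leq \ProbabilityMDP<\MDP', s><\strategy'>[\boundedreach<N> \targetset \intersection \setcomplement{E}] + \ProbabilityMDP<\MDP', s><\strategy'>[E],
\end{equation*}
identify the first term with $\ProbabilityMDP<\MDP, s><\strategy>[\boundedreach<N> \targetset \intersection \setcomplement{E}] \leq \ProbabilityMDP<\MDP, s><\strategy>[\boundedreach<N> \targetset]$ via the coupling, and identify $\ProbabilityMDP<\MDP', s><\strategy'>[E]$ with $\ProbabilityMDP<\MDP, s><\strategy>[E] = \ProbabilityMDP<\MDP, s><\strategy>[\boundedreach<N> \overline{\mathcal{K}}]$ (the event $E$ up to the first exit from $\mathcal{K}$ depends only on the common prefix, so it too has equal probability under both measures). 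Rearranging yields exactly the claimed inequality.

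Alternatively, and perhaps more cleanly for the write-up, I would avoid an explicit coupling and argue directly on cylinder sets: for any finite path $\finitepath$ of length $\leq N$ lying entirely in $\mathcal{K}$, the probability that $\MDP^\strategy$ produces $\finitepath$ equals the probability that $\MDP'^{\strategy'}$ produces it (induction on $\cardinality{\finitepath}$, using the agreement of kernels and strategies on $\mathcal{K}$). Summing over those $\finitepath$ that additionally hit $\targetset$ gives $\ProbabilityMDP<\MDP, s><\strategy>[\boundedreach<N> \targetset \setminus \boundedreach<N> \overline{\mathcal{K}}] = \ProbabilityMDP<\MDP', s><\strategy'>[\boundedreach<N> \targetset \setminus \boundedreach<N> \overline{\mathcal{K}}]$, and since the left side is at most $\ProbabilityMDP<\MDP, s><\strategy>[\boundedreach<N> \targetset]$ while the right side is at least $\ProbabilityMDP<\MDP', s><\strategy'>[\boundedreach<N> \targetset] - \ProbabilityMDP<\MDP', s><\strategy'>[\boundedreach<N> \overline{\mathcal{K}}]$, the result follows — after noting once more that the $\overline{\mathcal{K}}$-reach-within-$N$ event also has the same probability under both measures because the \enquote{first} visit to $\overline{\mathcal{K}}$ is determined by the common prefix.

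The main obstacle I anticipate is purely bookkeeping rather than conceptual: carefully handling the fact that $\strategy$ is a general (history-dependent, possibly randomising) strategy, so one must phrase the \enquote{agreement on $\mathcal{K}$} as agreement of $\strategy$ and $\strategy'$ on the set of finite paths $\finitepath \in \mathcal{K}^\star \times \States \intersection \Finitepaths<\MDP>$, and verify that when restricted to such histories the two induced measures genuinely coincide. One subtlety is that $\MDP'$ may have extra states and actions outside $\mathcal{K}$, and $\strategy'$ is only pinned down on $\mathcal{K}$-histories — but this is harmless, since everything we need about $\MDP'$ concerns only prefixes inside $\mathcal{K}$ up to the point of first departure, and the contribution of paths that leave $\mathcal{K}$ early is exactly the slack term $\ProbabilityMDP<\MDP, s><\strategy>[\boundedreach<N> \overline{\mathcal{K}}]$ we subtract.
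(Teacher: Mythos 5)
Your proposal is correct and, in its second (cylinder-set) formulation, is essentially the paper's own proof: the paper likewise partitions the length-$N$ finite paths into those lying entirely in $\mathcal{K}$, on which the two induced measures agree, and those visiting $\overline{\mathcal{K}}$, whose total mass is the subtracted slack term. The coupling phrasing you give first is just an equivalent repackaging of the same decomposition, and you correctly flag the one point the argument hinges on, namely that the probability of leaving $\mathcal{K}$ within $N$ steps is itself determined by the common prefix and hence equal under both measures.
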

\begin{proof}
	For a finite path $\finitepath = s_1 a_1 \ldots a_{n-1} s_n \in \Finitepaths<\MDP>$, let $\ProbabilityMDP<\MDP, s><\strategy>[\finitepath]$ denote the probability of path $\finitepath$ occurring when following strategy $\strategy$ from state $s$.
	Let $\mathcal{K}_N$ denote the (finite) set of all finite paths $\finitepath$ of length $N$ starting in $s$ such that all state-action pairs $(s_i, a_i)$ in $\finitepath$ are in $\mathcal{K}$.
	Similarly, let $\lnot \mathcal{K}_N$ denote the set of all such paths containing at least one state-action pair not in $\mathcal{K}$.
	Let $\mathcal{R}(\finitepath)$ be a function which returns $1$ if some target state of $\targetset$ is in path $\finitepath$ and $0$ otherwise.
	Then, we have the following:
	\begin{align}
		& \ProbabilityMDP<\MDP', s><\strategy'>[\boundedreach<N> \targetset] - \ProbabilityMDP<\MDP, s><\strategy>[\boundedreach<N> \targetset] \label{eq:stm:dql_no_ec:similar_mdp_bound:first} \\
		& \qquad = \begin{aligned}
				\sum_{\mathclap{\finitepath \in \mathcal{K}_N}} \left( \ProbabilityMDP<\MDP', s><\strategy'>[\finitepath] \cdot \mathcal{R}(\finitepath) - \ProbabilityMDP<\MDP, s><\strategy>[\finitepath] \cdot \mathcal{R}(\finitepath) \right) + & \\
				\qquad \sum_{\mathclap{\finitepath \in \lnot \mathcal{K}_N}} \left( \ProbabilityMDP<\MDP', s><\strategy'>[\finitepath] \cdot \mathcal{R}(\finitepath) - \ProbabilityMDP<\MDP, s><\strategy>[\finitepath] \cdot \mathcal{R}(\finitepath) \right) &
			\end{aligned} \label{eq:stm:dql_no_ec:similar_mdp_bound:second} \\
		& \qquad = \sum_{\finitepath \in \lnot \mathcal{K}_N} \left( \ProbabilityMDP<\MDP', s><\strategy'>[\finitepath] \cdot \mathcal{R}(\finitepath) - \ProbabilityMDP<\MDP, s><\strategy>[\finitepath] \cdot \mathcal{R}(\finitepath) \right) \label{eq:stm:dql_no_ec:similar_mdp_bound:third} \\
		& \qquad \leq \sum_{\finitepath \in \lnot \mathcal{K}_N} \ProbabilityMDP<\MDP', s><\strategy'>[\finitepath] \cdot \mathcal{R}(\finitepath) \label{eq:stm:dql_no_ec:similar_mdp_bound:fourth} \\
		& \qquad \leq \sum_{\finitepath \in \lnot \mathcal{K}_N} \ProbabilityMDP<\MDP', s><\strategy'>[\finitepath] \label{eq:stm:dql_no_ec:similar_mdp_bound:fifth} \\
		& \qquad = \ProbabilityMDP<\MDP, s><\strategy>[\boundedreach<N> \overline{\mathcal{K}}] \label{eq:stm:dql_no_ec:similar_mdp_bound:sixth}
	\end{align}
	In Equation~\eqref{eq:stm:dql_no_ec:similar_mdp_bound:second}, we simply split the set of all paths of length $N$ into $\mathcal{K}_N$ and $\lnot \mathcal{K}_N$.
	For Equations~\eqref{eq:stm:dql_no_ec:similar_mdp_bound:third} and \eqref{eq:stm:dql_no_ec:similar_mdp_bound:sixth}, note that $\ProbabilityMDP<\MDP', s><\strategy'>$ and $\ProbabilityMDP<\MDP, s><\strategy>$ agree on $\mathcal{K}_N$ by choice of $\MDP'$ and $\strategy'$. 
\end{proof}

\subsubsection*{Repeating Events in Markov Processes}
Finally, we prove a general statement of Markov processes.
The statement itself seems to be quite obvious, yet surprisingly tricky to prove.
In essence, we want to show the following.
Suppose that we are given a Markov process $X_t$ on some probability space $\Omega$ together with a sequence of events $A_t$.
Moreover, assume that for a significant set of atoms $\omega \in \Omega$ there is an infinite set of times $T$ such that the \emph{conditional} probability of $A_t$ occurring is at least $\varepsilon > 0$, i.e.\ $\Probability[X_t \in A_t \mid X_{t-1}(\omega)] > \varepsilon$.
Then, the set of atoms for which infinitely many $A_t$ actually occur is also significant.
The subtle difficulty of this statement arises from the fact that (i)~conditional probabilities are considered, and (ii)~the set $T$ depends on the particular atom $\omega$.

\begin{lemma} \label{stm:markov_process_repeating}
	Fix some probability space $(\Omega, \mathcal{F}, \Probability)$ and a measure space $(S, \mathcal{S})$.
	Let $X_t : \Omega \to S$ be a Markov process on $\Omega$ and $A_t \in \mathcal{S}$ measurable events in $S$.
	Assume that the set $\Omega' = \{\omega \in \Omega \mid \exists T.~\cardinality{T} = \infty \land \forall t \in T.~\Probability[X_t \in A_t \mid X_{t-1}](\omega) > \varepsilon\}$ has positive measure, i.e.\ $\Probability[\Omega'] > 0$, and that $\Omega'_t = \{\omega \in \Omega \mid \Probability[X_t \in A_t \mid X_{t-1}](\omega) > \varepsilon\}$ is measurable for all $t \in \Naturals$.
	Then, $\Probability[\{\omega \in \Omega \mid \exists T.~\cardinality{T} = \infty \land \forall t \in T.~X_\algostep(\omega) \in A_t\}] = \Probability[\Omega']$.
\end{lemma}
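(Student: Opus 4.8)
The statement says: if with positive probability there are infinitely many times $t$ at which the \emph{conditional} probability of $A_t$ (given $X_{t-1}$) exceeds $\varepsilon$, then with the \emph{same} probability infinitely many $A_t$ actually occur. Since the event $\{A_t \text{ occurs infinitely often}\}$ is clearly a subset (up to the usual arguments) of $\Omega'$ in measure, the real content is the reverse inequality $\Probability[\{\omega \mid X_t(\omega) \in A_t \text{ i.o.}\}] \geq \Probability[\Omega']$. The plan is to prove this via a second-Borel--Cantelli-type argument adapted to the conditional setting, using the Lévy extension of the Borel--Cantelli lemma (the conditional Borel--Cantelli lemma): if $\mathcal{F}_t$ is the filtration generated by $X_0, \dots, X_t$, then almost surely the events $\{X_t \in A_t\}$ happen infinitely often if and only if $\sum_t \Probability[X_t \in A_t \mid \mathcal{F}_{t-1}] = \infty$. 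By the Markov property, $\Probability[X_t \in A_t \mid \mathcal{F}_{t-1}] = \Probability[X_t \in A_t \mid X_{t-1}]$, which is exactly the quantity appearing in the definition of $\Omega'$ and $\Omega'_t$.

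The key steps, in order: First I would set up the canonical filtration $\mathcal{F}_t = \sigma(X_0, \dots, X_t)$ and record that measurability of $\Omega'_t$ (assumed) lets us treat $\Probability[X_t \in A_t \mid X_{t-1}]$ as a genuine $\mathcal{F}_{t-1}$-measurable random variable $Z_t$. Second, observe that on $\Omega'$ we have $Z_t > \varepsilon$ for infinitely many $t$, hence $\sum_t Z_t = \infty$ on $\Omega'$; so $\Omega' \subseteq \{\sum_t Z_t = \infty\}$ up to null sets. Third, invoke the conditional (Lévy) Borel--Cantelli lemma: the event $\{\sum_t Z_t = \infty\}$ and the event $D = \{\omega \mid X_t(\omega) \in A_t \text{ for infinitely many } t\}$ differ by a $\Probability$-null set. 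Therefore $\Probability[D] \geq \Probability[\{\sum_t Z_t = \infty\}] \geq \Probability[\Omega']$. Fourth, for the other direction: on $D$, for infinitely many $t$ we have $\Probability[X_t \in A_t \mid X_{t-1}] > 0$; I would argue (again via conditional Borel--Cantelli, or directly) that in fact $\sum_t Z_t = \infty$ a.s.\ on $D$, and that this forces $Z_t > \varepsilon$ infinitely often for a suitable $\varepsilon$ — but here one must be slightly careful, since the $\varepsilon$ is fixed in the statement. Actually the cleaner route is: $D \subseteq \{\sum_t Z_t = \infty\}$ a.s.\ by Lévy's lemma, and $\{\sum_t Z_t = \infty\}$ is, again up to null sets, contained in $\Omega'$ once one notes that if $\sum Z_t = \infty$ then $Z_t$ cannot be eventually bounded by every threshold — so actually $\{\sum_t Z_t = \infty\} = \bigcup_{\varepsilon' > 0} \{Z_t > \varepsilon' \text{ i.o.}\}$, which gives $\Probability[\Omega'] \geq \Probability[D]$ only after taking the right $\varepsilon$. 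I would resolve this by noting that the claimed equality is between $\Probability[D]$ and $\Probability[\Omega']$ with the \emph{given} $\varepsilon$, and $\Omega' \subseteq \{\sum Z_t = \infty\}$ and $D \subseteq \{\sum Z_t = \infty\}$ both hold a.s., while conversely $\{\sum Z_t = \infty\} \subseteq D$ a.s.\ gives $\Probability[D] \geq \Probability[\{\sum Z_t = \infty\}] \geq \Probability[\Omega']$, and the reverse $\Probability[\Omega'] \geq \Probability[D]$ needs $D \subseteq \Omega'$ a.s. For that last containment one uses that on $D$, $Z_t = \Probability[X_t \in A_t \mid X_{t-1}]$ cannot converge to $0$ along the subsequence where $A_t$ occurs — more precisely, one shows $\limsup Z_t > 0$ a.s.\ on $D$ and then, if the given $\varepsilon$ is too large, one may shrink it since $\Omega'$ is defined with a strict inequality and the statement is monotone in the sense that a smaller $\varepsilon$ gives a larger $\Omega'$; but the statement fixes $\varepsilon$, so one actually wants: $D \subseteq \{Z_t > \varepsilon \text{ i.o.}\}$ a.s.? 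This is false for large $\varepsilon$.

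The resolution — and what I expect to be the \textbf{main obstacle} — is getting the two inequalities to match \emph{exactly} for the fixed $\varepsilon$. The trick is that the definition of $\Omega'$ quantifies $\exists T$ with $Z_t > \varepsilon$ on $T$; replacing $\varepsilon$ by $0$ enlarges $\Omega'$, but the inequality we really need only in the direction $\Probability[D] = \Probability[\Omega']$. I would argue as follows: clearly $\Omega' \subseteq \Omega'_0 := \{Z_t > 0 \text{ i.o.}\}$, and by Lévy's conditional Borel--Cantelli $\Probability[\Omega'_0] = \Probability[\{\sum Z_t = \infty\}] \geq \Probability[D]$ is \emph{not} automatic either. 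Instead, the honest statement is that $D = \{\sum_t Z_t = \infty\}$ up to a null set (Lévy), and separately $\Omega' \subseteq \{\sum Z_t = \infty\}$, so $\Probability[D] \geq \Probability[\Omega']$. For equality we also need $\Probability[\Omega'] \geq \Probability[D]$, i.e.\ $D \subseteq \Omega'$ a.s., which I would obtain by showing that on $D$ (equivalently on $\{\sum Z_t = \infty\}$) we have $\limsup_t Z_t > 0$ almost surely: indeed if $Z_t \to 0$ then only finitely many $Z_t$ exceed any given threshold, but that is compatible with $\sum Z_t = \infty$ (e.g. $Z_t = 1/t$), so $\limsup Z_t > 0$ does \emph{not} follow. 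Hence the equality as literally stated must be using that $\Omega'$ is defined with the \emph{same} $\varepsilon$ as... no — so the correct reading is that the \emph{hypothesis} $\Probability[\Omega'] > 0$ for this $\varepsilon$ is an input, and the \emph{conclusion} is an equality; I would therefore prove $\Probability[D] \geq \Probability[\Omega']$ (the useful direction, via Lévy) and $\Probability[\Omega'] \geq \Probability[D]$ by the observation that $D \subseteq \Omega'$ after possibly using that any $\omega \in D$ has $X_t(\omega) \in A_t$ for $t \in T$, whence $\Probability[X_t \in A_t \mid X_{t-1}](\omega)$ — evaluated at the realized $X_{t-1}(\omega)$ which lies in a state from which $A_t$ is reachable in one step — is bounded below by the minimal positive one-step probability; in the intended application (a finite or countable Markov chain with a uniform lower bound $q$ on positive transition probabilities, cf.\ \cref{def:limited_information}), this lower bound is $\geq q > 0$, and one takes $\varepsilon = q$ (or smaller). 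So the plan's final move is to \emph{state the lemma for $\varepsilon$ small enough that the one-step positive probabilities are bounded below by $\varepsilon$}, under which $D \subseteq \Omega'$ holds pointwise and the two measures coincide. I would flag that the proof of the conditional Borel--Cantelli lemma itself (Lévy's extension) is standard martingale theory and I would simply cite it, devoting the bulk of the write-up to the measurability bookkeeping and to the $D \subseteq \Omega'$ containment.
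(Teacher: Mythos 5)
Your main direction is correct and goes by a genuinely different route than the paper. The paper gives a self-contained, elementary argument: working entirely inside $\Omega'$, it indexes the \emph{tries} (times $t$ with $\Probability[X_t \in A_t \mid X_{t-1}](\omega) > \varepsilon$) between consecutive \emph{successes} (tries where $A_t$ actually occurs), uses the Markov property to show that the probability of seeing $i$ consecutive failed tries after the $j$-th success decays like $(1-\varepsilon)^i$, and concludes by iterating over $j$ that almost every $\omega \in \Omega'$ succeeds infinitely often. You instead set $Z_t = \Probability[X_t \in A_t \mid \mathcal{F}_{t-1}]$ (which equals $\Probability[X_t \in A_t \mid X_{t-1}]$ by the Markov property), observe that $Z_t > \varepsilon$ infinitely often on $\Omega'$ forces $\sum_t Z_t = \infty$ there, and invoke L\'evy's extension of the Borel--Cantelli lemma to identify $\{\sum_t Z_t = \infty\}$ with $D \coloneqq \{X_t \in A_t \text{ i.o.}\}$ up to a null set. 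This yields exactly what the paper's argument yields, namely $\Probability[\Omega' \setminus D] = 0$ and hence $\Probability[D] \geq \Probability[\Omega']$, in two lines modulo the cited martingale theorem; the paper's version buys self-containedness at the cost of the try/success bookkeeping. For this direction your route is sound (the measurability hypothesis on $\Omega'_t$ does the same work for you as for the paper, fixing a version of each conditional probability).

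On the reverse inequality you were right to be suspicious, and you should not have spent so much effort trying to rescue it. The literal conclusion $\Probability[D] = \Probability[\Omega']$ would additionally require $\Probability[D \setminus \Omega'] = 0$, and your own $Z_t = 1/t$ observation shows this fails in general: mix a branch on which $Z_t \equiv 1/2$ with a branch on which $Z_t = 1/t$; both branches lie in $D$ almost surely, but only the first lies in $\Omega'$ for the given $\varepsilon$. The paper's proof does not establish this direction either --- it stops after showing that almost all runs in $\Omega'$ succeed infinitely often --- and only the containment $\Probability[\Omega' \setminus D] = 0$ is used in the termination arguments of \cref{stm:dql_no_ec_correct,stm:dql_correct}. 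So the long detour about shrinking $\varepsilon$ below the minimal positive one-step probability is unnecessary (and changes the statement); the honest resolution is that the conclusion should be read as the one-sided containment, which is precisely what both your argument and the paper's deliver.
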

\begin{proof}
	\DeclareDocumentCommand{\tries}{r()}{\mathsf{Tries}(#1)}
	\DeclareDocumentCommand{\successes}{r()}{\mathsf{Succs}(#1)}
	\DeclareDocumentCommand{\ithtry}{r<> d()}{\mathsf{Try}_{#1}\IfNoValueF{#2}{(#2)}}
	\DeclareDocumentCommand{\jthsuccess}{r<> d()}{\mathsf{Succ}_{#1}\IfNoValueF{#2}{(#2)}}
	\DeclareDocumentCommand{\triesafter}{r<> d()}{\mathsf{TriesJ}_{#1}\IfNoValueF{#2}{(#2)}}
	\DeclareDocumentCommand{\ithtryafter}{r<> r<> d()}{\mathsf{TryJ}_{#1,#2}\IfNoValueF{#3}{(#3)}}
	\DeclareDocumentCommand{\ithtryattafter}{r<> r<> r<>}{\mathsf{TryAtTJ}^{#2}_{#1,#3}}
	\DeclareDocumentCommand{\atleastitriesafter}{r<> r<>}{\mathsf{TriesJ}_{#1,#2}}
	\DeclareDocumentCommand{\ithtrytimesafter}{r<> r<>}{\mathsf{TryTimesJ}_{#1,#2}}

	Let $\omega \in \Omega'$.
	By assumption, for each such $\omega$, there exists an infinite set of time-points $\tries(\omega) = \{t_1, t_2, \cdots\}$ with $1 \leq t_1 < t_2 < \cdots$ where $\Probability[X_t \in A_t \mid X_{t-1}](\omega) > \varepsilon$.
	We call such an event a \emph{try} of $\omega$.
	Denote $\ithtry<i>(\omega) = t_i$ or $\infty$ if no such $t_i$ exists, e.g.\ for $\omega \notin \Omega'$.
	Informally, $\ithtry<i>$ is the time of the $i$-th try of some outcome $\omega$.
	$\ithtry<i>$ is measurable by assumption, since its pre-images can be constructed using $\Omega'_t$.
	Moreover, let $\successes(\omega) = \{s_1, s_2, \cdots\} \subseteq \tries(\omega)$ be the times where $X_{s_j}(\omega) \in A_{s_j}$, called \emph{$j$-th success(ful try)}.
	Note that $\successes(\omega)$ possibly is finite or even empty for some outcomes $\omega$, even for $\omega \in \Omega'$, since infinitely many tries may fail.
	Now, let $\jthsuccess<j>(\omega) = s_j \in \successes(\omega)$ the time of the $j$-th success or $\infty$ if no such $s_j$ exists, i.e.\ $j > \cardinality{\successes(\omega)}$.
	$\jthsuccess<j>$ is measurable since $\ithtry<i>$, $X_t$ and $A_t$ are measurable.
	To succinctly capture corner-cases, we further define $\jthsuccess<0> = 0$.
	The successes $\successes(\omega)$ naturally partition the set $\tries(\omega)$ into $\triesafter<j>(\omega) = \{t \in \tries(\omega) \mid \jthsuccess<j>(\omega) < t \leq \jthsuccess<j+1>(\omega)\}$.
	We use $\ithtryafter<i><j>(\omega)$ to refer to the $i$-th element of $\triesafter<j>(\omega)$, or $\infty$ if no such element exists.
	$\ithtryafter<i><j>$ is measurable due to $\jthsuccess<j>$ being measurable.
	Informally, $\ithtryafter<i><j>(\omega)$ denotes the time of the $i$-th try since the $j$-th success.

	We show that after a sufficient number of tries, there is a success with high probability.
	Repeating this argument inductively, we then show that there are infinitely many successes for almost all outcomes $\omega$ in $\Omega'$.

	Let thus $\ithtryattafter<i><t><j>$ denote the set of runs which at time $t$ have succeeded $j$ times and since the $j$-th success experienced $i$-th tries, where this $i$-th try happens exactly at time $t$.
	Formally,
	\begin{equation*}
		\ithtryattafter<i><t><j> \coloneqq \{\omega \in \Omega' \mid \ithtryafter<i><j>(\omega) = t\}.
	\end{equation*}
	Note that this definition implicitly includes the condition $\jthsuccess<j>(\omega) \leq t < \jthsuccess<j+1>(\omega)$ by definition of $\ithtryafter<i><j>$.
	Thus, $\ithtryattafter<i><t><j>$ are disjoint for fixed $i$ and $j$.

	We furthermore define $\atleastitriesafter<i><j> = \Union_{t=1}^\infty \ithtryattafter<i><t><j> = \{\omega \in \Omega' \mid \ithtryafter<i><j>(\omega) < \infty\}$ as the set of outcomes which after their $j$-th success experienced at least $i - 1$\footnote{$\ithtryafter<i><j>(\omega) = t$ does not exclude that the try at time t is successful.} unsuccessful tries.
	We have $\atleastitriesafter<i><j> = \atleastitriesafter<i+1><j> \union \atleastitriesafter<1><j+1>$, since the $i$-th try either fails and the $i+1$-th try is experienced later (since $\atleastitriesafter<i><j> \subseteq \Omega'$, implying infinitely many tries) or the try succeeds.
	Observe that $\atleastitriesafter<i+1><j>$ and $\atleastitriesafter<1><j+1>$ are not disjoint, since, for example, the runs succeeding at the $i+1$-th try also are an element of $\atleastitriesafter<1><j+1>$.
	On the contrary, we show that $\Probability[\atleastitriesafter<i><j> \setminus \atleastitriesafter<1><j+1>] = 0$, i.e.\ almost all runs in $\atleastitriesafter<i><j>$ will eventually succeed again.

	\newcommand{\tryset}{\mathsf{Tries}}
	\newcommand{\successset}{\mathsf{Succs}}
	\newcommand{\trytime}{\mathsf{TryTime}}
	\newcommand{\successtime}{\mathsf{SuccTime}}
	\newcommand{\successruns}{\mathsf{SuccSet}}

	To this end, we argue that for any fixed $j$ we have that $\lim_{i \to \infty} \Probability[\atleastitriesafter<i><j>] = 0$.
	Fix some $j$ and $i$ with $\Probability[\atleastitriesafter<i><j>] > 0$ (otherwise there is nothing to prove, since $\atleastitriesafter<i><j>$ is monotonically decreasing in $i$).
	Let $\ithtrytimesafter<i><j> = \{t \mid \Probability[\ithtryattafter<i><t><j>] > 0\}$ which is non-empty by the previous condition.
	Clearly, $\Probability[\atleastitriesafter<i><j>] = \sum_{t=1}^\infty \Probability[\ithtryattafter<i><t><j>] = \sum_{t \in \ithtrytimesafter<i><j>} \Probability[\ithtryattafter<i><t><j>]$, as $\ithtryattafter<i><t><j>$ are disjoint.
	Observe that $\ithtryattafter<i><t><j>$ is the intersection of several conditions on $X_{t'}$ for $t' < t$ and requiring that $\Probability[X_t \in A_t \mid X_{t-1}] > \varepsilon$.
	Hence, by the Markov property
	\begin{equation*}
		\Probability[X_t \notin A_t \mid \ithtryattafter<i><t><j>] = 1 - \Probability[X_t \in A_t \mid \ithtryattafter<i><t><j>] = 1 - \Probability[X_t \in A_t \mid X_{t-1}] < 1 - \varepsilon.
	\end{equation*}
	Intuitively, this means that the probability of a try at time $t$ succeeding does not depend on the number of previous tries and successes.
	Thus, for all $t \in \ithtrytimesafter<i><j>$, we have $\Probability[X_t \notin A_t \intersection \ithtryattafter<i><t><j>] < (1 - \varepsilon) \cdot \Probability[\ithtryattafter<i><t><j>]$.
	Observe that $\Union_{t=1}^\infty (X_t \notin A_t \intersection \ithtryattafter<i><t><j>) = \atleastitriesafter<i+1><j>$ since the intersection implies that the $i$-th try at time $t$ was unsuccessful.
	Together,
	\begin{align*}
		\Probability[\atleastitriesafter<i+1><j>] & = \Probability[{\Union}_{t=1}^\infty X_t \notin A_t \intersection \ithtryattafter<i><t><j>] = {\sum}_{t=1}^\infty \Probability[X_t \notin A_t \intersection \ithtryattafter<i><t><j>] \\
			& = {\sum}_{t \in \ithtrytimesafter<i><j>} \Probability[X_t \notin A_t \intersection \ithtryattafter<i><t><j>] \\
			& < {\sum}_{t=1}^\infty (1 - \varepsilon) \cdot \Probability[\ithtryattafter<i><t><j>] = (1 - \varepsilon) \cdot \Probability[{\Union}_{t=1}^\infty \ithtryattafter<i><t><j>] \\
			& = (1 - \varepsilon) \cdot \Probability[\atleastitriesafter<i><j>].
	\end{align*}
	Consequently, $\lim_{i \to \infty} \Probability[\atleastitriesafter<i><j>] = 0$ for any fixed $j$.

	As argued before, we have $\atleastitriesafter<i><j> = \atleastitriesafter<i+1><j> \union \atleastitriesafter<1><j+1>$.
	Iterating this equation yields $\atleastitriesafter<i><j> = \atleastitriesafter<i+k><j> \union \atleastitriesafter<1><j+1>$ for any $k \geq 1$ and consequently $\atleastitriesafter<1><j> = \Intersection_{i=1}^\infty \atleastitriesafter<i><j> \union \atleastitriesafter<1><j+1>$.
	Informally, this equation can be read as \enquote{all outcomes which succeed at least $j$ times either try infinitely often or succeed at least $j+1$ times.}
	Let $\atleastitriesafter<\infty><j> = \Intersection_{i=1}^\infty \atleastitriesafter<i><j> = \{\omega \in \Omega' \mid \jthsuccess<j>(\omega) < \infty = \jthsuccess<j+1>(\omega)\}$.
	Clearly, $\atleastitriesafter<\infty><j> \intersection \atleastitriesafter<1><j+1> = \emptyset$, thus we have $\Probability[\atleastitriesafter<1><j+1> \setminus \atleastitriesafter<1><j>] = \Probability[\atleastitriesafter<\infty><j>]$.
	Additionally, we have $\Probability[\atleastitriesafter<\infty><j>] = \inf_{i \in \Naturals} \Probability[\atleastitriesafter<i><j>] = 0$ by the above reasoning.
	Hence $\Probability[\atleastitriesafter<1><j+1> \setminus \atleastitriesafter<1><j>] = 0$.
	This implies that almost all runs in $\Omega'$ succeed infinitely often, concluding the proof. 
\end{proof}
\end{document}